\def\BibTeX{{\rm B\kern-.05em{\sc i\kern-.025em b}\kern-.08em
    T\kern-.1667em\lower.7ex\hbox{E}\kern-.125emX}}
\let\MYcaption\@makecaption
\newcommand{\eqdef}{\ensuremath{\stackrel{\mathrm{def}}{=}}}
\newcommand{\randassign}{\ensuremath{\stackrel{\mathrm{\$}}{\leftarrow}}}
\newcommand{\argmin}{\operatornamewithlimits{argmin}}
\newcommand{\supp}{\mathsf{supp}}
\newcommand{\diam}{\mathsf{diam}}
\newcommand{\Dists}{\mathbb{D}} 
\newcommand{\Prob}{\mathrm{Pr}}
\newcommand{\expect}{\operatornamewithlimits{\mathbb{E}}}
\newcommand{\lift}[1]{{#1}^{\#}}
\newcommand{\liftinf}[1]{{#1}_{\Winf}^{\#}}
\newcommand{\natspos}{\mathbb{N}^{>0}}
\newcommand{\reals}{\mathbb{R}}
\newcommand{\realspos}{\mathbb{R}^{>0}}
\newcommand{\realsnng}{\mathbb{R}^{\ge0}}
\newcommand{\cali}{\mathcal{I}}
\newcommand{\calx}{\mathcal{X}}
\newcommand{\caly}{\mathcal{Y}}
\newcommand{\calz}{\mathcal{Z}}
\newcommand{\DP}[0]{\textsf{DP}}
\newcommand{\XDP}[0]{\textsf{XDP}}
\newcommand{\pDP}[0]{\textsf{PDP}}
\newcommand{\DistP}[0]{\textsf{DistP}}
\newcommand{\XDistP}[0]{\textsf{XDistP}}
\newcommand{\pDistP}[0]{\textsf{PDistP}}
\newcommand{\ut}[0]{\mathit{u}}
\newcommand{\util}[2]{\ut(#1, #2)}
\newcommand{\utmetric}[0]{\mathit{d}}
\newcommand{\sens}[0]{\mathit{\varDelta}}
\newcommand{\sensPhi}[0]{\sens_{\varPhi,\utmetric}}
\newcommand{\sensfunc}[0]{\mathit{W}}
\newcommand{\sensinf}[0]{\sensfunc_{\infty,\utmetric}}
\newcommand{\sensemd}[0]{\sensfunc_{1,\utmetric}}
\newcommand{\Dinf}[0]{\mathit{D}_{\infty}}
\newcommand{\maxdiverge}[2]{\Dinf(#1 \parallel #2)}
\newcommand{\appmaxdiverge}[2]{\Dinf^{\delta}(#1 \parallel #2)}
\newcommand{\aprmaxdiverge}[3]{\Dinf^{#1}(#2 \parallel #3)}
\newcommand{\Winf}{\mathit{W}_{\infty}}
\newcommand{\Winfu}{\mathit{W}_{\infty,\utmetric}}
\newcommand{\cp}[2]{\mathsf{cp}(#1, #2)}
\newcommand{\GammaInf}[0]{\mathit{\Gamma_{\!{\rm \infty,\utmetric}}}}
\newcommand{\alg}{\mathit{A}}
\newcommand{\TP}[3]{\mathit{Q}^{\sf tp}_{#1,#2,#3}}
\newcommand{\TPM}{\TP{k}{\nu}{\alg}}
\newcommand{\LambdaBeta}{\Lambda_{\beta,\eta,\alg}}
\newcommand{\Seq}{\mathbin{\odot}}
\newcommand{\liftSeq}{\mathbin{\bullet}}
\newcommand{\PL}{\rm{PL}}
\newcommand{\TM}{\rm{TM}}
\newcommand{\RL}{{\rm RL}}
\newcommand{\attr}{t}
\newcommand{\male}{{\it male}}
\newcommand{\female}{{\it female}}
\newcommand{\north}{{\it north}}
\newcommand{\south}{{\it south}}
\newcommand{\social}{{\it social}}
\newcommand{\lesssocial}{{\it less-social}}
\newcommand{\workplace}{{\it workplace}}
\newcommand{\nonworkplace}{{\it non-workplace}}
\newcommand{\home}{{\it home}}
\newcommand{\outside}{{\it out}}
\newif\ifcommentson\commentsontrue
\newif\ifconferenceon\conferenceonfalse
\newcommand{\arxiv}[1]{}
\newcommand{\conference}[1]{#1}
\newcommand{\conferenceShort}[1]{}
\newcommand{\journal}[1]{}
\newcommand{\arxiv}[1]{#1}
\newcommand{\conference}[1]{}
\newcommand{\conferenceShort}[1]{}
\newcommand{\journal}[1]{}
\newcommand{\fdiv}[1]{}
\newcommand{\commentsize}[0]{.85\textwidth}
\newcommand{\commentTM}[1]{\begin{center} \parbox{\commentsize}{\textbf{\textcolor{black}{Comment T.}} \textcolor{red}{#1 }}\end{center}}
\newcommand{\commentYK}[1]{\begin{center} \parbox{\commentsize}{\textbf{\textcolor{black}{Comment Y.}} \textcolor{red}{#1} }\end{center}}
\newcommand{\replyTM}[1]{\begin{center} \parbox{\commentsize}{\textbf{Reply T.} \textcolor{blue}{#1} }\end{center}}
\newcommand{\replyYK}[1]{\begin{center} \parbox{\commentsize}{\textbf{Reply Y.} \textcolor{blue}{#1} }\end{center}}
\newcommand{\commentT}[1]{\marginpar{\footnotesize \color{red} {\bf T:} \textsf{\scriptsize #1}}}
\newcommand{\commentY}[1]{\marginpar{\footnotesize \color{red} {\bf Y:} \textsf{\scriptsize #1}}}
\newcommand{\replyT}[1]{\marginpar{\footnotesize \color{red} {\bf T:} \textsf{\scriptsize #1}}}
\newcommand{\replyY}[1]{\marginpar{\footnotesize \color{red} {\bf Y:} \textsf{\scriptsize #1}}}
\newcommand{\commentTM}[1]{}
\newcommand{\commentYK}[1]{}
\newcommand{\replyTM}[1]{}
\newcommand{\replyYK}[1]{}
\newcommand{\commentT}[1]{}
\newcommand{\commentY}[1]{}
\newcommand{\replyT}[1]{}
\newcommand{\replyY}[1]{}
\newcommand{\colorR}[1]{\textcolor{red}{#1}}
\newcommand{\pagelimitmarker}[1]{~\\ {\colorR{\ifthenelse{\thepage>#1}{\Huge Exceeding the page limit}{\huge Within the page limit}}}~\\ {\huge{\colorR{~~Page Limit\,\,\,\,\, = #1}}}~\\ {\huge{\colorR{~~Current Page = $\thepage$}}}}
\begin{document}
\title{Local Obfuscation Mechanisms for Hiding Probability Distributions
\thanks{This work was partially supported by JSPS KAKENHI Grant JP17K12667, JP19H04113, and Inria LOGIS project.}
}
\author{Yusuke Kawamoto\inst{1}\orcidID{0000-0002-2151-9560}
\and \\
Takao Murakami\inst{2}\orcidID{0000-0002-5110-1261}
}
\authorrunning{Y. Kawamoto et al.}
\institute{AIST, Tsukuba, Japan \\
\conference{\email{yusuke.kawamoto.aist@gmail.com}}
\and
AIST, Tokyo, Japan
}
\maketitle
\begin{abstract}
We introduce a formal model for the information leakage of probability distributions and define a notion called distribution privacy as the local differential privacy for probability distributions. Roughly, the distribution privacy of a local obfuscation mechanism means that the attacker cannot significantly gain any information on the distribution of the mechanism's input by observing its output. Then we show that existing local mechanisms can hide input distributions in terms of distribution privacy, while deteriorating the utility by adding too much noise. For example, we prove that the Laplace mechanism needs to add a large amount of noise proportionally to the infinite Wasserstein distance between the two distributions we want to make indistinguishable. To improve the tradeoff between distribution privacy and utility, we introduce a local obfuscation mechanism, called a tupling mechanism, that adds random dummy data to the output. Then we apply this mechanism to the protection of user attributes in location based services. By experiments, we demonstrate that the tupling mechanism outperforms popular local mechanisms in terms of attribute obfuscation and service quality.

\keywords{
local differential privacy \and
obfuscation mechanism \and
location privacy \and
attribute privacy \and
Wasserstein metric \and
compositionality
}
\end{abstract}

\section{Introduction}
\label{sec:intro}

\emph{Differential privacy}~\cite{Dwork:06:ICALP} is a quantitative notion of privacy that has been applied to a wide range of areas, including databases, geo-locations, and social network.
The protection of differential privacy can be achieved by 
adding controlled noise to given data that we wish to hide or obfuscate.
In particular, 
a number of recent studies have proposed 
\emph{local obfuscation mechanisms}%
~\cite{Duchi:13:FOCS,Andres:13:CCS,Erlingsson_CCS14}, 
namely, randomized algorithms that perturb each single ``point'' data (e.g., a geo-location point) 
by adding certain probabilistic noise before sending it out to a data collector.
However, the obfuscation of a probability distribution of points (e.g., a distribution of locations of users at home/outside home) still remains to be investigated in terms of differential privacy.

For example, a location-based service (LBS) 
provider 
collects each user's geo-location data to provide a service (e.g., navigation or
point-of-interest search), and
has been widely studied in terms of the privacy of user locations.
As shown in~\cite{Andres:13:CCS,Bordenabe:14:CCS}, users can hide their accurate locations by sending to the LBS provider only approximate location information calculated by an obfuscation mechanism.

Nevertheless, 
a user's location information can be used for an attacker to infer the user's attributes (e.g., age, gender, social status, and residence area) or activities (e.g., working, sleeping, and shopping)~\cite{Liao:07:IJRR,Zheng:09:LBSN,Matsuo:07:IJCAI,Yang:19:TKDE}.
For example, when an attacker knows the distribution of residence locations, he may detect whether given users are \emph{at home} or \emph{outside home} after observing their obfuscated locations.
For another example, an attacker may learn whether users are \emph{rich} or \emph{poor} by observing their obfuscated behaviors.
These attributes can be used by robbers hence should be protected from them.
Privacy issues of such attribute inference are also known in other applications, including recommender systems~\cite{Otterbacher:10:CIKM,Weinsberg:12:RecSys} and online social networks~\cite{Gong:18:TOPS,Mislove:10:WSDM}.
However, to our knowledge, no literature has addressed the protection of attributes in terms of local differential privacy.

To illustrate the privacy of attributes in an LBS, let us consider a running example where users try to prevent an attacker from inferring whether they are at home or not.
Let $\lambda_{\home}$ and $\lambda_{\outside}$ be the probability distributions of locations of the users at home and outside home, respectively.
Then the privacy of this attribute means that the attacker cannot learn from an obfuscated location whether the actual location follows the distribution $\lambda_{\home}$ or $\lambda_{\outside}$.

This can be formalized using differential privacy.
For each $\attr\in\{ \home, \allowbreak \outside \}$,\,
we denote by $p( y \,|\, \lambda_{\attr})$ the probability of observing an obfuscated location $y$ when an actual location is distributed over $\lambda_{\attr}$.
Then the privacy of $\attr$ is defined by:
\begin{align*}
\frac{ p( y \,|\, \lambda_{\home}) }{ p( y \,|\, \lambda_{\outside})~ }
\le e^{\varepsilon}
{,}
\end{align*}
which represents that the attacker cannot distinguish whether the users follow the distribution $\lambda_{\home}$ or $\lambda_{\outside}$ with degree of $\varepsilon$.

To generalize this,
we define a notion, called \emph{distribution privacy} (\DistP{}), 
as the differential privacy for probability distributions.
Roughly,
we say that a mechanism $\alg$ provides \DistP{} w.r.t. $\lambda_{\home}$ and $\lambda_{\outside}$ if
no attacker can detect whether the actual location (input to $\alg$) is sampled from $\lambda_{\home}$ or $\lambda_{\outside}$
after he observed an obfuscated location $y$ (output by $\alg$)%
\footnote{In our setting, the attacker observes only a sampled output of $\alg$, and not the exact histogram of $\alg$'s output distribution.
See Section~\ref{sub:histogram} for more details.}.
Here we note that each user applies the mechanism $\alg$ locally by herself, hence can customize the amount of noise added to $y$ according to the attributes she wants to hide.

Although existing local differential privacy mechanisms are designed to protect point data, 
they also hide the distribution that the point data follow.
However, we demonstrate that they need to add a large amount of noise to obfuscate distributions, and thus deteriorate the utility of the mechanisms.

To achieve both high utility and strong privacy of attributes, we introduce a mechanism, called the \emph{tupling mechanism}, that not only perturbs an actual input, but also adds random dummy data to the output.
Then we prove that this mechanism provides \DistP{}.
Since the random dummy data obfuscate the shape of the distribution, users can instead reduce the amount of noise added to the actual input, hence they get better utility (e.g., quality of a POI service).

This implies that \DistP{} is a relaxation of differential privacy that guarantees the privacy of attributes while achieving higher utility by weakening the differentially private protection of point data.
For example, suppose that users do not mind revealing their actual locations outside home, but want to hide (e.g., from robbers) the fact that they are outside home. 
When the users employ the tupling mechanism, they output both their (slightly perturbed) actual locations and random dummy locations.
Since their outputs include their (roughly) actual locations, they obtain high utility (e.g., learning shops near their locations), while their actual location points are protected only weakly by differential privacy.
However, their attributes \emph{at home}/\emph{outside home} are hidden among the dummy locations, hence protected by \DistP{}.
By experiments, we demonstrate that the tupling mechanism 
is useful to protect the privacy of attributes, and outperforms popular existing mechanisms (the randomized response~\cite{Kairouz:16:ICML}, 
the planar Laplace~\cite{Andres:13:CCS} and Gaussian mechanisms) in terms of \DistP{} and service quality.

~\\
\textbf{Our contributions.}~~
The main contributions of this work are given as follows:
\begin{itemize}
\item We propose a formal model for the privacy of probability distributions in terms of differential privacy.
Specifically, we define the notion of distribution privacy (\DistP{}) to represent that  the attacker cannot significantly gain information on the distribution of a mechanism's input by observing its output.

\item We provide theoretical foundation of \DistP{}, including its useful properties (e.g., compositionality) and its interpretation (e.g., in terms of Bayes factor).

\item 
We quantify the effect of distribution obfuscation by existing local mechanisms.
In particular, we show that (extended) differential privacy mechanisms are able to make any two distributions less distinguishable, while deteriorating the utility by adding too much noise to protect all point data.
\item 
For instance, we prove that extended differential privacy mechanisms (e.g., the Laplace mechanism) need to add a large amount of noise proportionally to the $\infty$-Wasserstein distance $\sensinf(\lambda_0, \lambda_1)$ between the two distributions $\lambda_0$ and $\lambda_1$ that we want to make indistinguishable.
\item
We show that \DistP{} is a useful relaxation of differential privacy when users want to hide their attributes, but not necessarily to protect all point data.
\fdiv{
We show that every $f$-divergence privacy (e.g. KL-privacy) mechanism obfuscates distributions
proportionally to the Earth mover's distance~$\sensemd$.
}

\item To improve the tradeoff between \DistP{} and utility, we introduce the \emph{tupling mechanism}, which locally adds random dummies to the output.
Then we show that this mechanism provides \DistP{} and hight utility for users.

\item We apply local mechanisms to the obfuscation of attributes in location based services (LBSs).
Then we show that the tupling mechanism outperforms popular existing mechanisms in terms of \DistP{} and service quality.
\end{itemize}
\arxiv{All proofs of technical results can be found in Appendix.}
\conference{All proofs of technical results can be found in~\cite{arxiv}.}
\conferenceShort{All proofs of technical results can be found in~\cite{arxiv}.}

\section{Preliminaries}
\label{sec:preliminaries}

In this section we recall some notions of privacy and metrics used in this paper.
Let $\natspos$ be the set of positive integers,
and $\realspos$ (resp. $\realsnng$) be the set of positive (resp. non-negative) real numbers.
Let $[0, 1]$ be the set of non-negative real numbers not grater than $1$.
Let $\varepsilon, \varepsilon_0, \varepsilon_1 \in \realsnng$ and $\delta, \delta_0, \delta_1 \in [0, 1]$.

\subsection{Notations for Probability Distributions}
\label{subsec:notations-pd}

\begin{wrapfigure}[11]{r}{0.23\linewidth}
\vspace{-7ex}
\centering
\begin{picture}(65,105)
 \linethickness{0.7pt}
 \put( -8, 87){$\lambda_0$}
 \put( -8, 52){$\gamma$}
 \put(  2,  10){\line( 1, 0){66}}
 \put(  2, 75){\line( 1, 0){66}}
 \linethickness{0.6pt}
 \put( 10, 75){\framebox(10,10){\scriptsize $0.2$}}
 \put( 13, 67){\scriptsize $1$}
 \put( 30, 75){\framebox(10,25){\scriptsize $0.5$}}
 \put( 33, 67){\scriptsize $2$}
 \put( 50, 75){\framebox(10,15){\scriptsize $0.3$}}
 \put( 53, 67){\scriptsize $3$}

 \put( -8, 20){$\lambda_1$}
 \put( 10, 10){\framebox(10,15){\scriptsize $0.3$}}
 \put( 13,  2){\scriptsize $1$}
 \put( 30, 10){\framebox(10,10){\scriptsize $0.2$}}
 \put( 33,  2){\scriptsize $2$}
 \put( 50, 10){\framebox(10,25){\scriptsize $0.5$}}
 \put( 53,  2){\scriptsize $3$}

 \linethickness{0.8pt}
 \put( 23, 58){\scriptsize $0.1$}
 \put( 37, 58){\scriptsize $0.2$}
 \put( 27, 56){\vector(-1,-2){8}}
 \put( 43, 56){\vector( 1,-2){8}}

 \put(  3, 51){\scriptsize $0.2$}
 \put( 25, 45){\scriptsize $0.2$}
 \put( 58, 51){\scriptsize $0.3$}
 \put( 15, 56){\vector( 0,-1){17}}
 \put( 35, 56){\vector( 0,-1){17}}
 \put( 55, 56){\vector( 0,-1){17}}
\end{picture}
\caption{Coupling $\gamma$.}
\label{fig:eg:coupling-def}
\end{wrapfigure}

We denote by $\Dists\calx$ the \emph{set of all probability distributions} over a set~$\calx$, and by $|\calx|$ the number of elements in a finite set~$\calx$.

Given a finite set $\calx$ and a distribution $\lambda\in\Dists\calx$, the probability of drawing a value $x$ from $\lambda$ is denoted by $\lambda[x]$.
For a finite subset $\calx'\subseteq\calx$ we define $\lambda[\calx']$ by: $\lambda[\calx'] = \sum_{x'\in\calx'} \lambda[x']$.
For a distribution $\lambda$ over a finite set $\calx$, its \emph{support} $\supp(\lambda)$ is defined by $\supp(\lambda) = \{ x \in \calx \colon \lambda[x] > 0 \}$.
Given a $\lambda\in\Dists\calx$ and a $f:\calx\rightarrow\reals$, the expected value of $f$ over $\lambda$ is:
$\expect_{x\sim\lambda}[f(x)] \eqdef \sum_{x\in\calx} \lambda[x] f(x)$.

For a randomized algorithm $\alg:\calx\rightarrow\Dists\caly$ and a set $R\subseteq\caly$ we denote by $\alg(x)[R]$ the probability that given input $x$, $\alg$ outputs one of the elements of $R$.
Given a randomized algorithm $\alg:\calx\rightarrow\Dists\caly$ and a distribution $\lambda$ over $\calx$, we define $\lift{\alg}(\lambda)$ as the distribution of the output of $\alg$.
Formally, for a finite set $\calx$, the \emph{lifting} of $\alg$ w.r.t. $\calx$ is the function $\lift{\alg}: \Dists\calx\rightarrow\Dists\caly$ 
such that
$
\lift{\alg}(\lambda)[R] \eqdef
\sum_{x\in\calx}\lambda[x] \alg(x)[R]
$.

\fdiv{
\subsection{Divergence}
\label{subsec:divergence}
We first recall the notion of (approximate) max divergence, which is used to define differential privacy.
\begin{definition}[Max divergence]\label{def:max-divergence}\rm
For $\delta\in[0,1]$, the \emph{$\delta$-approximate max divergence} between $\mu, \mu'\in\Dists\caly$ is:
\[
\appmaxdiverge{\mu}{\mu'} = 
\max_{\substack{R\subseteq\supp(\mu) \\ \mu[R]\ge\delta}} \ln \frac{ \mu[R] - \delta }{ \mu'[R] }
{.}
\]
The \emph{max divergence} is defined by: 
$\maxdiverge{\mu\!}{\!\mu'} = \aprmaxdiverge{0}{\mu\!}{\!\mu'}$.
\end{definition}
}

\subsection{Differential Privacy (\DP{})}
\label{subsec:DP}

\emph{Differential privacy} \cite{Dwork:06:ICALP} captures the idea that given two ``adjacent'' inputs $x$ and $x'$ (from a set $\calx$ of data with an adjacency relation $\varPhi$), a randomized algorithm $\alg$ cannot distinguish $x$ from $x'$ (with degree of $\varepsilon$ and up to exceptions $\delta$).

\begin{definition}[Differential privacy] \label{def:DP} \rm
Let $e$ be the base of natural logarithm.
A randomized algorithm $\alg: \calx \rightarrow \Dists\caly$ provides \emph{$(\varepsilon,\delta)$-differential privacy (\DP{})} w.r.t. an adjacency relation $\varPhi\subseteq\calx\times\calx$ if for any $(x, x')\in\varPhi$ and any $R\subseteq\caly$,
\[
\Prob[ \alg(x)\in R ] \leq e^\varepsilon \,\Prob[ \alg(x')\in R ] + \delta
\]
where the probability is taken over the random choices in $\alg$.
\end{definition}

\journal{
Clearly the protection of differential privacy is stronger for smaller $\varepsilon$ and $\delta$.
}

\fdiv{
It is known that the above definition is equivalent to the following one using max-divergence.
\begin{restatable}{prop}{sDP}
\label{prop:sDP}
A randomized algorithm $\alg: \calx \rightarrow \Dists\caly$ provides $(\varepsilon,\delta)$-\DP{} w.r.t. $\varPhi\subseteq\calx\times\calx$ iff for any $(x, x')\in\varPhi$,\,
$\appmaxdiverge{\alg(x)}{\alg(x')} \le \varepsilon$ and
$\appmaxdiverge{\alg(x')}{\alg(x)} \le~\varepsilon$.
\end{restatable}
}
\fdiv{
Note that the sequential composition of differentially private mechanisms is differentially private:
For any $n$ independent randomized algorithms $\alg_1, \alg_2, \ldots, \alg_n$,
if each $\alg_i$ provides $(\varepsilon_i, \delta_i)$-\DP{}, 
the sequential composition $\alg_n \circ \alg_{n-1} \circ \ldots \circ \alg_1$ provides $(\sum_{i=1}^{n}\varepsilon_i, \sum_{i=1}^{n} \delta_i)$-\DP{}.
}

\subsection{Differential Privacy Mechanisms and Sensitivity}
\label{subsec:mechanisms}

Differential privacy can be achieved by a \emph{privacy mechanism}, namely a randomized algorithm that adds probabilistic noise to a given input that we want to protect.
The amount of noise added by some popular mechanisms (e.g., the exponential mechanism) depends on a \emph{utility function} $\ut:\calx\times\caly\rightarrow\reals$ that maps a pair of input and output to a utility score.
More precisely, the noise is added according to the ``sensitivity'' of $\ut$, which we define as follows.

\begin{definition}[Utility distance]\label{def:util-distance}\rm
The \emph{utility distance} w.r.t a utility function $\ut:(\calx\times\caly)\rightarrow\reals$ is the function $\utmetric$ given by:
$\utmetric(x,x') \eqdef \max_{y\in\caly} \bigl| \util{x}{y} - \util{x'}{y} \bigr|$.
\end{definition}
Note that $\utmetric$ is a pseudometric.
Hereafter we assume that for all $x,y$, $\ut(x,y)=0$ is logically equivalent to $x=y$.
Then the utility distance $\utmetric$ is a metric.

\begin{definition}[Sensitivity w.r.t. an adjacency relation]\label{def:sens-phi}\rm
The \emph{sensitivity} of a utility function $\ut$
w.r.t. an adjacency relation $\varPhi \subseteq \calx\times\calx$ is defined as:
\[
\sensPhi \eqdef
\max_{\substack{(x,x')\in\varPhi}} \utmetric(x,x') =
\max_{\substack{(x,x')\in\varPhi}} \max_{y\in\caly} \bigl| \util{x}{y} - \util{x'}{y} \bigr|
{.}
\]
\end{definition}
\journal{
For example, the exponential mechanism~\cite{McSherry:07:FOCS}
perturbs the output according to the sensitivity $\sensPhi$ and provides $\varepsilon$-\DP{} w.r.t. $\varPhi$.
}

\subsection{Extended Differential Privacy (\XDP{})}
\label{sub:XDP}

We review the notion of \emph{extended differential privacy}~\cite{Chatzikokolakis:13:PETS}, which relaxes \DP{} by incorporating a metric $d$.
Intuitively, this notion guarantees that when two inputs $x$ and $x'$ are closer in terms of $d$, the output distributions are less distinguishable.

\begin{definition}[Extended differential privacy]\label{def:XDP-simple}\rm
For a metric $d: \calx\times\calx\rightarrow\reals$, we say that a randomized algorithm $\alg: \calx \rightarrow \Dists\caly$ provides \emph{$(\varepsilon,\delta,d)$-extended differential privacy (\XDP{})} if for all $x, x'\in\calx$ and for any $R\subseteq\caly$,
\begin{align*}
\Prob[ \alg(x)\in R ] \leq e^{\varepsilon d(x,x')} \,\Prob[ \alg(x')\in R ] + \delta
{.}
\end{align*}
\journal{where the probability is taken over the random choices in $\alg$.}
\end{definition}

\subsection{Wasserstein Metric}
\label{sub:wasserstein-metric}

We recall the notion of probability coupling as follows.

\begin{definition}[Coupling]\label{def:coupling}\rm
Given $\lambda_0\in\Dists\calx_0$ and $\lambda_1\in\Dists\calx_1$, a \emph{coupling} of $\lambda_0$ and $\lambda_1$ is a $\gamma\in\Dists(\calx_0\times \calx_1)$ such that $\lambda_0$ and $\lambda_1$ are $\gamma$'s marginal distributions, i.e.,
for each $x_0\in \calx_0$,
$\lambda_0[x_0] =\!\sum_{x'_1\in \calx_1}\!\gamma[x_0, x'_1]$ and
for each $x_1\in \calx_1$,
$\lambda_1[x_1] =\!\sum_{x'_0\in \calx_0}\!\gamma[x'_0, x_1]$.
We denote by $\cp{\lambda_0}{\lambda_1}$ the set of all couplings of $\lambda_0$ and $\lambda_1$.
\end{definition}

\begin{example}[Coupling as transformation of distributions]\label{eg:coupling-def}
Let us consider two distributions $\lambda_0$ and $\lambda_1$ shown in Fig.~\ref{fig:eg:coupling-def}.
A coupling $\gamma$ of $\lambda_0$ and $\lambda_1$ shows a way of transforming $\lambda_0$ to $\lambda_1$.
For example,
$\gamma[2, 1] = 0.1$ moves from $\lambda_0[2]$ to $\lambda_1[1]$.
\end{example}

We then recall the $\infty$-Wasserstein metric~\cite{Vaserstein:69:PPI} 
between two distributions.

\begin{definition}[$\infty$-Wasserstein metric]\label{def:p-Wasserstein-metric}\rm
Let $\utmetric$ be a metric over $\calx$.
The \emph{$\infty$-Wasserstein metric} $\Winfu$ w.r.t. $\utmetric$ is defined by:
for any $\lambda_0, \lambda_1\in\Dists\calx$, 
\[
\Winfu(\lambda_0, \lambda_1) = 
\min_{\gamma\in \cp{\lambda_0}{\lambda_1}}\hspace{-3ex}
\max_{\hspace{3ex}(x_0, x_1)\in\supp(\gamma)}\hspace{-2ex}
\utmetric(x_0, x_1)
{.}
\]
\end{definition}

The $\infty$-Wasserstein metric $\Winfu(\lambda_0, \lambda_1)$ represents the minimum largest move between points in a transportation from $\lambda_0$ to $\lambda_1$.
Specifically, in a transportation $\gamma$,\, $\max_{(x_0, x_1)\in\supp(\gamma)}\allowbreak\utmetric(x_0, x_1)$ represents the largest move from a point in $\lambda_0$ to another in $\lambda_1$.
For instance, in the coupling $\gamma$ in Example~\ref{eg:coupling-def}, the largest move is $1$ (from $\lambda_0[2]$ to $\lambda_1[1]$, and from $\lambda_0[2]$ to $\lambda_1[3]$).
Such a largest move is minimized by a coupling that achieves the $\infty$-Wasserstein metric.
\fdiv{
\[
\Winfu(\lambda_0, \lambda_1) = 
\min_{\gamma\in \cp{\lambda_0}{\lambda_1}}\hspace{-3ex}
\max_{\hspace{3ex}(x_0, x_1)\in\supp(\gamma)}\hspace{-2ex}
\utmetric(x_0, x_1)
{.}
\]
}
We denote by $\GammaInf$ the set of all couplings that achieve the $\infty$-Wasserstein metric.

Finally, we recall the notion of the lifting of relations.

\begin{definition}[Lifting of relations]\label{def:lifting-relations}\rm
Given a relation $\varPhi\subseteq\calx\times\calx$, the \emph{lifting} of $\varPhi$ is the maximum relation $\lift{\varPhi}\subseteq \Dists\calx\times\Dists\calx$ such that for any $(\lambda_0, \lambda_1)\in\lift{\varPhi}$, there exists a coupling $\gamma\in\cp{\lambda_0}{\lambda_1}$ satisfying $\supp(\gamma)\subseteq\varPhi$.
\end{definition}

Note that by Definition~\ref{def:coupling}, the coupling $\gamma$ is a probability distribution over $\varPhi$ whose marginal distributions are $\lambda_0$ and~$\lambda_1$.
If $\varPhi = \calx\times\calx$, then $\lift{\varPhi} = \Dists\calx\times\Dists\calx$.

\journal{
\begin{example}[Lifted relation]\label{eg:lifted-relation}
Let us consider the distributions $\lambda_0$ and $\lambda_1$ in Fig.~\ref{fig:eg:coupling-def} in Example~\ref{eg:coupling-def}.
We define an adjacency relation
$\varPhi = \{ (1,1), (2,1), \allowbreak (2,2), \allowbreak (2,3), (3,3) \}$.
Then the coupling $\gamma$ satisfies $\supp(\gamma) = \varPhi$.
By Definition~\ref{def:lifting-relations}, we obtain $(\lambda_0, \lambda_1)\in\lift{\varPhi}$,
i.e., $\lambda_0$ is adjacent to $\lambda_1$ by the relation $\lift{\varPhi}$.
This means that we can construct $\lambda_1$ from $\lambda_0$ (based on $\gamma$) by moving some mass from $\lambda_0[x_0]$ to $\lambda_1[x_1]$ only for each $(x_0, x_1)\in\varPhi$ (i.e., only when $x_0$ is adjacent to $x_1$ by $\varPhi$).
\end{example}
}

\section{Privacy Notions for Probability Distributions}
\label{sec:dp-for-dist}

In this section we introduce a formal model for the privacy of user attributes,
which is motivated in Section~\ref{sec:intro}.

\subsection{Modeling the Privacy of User Attributes in Terms of \DP{}}
\label{sub:model:attribute-privacy}

As a running example, we consider an LBS (location based service) in which each user queries an LBS provider for a list of shops nearby.
To hide a user's exact location $x$ from the provider, the user applies a randomized algorithm $\alg: \calx\rightarrow\Dists\caly$, called a \emph{local obfuscation mechanism}, to her location $x$, and obtains an approximate information $y$ with the probability $\alg(x)[y]$.

To illustrate the privacy of attributes, let us consider an example in which users try to prevent an attacker from inferring whether they are $\male$ or $\female$ by obfuscating their own exact locations using a mechanism $\alg$.
For each $\attr\in\{ \male, \female \}$, let $\lambda_{\attr}\in\Dists\calx$ be the prior distribution of the location of the users who have the attribute $\attr$.
Intuitively, $\lambda_{\male}$ (resp. $\lambda_{\female}$) represents an attacker's belief on the location of the male (resp. female) users before the attacker observes an output of the mechanism $\alg$.
Then the privacy of $\attr$ can be modeled as a property that the attacker has no idea on whether the actual location $x$ follows the distribution $\lambda_{\male}$ or $\lambda_{\female}$ after observing an output $y$ of~$\alg$.

This can be formalized in terms of $\varepsilon$-local \DP{}.
For each $\attr\in\{ \male, \allowbreak \female \}$,\,
we denote by $p( y \,|\, \lambda_{\attr})$ the probability of observing an obfuscated location $y$ when an actual location $x$ is distributed over $\lambda_{\attr}$,
i.e.,
$p( y \,|\, \lambda_{\attr}) = 
 \sum_{x\in\calx} \lambda_{\attr}[x] \alg(x)[y]$.
Then we can define the privacy of $\attr$ by:
\begin{align*}
\textstyle
\frac{ p( y \,|\, \lambda_{\male}) }{ p( y \,|\, \lambda_{\female}) }
\le e^{\varepsilon}
{.}
\end{align*}
\journal{
which represents that the attacker cannot distinguish whether the users follow $\lambda_{\male}$ or $\lambda_{\female}$ (with degree of $\varepsilon$).
}

\subsection{Distribution Privacy and Extended Distribution Privacy}
\label{subsec:intro:DistP}

We generalize the privacy of attributes (in Section~\ref{sub:model:attribute-privacy}) and define the notion of \emph{distribution privacy} (\DistP{}) as the differential privacy where the input is a probability distribution of data rather than a value of data.
This notion models a level of obfuscation that hides which distribution a data value is drawn from.
Intuitively, we say a randomized algorithm $\alg$ provides \DistP{} if, by observing an output of $\alg$, we cannot detect from which distribution an input to $\alg$ is generated.

\begin{definition}[Distribution privacy]\label{def:max-DistP}\rm
Let $\varepsilon\in\realsnng$ and $\delta\in[0,1]$.
We say that a randomized algorithm $\alg:\calx\rightarrow\Dists\caly$ provides \emph{$(\varepsilon,\delta)$-distribution privacy (\DistP{}) w.r.t.} 
an adjacency relation $\varPsi\subseteq\Dists\calx\times\Dists\calx$ 
if its lifting $\lift{\alg}:\Dists\calx\rightarrow\Dists\caly$ provides $(\varepsilon,\delta)$-\DP{} w.r.t. $\varPsi$, 
i.e., for all pairs $(\lambda, \lambda')\in\varPsi$ and all $R \subseteq\caly$,\, we have:
\begin{align*}
\lift{\alg}(\lambda)[R] \leq
e^{\varepsilon}\cdot\lift{\alg}(\lambda')[R] + \delta
{.}
\end{align*}
We say $\alg$ provides \emph{$(\varepsilon,\delta)$-\DistP{} w.r.t.}\,$\Lambda\subseteq\Dists\calx$ if it provides $(\varepsilon,\delta)$-\DistP{} w.r.t.~$\Lambda^2$.
\end{definition}

For example, the privacy of a user attribute $\attr\in\{ \male{}, \allowbreak \female{} \}$ described in Section~\ref{sub:model:attribute-privacy} can be formalized as $(\varepsilon, 0)$-\DistP{} w.r.t. $\{\lambda_{\male}, \lambda_{\female}\}$.

Mathematically, \DistP{} is not a new notion but the \DP{} for distributions.
To contrast with \DistP{}, we refer to the \DP{} for data values as \emph{point privacy}. 

Next we introduce an extended form of distribution privacy to a metric.
Intuitively, extended distribution privacy guarantees that when two input distributions are closer,
then the output distributions must be less distinguishable.

\begin{definition}[Extended distribution privacy]\label{def:max-XDP-dist}\rm
Let 
$d: (\Dists\calx\times\Dists\calx)\rightarrow\reals$ be a utility distance, and $\varPsi\subseteq\Dists\calx\times\Dists\calx$.
We say that a mechanism $\alg:\calx\rightarrow\Dists\caly$ provides \emph{$(\varepsilon,d,\delta)$-extended distribution privacy (\XDistP{}) w.r.t.} $\varPsi$ if the lifting $\lift{\alg}$ provides $(\varepsilon,d,\delta)$-\XDP{} w.r.t.~$\varPsi$,
i.e., for all $(\lambda, \lambda')\in\varPsi$ and 
all $R\subseteq\caly$,
we have:
\begin{align*}
\lift{\alg}(\lambda)[R] \leq
e^{\varepsilon d(\lambda,\lambda')}\cdot
\lift{\alg}(\lambda')[R]+ \delta
{.}
\end{align*}
\end{definition}

\subsection{Interpretation by Bayes Factor}
\label{subsec:semantics}

The interpretation of \DP{} has been explored in 
previous work~\cite{Dwork:06:TCC,Chatzikokolakis:13:PETS} 
using the notion of \emph{Bayes factor}.
Similarly, the meaning of \DistP{} can also be explained in terms of Bayes factor, which compares the attacker's prior and posterior beliefs. 

Assume that an attacker has some belief on the input distribution before observing the output values of an obfuscater~$\alg$.
We denote by $p(\lambda)$ the prior probability that a distribution $\lambda$ is chosen 
as the input distribution.
By observing an output $y$ of $\alg$, the attacker updates his belief on the input distribution.
We denote by $p(\lambda | y)$ the posterior probability of $\lambda$ being chosen, given an output $y$.

For two distributions $\lambda_0, \lambda_1$, the Bayes factor $K(\lambda_0, \allowbreak \lambda_1, y)$ is defined as the ratio of the two posteriors  divided by that of the two priors:
$K(\lambda_0, \lambda_1, y) =
 \frac{p(\lambda_0|y)}{p(\lambda_1|y)} \big/ 
 \frac{p(\lambda_0)}{p(\lambda_1)}$.
If the Bayes factor is far from~$1$ the attacker significantly updates his belief on the distribution by observing a perturbed output $y$ of $\alg$.

Assume that $\alg$ provides $(\varepsilon,0)$-\DistP{}.
By Bayes' theorem, we obtain:
\begin{align*}
K(\lambda_0, \lambda_1, y) =
\textstyle
\frac{p(\lambda_0|y)}{p(\lambda_1|y)}\cdot
\frac{p(\lambda_1)}{p(\lambda_0)}
=
\frac{p(y|\lambda_0)}{p(y|\lambda_1)}
=
\frac{\lift{\alg}(\lambda_0)[y]}{\lift{\alg}(\lambda_1)[y]}
\le e^\varepsilon
{.}
\end{align*}
Intuitively, if the attacker believes that $\lambda_0$ is $k$ times more likely than $\lambda_1$ before the observation, then he believes that $\lambda_0$ is $k\cdot e^\varepsilon$ times more likely than $\lambda_1$ after the observation.
This means that for a small value of $\varepsilon$, \DistP{} guarantees that the attacker does not gain information on the distribution by observing $y$.

In the case of \XDistP{}, the Bayes factor $K(\lambda_0, \lambda_1, y)$ is bounded above by $e^{\varepsilon d(\lambda_0, \lambda_1)}$.
Hence the attacker gains more information for a larger distance $d(\lambda_0, \lambda_1)$.

\subsection{Privacy Guarantee for Attackers with Close Beliefs}
\label{subsec:close-belief}

In the previous sections, we assume that we know the distance between two actual input distributions, and can determine the amount of noise required for distribution obfuscation.
However, an attacker may have different beliefs on the distributions that are closer to the actual ones, 
e.g., more accurate distributions obtained by more observations and specific situations (e.g., daytime/nighttime).

To see this, for each $\lambda\in\Dists\calx$, let $\tilde{\lambda}$ be an attacker's belief on $\lambda$.
We say that an attacker has \emph{$(c, \utmetric)$-close beliefs} if each distribution $\lambda$ satisfies $\utmetric(\lambda, \tilde{\lambda}) \le c$.
Then extended distribution privacy in the presence of an attacker is given by:
\begin{restatable}[\XDistP{} with close beliefs]{prop}{beliefsXDistP}
\label{prop:beliefsXDistP}
Let $\alg: \calx \rightarrow \Dists\caly$ provide $(\varepsilon, \utmetric, 0)$-\XDistP{} w.r.t. some $\varPsi \subseteq \calx\times\calx$.
If an attacker has $(c, \utmetric)$-close beliefs, then for all $(\lambda_0, \lambda_1)\in\varPsi$ and all $R \subseteq\caly$, we have
$
\lift{\alg}(\tilde{\lambda_0})[R]
\le e^{\varepsilon \left( \utmetric(\lambda_0, \lambda_1) + 2c \right)}
\cdot\lift{\alg}(\tilde{\lambda_1})[R]
{.}
$
\end{restatable}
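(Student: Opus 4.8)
The plan is to reduce the claim to a single application of the $(\varepsilon, \utmetric, 0)$-\XDistP{} guarantee, this time to the pair of \emph{beliefs} $(\tilde{\lambda_0}, \tilde{\lambda_1})$, after controlling the distance between these beliefs by the triangle inequality. The only structural facts I would use about $\utmetric$ beyond the privacy hypothesis are that, being a utility distance (hence a metric) on $\Dists\calx$, it is symmetric and satisfies the triangle inequality.

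First I would bound the distance between the two beliefs. Applying the triangle inequality twice, then symmetry of $\utmetric$ together with the $(c, \utmetric)$-close-beliefs hypothesis $\utmetric(\lambda_i, \tilde{\lambda_i}) \le c$ for $i \in \{0,1\}$, gives
\begin{align*}
\utmetric(\tilde{\lambda_0}, \tilde{\lambda_1})
&\le \utmetric(\tilde{\lambda_0}, \lambda_0) + \utmetric(\lambda_0, \lambda_1) + \utmetric(\lambda_1, \tilde{\lambda_1}) \\
&\le \utmetric(\lambda_0, \lambda_1) + 2c
{.}
\end{align*}

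Next I would invoke extended distribution privacy. By Definition~\ref{def:max-XDP-dist}, the hypothesis that $\alg$ provides $(\varepsilon, \utmetric, 0)$-\XDistP{} means its lifting $\lift{\alg}$ satisfies, for the pair $(\tilde{\lambda_0}, \tilde{\lambda_1})$ and every $R \subseteq \caly$,
\[
\lift{\alg}(\tilde{\lambda_0})[R] \le e^{\varepsilon\, \utmetric(\tilde{\lambda_0}, \tilde{\lambda_1})} \cdot \lift{\alg}(\tilde{\lambda_1})[R]
{.}
\]
Substituting the distance bound from the first step and using monotonicity of $t \mapsto e^{\varepsilon t}$ then yields exactly $\lift{\alg}(\tilde{\lambda_0})[R] \le e^{\varepsilon(\utmetric(\lambda_0, \lambda_1) + 2c)} \cdot \lift{\alg}(\tilde{\lambda_1})[R]$, as required.

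The one point that needs care — and essentially the only place where the argument could stumble — is the legitimacy of applying the \XDistP{} inequality to the pair $(\tilde{\lambda_0}, \tilde{\lambda_1})$ rather than to $(\lambda_0, \lambda_1) \in \varPsi$. In the extended setting the graded bound is intended to hold for all pairs of distributions, mirroring Definition~\ref{def:XDP-simple}, where \XDP{} is quantified over all $x, x'$; under that reading the application is immediate. If one instead insists on using only pairs drawn from $\varPsi$, the same conclusion follows by chaining the inequality three times along $\tilde{\lambda_0} \to \lambda_0 \to \lambda_1 \to \tilde{\lambda_1}$ and summing the exponents, which again produces the factor $\utmetric(\lambda_0,\lambda_1) + 2c$. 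Either route is routine once the triangle-inequality bound is in place, so I do not expect any genuinely hard step.
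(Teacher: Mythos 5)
Your proposal is correct, and your ``fallback'' chaining is literally the paper's proof: the paper establishes the claim by three successive applications of the privacy inequality along $\tilde{\lambda_0}\to\lambda_0\to\lambda_1\to\tilde{\lambda_1}$, namely
\begin{align*}
\lift{\alg}(\tilde{\lambda_0})[R]
\le e^{\varepsilon c}\,\lift{\alg}(\lambda_0)[R]
\le e^{\varepsilon(\utmetric(\lambda_0,\lambda_1)+c)}\,\lift{\alg}(\lambda_1)[R]
\le e^{\varepsilon(\utmetric(\lambda_0,\lambda_1)+2c)}\,\lift{\alg}(\tilde{\lambda_1})[R]
{.}
\end{align*}
Your primary route---first bounding $\utmetric(\tilde{\lambda_0},\tilde{\lambda_1})\le\utmetric(\lambda_0,\lambda_1)+2c$ by the triangle inequality, then applying \XDistP{} once to $(\tilde{\lambda_0},\tilde{\lambda_1})$---is a mild variant that trades three uses of the privacy hypothesis for one use plus the metric axioms of $\utmetric$ on $\Dists\calx$; both yield the same exponent. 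Importantly, both routes require reading the \XDistP{} hypothesis as holding beyond $\varPsi$: you need it on $(\tilde{\lambda_0},\tilde{\lambda_1})$, while the paper's chain needs it on $(\tilde{\lambda_0},\lambda_0)$ and $(\lambda_1,\tilde{\lambda_1})$, and only the middle pair $(\lambda_0,\lambda_1)$ is guaranteed to lie in $\varPsi$. You correctly identified this as the crux.

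One correction to your closing remark: it is not true that the chaining route uses ``only pairs drawn from $\varPsi$.'' The $(c,\utmetric)$-close-beliefs hypothesis constrains the end pairs $(\tilde{\lambda_0},\lambda_0)$ and $(\lambda_1,\tilde{\lambda_1})$ only by distance, not by membership in $\varPsi$, so under a strictly $\varPsi$-restricted reading neither your chaining nor the paper's own proof goes through. Compare Proposition~\ref{prop:beliefsDistP}, where the \DistP{} analogue explicitly assumes close beliefs \emph{w.r.t.}\ $\varPsi$ (i.e., $(\tilde{\lambda_0},\lambda_0),(\lambda_1,\tilde{\lambda_1})\in\varPsi$) precisely to license such chaining; the \XDistP{} statement here instead tacitly adopts the all-pairs reading of Definition~\ref{def:XDP-simple}, which is the reading under which your one-shot argument is the cleanest formulation.
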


When the attacker's beliefs are closer to ours, then $c$ is smaller, hence a stronger distribution privacy is guaranteed.
\arxiv{See Appendix~\ref{sub:proofs:DistP:beliefs} for \DistP{}.}
\conference{See~\cite{arxiv} for a proposition with \DistP{}.}
Note that assuming some attacker's beliefs are inevitable also in many previous studies, 
e.g., when we want to protect the privacy of correlated data \cite{Kifer:11:SIGMOD,Kifer:12:PODS,Song:17:SIGMOD}.

\subsection{Difference from the Histogram Privacy}
\label{sub:histogram}

Finally, we present a brief remark on the difference between \DistP{} and the \emph{differential privacy of histogram publication} (e.g.,~\cite{Xu:13:VLDB}).
Roughly, a histogram publication mechanism is a \emph{central} mechanism that aims at \emph{hiding a single record} $x\in\calx$ and outputs an obfuscated histogram, e.g., a distribution $\mu\in\Dists\caly$, 
whereas a \DistP{} mechanism is a \emph{local} mechanism that aims at \emph{hiding an input distribution} $\lambda\in\Dists\calx$ and outputs a single perturbed value $y\in\caly$.

Note that neither of these implies the other.
The $\varepsilon$-\DP{} of a histogram publication mechanism means that 
for any two adjacent inputs $x, x' \in\calx$ and any histogram $\mu \in \Dists\caly$,
$
\frac{p(\mu | x)}{p(\mu | x')} \le e^\varepsilon.
$
However, this does not derive $\varepsilon$-\DistP{}, i.e., 
for any adjacent input distributions $\lambda, \lambda' \in\Dists\calx$ and any output $y \in \caly$,
$
\frac{p(y | \lambda)}{p(y | \lambda')} \le e^\varepsilon
$.

\begin{table*}[t]
  \centering
  \caption{Summary of basic properties of \DistP{}.}
  \label{table:main:basic-properties}
  \scalebox{1}[1]{
  \footnotesize
  \renewcommand{\arraystretch}{1.05}
  \begin{tabular}{lll}
    \hline
    Sequential composition $\Seq$ ~~~~~&
      $\alg_b$ is $(\varepsilon_b,\delta_b)$-\DistP{} \\
    &
      $\Rightarrow$
      $\alg_1 \Seq \alg_0$ is $(\varepsilon_0+\varepsilon_1,\, (\delta_0+\delta_1)\cdot|\varPhi|)$-\DistP{} \\
    \hline
    Sequential composition $\liftSeq$ &
      $\alg_b$ is $(\varepsilon_b,\delta_b)$-\DistP{} \\
    &
      $\Rightarrow$
      $\alg_1 \liftSeq \alg_0$ is $(\varepsilon_0+\varepsilon_1, \delta_0+\delta_1)$-\DistP{} \\
    \hline
    Post-processing &
      $\alg_0$ is $(\varepsilon,\delta)$-\DistP{}
      $\Rightarrow$
      $\alg_1\circ\alg_0$ is $(\varepsilon,\delta)$-\DistP{} \\
    \hline
    Pre-processing (by $c$-stable $T$) &
      $\alg$ is $(\varepsilon,\delta)$-\DistP{}
      $\Rightarrow$
      $\alg\circ T$ is $(c\,\varepsilon,\delta)$-\DistP{} \\
    \hline
  \end{tabular}
  }
\end{table*}

\section{Basic Properties of Distribution Privacy}
\label{sub:properties-DistP}

In Table~\ref{table:main:basic-properties}, we show basic properties of \DistP{}.%
\arxiv{(See Appendices~\ref{subsec:compositionality} and~\ref{subsec:post-pre-processing} for the details.)}
\conference{(See the arXiv version~\cite{arxiv} for the full table with \XDistP{} and their detailed proofs.)}
\conferenceShort{(See the arXiv version~\cite{arxiv} for the full table with \XDistP{} and their detailed proofs.)}

The composition $A_1 \Seq A_0$ means that an identical input $x$ is given to two \DistP{} mechanisms $A_0$ and $A_1$,
whereas the composition $A_1 \liftSeq A_0$ means that independent inputs $x_b$ are provided to mechanisms $\alg_b$~\cite{Kawamoto:17:LMCS}.
The compositionality can be used to quantify the attribute privacy against an attacker who obtains multiple released data each obfuscated for the purpose of protecting a different attribute.
For example, let $\varPsi = \{ (\lambda_{\male},\, \lambda_{\female}),\, (\lambda_{\home},\, \lambda_{\outside}) \}$,
and $\alg_0$ (resp. $\alg_1$) be a mechanism providing $\varepsilon_0$-\DistP{} (resp. $\varepsilon_1$-\DistP{}) w.r.t. $\varPsi$.
When $\alg_0$ (resp. $\alg_1$) obfuscates a location $x_0$ for the sake of protecting \male{}/\female{} (resp. \home{}/\outside{}), then
both \male{}/\female{} and \home{}/\outside{} are protected with $(\varepsilon_0+\varepsilon_1)$-\DistP{}.

As for pre-processing, the stability notion is different from that for \DP{}:
\begin{definition}[Stability]\label{def:c-stable}\rm
Let $c\in\natspos$, 
$\varPsi\subseteq\Dists\calx\times\Dists\calx$,
and $\sensfunc$ be a metric over $\Dists\calx$.
A transformation $T:\Dists\calx\rightarrow\Dists\calx$ is \emph{$(c, \varPsi)$-stable} if 
for any $(\lambda_0,\lambda_1)\in\varPsi$,
$T(\lambda_0)$ can be reached from $T(\lambda_1)$ at most $c$-steps over $\varPsi$.\,
Analogously, $T:\Dists\calx\rightarrow\Dists\calx$ is \emph{$(c,\sensfunc)$-stable} if for any $\lambda_0,\lambda_1\in\Dists\calx$, 
$\sensfunc(T(\lambda_0),T(\lambda_1)) \le c \sensfunc(\lambda_0,\lambda_1)$.
\end{definition}

We present relationships among privacy notions 
\arxiv{in Appendices~\ref{sub:proofs:DistP-implies-DP} and~\ref{sub:proofs-PDistP-DistP}
}%
\conference{in~\cite{arxiv}.}
An important property is that when the relation $\varPsi\subseteq\Dists\calx\times\Dists\calx$ includes pairs of 
point distributions (i.e., distributions having single points with probability $1$), 
$\DistP{}$ (resp. $\XDistP{}$) implies $\DP{}$ (resp. $\XDP{}$).
In contrast, if $\varPsi$ does not include pairs of 
point distributions, 
\DistP{} (resp. $\XDistP{}$) may not imply \DP{} (resp. $\XDP{}$),
as in Section~\ref{sec:DistP-by-dummies}.

\section{Distribution Obfuscation by Point Obfuscation}
\label{sec:standard-DistP}

In this section we present how the point obfuscation mechanisms (including \DP{} and \XDP{} mechanisms) contribute to the obfuscation of probability distributions.
\arxiv{(See Appendix~\ref{sub:DistP-by-PointP} for the proofs.)}

\subsection{Distribution Obfuscation by \DP{} Mechanisms}
\label{subsec:DistP-by-DP}

We first show every \DP{} mechanism provides \DistP{}.
(See Definition~\ref{def:lifting-relations} for $\lift{\varPhi}$.)

\begin{restatable}[$(\varepsilon, \delta)$-\DP{} $\Rightarrow$ $(\varepsilon,\, \delta\cdot|\varPhi|)$-\DistP]{thm}{DPimpliesMaxDistP}
\label{thm:DPimpliesMaxDistP}
Let 
$\varPhi \subseteq \calx\times\calx$.
If $\alg:\calx\rightarrow\Dists\caly$ provides $(\varepsilon, \delta)$-\DP{} w.r.t. $\varPhi$, then it provides $(\varepsilon,\, \delta\cdot|\varPhi|)$-\DistP{} w.r.t. $\lift{\varPhi}$.
\end{restatable}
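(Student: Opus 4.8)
The plan is to unfold both definitions and reduce the distribution-level inequality to the point-level $(\varepsilon,\delta)$-\DP{} guarantee along a single coupling. I would fix a pair $(\lambda_0,\lambda_1)\in\lift{\varPhi}$ and an arbitrary $R\subseteq\caly$. By Definition~\ref{def:lifting-relations}, membership in $\lift{\varPhi}$ hands me a coupling $\gamma\in\cp{\lambda_0}{\lambda_1}$ whose support is contained in $\varPhi$. This coupling is the whole engine of the argument, because it simultaneously reconstructs $\lambda_0$ and $\lambda_1$ as its two marginals and certifies that every pair of points between which it transports mass is $\varPhi$-adjacent.

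First I would expand the lifting as $\lift{\alg}(\lambda_0)[R]=\sum_{x_0}\lambda_0[x_0]\,\alg(x_0)[R]$ and rewrite $\lambda_0[x_0]=\sum_{x_1}\gamma[x_0,x_1]$ using the first marginal condition, turning the quantity into a sum over pairs $\sum_{(x_0,x_1)}\gamma[x_0,x_1]\,\alg(x_0)[R]$ in which only the support pairs contribute. Since $\supp(\gamma)\subseteq\varPhi$, every such pair $(x_0,x_1)$ is adjacent, so the hypothesis gives $\alg(x_0)[R]\le e^{\varepsilon}\alg(x_1)[R]+\delta$ termwise. Substituting this under the coupling weights and then recombining the $\alg(x_1)[R]$ terms through the second marginal condition $\lambda_1[x_1]=\sum_{x_0}\gamma[x_0,x_1]$ collapses the leading term back to $e^{\varepsilon}\lift{\alg}(\lambda_1)[R]$.

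The only remaining piece is the accumulated additive error, which is where the factor $|\varPhi|$ enters. Carried through exactly, the error is $\delta\sum_{(x_0,x_1)\in\supp(\gamma)}\gamma[x_0,x_1]=\delta$, since the coupling mass sums to one; this already yields the stronger $(\varepsilon,\delta)$-\DistP{}, and the stated $(\varepsilon,\delta\cdot|\varPhi|)$ bound then follows trivially from $|\varPhi|\ge1$. If one instead prefers to estimate the error pair-by-pair over the at most $|\varPhi|$ support pairs (bounding each transported mass crudely rather than summing it), one recovers exactly the $\delta\cdot|\varPhi|$ of the statement. I do not expect a genuine obstacle here: the whole proof is a linearity manipulation of the two marginal identities. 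The one point demanding care is the very first invocation — confirming that the coupling supplied by $\lift{\varPhi}$ really does confine all transported mass to $\varPhi$-adjacent pairs, so that the point-level \DP{} inequality is legitimately applicable inside the sum; everything downstream is routine bookkeeping.
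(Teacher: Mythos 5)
Your proof is correct, and it takes a genuinely different---and in fact sharper---route than the paper's. The paper does not sum the pointwise guarantee under the coupling; instead it fixes $R$ with $\lift{\alg}(\lambda_0)[R] > \delta'$ where $\delta' = \delta\cdot|\varPhi|$, first establishes $\lift{\alg}(\lambda_1)[R] > 0$ by a separate contradiction argument, and then bounds the ratio $\bigl(\lift{\alg}(\lambda_0)[R]-\delta'\bigr)/\lift{\alg}(\lambda_1)[R]$ by a maximum of per-pair ratios (a mediant-type inequality), after splitting the subtracted $\delta'$ uniformly as $\delta'/|\supp(\gamma)|$ over the support pairs of $\gamma$. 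The factor $|\varPhi|$ is consumed exactly there: reducing each per-pair ratio to the pointwise \DP{} bound requires $\delta'/|\supp(\gamma)| \ge \gamma[x_0,x_1]\cdot\delta$, which the paper settles with the crude estimates $\gamma[x_0,x_1]\le 1$ and $|\supp(\gamma)|\le|\varPhi|$. Your termwise substitution $\alg(x_0)[R]\le e^{\varepsilon}\alg(x_1)[R]+\delta$ under the coupling weights involves no division at all, so it needs neither the positivity step nor any splitting of $\delta'$: the additive errors enter weighted by $\gamma[x_0,x_1]$ and sum to exactly $\delta$, yielding the strictly stronger conclusion $(\varepsilon,\delta)$-\DistP{}, from which the stated $(\varepsilon,\,\delta\cdot|\varPhi|)$-\DistP{} follows by monotonicity in the additive parameter (the case $\varPhi=\emptyset$ is vacuous, since a coupling has nonempty support, so $|\varPhi|\ge 1$ whenever $\lift{\varPhi}\neq\emptyset$). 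In short, your argument shows the $|\varPhi|$ blow-up in the theorem is an artifact of the paper's ratio bookkeeping rather than intrinsic to the statement; the same linearity-of-expectation argument would analogously remove the $|\varPhi|$ factor from the \XDP{} counterpart, Theorem~\ref{thm:max-DistPFromDP}, since there $\utmetric(x_0,x_1)\le\sensinf(\lambda_0,\lambda_1)$ holds termwise on $\supp(\gamma)$ for $\gamma\in\GammaInf(\lambda_0,\lambda_1)$. Your closing remark that pair-by-pair error accounting ``recovers'' $\delta\cdot|\varPhi|$ is a fair description of what the paper actually does.
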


This means that the mechanism $\alg$ makes any pair $(\lambda_0, \lambda_1)\in\lift{\varPhi}$ indistinguishable up to the threshold $\varepsilon$ and with exceptions $\delta\cdot|\varPhi|$.
Intuitively, when $\lambda_0$ and $\lambda_1$ are adjacent w.r.t. the relation $\lift{\varPhi}$, we can construct $\lambda_1$ from $\lambda_0$ only by moving mass from $\lambda_0[x_0]$ to $\lambda_1[x_1]$ where $(x_0, x_1)\in\varPhi$ (i.e., $x_0$ is adjacent to $x_1$).

\begin{example}[Randomized response]
By Theorem~\ref{thm:DPimpliesMaxDistP}, the $(\varepsilon, 0)$-\DP{} 
randomized response~\cite{Kairouz:16:ICML} 
and RAPPOR~\cite{Erlingsson_CCS14} provide $(\varepsilon, 0)$-\DistP{}.
When we use these mechanisms, the estimation of the input distribution is harder for a smaller $\varepsilon$.
However, these \DP{} mechanisms tend to have small utility, because they add much noise to hide not only the input distributions, but everything about inputs.
\end{example}

\subsection{Distribution Obfuscation by \XDP{} Mechanisms}
\label{subsec:DistP-by-XDP}

Compared to \DP{} mechanisms, \XDP{} mechanisms are known to provide better utility.
Alvim et al.~\cite{Alvim:18:CSF} show the planar Laplace mechanism~\cite{Andres:13:CCS} adds less noise than the randomized response, since \XDP{} hides only closer locations.
However, we show \XDP{} mechanisms still need to add much noise proportionally to the $\infty$-Wasserstein distance between the distributions we want make indistinguishable.

\subsubsection{The $\infty$-Wasserstein distance $\sensinf$ as Utility Distance}
We first observe how much $\varepsilon'$ is sufficient for an $\varepsilon'$-\XDP{} mechanism (e.g., the Laplace mechanism) to make two distribution $\lambda_0$ and $\lambda_1$ indistinguishable in terms of $\varepsilon$-\DistP{}.

Suppose that $\lambda_0$ and $\lambda_1$ are point distributions such that $\lambda_0[x_0] = \lambda_1[x_1] = 1$ for some $x_0,x_1\in\calx$.
Then an $\varepsilon'$-\XDP{} mechanism $\alg$ satisfies:
\begin{align*}
\maxdiverge{\lift{\alg}(\lambda_0)\!}{\!\lift{\alg}(\lambda_1)}
&=
\maxdiverge{\alg(x_0)\!}{\!\alg(x_1)}
\le \varepsilon' \utmetric(x_0,x_1)
{.}
\end{align*}
In order for $\alg$ to provide $\varepsilon$-\DistP{},\, $\varepsilon'$ should be defined as $\frac{\varepsilon}{\utmetric(x_0,x_1)}$.
That is, the noise added by $\alg$ should be proportional to the distance between $x_0$ and $x_1$.

To extend this to arbitrary distributions, we need to define a utility metric between distributions.
A natural possible definition would be the largest distance between values of $\lambda_0$ and $\lambda_1$, i.e., the \emph{diameter} over the supports defined by:
\[
\diam(\lambda_0, \lambda_1) = 
\max_{x_0\in\supp(\lambda_0), x_1\in\supp(\lambda_1)} \utmetric(x_0, x_1)
{.}
\]
However, when there is an outlier in $\lambda_0$ or $\lambda_1$ that is far from other values in the supports,
then the diameter $\diam(\lambda_0, \lambda_1)$ is large.
Hence the mechanisms that add noise proportionally to the diameter would lose utility too much.

To have better utility, we employ the $\infty$-Wasserstein metric $\sensinf$.
The idea is that given two distributions $\lambda_0$ and $\lambda_1$ over $\calx$, we consider the cost of a transportation of weights from $\lambda_0$ to $\lambda_1$.
The transportation is formalized as a coupling $\gamma$ of $\lambda_0$ and $\lambda_1$ (see Definition~\ref{def:coupling}), and the cost of the largest move is 
$
\displaystyle\sens_{\supp(\gamma),\utmetric} = \hspace{-1ex}
 \max_{(x_0, x_1)\in\supp(\gamma)}\hspace{-1ex} \utmetric(x_0,x_1),
$
i.e., the sensitivity w.r.t. the adjacency relation $\supp(\gamma)\subseteq\calx\times\calx$ (Definition~\ref{def:sens-phi}).
The minimum cost of the largest move is given by the $\infty$-Wasserstein metric: 
$
\sensinf(\lambda_0, \lambda_1)=
\displaystyle\min_{\gamma\in\cp{\lambda_0}{\lambda_1}} \sens_{\supp(\gamma),\utmetric}
{.}
$

\subsubsection{\XDP{} implies \XDistP{}}
We show every \XDP{} mechanism provides \XDistP{} with the metric $\sensinf$.
To formalize this, we define a lifted relation $\liftinf{\varPhi}$ as the maximum relation over $\Dists\calx$ s.t. for any $(\lambda_0, \lambda_1)\in\liftinf{\varPhi}$, there is a coupling $\gamma\in\cp{\lambda_0}{\lambda_1}$ satisfying $\supp(\gamma)\subseteq\varPhi$ and $\gamma\in\GammaInf(\lambda_0, \lambda_1)$.
Then
$\liftinf{\varPhi}\subseteq\lift{\varPhi}$ holds.

\begin{restatable}[$(\varepsilon, \utmetric, \delta)$-\XDP{}$\Rightarrow$$(\varepsilon, \sensinf, \delta\!\cdot\!|\varPhi|)$-\XDistP]{thm}{maxDistPFromDP}
\label{thm:max-DistPFromDP}
If $\alg\!:\!\calx\rightarrow\Dists\caly$ provides \allowbreak $(\varepsilon, \utmetric, \delta)$-\XDP{} w.r.t. $\varPhi \subseteq \calx\times\calx$, 
it provides $(\varepsilon, \sensinf, \allowbreak {\delta\!\cdot\!|\varPhi|})$-\XDistP{} 
w.r.t.~$\liftinf{\varPhi}$.
\end{restatable}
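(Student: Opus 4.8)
The plan is to transport the pointwise \XDP{} guarantee through the lifting by means of the very coupling that witnesses membership in $\liftinf{\varPhi}$. Fix a pair $(\lambda_0,\lambda_1)\in\liftinf{\varPhi}$ and a set $R\subseteq\caly$. By the definition of $\liftinf{\varPhi}$ there is a coupling $\gamma\in\cp{\lambda_0}{\lambda_1}$ with $\supp(\gamma)\subseteq\varPhi$ that moreover lies in $\GammaInf(\lambda_0,\lambda_1)$, i.e.\ it realizes the $\infty$-Wasserstein metric. First I would rewrite both liftings as sums indexed by this single $\gamma$: using the two marginal conditions of Definition~\ref{def:coupling}, $\lift{\alg}(\lambda_0)[R]=\sum_{(x_0,x_1)}\gamma[x_0,x_1]\,\alg(x_0)[R]$ and $\lift{\alg}(\lambda_1)[R]=\sum_{(x_0,x_1)}\gamma[x_0,x_1]\,\alg(x_1)[R]$, so the two quantities to be compared range over the same index set $\supp(\gamma)$.

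Next, for each pair $(x_0,x_1)\in\supp(\gamma)\subseteq\varPhi$ I would invoke the $(\varepsilon,\utmetric,\delta)$-\XDP{} hypothesis to obtain $\alg(x_0)[R]\le e^{\varepsilon\utmetric(x_0,x_1)}\,\alg(x_1)[R]+\delta$. This is exactly where the stronger relation $\liftinf{\varPhi}$ (rather than $\lift{\varPhi}$) is needed: because $\gamma\in\GammaInf(\lambda_0,\lambda_1)$, every pair in its support satisfies $\utmetric(x_0,x_1)\le\sensinf(\lambda_0,\lambda_1)$, so with $\varepsilon\ge 0$ the per-pair factor $e^{\varepsilon\utmetric(x_0,x_1)}$ can be replaced uniformly by the single constant $e^{\varepsilon\sensinf(\lambda_0,\lambda_1)}$ while keeping the inequality valid, since the factor $\alg(x_1)[R]$ is nonnegative. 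Multiplying each pointwise inequality by $\gamma[x_0,x_1]\ge 0$ and summing over $\supp(\gamma)$ then gives
\[
\lift{\alg}(\lambda_0)[R]\le e^{\varepsilon\sensinf(\lambda_0,\lambda_1)}\,\lift{\alg}(\lambda_1)[R]+\delta\!\!\sum_{(x_0,x_1)\in\supp(\gamma)}\!\!\gamma[x_0,x_1],
\]
which is the required \XDistP{} inequality once the additive term is controlled.

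The one remaining point is the bookkeeping of the additive slack, and this is where the factor $|\varPhi|$ enters. Since $\supp(\gamma)\subseteq\varPhi$ has at most $|\varPhi|$ elements and each $\gamma[x_0,x_1]\le 1$, bounding the final sum termwise by $\delta$ yields the claimed bound $\delta\cdot|\varPhi|$ (using instead $\sum_{(x_0,x_1)}\gamma[x_0,x_1]=1$ would give the sharper additive term $\delta$, but $\delta\cdot|\varPhi|$ is the safe statement). I expect this $\delta$-accounting to be the only delicate step: the rewriting via the coupling and the replacement of the exponent are mechanical, whereas one must take care that the approximate term is aggregated correctly over $\supp(\gamma)$ and that the optimality $\gamma\in\GammaInf(\lambda_0,\lambda_1)$ is genuinely used to collapse the exponents into the single $\sensinf(\lambda_0,\lambda_1)$. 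The structure parallels the pure case, Theorem~\ref{thm:DPimpliesMaxDistP}, with the discrete metric there replaced by $\utmetric$ and $\lift{\varPhi}$ by $\liftinf{\varPhi}$.
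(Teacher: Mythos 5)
Your proof is correct, and it actually establishes a slightly stronger statement than the one claimed. The skeleton coincides with the paper's: fix $(\lambda_0,\lambda_1)\in\liftinf{\varPhi}$, take the witnessing coupling $\gamma\in\cp{\lambda_0}{\lambda_1}$ with $\supp(\gamma)\subseteq\varPhi$ and $\gamma\in\GammaInf(\lambda_0,\lambda_1)$, rewrite both liftings as $\gamma$-weighted sums over the common index set $\supp(\gamma)$, and use the optimality of $\gamma$ to collapse every exponent $e^{\varepsilon\utmetric(x_0,x_1)}$ into the single constant $e^{\varepsilon\sensinf(\lambda_0,\lambda_1)}$ --- exactly the role of $\liftinf{\varPhi}$ versus $\lift{\varPhi}$ that you identify. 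Where you genuinely diverge is the handling of the additive slack. The paper (mirroring its proof of Theorem~\ref{thm:DPimpliesMaxDistP}) works with the ratio $\bigl(\lift{\alg}(\lambda_0)[R]-\delta'\bigr)/\lift{\alg}(\lambda_1)[R]$, first verifying $\lift{\alg}(\lambda_1)[R]>0$ whenever $\lift{\alg}(\lambda_0)[R]>\delta'$, and bounds the ratio by a maximum over $\supp(\gamma)$ via a mediant-type inequality; to push the subtracted $\delta'$ inside that maximum it is split as $\delta'/|\supp(\gamma)|$ per support element and compared with $\gamma[x_0,x_1]\,\delta$ using $\gamma[x_0,x_1]\le 1$ and $|\supp(\gamma)|\le|\varPhi|$, which is precisely what forces $\delta'=\delta\cdot|\varPhi|$. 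You instead multiply each pointwise \XDP{} inequality by $\gamma[x_0,x_1]$ and sum --- a convexity-style argument that needs neither the positivity preliminary nor any division --- and, since $\sum_{(x_0,x_1)\in\supp(\gamma)}\gamma[x_0,x_1]=1$, this yields the additive term $\delta$ rather than $\delta\cdot|\varPhi|$, as you yourself note; your deliberately wasteful termwise bound then recovers the stated $\delta\cdot|\varPhi|$ a fortiori (note $|\varPhi|\ge 1$ whenever $\liftinf{\varPhi}$ is nonempty, so the sharper bound indeed implies the theorem). In short, your route is more elementary and quantitatively sharper, showing the $|\varPhi|$ factor in the paper's statement is an artifact of distributing the slack uniformly over the support instead of weighting it by $\gamma$; what the paper's ratio formulation buys is only structural uniformity with its other proofs.
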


By Theorem~\ref{thm:max-DistPFromDP}, 
when $\delta > 0$, the noise required for obfuscation is proportional to $|\varPhi|$, which is at most the domain size squared $|\calx|^2$.
This implies that for a larger domain $\calx$, the Gaussian mechanism is not suited for distribution obfuscation.
We will demonstrate this by experiments in Section~\ref{sub:exp-compare}.

In contrast, the Laplace/exponential mechanisms provide $(\varepsilon, \sensinf, 0)$-\DistP{}.
Since $\sensinf(\lambda_0, \lambda_1) \le \diam(\lambda_0, \lambda_1)$, the noise added proportionally to $\sensinf$ can be smaller than $\diam$.
This implies that obfuscating a distribution requires less noise than obfuscating a set of data.
However, the required noise can still be very large when we want to make two distant distributions indistinguishable.

\section{Distribution Obfuscation by Random Dummies}
\label{sec:DistP-by-dummies}

In this section we introduce a local mechanism called a \emph{tupling mechanism}
to improve the tradeoff between \DistP{} and utility, as motivated in Section~\ref{sec:intro}.
\journal{
and show that this mechanism provides \DistP{} (hence the privacy of attributes) and better utility for users by restricting the \DP{} protection of point data.
Although many studies have been using dummies to hide users' locations~ \cite{Bindschaedler:16:SP,Chow:09:WPES,Kido:05:ICDE}, they do not aim at protecting  users' attributes or activities in a \DP{} manner.
}

\subsection{Tupling Mechanism}
\label{sub:tupling-mechanism}

We first define the \emph{tupling mechanism} as a local mechanism that obfuscates a given input $x$ by using a point perturbation mechanism $\alg$ (not necessarily in terms of \DP{} or \XDP{}), and that also adds $k$ random dummies $r_1, r_2, \ldots, r_k$ to the output to obfuscate the input distribution
(Algorithm~\ref{alg:tupling}).
\begin{algorithm}[t]
\caption{Tupling mechanism $\TPM$}\label{alg:tupling}
\begin{footnotesize}
\begin{algorithmic}
\REQUIRE{$x$: input, $k$: \#dummies, $\nu$: distribution of dummies, $\alg$: randomized algorithm}
\ENSURE{$y = (r_1, \ldots, r_{i}, s, r_{i+1}, \ldots, r_k)$: the output value of the tupling mechanism}
\STATE $s \randassign \alg(x)$\,;~ \COMMENT{Draw an obfuscated value $s$ of an input $x$} \\
\STATE $r_1, r_2, \ldots, r_k \randassign \nu$\,;~ \COMMENT{Draw $k$ dummies from a given distribution $\nu$} \\
\STATE $i \randassign \{1,2,\ldots,k+1\}$\,;\,~\COMMENT{Draw $i$ to decide the order of the outputs} \\
\RETURN{$(r_1, \ldots, r_{i}, s, r_{i+1}, \ldots, r_k)$\,;}
\end{algorithmic}
\end{footnotesize}
\end{algorithm}
The probability that given an input $x$, the mechanism $\TPM$ outputs $\bar{y}$ is given by $\TPM(x)[\bar{y}]$.

\subsection{Privacy of the Tupling Mechanism}
\label{sub:tupling-privacy}

Next we show that the tupling mechanism provides \DistP{} w.r.t. the following class of distributions.
Given $\beta, \eta\in[0, 1]$ and $\alg: \calx\rightarrow\Dists\caly$, we define $\LambdaBeta$ by:
\[
\LambdaBeta =
\bigl\{
\lambda\in\Dists\calx \mid
\Pr\bigl[ y \randassign \caly: \lift{\alg}(\lambda)[y] \le \beta \bigr]
\ge 1 - \eta
\bigr\}
{.}
\]
For instance, a distribution $\lambda$ satisfying $\max_{x} \lambda[x] \le \beta$ belongs to $\Lambda_{\beta,0,\alg}$.
\arxiv{(See Proposition~\ref{prop:dist-prob-bounds} in Appendix~\ref{sub:tupling:properties}.)}
\conference{}

\begin{restatable}[\DistP{} of the tupling mechanism]{thm}{TuplingDP}
\label{thm:TuplingDP}
Let $k\in\natspos$, $\nu$ be the uniform distribution over $\caly$,
$\alg:\calx\rightarrow\Dists\caly$,
and $\beta, \eta\in[0, 1]$.
Given an $0 < \alpha < \frac{k}{|\caly|}$, let 
$\varepsilon_{\alpha} = \ln{\textstyle\frac{ k + (\alpha + \beta)\cdot|\caly| }{ k - \alpha\cdot|\caly| }}$
and $\delta_{\alpha} = 2 e^{-\frac{2\alpha^2}{k\beta^2}} + \eta$.
Then 
the $(k,\nu,\alg)$-tupling mechanism provides $(\varepsilon_{\!\alpha}, \delta_{\!\alpha})$-\!\DistP{}
 w.r.t. $\LambdaBeta^2$.
\end{restatable}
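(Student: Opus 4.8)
The plan is to compute $\lift{\TPM}(\lambda)$ in closed form and then reduce the claim to a concentration estimate on the random dummies. First I would unfold Algorithm~\ref{alg:tupling}: an output tuple $\bar{y} = (y_1,\dots,y_{k+1})$ arises by planting the obfuscated value $s$ in one of the $k+1$ slots (each chosen with probability $\frac{1}{k+1}$) and filling the remaining $k$ slots with independent uniform dummies. Since $\nu$ is uniform, each dummy slot contributes a factor $|\caly|^{-1}$, so summing over the slot that carries $s$ gives
\begin{align*}
\lift{\TPM}(\lambda)[\bar{y}]
= \frac{|\caly|^{-k}}{k+1}\sum_{i=1}^{k+1}\lift{\alg}(\lambda)[y_i].
\end{align*}
Writing $S_\lambda(\bar{y}) = \sum_{i=1}^{k+1}\lift{\alg}(\lambda)[y_i]$, the prefactor $\frac{|\caly|^{-k}}{k+1}$ cancels in every comparison, so for each pair $(\lambda_0,\lambda_1)\in\LambdaBeta^2$ the whole statement reduces to controlling the ratio $S_{\lambda_0}(\bar{y})/S_{\lambda_1}(\bar{y})$.

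Next I would invoke the standard \emph{good event} decomposition of $(\varepsilon,\delta)$-indistinguishability: it suffices to exhibit a set $G\subseteq\caly^{k+1}$ of output tuples on which the pointwise ratio is at most $e^{\varepsilon_\alpha}$, and whose complement carries $\lift{\TPM}(\lambda_0)$-mass at most $\delta_\alpha$. I would take $G$ to be the tuples satisfying $S_{\lambda_0}(\bar{y})\le \frac{k}{|\caly|}+\alpha+\beta$ together with $S_{\lambda_1}(\bar{y})\ge \frac{k}{|\caly|}-\alpha$. On $G$ the ratio is at most $\frac{k/|\caly|+\alpha+\beta}{k/|\caly|-\alpha}=\frac{k+(\alpha+\beta)|\caly|}{k-\alpha|\caly|}=e^{\varepsilon_\alpha}$, where positivity of the denominator is exactly the hypothesis $\alpha<\frac{k}{|\caly|}$; this is precisely where the stated $\varepsilon_\alpha$ comes from.

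The heart of the proof is then bounding the $\lift{\TPM}(\lambda_0)$-mass of the complement of $G$. Drawing $\bar{y}$ under $\lift{\TPM}(\lambda_0)$ means $k$ of its entries are i.i.d.\ uniform dummies $r_1,\dots,r_k$ and one entry is the real value $s\sim\lift{\alg}(\lambda_0)$. The dummy contributions $\lift{\alg}(\lambda)[r_j]$ are i.i.d.\ with mean $\expect_{r\sim\nu}[\lift{\alg}(\lambda)[r]]=\frac{1}{|\caly|}$, so $\expect\bigl[\sum_{j=1}^k \lift{\alg}(\lambda)[r_j]\bigr]=\frac{k}{|\caly|}$ for both $\lambda=\lambda_0$ and $\lambda=\lambda_1$. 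The defining property of $\LambdaBeta$ lets me treat these summands as confined to $[0,\beta]$, so Hoeffding's inequality gives that each of the two one-sided deviations (the upper tail for $\lambda_0$ and the lower tail for $\lambda_1$) fails with probability at most $e^{-2\alpha^2/(k\beta^2)}$; together they account for the $2e^{-2\alpha^2/(k\beta^2)}$ term. The remaining slack is the single real value: using $\LambdaBeta$ again to bound $\lift{\alg}(\lambda_0)[s]\le\beta$ except with probability $\eta$ supplies the additive $+\beta$ in the upper bound on $S_{\lambda_0}$ (the nonnegative real term is simply dropped in the lower bound on $S_{\lambda_1}$) and contributes the remaining $\eta$. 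A union bound over these three events yields $\delta_\alpha=2e^{-2\alpha^2/(k\beta^2)}+\eta$.

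The main obstacle I anticipate is the bookkeeping around the range bound $\beta$: Hoeffding needs the summands confined to $[0,\beta]$, whereas the $\LambdaBeta$ hypothesis only controls the probability (over the relevant sampling) that a value exceeds $\beta$. Reconciling the two---that is, arranging that the events ``some dummy is heavy'' and ``the real value is heavy'' are absorbed into the single $\eta$ term rather than inflating it by a factor of $k$---is the delicate part, and is what pins down the precise form of $\LambdaBeta$ and of $\delta_\alpha$.
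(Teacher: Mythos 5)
Your proposal is correct and takes essentially the same route as the paper's own proof: the paper likewise cancels the uniform dummy factors to reduce $\TPM(\lambda_0)[\bar{y}]/\TPM(\lambda_1)[\bar{y}]$ to the ratio of sums $\sum_{i}\lift{\alg}(\lambda_0)[y_i]\big/\sum_{i}\lift{\alg}(\lambda_1)[y_i]$, applies the same two one-sided Hoeffding bounds (mean $\tfrac{k}{|\caly|}$, range $\beta$) to the dummy contributions, bounds the planted real value by $\beta$ except with probability $\eta$ while dropping it from the denominator, and packages your ``good event'' decomposition as the auxiliary notion of probabilistic distribution privacy (Proposition~\ref{prop:TuplingPDistP}) followed by $\pDistP{}\Rightarrow\DistP{}$ (Proposition~\ref{prop:PDistPimpliesDistP}). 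Even the delicate point you flag---that Hoeffding needs summands confined to $[0,\beta]$ while $\LambdaBeta$ only controls this probabilistically---is treated no more carefully in the paper, which simply applies Hoeffding with the summands regarded as $[0,\beta]$-valued and charges only the real value's heaviness to $\eta$, a looseness the authors implicitly concede when comparing theoretical and empirical bounds.
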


This claim states that just adding random dummies achieves \DistP{} without any assumption on $\alg$ (e.g., $\alg$ does not have to provide \DP{}).
For a smaller range size $|\caly|$ and a larger number $k$ of dummies, we obtain a stronger \DistP{}.

Note that the distributions protected by $\TPM$ belong to the set $\LambdaBeta$.
\begin{itemize}
\item When $\beta = 1$, $\LambdaBeta$ is the set of all distributions (i.e., $\Lambda_{1,\eta,\alg} = \Dists\calx$) while $\varepsilon_{\alpha}$ and $\delta_{\alpha}$ tend to be large.
\item For a smaller $\beta$, the set $\LambdaBeta$ is smaller while $\varepsilon_{\alpha}$ and $\delta_{\alpha}$ are smaller;
that is, the mechanism provides a stronger \DistP{} for a smaller set of distributions.
\item If $\alg$ provides $\varepsilon_{\alg}$-\DP{},\, $\LambdaBeta$ goes to $\Dists\calx$ for $\varepsilon_{\alg} \rightarrow 0$.
More generally, $\LambdaBeta$ is larger when 
the maximum output probability $\max_{y} \lift{\alg}(\lambda)[y]$ is smaller.
\end{itemize}
In practice, even when $\varepsilon_{\alg}$ is relatively large, a small number of dummies enables us to provide a strong \DistP{}, 
as shown by experiments in Section~\ref{sec:location:dist-obf}.

We note that Theorem~\ref{thm:TuplingDP} may not imply \DP{} of the tupling mechanism, depending on $\alg$.
For example, suppose that $\alg$ is the identity function.
For small $\varepsilon_{\alpha}$ and $\delta_{\alpha}$, we have $\beta \ll 1$, 
hence no point distribution $\lambda$ (where $\lambda[x] = 1$ for some $x$) belongs to $\LambdaBeta$, namely, the tupling mechanism does not provide $(\varepsilon_{\alpha}, \delta_{\alpha})$-\DP{}.

\subsection{Service Quality Loss and Cost of the Tupling Mechanism}
\label{sub:tupling-utility}

When a mechanism outputs a value $y$ closer to the original input $x$, she obtains a larger \emph{utility}, or equivalently, a smaller \emph{service quality loss} $\utmetric(x, y)$.
For example, in an LBS (location based service), if a user located at $x$ submits an obfuscated location $y$, the LBS provider returns the shops near $y$, hence the service quality loss can be expressed as the Euclidean distance $\utmetric(x, y) \eqdef \| x - y \|$.

Since each output of the tupling mechanism consists of $k+1$ elements, the quality loss of submitting a tuple $\bar{y} = (y_1, y_2, \ldots , y_{k+1})$ amounts to 
$\utmetric(x, \bar{y}) \mathbin{:=} \min_{i} \allowbreak \utmetric(x, y_i)$.
Then the expected quality loss of the mechanism is defined as follows.

\begin{definition}[Expected quality loss of the tupling mechanism]\label{def:tupling-utility}\rm
For a $\lambda\in\Dists\calx$ and a metric $\utmetric: \calx\times\caly\rightarrow\reals$,
the \emph{expected quality loss} of $\TPM$ is:
\begin{align*}
L\bigl(\TPM\bigr) =\textstyle
\!\sum_{x\in\calx}
\sum_{\bar{y}\in\caly^{k+1}} 
\lambda[x]\,
\TPM(x)[\bar{y}]\, \min_{i} \utmetric(x, y_i)
{.}
\end{align*}
\end{definition}

For a larger number $k$ of random dummies, $\min_{i} \allowbreak \utmetric(x, y_i)$ is smaller on average, hence $L\bigl(\TPM\bigr)$ is also smaller.
Furthermore, thanks to the distribution obfuscation by random dummies, we can instead reduce the perturbation noise added to the actual input $x$ to obtain the same level of \DistP{}.
Therefore, the service quality is much higher than existing mechanisms, as shown in Section~\ref{sec:location:dist-obf}.

\journal{
On the other hand, the traffic cost of applying the tupling mechanism is larger for a lager number $k$ of dummy data.
We show by experiments the trade-offs among the privacy, utility, and cost of the tupling mechanism in Sections~\ref{sec:location:dist-obf}.
}

\subsection{Improving the Worst-Case Quality Loss}
\label{sub:RL-mechanism}

As a point obfuscation mechanism $\alg$ used in the tupling mechanism $\TPM{}$, we define the \emph{restricted Laplace (RL) mechanism} below.
Intuitively, $(\varepsilon_{\!\alg}, r)$-\RL{} mechanism adds $\varepsilon_{\!\alg}$-\XDP{} Laplace noise only within a radius $r$ of the original location $x$.
This ensures that the worst-case quality loss of the tupling mechanisms is bounded above by the radius $r$, whereas the standard Laplace mechanism reports a location $y$ that is arbitrarily distant from $x$ with a small probability.

\begin{definition}[RL mechanism]\label{def:restricted-Laplace}\rm
Let $\caly_{x,r} = \{ y'\in\caly \,|\, \utmetric(x, y') \le r \}$.
We define \emph{$(\varepsilon_{\!\alg},\allowbreak r)$-restricted Laplace (RL)} mechanism as 
the $\alg:\calx\rightarrow\Dists\caly$ defined by:
$\alg(x)[y] = \frac{ e^{-\varepsilon \utmetric(x, y)} }{ \sum_{y'\in\caly_{x,r}} e^{-\varepsilon \utmetric(x, y')} }$ 
if $y \in \caly_{x,r}$, and 
$\alg(x)[y] = 0$ otherwise.
\end{definition}

Since the support of $\alg$ is limited to $\caly_{x,r}$,
$\alg$ provides better service quality but does not provide \DP{}.
Nevertheless, as shown in Theorem~\ref{thm:TuplingDP},\, $\TPM{}$ provides \DistP{}, due to dummies in $\caly\setminus\caly_{x,r}$.
This implies that \DistP{} is a relaxation of \DP{} that guarantees the privacy of attributes while achieving higher utility by weakening the \DP{} protection of point data.
In other words, \DistP{} mechanisms are useful when users want both to keep high utility and to protect the attribute privacy more strongly than what a \DP{} mechanism can guarantee
(e.g., when users do not mind revealing their actual locations outside home, but want to hide from robbers the fact that they are outside home, as motivated in Section~\ref{sec:intro}).

\section{Application to Attribute Privacy in LBSs}
\label{sec:location:dist-obf}

In this section we apply local mechanisms to the protection of the attribute privacy in location based services (LBSs)
where each user submits her own location $x$ to an LBS provider to obtain information relevant to $x$ (e.g., shops near $x$).
\journal{
The disclosure of user locations may allow an attacker to infer about the users' attributes and activities~\cite{Liao:07:IJRR,Zheng:09:LBSN,Matsuo:07:IJCAI,Yang:19:TKDE}.
To mitigate such information leaks, each user can locally apply a mechanism $Q$ to her location $x$ to report an obfuscated output $y$ to the LBS provider.
Then we show that the tupling mechanism provides much stronger attribute privacy and better service quality than existing popular mechanisms.
}

\begin{figure*}[t]
\centering
\begin{subfigure}[t]{0.24\textwidth}
  \mbox{\raisebox{-5pt}{\includegraphics[ width=1.05\textwidth]{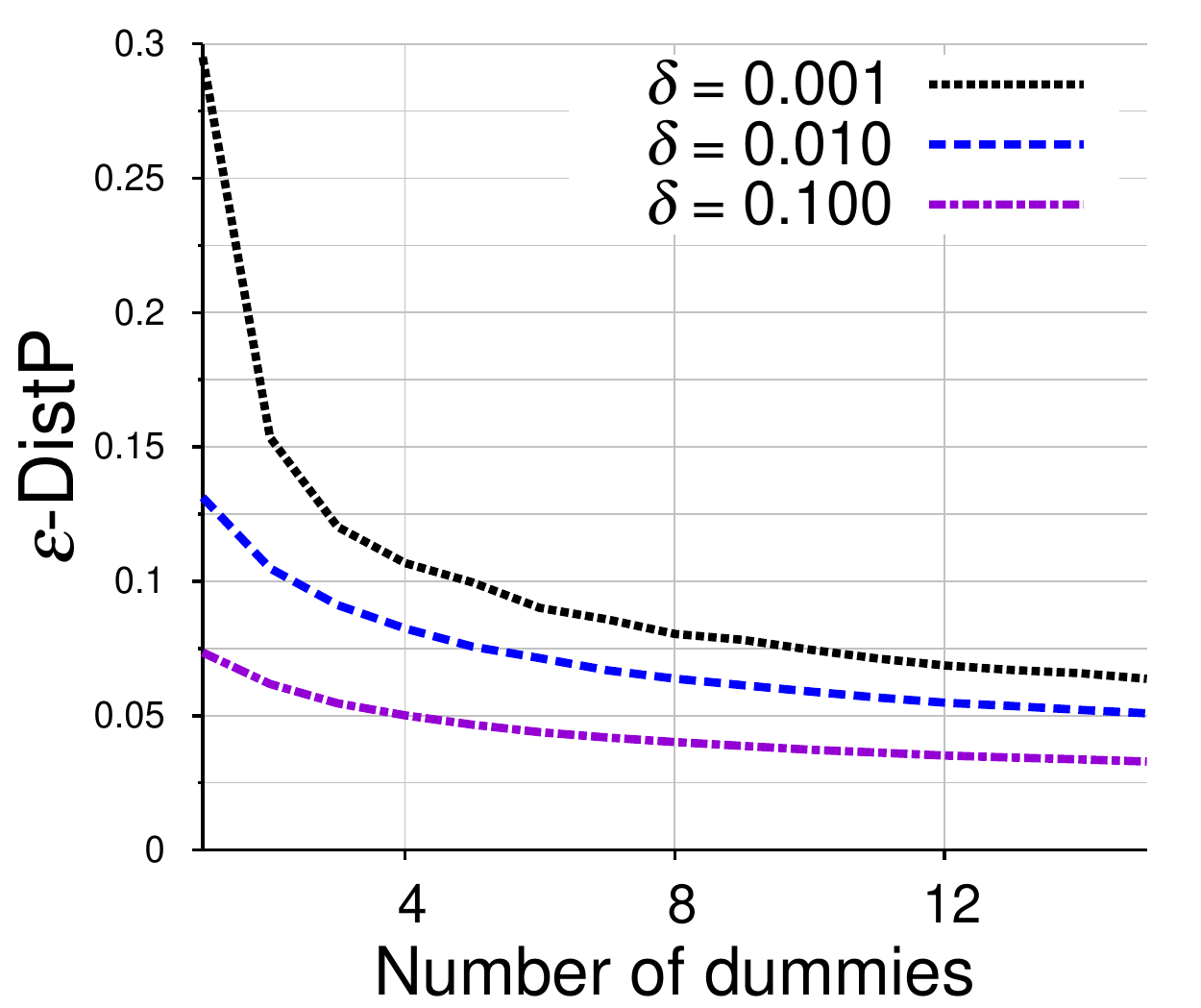}}}
\caption{\#dummies and $\varepsilon$-\DistP{} (when using $(100,\allowbreak 0.020)$-\RL{} mechanism).\label{fig:dummies:DistP}}
\end{subfigure}\hspace{0.4ex}\hfill
\begin{subfigure}[t]{0.24\textwidth}
  \mbox{\raisebox{-5pt}{\includegraphics[ width=1.05\textwidth]{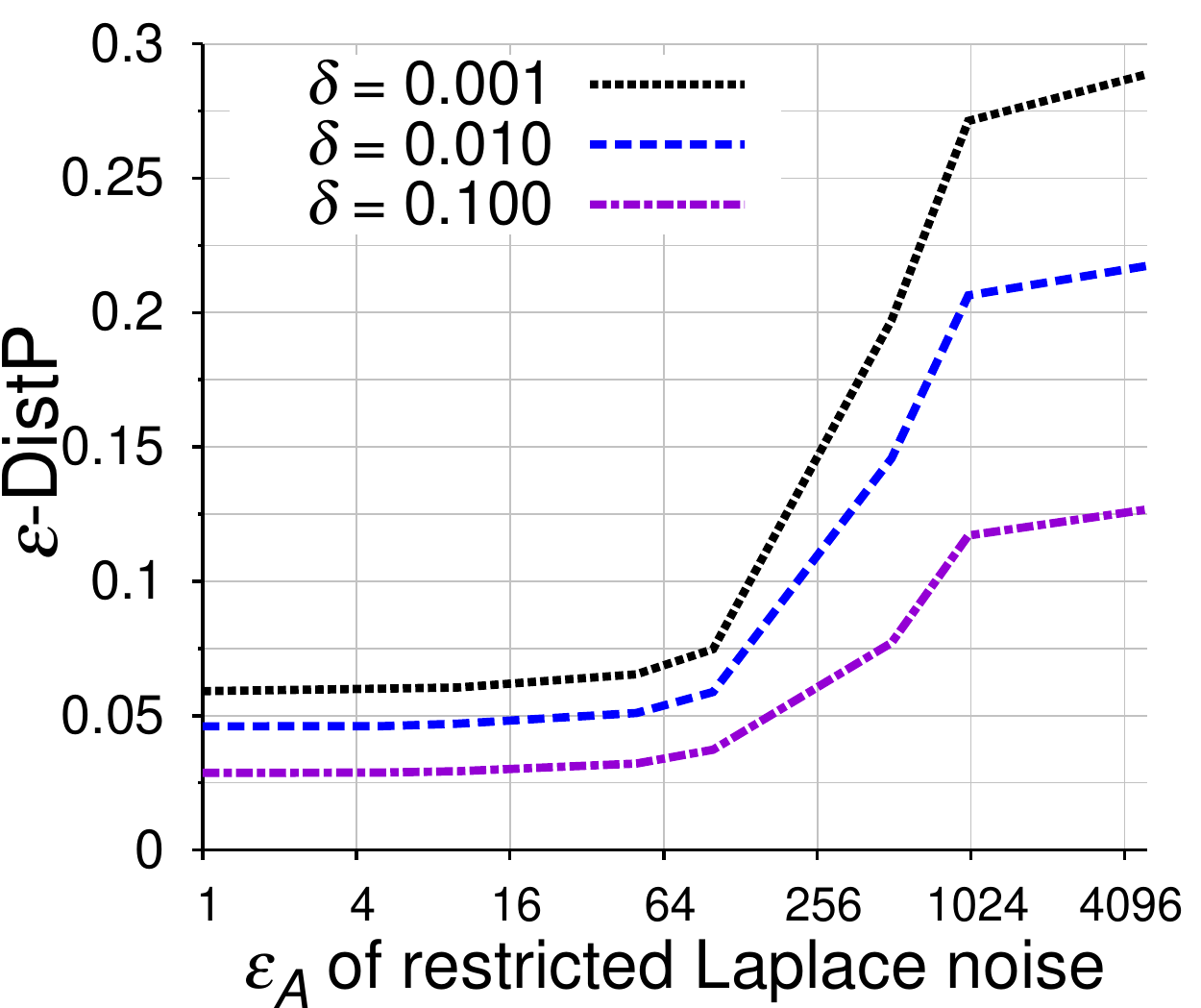}}}
\caption{$\varepsilon_\alg$ of $(\varepsilon_\alg, \allowbreak 0.020)$-\RL{} mechanism and $\varepsilon$-\DistP{} (with $10$ dummies).\label{fig:DP-noises:DistP}}
\end{subfigure}\hspace{0.4ex}\hfill
\begin{subfigure}[t]{0.24\textwidth}
  \mbox{\raisebox{-5pt}{\includegraphics[ width=1.05\textwidth]{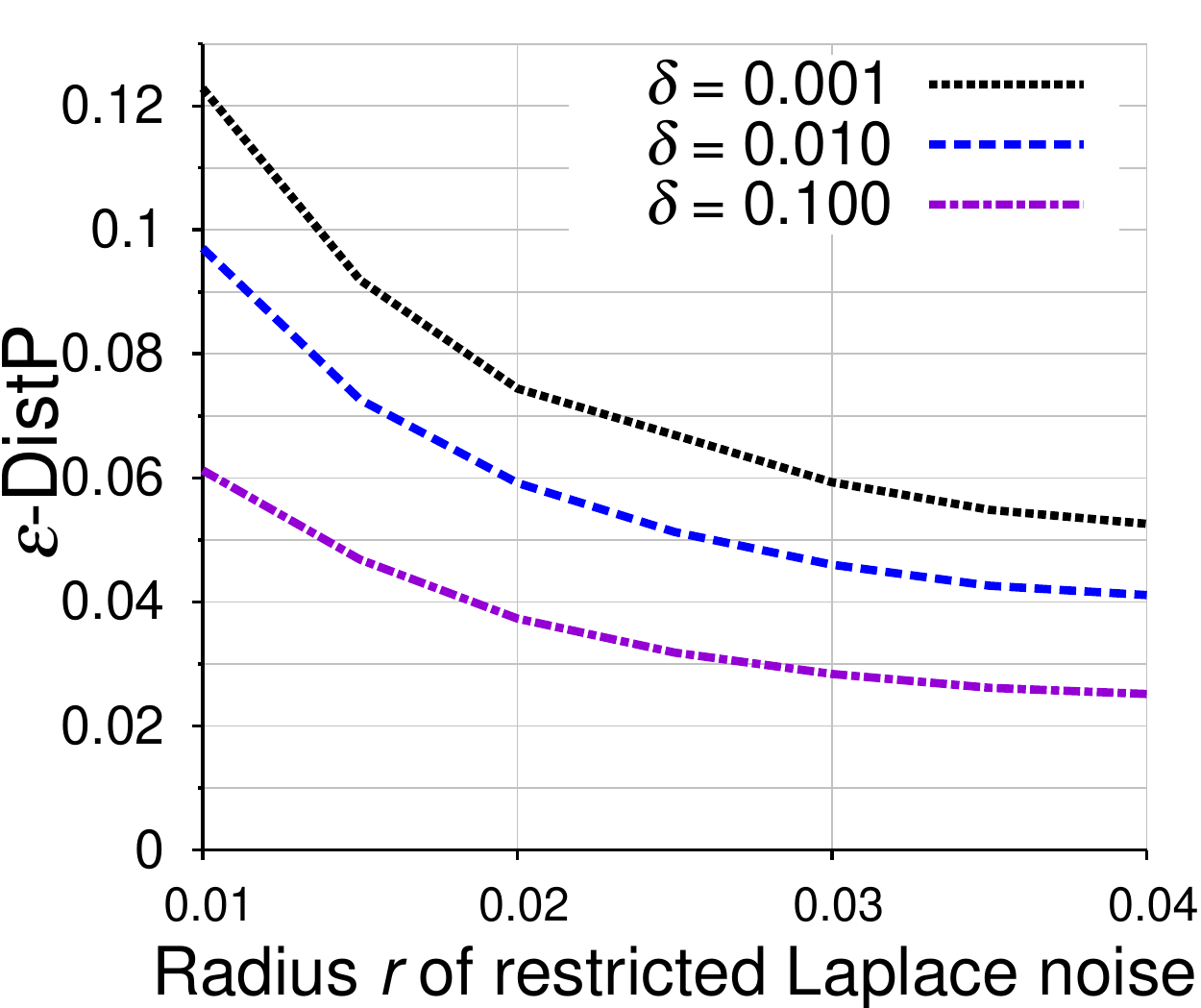}}}
\caption{A radius $r$ of $(100, r)$-\RL{} mechanism and $\varepsilon$-\DistP{} (with $10$ dummies).\label{fig:radius:DistP}}
\end{subfigure}\hspace{0.4ex}\hfill
\begin{subfigure}[t]{0.24\textwidth}
  \mbox{\raisebox{-5pt}{\includegraphics[ width=1.05\textwidth]{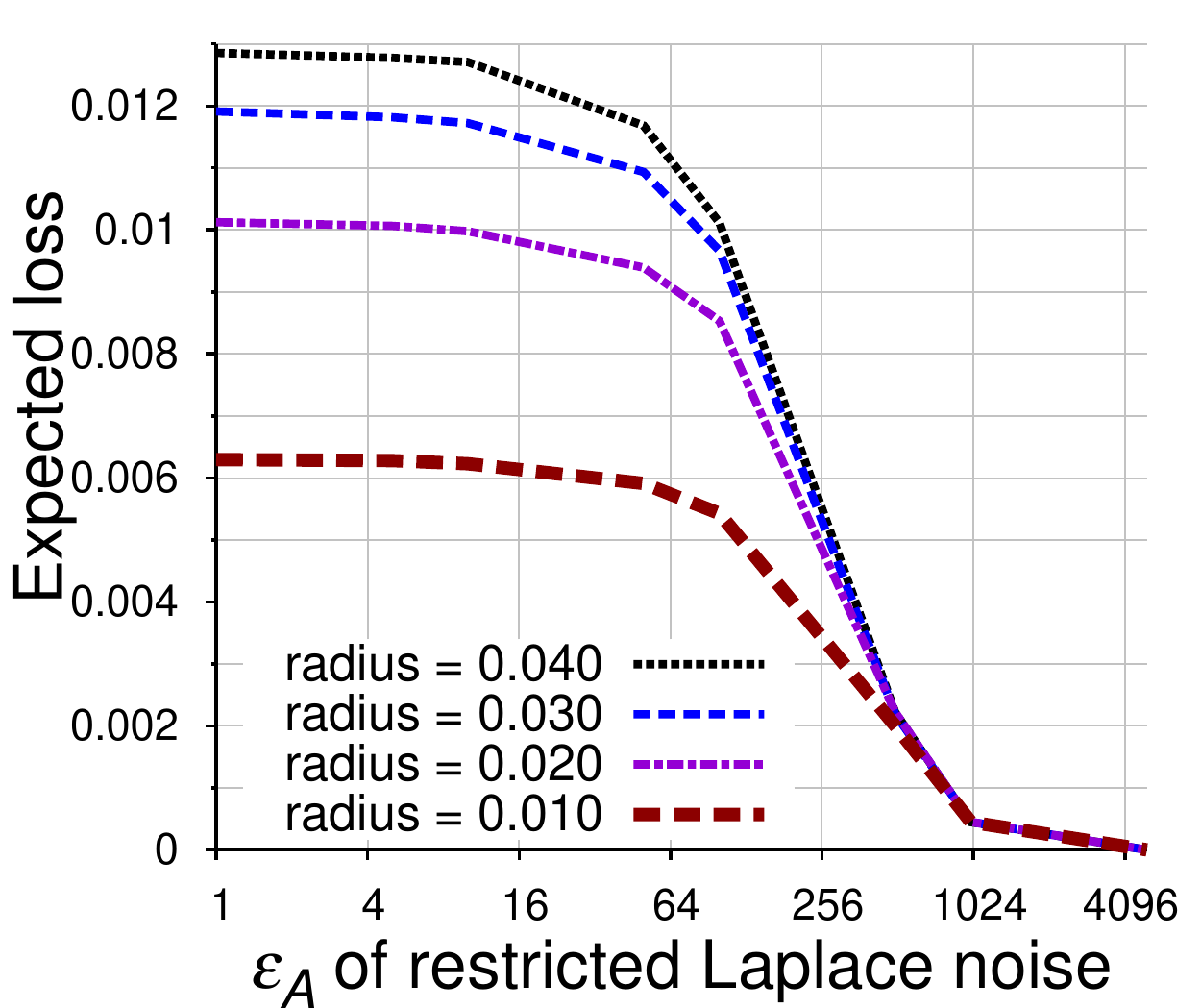}}}
\caption{$\varepsilon_\alg$~of $(\varepsilon_\alg, r)$- \allowbreak\RL{} mechanism and the expected loss (with $5$ dummies).\label{fig:tupling:loss}}
\end{subfigure}
\caption{Empirical \DistP{} and quality loss of $\TPM{}$ for the attribute \male{}/\female{}.\label{fig:tupling:privacy}}
\end{figure*}

\subsection{Experimental Setup}
\label{sub:exp-setup}

We perform experiments on location privacy in Manhattan by using the Foursquare dataset (Global-scale Check-in Dataset)~\cite{Yang_TIST15}.
We first divide Manhattan into $11 \times 10$ regions with $1.0 \mathrm{km}$ intervals.
To provide more useful information to users in crowded regions, we further re-divide these regions to $276$ regions by recursively partitioning each crowded region into four until each resulting region has roughly similar population density.%
\footnote{This partition may be useful to achieve smaller values $(\varepsilon, \delta)$ of \DistP{}, because $\beta$ tends to be smaller when the population density is closer to the uniform distribution.}
Let $\caly$ be the set of those $276$ regions, and $\calx$ be the set of the $228$ regions inside the central $10 \mathrm{km} \times 9 \mathrm{km}$ area in $\caly$.

As an obfuscation mechanism $Q$, we use the tupling mechanism $\TPM{}$ that uses an $(\varepsilon_{\!\alg}, r)$-RL mechanism $\alg$ and the uniform distribution $\nu$ over $\caly$ to generate dummy locations.
Note that $\nu$ is close to the population density distribution over $\caly$,
because each region in $\caly$ is constructed to have roughly similar population density.
In the definitions of the RL mechanism and the quality loss, we use the Euclidean distance $\| \cdot \|$ between the central points of the regions.

In the experiments, we measure the privacy of user attributes, formalized as \DistP{}.
For example, let us consider the attribute $\male/\female$.
For each $\attr\in\{ \male, \female \}$, let $\lambda_{\attr}\in\Dists\calx$ be the prior distribution of the location of the users having the attribute $\attr$.
Then, $\lambda_{\male}$ (resp. $\lambda_{\female}$) represents an attacker's belief on the location of the male (resp. female) users.
We define these as the empirical distributions that the attacker can calculate from 
the above Foursquare dataset.

\subsection{Evaluation of the Tupling Mechanism}
\label{sub:eval-tupling}

\subsubsection{Distribution privacy}
We demonstrate by experiments that the \male{} users cannot be recognized as which of \male{} or \female{} in terms of \DistP{}.
In Fig.~\ref{fig:tupling:privacy}, we show the experimental results on the \DistP{} of the tupling mechanism $\TPM{}$.
For a larger number $k$ of dummy locations, we have a stronger \DistP{} (Fig.~\ref{fig:dummies:DistP}).
For a larger $\varepsilon_{\!\alg}$,\, $(\varepsilon_{\!\alg}, 0.020)$-\RL{} mechanism $\alg$ adds less noise, hence the tupling mechanism provides a weaker \DistP{} (Fig.~\ref{fig:DP-noises:DistP})%
\footnote{In Fig.~\ref{fig:DP-noises:DistP}, for $\varepsilon_{\!\alg} \rightarrow 0$, $\varepsilon$ does not converge to $0$, since the radius $r = 0.020$ of \RL{} does not cover the whole $\caly$. However, if $r \ge \max_{x,y} \| x - y \|$, $\varepsilon$ converges to $0$.}.
For a larger radius $r$, the \RL{} mechanism $\alg$ spreads the original distribution $\lambda_{\male}$ and thus provides a strong \DistP{} (Fig.~\ref{fig:radius:DistP}).
We also show the relationship between $k$ and \DistP{} in the eastern/western Tokyo and London, which have different levels of privacy (Fig.~\ref{fig:DistP-Cities:MF}).

These results imply that if we add more dummies, we can decrease the noise level/radius of $\alg$ to have better utility, while keeping 
the same level $\varepsilon$ of \DistP{}.
Conversely, if $\alg$ adds more noise, we can decrease the number $k$ of dummies.

\subsubsection{Expected quality loss}
In Fig.~\ref{fig:tupling:loss}, we show the experimental results on the expected quality loss of the tupling mechanism.
For a larger $\varepsilon_{\!\alg}$, $\alg$ adds less noise, hence the loss is smaller.
We confirm that 
for more dummy data, the expected quality loss is smaller.
Unlike the planar Laplace mechanism ($\PL{}$),\, $\alg$ ensures that the worst quality loss is bounded above by the radius $r$.
Furthermore, for a smaller radius $r$, the expected loss is also smaller as shown in Fig.~\ref{fig:tupling:loss}.

\journal{
In addition, when $r$ is small, the output of $\alg$ is outside the domain (i.e., Manhattan) with a far smaller probability than $\PL{}$.
Hence if $\caly$ is large enough, then the \RL{} mechanism does not require additional computation (e.g. remapping technique~\cite{Chatzikokolakis17PETS}) to handle the output outside the domain.
}

\begin{figure}[t]
\begin{minipage}[t]{0.29\textwidth}
\centering
\mbox{\raisebox{-5pt}{\includegraphics[ width=1.05\textwidth]{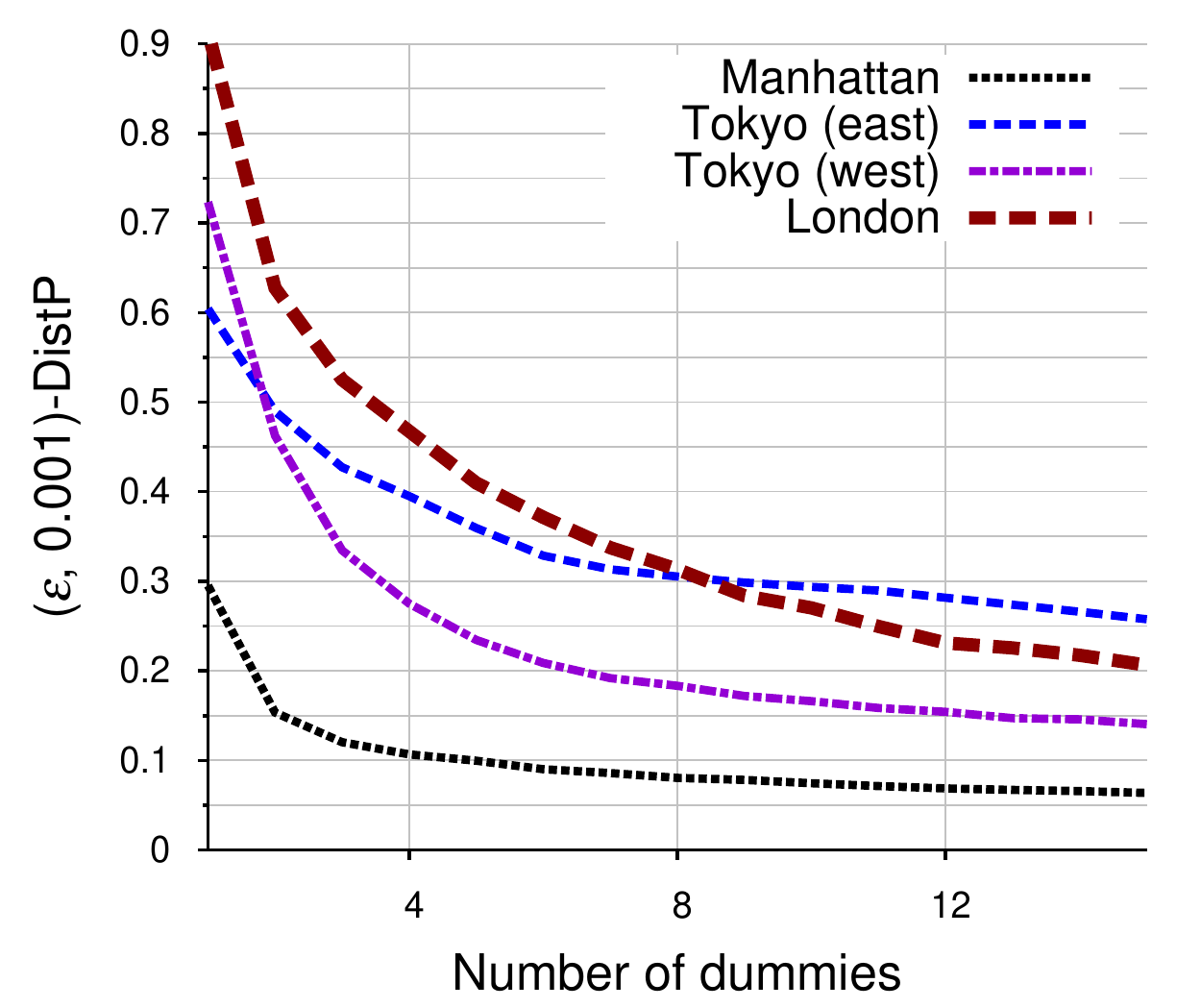}}}
\caption{$k$ and \DistP{} for \male{}/\female{} in different cities.
\label{fig:DistP-Cities:MF}}
\end{minipage}
\hfill
\begin{minipage}[t]{0.29\textwidth}
\centering
 \mbox{\raisebox{-5pt}{\includegraphics[ width=1.05\textwidth]{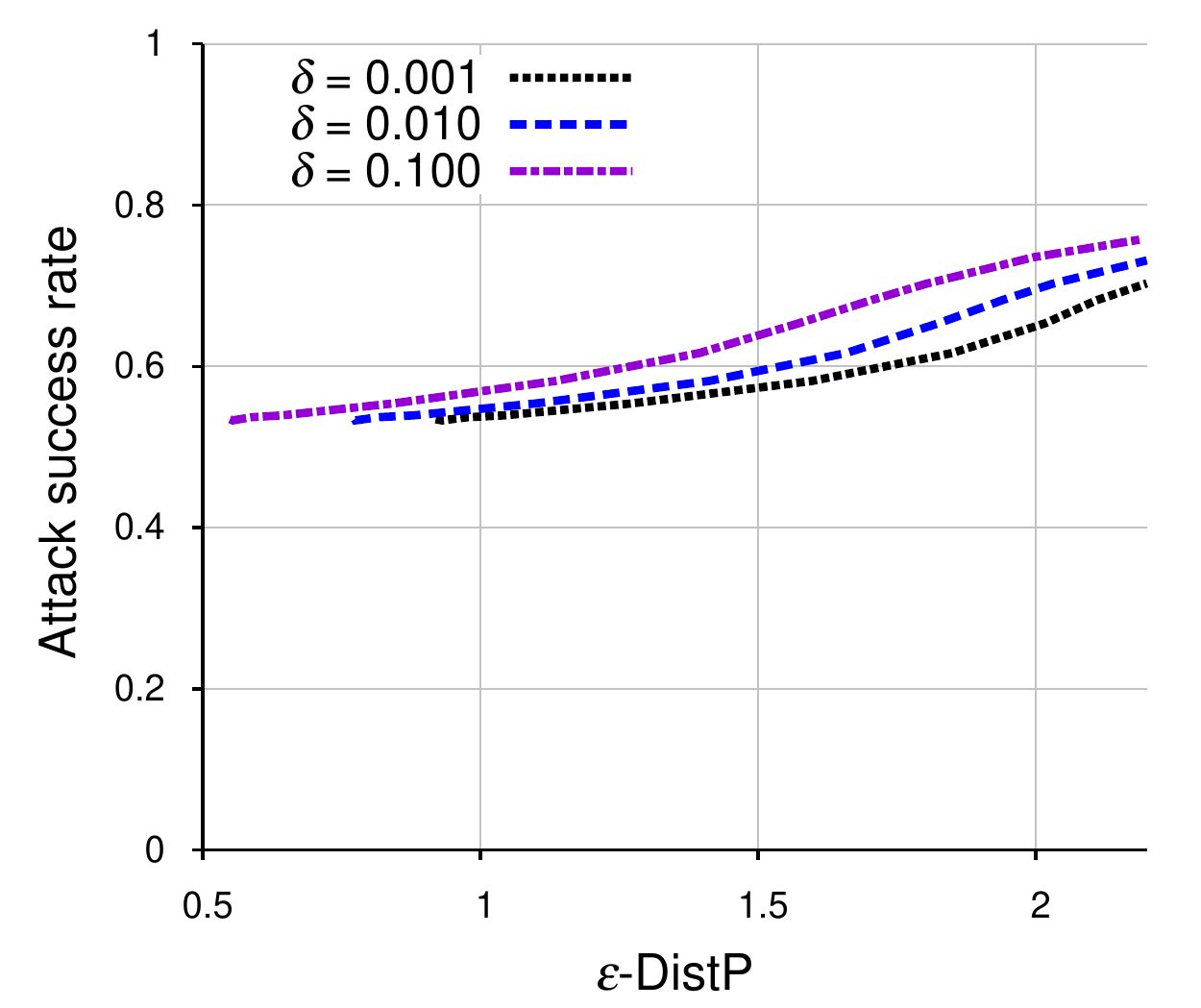}}}
\caption{\DistP{} and ASR 
of the tupling ($k = 10$, $r=0.020$).
\label{fig:DistP-ASR}}
\end{minipage}
\hfill
\begin{minipage}[t]{0.39\textwidth}
\centering
  \mbox{\raisebox{-5pt}{\includegraphics[ width=0.79\textwidth]{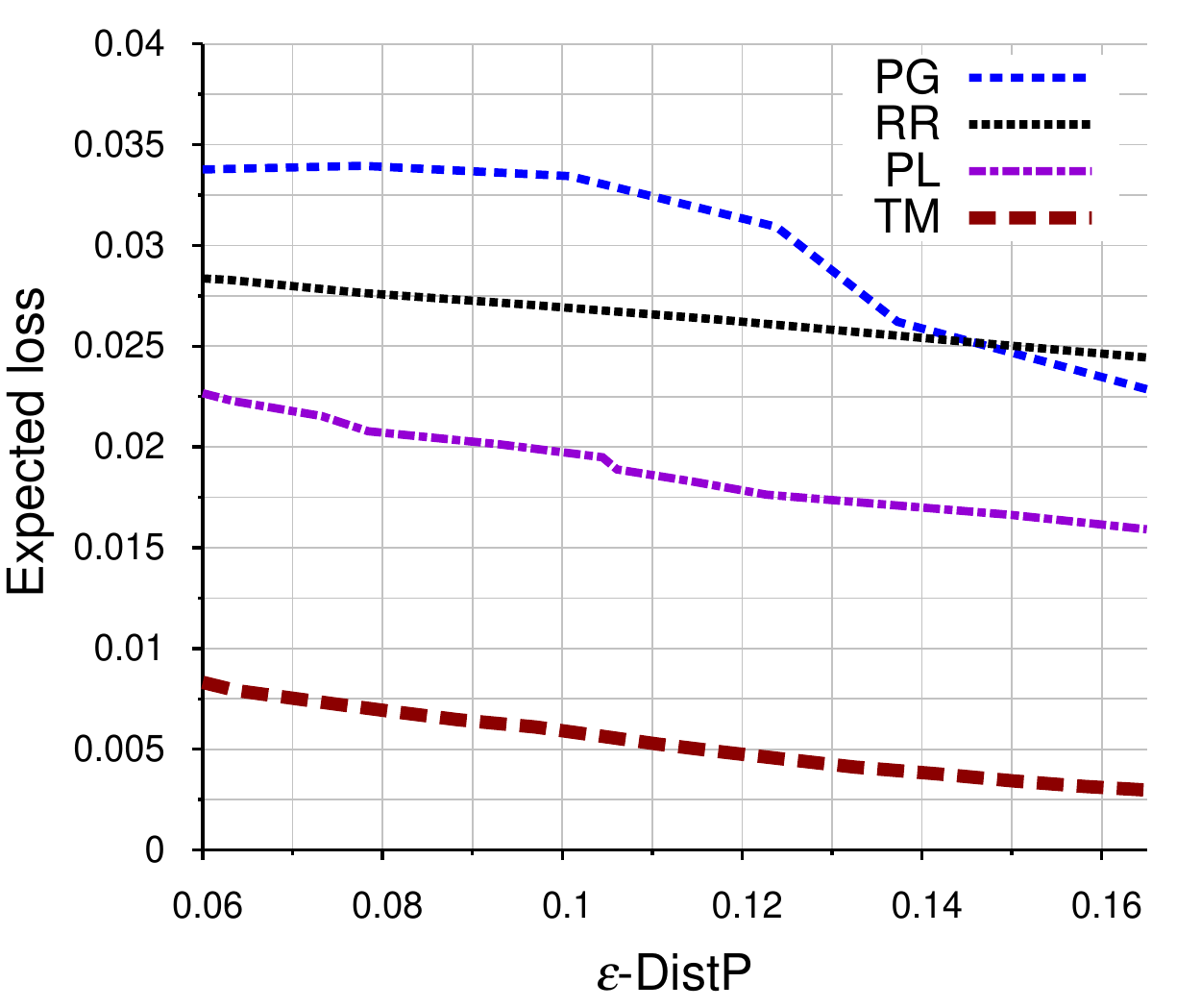}}}
\caption{$(\varepsilon, \allowbreak 0.001)$-\DistP{} and expected loss for \male{}/\female{} and TM using $k = 10$, $r = 0.020$. \label{fig:compare-mechanisms}}
\end{minipage}
\end{figure}

\subsection{Appropriate Parameters}
\label{sub:exp-parameters}

We define the \emph{attack success rate (ASR)} as the ratio that the attacker succeeds to infer a user has an attribute when~she does actually.
We use an inference algorithm based on the Bayes decision rule \cite{Duda_Wiley00} to minimize the identification error probability when the estimated posterior probability is accurate~\cite{Duda_Wiley00}.

In Fig.~\ref{fig:DistP-ASR}, we show the relationships between \DistP{} and ASR in Manhattan for the attribute \home{}, meaning the users located at their home.
In theory, $\rm{ASR} = 0.5$ represents the attacker learns nothing about the attribute, whereas the empirical ASR in our experiments fluctuates around $0.5$.
This seems to be caused by the fact that the dataset and the number of locations are finite.
From Fig.~\ref{fig:DistP-ASR}, we conclude that $\varepsilon = 1$ is an appropriate parameter for $(\varepsilon, 0.001)$-\DistP{} to achieve $\rm{ASR} = 0.5$ in our setting, and we confirm this for other attributes.
However, we note that this is an empirical criterion possibly depending on our setting, and the choice of $\varepsilon$ for \DistP{} can be as controversial as that for \DP{} and should also be investigated using approaches for \DP{} (e.g.,~\cite{Hsu:14:CSF}) in future work.

\arxiv{
In Appendices~\ref{sec:experiment-attributes:Manhattan},~\ref{sec:experiment-attributes:hours}, and~\ref{sec:experiment-attributes:cities}, we show that different levels of \DistP{} are achieved for different attributes in different cities.
In particular, we present appropriate parameters for the tupling mechanism for various attributes in a couple of cities.
}

\subsection{Comparison of Obfuscation Mechanisms}
\label{sub:exp-compare}

We demonstrate that the tupling mechanism (TM) outperforms the popular  mechanisms: the randomized response (RR), the planar Laplace (PL), and the planar Gaussian (PG).
In Fig.~\ref{fig:compare-mechanisms} we compare these concerning the relationship between $\varepsilon$-\DistP{} and expected quality loss.
Since PG always has some $\delta$, it provides a weaker \DistP{} than PL for the same quality loss.
We also confirm that PL has smaller loss than RR, since it adds noise proportionally to the distance.
\arxiv{Comparison using other attributes can be found in Appendix~\ref{sec:experiment-attributes:Manhattan}.}

Finally, we briefly discuss the computational cost of the tupling mechanism $\TPM{}$, compared to $\PL{}$.
In the implementation, for a larger domain $\calx$,\, $\PL{}$ deals with a larger size $|\calx|\times|\caly|$ of the mechanism's matrix, since it outputs each region with a non-zero probability.
In contrast, since the \RL{} mechanism $\alg$ used in $\TPM{}$ maps each location $x$ to a region within a radius $r$ of $x$, the size of $\alg$'s matrix is $|\calx|\times|\caly_{x,r}|$, requiring much smaller memory space than $\PL{}$.

Furthermore, the users of $\TM{}$ can simply ignore the responses to dummy queries, whereas the users of $\PL{}$ need to select relevant POIs (point of interests) from a large radius of $x$, which could cost computationally for many POIs.
Therefore, 
$\TM{}$ is more suited to be used in mobile environments than $\PL{}$.

\section{Related Work}
\label{sec:related}
\paragraph{\textbf{Differential privacy.}}
Since the seminal work of Dwork~\cite{Dwork:06:ICALP} on 
\DP{},
a number of its variants have been studied 
to provide different privacy guarantees; 
e.g., 
$f$-divergence privacy~\cite{Barthe:13:ICALP}, 
$d$-privacy~\cite{Chatzikokolakis:13:PETS}, 
Pufferfish privacy~\cite{Kifer:12:PODS}, 
local \DP{}~\cite{Duchi:13:FOCS}, 
and
utility-optimized local \DP{}~\cite{Murakami:19:USENIX}.
All of these are intended to 
protect the input data 
rather than the input distributions.
Note that distribution\emph{al} privacy~\cite{Blum:13:JACM} is different from \DistP{} and does not aim at protecting the privacy of distributions.

To our knowledge, this is the first work that investigates the \emph{differential privacy} of probability distributions lying behind the input. 
However, a few studies have proposed related notions.
Jelasity et al.~\cite{Jelasity:IHMMSec:14} propose \emph{distributional differential privacy} w.r.t. parameters $\theta$ and $\theta'$ of two distributions, which aims at protecting the privacy of the distribution parameters but is defined in a Bayesian style (unlike \DP{} and \DistP{}) to satisfy that for any output sequence~$y$,\, $p(\theta | y) \le e^{\varepsilon} p(\theta' | y)$.
After a preliminary version of this paper appeared in arXiv~\cite{arxiv}, a notion generalizing \DistP{}, called \emph{profile based privacy}, is proposed in~\cite{Geumlek:19:ISIT}.

Some studies are technically related to our work.
Song \textit{et al.}~\cite{Song:17:SIGMOD} propose the Wasserstein mechanism to provide Pufferfish privacy, which protects correlated inputs. 
Fernandes \textit{et al.}~\cite{Fernandes:19:POST} introduce Earth mover's privacy,
which is technically different from \DistP{} in that their mechanism obfuscates a vector (a bag-of-words)
instead of a distribution, and perturbs each element of the vector.
Sei \textit{et al.}~\cite{Sei:17:ITFS} propose a variant of the randomized response 
to protect individual data and provide high utility of database.
However, we emphasize again that 
our work differs from these studies in that 
we aim at protecting 
input distributions. 
\paragraph{\textbf{Location privacy.}}
Location privacy has been widely studied in the literature, and its survey can be found in%
~\cite{Chatzikokolakis:17:FTPS}. 
A number of location obfuscation methods have been proposed so far, 
and they can be broadly divided into the following four types: 
perturbation (adding noise)%
~\cite{Shokri:12:CCS,Andres:13:CCS,Bordenabe:14:CCS}, 
location generalization (merging regions)%
~\cite{Shokri:11:SP,Xue:09:LoCa}, and 
location hiding (deleting)%
~\cite{Shokri:11:SP,Hoh:07:CCS}, and 
adding dummy locations%
~\cite{Bindschaedler:16:SP,Chow:09:WPES,Kido:05:ICDE}.
Location obfuscation based on \DP{} (or its variant) have also been widely studied, 
and they can be categorized into the ones in the centralized model%
~\cite{Machanavajjhala:08:ICDE,Ho:11:GIS} 
and the ones in the local model \cite{Andres:13:CCS,Bordenabe:14:CCS}. 
However, these methods aim at protecting locations, 
and 
neither at protecting users' attributes (e.g., age, gender) 
nor activities (e.g., working, shopping) in a \DP{} manner. 
Despite the fact that users' attributes and activities can be inferred from their locations 
\cite{Liao:07:IJRR,Zheng:09:LBSN,Matsuo:07:IJCAI}, 
to our knowledge,
no studies have proposed obfuscation mechanisms to provide 
rigorous \DP{} guarantee for such attributes and activities.

\section{Conclusion}
\label{sec:conclude}

We have proposed a formal model for the privacy of probability distributions 
and introduced the notion of distribution privacy (\DistP{}).
Then we have shown that existing local mechanisms deteriorate the utility by adding too much noise to provide \DistP{}.
To improve the tradeoff between \DistP{} and utility, we have introduced the tupling mechanism and applied it to the protection of user attributes in LBSs.
Then we have demonstrated that the tupling mechanism outperforms popular local mechanisms in terms of attribute obfuscation and service quality.

\arxiv{
As future work, 
we will improve the theoretical bound on $\varepsilon$ for the tupling mechanism.
We also plan to 
design optimal mechanisms that minimize the quality loss and computational costs, while providing \DistP{}, in various applications.
}

\paragraph*{Acknowledgment}
We thank the reviewers, Catuscia Palamidessi, Gilles Barthe, and Frank D. Valencia for their helpful comments on preliminary drafts.

\bibliographystyle{IEEEtran}
\bibliography{short}

\begin{thebibliography}{10}
\providecommand{\url}[1]{#1}
\csname url@samestyle\endcsname
\providecommand{\newblock}{\relax}
\providecommand{\bibinfo}[2]{#2}
\providecommand{\BIBentrySTDinterwordspacing}{\spaceskip=0pt\relax}
\providecommand{\BIBentryALTinterwordstretchfactor}{4}
\providecommand{\BIBentryALTinterwordspacing}{\spaceskip=\fontdimen2\font plus
\BIBentryALTinterwordstretchfactor\fontdimen3\font minus
  \fontdimen4\font\relax}
\providecommand{\BIBforeignlanguage}[2]{{%
\expandafter\ifx\csname l@#1\endcsname\relax
\typeout{** WARNING: IEEEtran.bst: No hyphenation pattern has been}%
\typeout{** loaded for the language `#1'. Using the pattern for}%
\typeout{** the default language instead.}%
\else
\language=\csname l@#1\endcsname
\fi
#2}}
\providecommand{\BIBdecl}{\relax}
\BIBdecl

\bibitem{Dwork:06:ICALP}
C.~Dwork, ``Differential privacy,'' in \emph{Proc. {ICALP}}, 2006, pp. 1--12.

\bibitem{Duchi:13:FOCS}
J.~C. Duchi, M.~I. Jordan, and M.~J. Wainwright, ``Local privacy and
  statistical minimax rates,'' in \emph{Proc. {FOCS}}, 2013, pp. 429--438.

\bibitem{Andres:13:CCS}
M.~E. Andr{\'e}s, N.~E. Bordenabe, K.~Chatzikokolakis, and C.~Palamidessi,
  ``Geo-indistinguishability: differential privacy for location-based
  systems,'' in \emph{Proc. CCS}.\hskip 1em plus 0.5em minus 0.4em\relax ACM,
  2013, pp. 901--914.

\bibitem{Erlingsson_CCS14}
\'{U}lfar Erlingsson, V.~Pihur, and A.~Korolova, ``{RAPPOR}: Randomized
  aggregatable privacy-preserving ordinal response,'' in \emph{Proc. {CCS}},
  2014, pp. 1054--1067.

\bibitem{Bordenabe:14:CCS}
N.~E. Bordenabe, K.~Chatzikokolakis, and C.~Palamidessi, ``Optimal
  geo-indistinguishable mechanisms for location privacy,'' in \emph{Proc.
  {CCS}}, 2014, pp. 251--262.

\bibitem{Liao:07:IJRR}
L.~Liao, D.~Fox, and henry Kautz, ``Extracting places and activities from {GPS}
  traces using hierarchical conditional random fields,'' \emph{Int. J. Robotics
  Res.}, vol.~1, no.~26, pp. 119--134, 2007.

\bibitem{Zheng:09:LBSN}
V.~W. Zheng, Y.~Zheng, and Q.~Yang, ``Joint learning user's activities and
  profiles from {GPS} data,'' in \emph{Proc. {LBSN}}, 2009, pp. 17--20.

\bibitem{Matsuo:07:IJCAI}
Y.~Matsuo, N.~Okazaki, K.~Izumi, Y.~Nakamura, T.~Nishimura, and K.~Hasida,
  ``Inferring long-term user properties based on users' location history,'' in
  \emph{Proc. {IJCAI}}, 2007, pp. 2159--2165.

\bibitem{Yang:19:TKDE}
D.~Yang, B.~Qu, and P.~Cudr{\'{e}}{-}Mauroux, ``Privacy-preserving social media
  data publishing for personalized ranking-based recommendation,'' \emph{{IEEE}
  Trans. Knowl. Data Eng.}, vol.~31, no.~3, pp. 507--520, 2019.

\bibitem{Otterbacher:10:CIKM}
J.~Otterbacher, ``Inferring gender of movie reviewers: Exploiting writing
  style, content and metadata,'' in \emph{Proc. {CIKM}}, 2010, pp. 369--378.

\bibitem{Weinsberg:12:RecSys}
U.~Weinsberg, S.~Bhagat, S.~Ioannidis, and N.~Taft, ``{BlurMe}: Inferring and
  obfuscating user gender based on ratings,'' in \emph{Proc. {RecSys}}, 2012,
  pp. 195--202.

\bibitem{Gong:18:TOPS}
N.~Z. Gong and B.~Liu, ``Attribute inference attacks in online social
  networks,'' \emph{{ACM} Trans. Privacy and Security}, vol.~21, no.~1, pp.
  3:1--3:30, 2018.

\bibitem{Mislove:10:WSDM}
A.~Mislove, B.~Viswanath, P.~K. Gummadi, and P.~Druschel, ``You are who you
  know: inferring user profiles in online social networks,'' in \emph{Proc.
  {WSDM}}, 2010, pp. 251--260.

\bibitem{Kairouz:16:ICML}
P.~Kairouz, K.~Bonawitz, and D.~Ramage, ``Discrete distribution estimation
  under local privacy,'' in \emph{Proc. {ICML}}, 2016, pp. 2436--2444.

\bibitem{Chatzikokolakis:13:PETS}
K.~Chatzikokolakis, M.~E. Andr{\'e}s, N.~E. Bordenabe, and C.~Palamidessi,
  ``{Broadening the scope of Differential Privacy using metrics},'' in
  \emph{Proc. PETS}, 2013, pp. 82--102.

\bibitem{Vaserstein:69:PPI}
L.~{Vaserstein}, ``\BIBforeignlanguage{Russian}{{Markovian processes on
  countable space product describing large systems of automata.}}''
  \emph{\BIBforeignlanguage{Russian}{{Probl. Peredachi Inf.}}}, vol.~5, no.~3,
  pp. 64--72, 1969.

\bibitem{Dwork:06:TCC}
C.~Dwork, F.~Mcsherry, K.~Nissim, and A.~Smith, ``Calibrating noise to
  sensitivity in private data analysis,'' in \emph{Proc. TCC}, 2006, pp.
  265--284.

\bibitem{Kifer:11:SIGMOD}
D.~Kifer and A.~Machanavajjhala, ``No free lunch in data privacy,'' in
  \emph{Proc. {SIGMOD}}, 2011, pp. 193--204.

\bibitem{Kifer:12:PODS}
------, ``A rigorous and customizable framework for privacy,'' in \emph{Proc.
  {PODS}}, 2012, pp. 77--88.

\bibitem{Song:17:SIGMOD}
S.~Song, Y.~Wang, and K.~Chaudhuri, ``Pufferfish privacy mechanisms for
  correlated data,'' in \emph{Proc. {SIGMOD}}, 2017, pp. 1291--1306.

\bibitem{Xu:13:VLDB}
J.~Xu, Z.~Zhang, X.~Xiao, Y.~Yang, G.~Yu, and M.~Winslett, ``Differentially
  private histogram publication,'' \emph{{VLDB} J.}, vol.~22, no.~6, pp.
  797--822, 2013.

\bibitem{Kawamoto:17:LMCS}
Y.~Kawamoto, K.~Chatzikokolakis, and C.~Palamidessi, ``On the compositionality
  of quantitative information flow,'' \emph{Log. Methods Comput. Sci.},
  vol.~13, no.~3, 2017.

\bibitem{Alvim:18:CSF}
M.~S. Alvim, K.~Chatzikokolakis, C.~Palamidessi, and A.~Pazii, ``Invited paper:
  Local differential privacy on metric spaces: Optimizing the trade-off with
  utility,'' in \emph{Proc. {CSF}}, 2018, pp. 262--267.

\bibitem{Yang_TIST15}
D.~Yang, D.~Zhang, and B.~Qu, ``Participatory cultural mapping based on
  collective behavior data in location based social networks,'' \emph{ACM
  Transactions on Intelligent Systems and Technology}, vol.~7, no.~3, pp.
  30:1--30:23, 2015.

\bibitem{Duda_Wiley00}
R.~O. Duda, P.~E. Hart, and D.~G. Stork, \emph{Pattern Classification}.\hskip
  1em plus 0.5em minus 0.4em\relax Wiley-Interscience, 2000.

\bibitem{Hsu:14:CSF}
J.~Hsu, M.~Gaboardi, A.~Haeberlen, S.~Khanna, A.~Narayan, B.~C. Pierce, and
  A.~Roth, ``Differential privacy: An economic method for choosing epsilon,''
  in \emph{Proc. {CSF}}, 2014, pp. 398--410.

\bibitem{Barthe:13:ICALP}
G.~Barthe and F.~Olmedo, ``Beyond differential privacy: Composition theorems
  and relational logic for f-divergences between probabilistic programs,'' in
  \emph{Proc. {ICALP}}, ser. LNCS, vol. 7966, 2013, pp. 49--60.

\bibitem{Murakami:19:USENIX}
T.~Murakami and Y.~Kawamoto, ``Utility-optimized local differential privacy
  mechanisms for distribution estimation,'' in \emph{Proc. USENIX Security},
  2019, to appear.

\bibitem{Blum:13:JACM}
A.~Blum, K.~Ligett, and A.~Roth, ``A learning theory approach to noninteractive
  database privacy,'' \emph{J. {ACM}}, vol.~60, no.~2, pp. 12:1--12:25, 2013.

\bibitem{Jelasity:IHMMSec:14}
M.~Jelasity and K.~P. Birman, ``Distributional differential privacy for
  large-scale smart metering,'' in \emph{Proc. IH{\&}MMSec}, 2014, pp.
  141--146.

\bibitem{arxiv}
\BIBentryALTinterwordspacing
Y.~Kawamoto and T.~Murakami, ``Local obfuscation mechanisms for hiding
  probability distributions,'' \emph{CoRR}, vol. abs/1812.00939, 2018.
  [Online]. Available: \url{http://arxiv.org/abs/1812.00939}
\BIBentrySTDinterwordspacing

\bibitem{Geumlek:19:ISIT}
J.~Geumlek and K.~Chaudhuri, ``Profile-based privacy for locally private
  computations,'' \emph{CoRR}, vol. abs/1903.09084, 2019.

\bibitem{Fernandes:19:POST}
N.~Fernandes, M.~Dras, and A.~McIver, ``Generalised differential privacy for
  text document processing,'' in \emph{Proc. {POST}}, 2019, pp. 123--148.

\bibitem{Sei:17:ITFS}
Y.~Sei and A.~Ohsuga, ``Differential private data collection and analysis based
  on randomized multiple dummies for untrusted mobile crowdsensing,''
  \emph{{IEEE} Trans. Inf. Forensics Secur.}, vol.~12, no.~4, pp. 926--939,
  2017.

\bibitem{Chatzikokolakis:17:FTPS}
K.~Chatzikokolakis, E.~ElSalamouny, C.~Palamidessi, and P.~Anna, ``Methods for
  location privacy: A comparative overview,'' \emph{Foundations and
  Trends\textregistered in Privacy and Security}, vol.~1, no.~4, pp. 199--257,
  2017.

\bibitem{Shokri:12:CCS}
R.~Shokri, G.~Theodorakopoulos, C.~Troncoso, J.-P. Hubaux, and J.-Y.~L. Boudec,
  ``Protecting location privacy: optimal strategy against localization
  attacks,'' in \emph{Proc. CCS}.\hskip 1em plus 0.5em minus 0.4em\relax ACM,
  2012, pp. 617--627.

\bibitem{Shokri:11:SP}
R.~Shokri, G.~Theodorakopoulos, J.-Y.~L. Boudec, and J.-P. Hubaux,
  ``Quantifying location privacy,'' in \emph{Proc. S\&P}.\hskip 1em plus 0.5em
  minus 0.4em\relax IEEE, 2011, pp. 247--262.

\bibitem{Xue:09:LoCa}
M.~Xue, P.~Kalnis, and H.~Pung, ``Location diversity: Enhanced privacy
  protection in location based services,'' in \emph{Proc. LoCA}.\hskip 1em plus
  0.5em minus 0.4em\relax Springer, 2009, pp. 70--87.

\bibitem{Hoh:07:CCS}
B.~Hoh, M.~Gruteser, H.~Xiong, and A.~Alrabady, ``Preserving privacy in gps
  traces via uncertainty-aware path cloaking,'' in \emph{Proc. CCS}.\hskip 1em
  plus 0.5em minus 0.4em\relax ACM, 2007, pp. 161--171.

\bibitem{Bindschaedler:16:SP}
V.~Bindschaedler and R.~Shokri, ``Synthesizing plausible privacy-preserving
  location traces,'' in \emph{Proc. {S\&P}}, 2016, pp. 546--563.

\bibitem{Chow:09:WPES}
R.~Chow and P.~Golle, ``Faking contextual data for fun, profit, and privacy,''
  in \emph{Proc. PES}.\hskip 1em plus 0.5em minus 0.4em\relax ACM, 2009, pp.
  105--108.

\bibitem{Kido:05:ICDE}
H.~Kido, Y.~Yanagisawa, and T.~Satoh, ``Protection of location privacy using
  dummies for location-based services,'' in \emph{Proc. ICDE Workshops}, 2005,
  p. 1248.

\bibitem{Machanavajjhala:08:ICDE}
A.~Machanavajjhala, D.~Kifer, J.~M. Abowd, J.~Gehrke, and L.~Vilhuber,
  ``Privacy: Theory meets practice on the map,'' in \emph{Proc. ICDE}.\hskip
  1em plus 0.5em minus 0.4em\relax IEEE, 2008, pp. 277--286.

\bibitem{Ho:11:GIS}
S.-S. Ho and S.~Ruan, ``Differential privacy for location pattern mining,'' in
  \emph{Proc. SPRINGL}.\hskip 1em plus 0.5em minus 0.4em\relax ACM, 2011, pp.
  17--24.

\bibitem{Cheng:11:ICWSM}
Z.~Cheng, J.~Caverlee, K.~Lee, and D.~Z. Sui, ``Exploring millions of
  footprints in location sharing services,'' in \emph{Proc. ICWSM}, 2011.

\bibitem{please_rob_me}
{Please Rob Me - Raising Awareness about Over-sharing},
  http://pleaserobme.com/.

\bibitem{Kawamoto:19:Allerton}
Y.~Kawamoto and T.~Murakami, ``Local distribution obfuscation via probability
  coupling,'' in \emph{Proc. {Allerton}}, 2019, to appear.

\bibitem{Dwork:10:FOCS}
C.~Dwork, G.~N. Rothblum, and S.~P. Vadhan, ``Boosting and differential
  privacy,'' in \emph{Proc. {FOCS}}, 2010, pp. 51--60.

\end{thebibliography}

\appendix
\conference{

\section{Experimental Results}
\label{sec:appendix:esorics}

In this section we present some of the experimental results on the following four attributes. See~\cite{arxiv} for further experimental results.
\begin{itemize}
\item 
\social{}/\lesssocial{} represent whether a user's social status~\cite{Cheng:11:ICWSM} (the number of followers divided by the number of followings) is greater than $5$ or not. 
\item 
\workplace{}/\nonworkplace{} represent whether a user is at office or not.
This attribute can be thought as sensitive when it implies users are unemployed.
\item 
\home{}/\outside{} represent whether a user is at home or not.
\item 
\north{}/\south{} represent whether a user's home is located in the northern or southern Manhattan.
This attribute needs to be protected from stalkers.
\end{itemize}

First, we compare different obfuscation mechanisms for various attributes in Figs.~\ref{fig:compare-mechanisms},~\ref{fig:compare:delta0.001:SI:esorics}, and~\ref{fig:compare:delta0.001:HO:esorics}.
We also compare different time periods: 00h-05h, 06h-11h, 12h-17h, 18h-23h in Manhattan in 
Fig.~\ref{fig:times+MH+MF:tupling:privacy:esorics}.

Next, we compare the experimental results on five cities: Manhattan, eastern Tokyo, western Tokyo, London, and Paris.
In Table~\ref{tab:example-parameters:esorics} we show examples of parameters that achieve the same levels of \DistP{} in different cities.
More detailed can be found 
in Fig.~\ref{fig:cities+MF:tupling:privacy:esorics} (\male{}/\female{}).

Finally, we compare theoretical/empirical values of $\varepsilon$-\DistP{} as follows.
In Table~\ref{table:bounds:theory-experiments:esorics}, we show the theoretical values of $\varepsilon$ calculated by Theorem~\ref{thm:TuplingDP} for $\delta = 0.001, \allowbreak 0.01, \allowbreak 0.1$.
Compared to experiments, those values can only give loose upper bounds on $\varepsilon$, because of the concentration inequality used to derive Theorem~\ref{thm:TuplingDP}.

\begin{table}
\centering
\caption{The number $k$ of dummies required for achieving \DistP{} in different cities (MH = Manhattan, TKE = Tokyo (east), TKW = Tokyo (west), LD = London, PR = Paris) when $\varepsilon_{\alg} = 100$ and $r = 0.020$. 
Note that the data of Paris for \male{}/\female{} are excluded because of the insufficient sample size.
\label{tab:example-parameters:esorics}}
\vspace{1mm}
\scalebox{0.8}[0.8]{ 
\makeatother\begin{tabular}{l c c c c c c c}
\cmidrule[0.1em]{2-6}
& {\bf MH} & {\bf TKE} & {\bf TKW} & {\bf LD} & {\bf PR} \\[0.1ex]
\midrule
$(0.25, 0.001)$-\DistP{} for {\bf male\,/\,female} & 2 &  $>$20 & 5 & 10 & ---  \\[0.1ex]
$(0.50, 0.001)$-\DistP{} for {\bf social\,/\,less social}\! & 2  &  3 & $>$20 & 2 & 3 \\[0.1ex]
$(1.00, 0.001)$-\DistP{} for {\bf work\,/\,non-work}\! & 2  &  2  & $>$20 & 1 & 2  \\[0.1ex]
$(1.50, 0.001)$-\DistP{} for {\bf home\,/\,outside} & 3  &  5 & $>$20 & $>$20 & 4  \\[0.1ex]
\bottomrule
\end{tabular}
}
\end{table}
\begin{table}
\centering
\caption{Theoretical/empirical $\varepsilon$-\DistP{} of $\TPM{}$ ($k = 10$, $\varepsilon_{\!\alg} = 10$, $r = 0.020$). \label{table:bounds:theory-experiments:esorics}}
\vspace{1mm}
\renewcommand{\arraystretch}{1}
\scalebox{0.8}[0.8]{ 
\begin{tabular}{lcccc}
\hline
& $\delta = 0.001$ & $\delta = 0.01$ & $\delta = 0.1$ 
\\ \hline
Theoretical bounds &
2.170 & 1.625 & 1.140 
\\ \hline
Empirical values &
0.04450 & 0.03534 & 0.02295 
\\ \hline
\end{tabular}
}
\vspace{-1mm}
\end{table}

\begin{figure}[H]
\begin{tabular}{cc}
\begin{minipage}{1.0\hsize}
\centering
\vspace{-3.5mm}
\begin{subfigure}[t]{0.48\textwidth}\centering
  \mbox{\raisebox{-10pt}{\includegraphics[ width=0.5\textwidth]{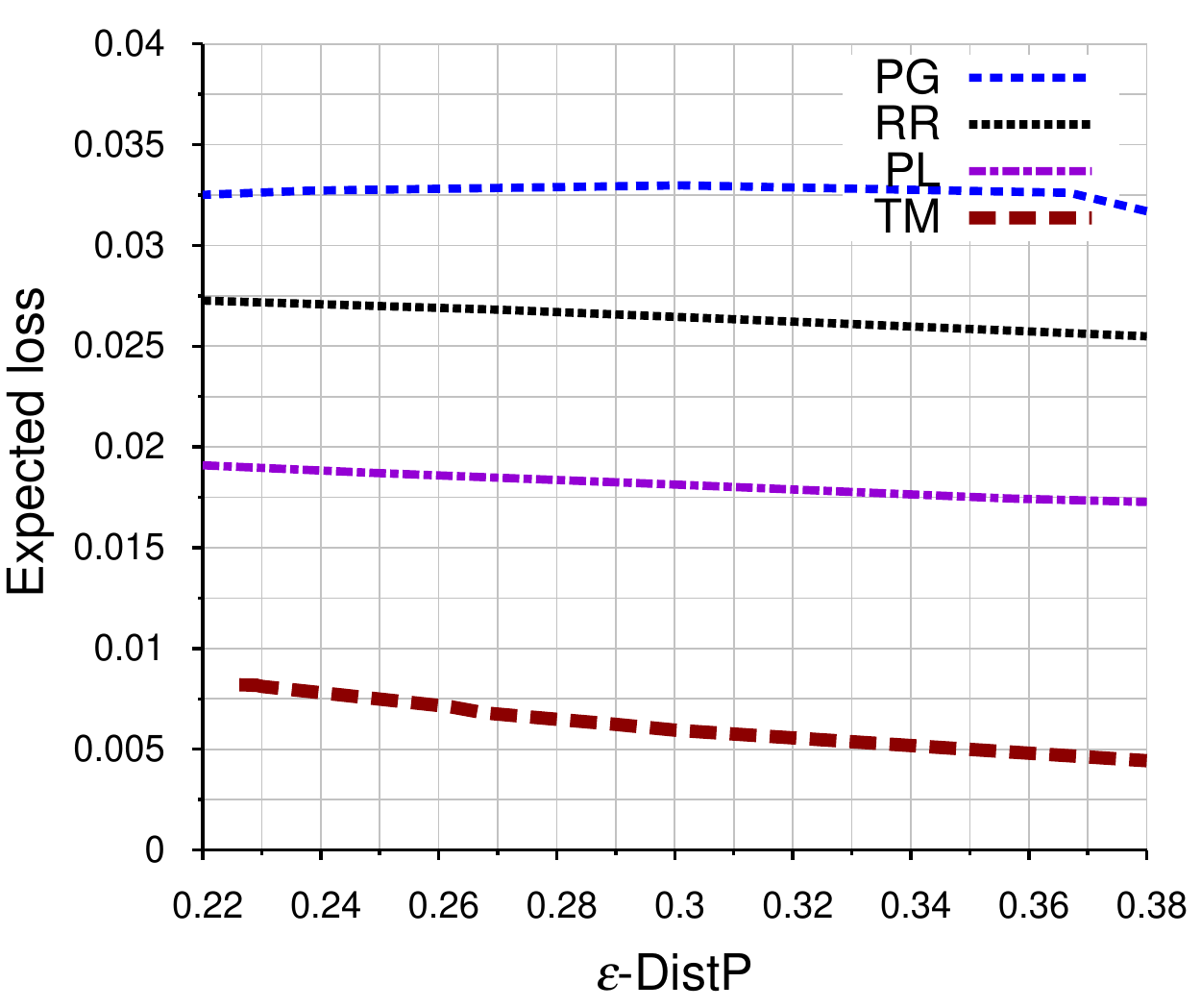}}}
\caption{$(\varepsilon, \allowbreak 0.001)$-\DistP{} and expected loss for \social{}/\lesssocial{} in Manhattan.
\label{fig:compare:delta0.001:SI:esorics}}
\end{subfigure}\hspace{0.3ex}\hfill
\begin{subfigure}[t]{0.48\textwidth}\centering
  \mbox{\raisebox{-10pt}{\includegraphics[ width=0.5\textwidth]{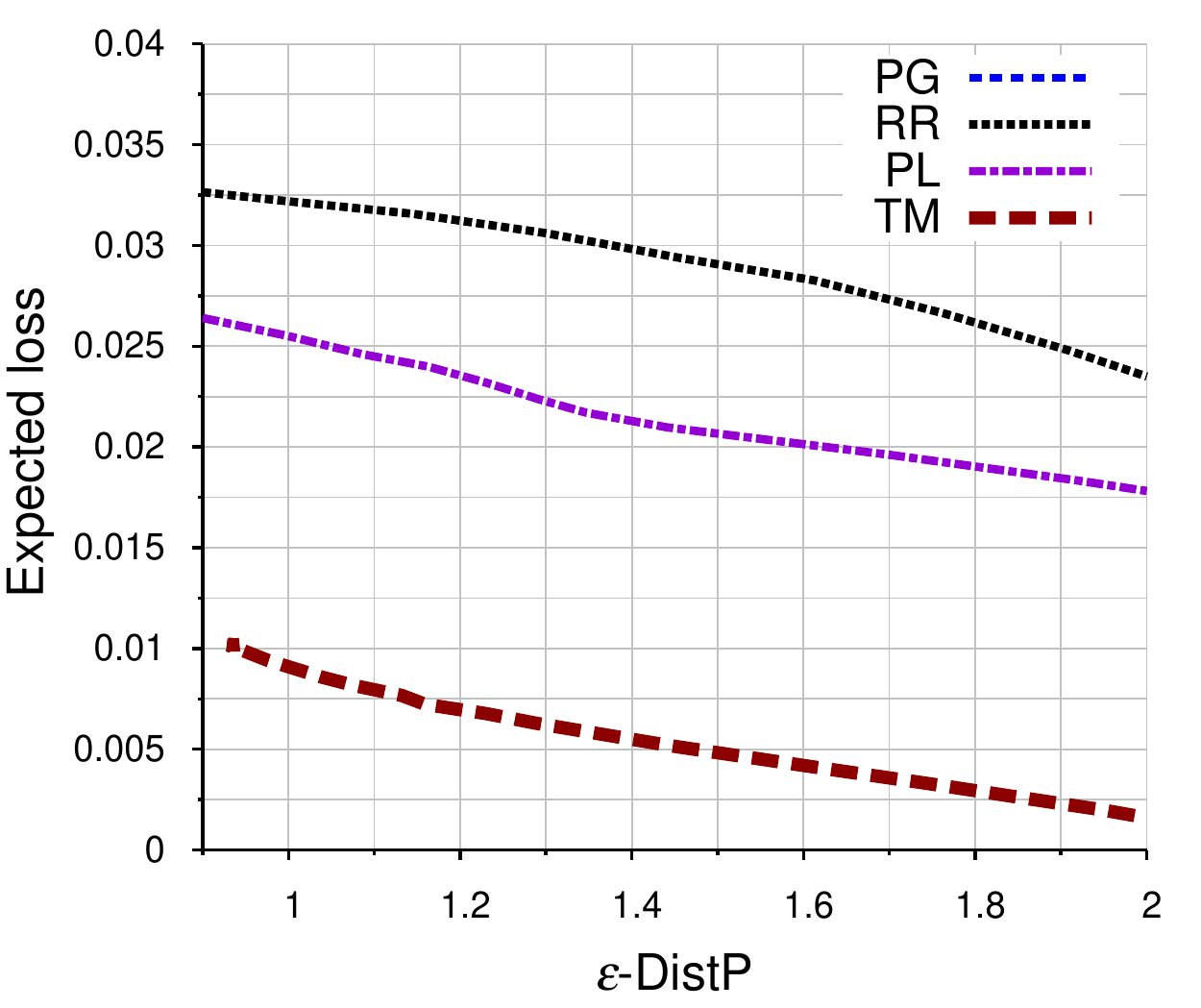}}}
\caption{$(\varepsilon, \allowbreak 0.001)$-\DistP{} and expected loss for \home{}/\outside{} in Manhattan.
\label{fig:compare:delta0.001:HO:esorics}}
\end{subfigure}\hspace{0.5ex}\hfill
\caption{Comparison of the randomized response (RR), the planar Laplace mechanism (PL), the planar Gaussian mechanism (PG), and the tupling mechanism (TM) $\TPM{}$ with $k = 10$ dummies and a radius $r = 0.020$. 
\label{fig:compare-mechanisms-esorics}}
\vspace{5mm}
\end{minipage}
\\
\begin{minipage}{1.0\hsize}
\centering
\begin{subfigure}[t]{0.24\textwidth}
  \mbox{\raisebox{-12pt}{\includegraphics[height=26mm, width=1.00\textwidth]{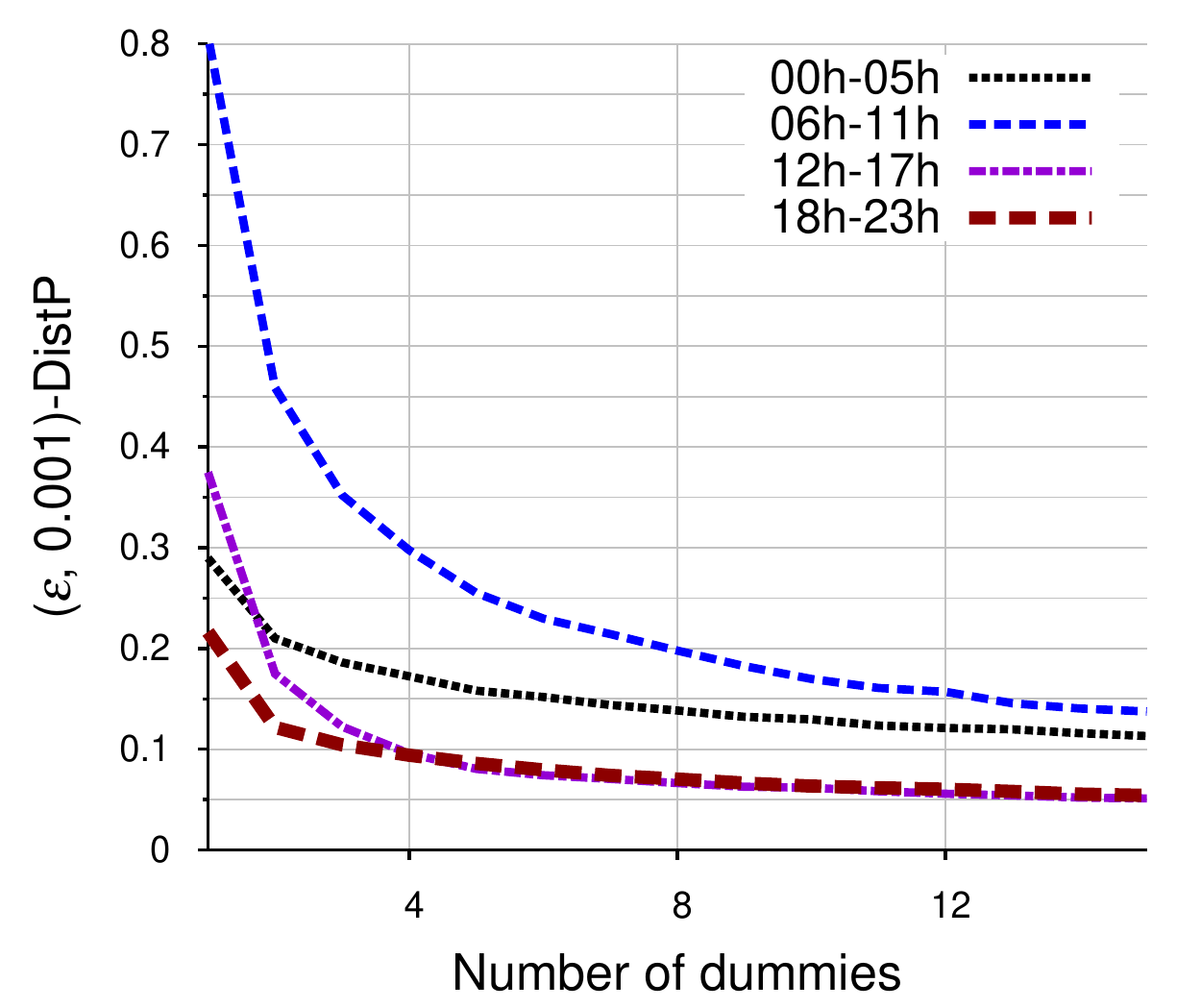}}}
\caption{\#dummies and $\varepsilon$-\DistP{} (when using $(100, 0.020)$-\RL{} mechanism).
\label{fig:times+MH+MF:dummies:DistP:esorics}}
\end{subfigure}\hspace{0.4ex}\hfill
\begin{subfigure}[t]{0.24\textwidth}
  \mbox{\raisebox{-12pt}{\includegraphics[height=26mm, width=1.00\textwidth]{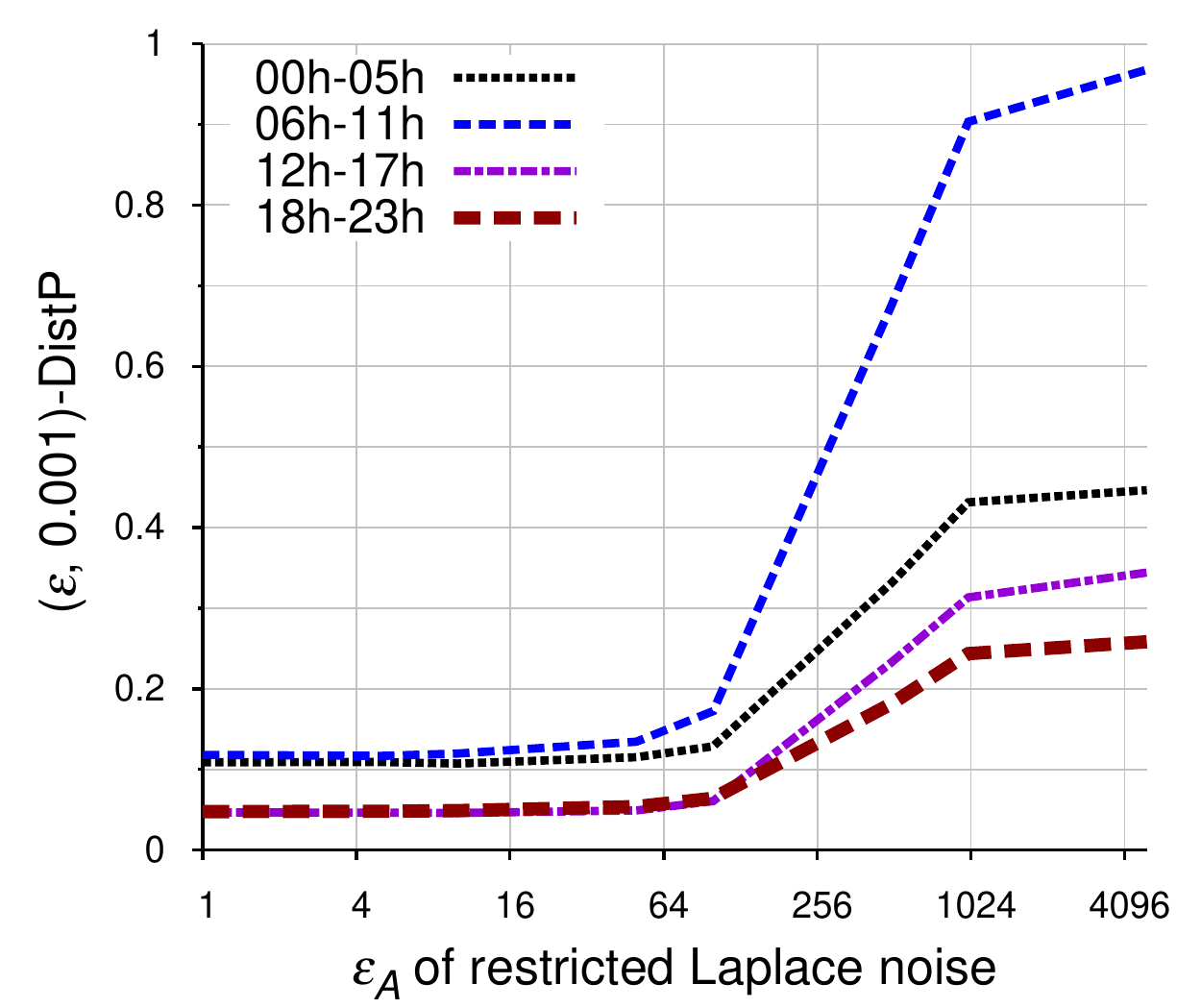}}}
\caption{$\varepsilon_\alg$ of $(\varepsilon_\alg, 0.020)$-\RL{} mechanism and $\varepsilon$-\DistP{} (with $10$ dummies).\label{fig:times+MH+MF:DP-noises:DistP:esorics}}
\end{subfigure}\hspace{0.4ex}\hfill
\begin{subfigure}[t]{0.24\textwidth}
  \mbox{\raisebox{-12pt}{\includegraphics[height=26mm, width=1.00\textwidth]{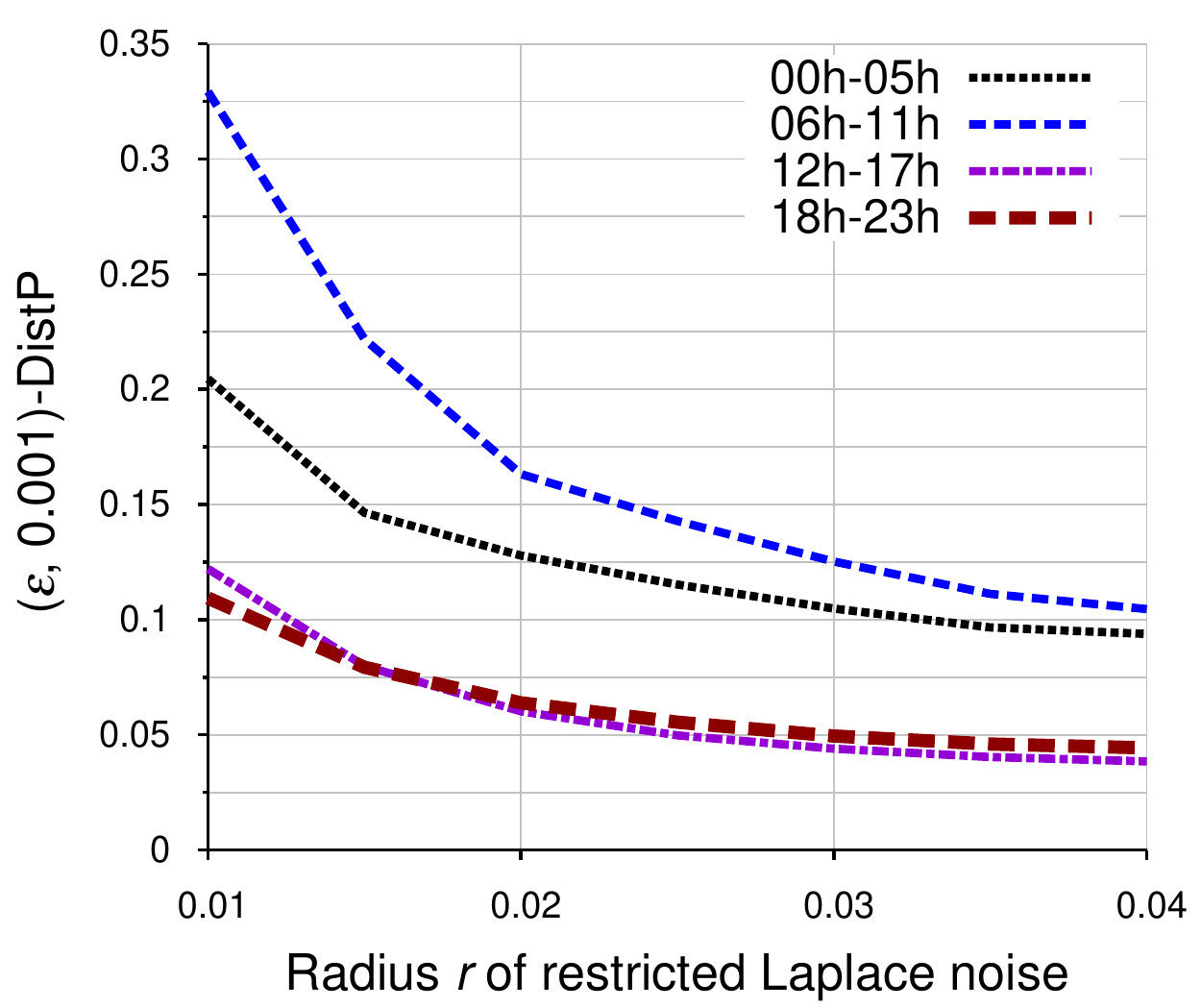}}}
\caption{A radius $r$ of $(100, r)$-\RL{} mechanism and $\varepsilon$-\DistP{} (with $10$ dummies).
\label{fig:times+MH+MF:radius:DistP:esorics}}
\end{subfigure}\hspace{0.4ex}\hfill
\begin{subfigure}[t]{0.24\textwidth}
  \mbox{\raisebox{-12pt}{\includegraphics[height=26mm, width=1.00\textwidth]{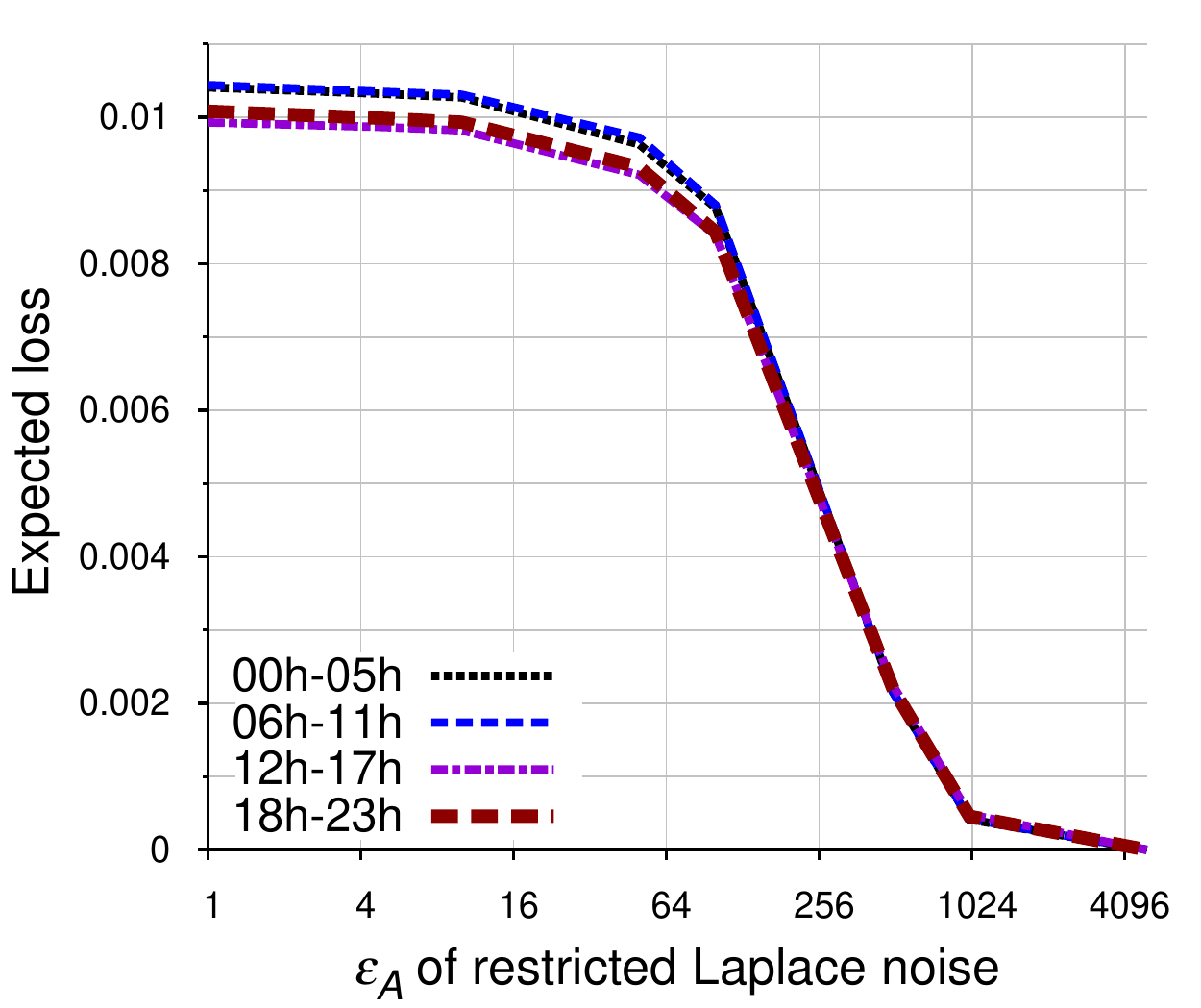}}}
\caption{$\varepsilon_\alg$ of $(\varepsilon_\alg, \allowbreak r)$-\RL{} mechanism and the expected loss (with $5$ dummies).
\label{fig:times+MH+M:tupling:loss:esorics}}
\end{subfigure}
\caption{Empirical \DistP{} and loss for \male{}/\female{} in different hours.
\label{fig:times+MH+MF:tupling:privacy:esorics}}
\vspace{5mm}
\end{minipage}
\\
\begin{minipage}{1.0\hsize}
\centering
\begin{subfigure}[t]{0.23\textwidth}
  \mbox{\raisebox{-10pt}{\includegraphics[height=26.0mm, width=1.00\textwidth]{figs-conf/cities+MF-A.pdf}}}
\caption{\#dummies and $\varepsilon$-\DistP{} (when using $(100, 0.020)$-\RL{} mechanism).\label{fig:cities+MF:dummies:DistP:esorics}}
\end{subfigure}\hspace{0.4ex}\hfill
\begin{subfigure}[t]{0.23\textwidth}
  \mbox{\raisebox{-10pt}{\includegraphics[height=26.0mm, width=1.00\textwidth]{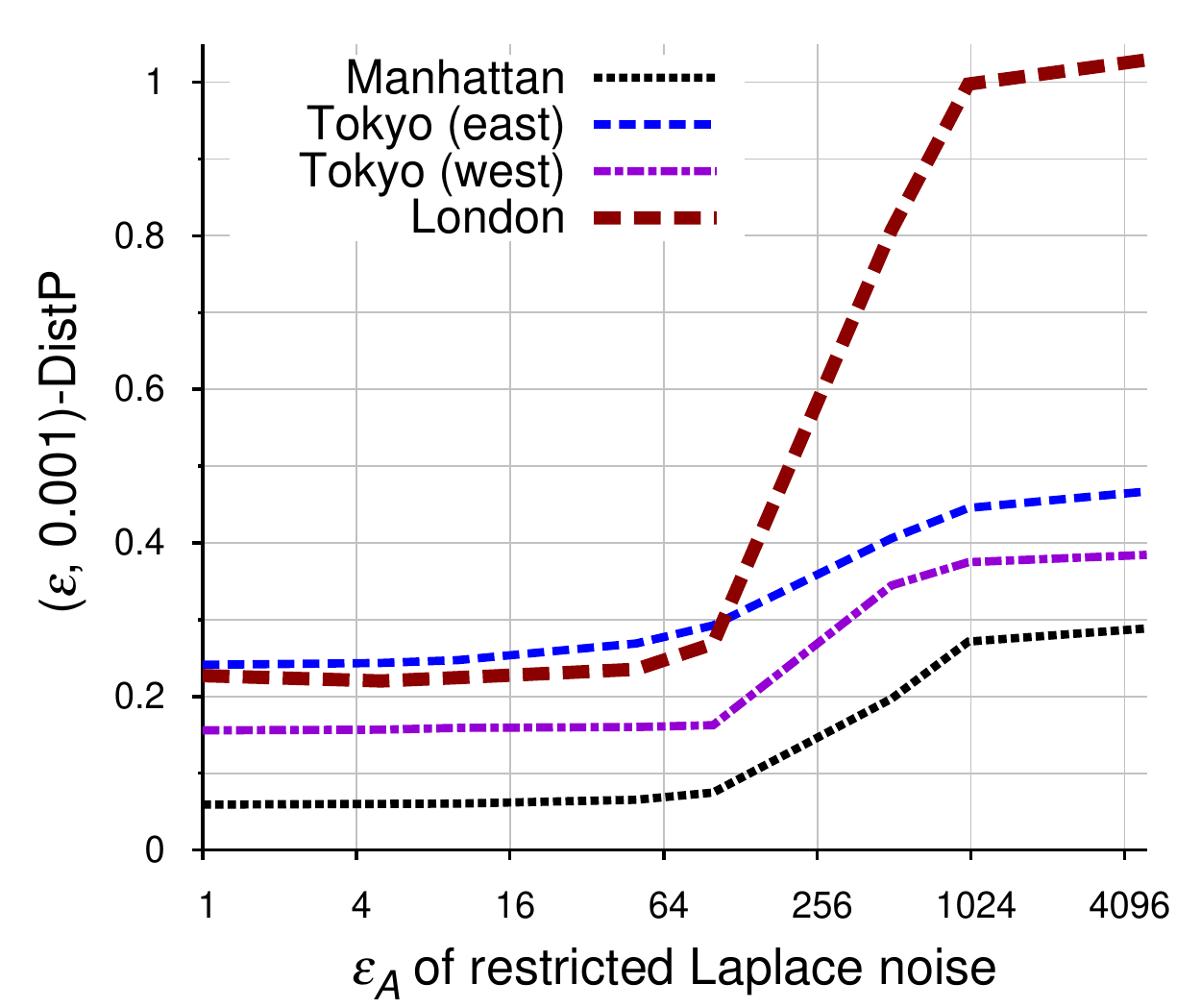}}}
\caption{$\varepsilon_\alg$ of $(\varepsilon_\alg, \allowbreak 0.020)$-\RL{} mechanism and $\varepsilon$-\DistP{} (with $10$ dummies).\label{fig:cities+MF:DP-noises:DistP:esorics}}
\end{subfigure}\hspace{0.4ex}\hfill
\begin{subfigure}[t]{0.23\textwidth}
  \mbox{\raisebox{-10pt}{\includegraphics[height=26.0mm, width=1.00\textwidth]{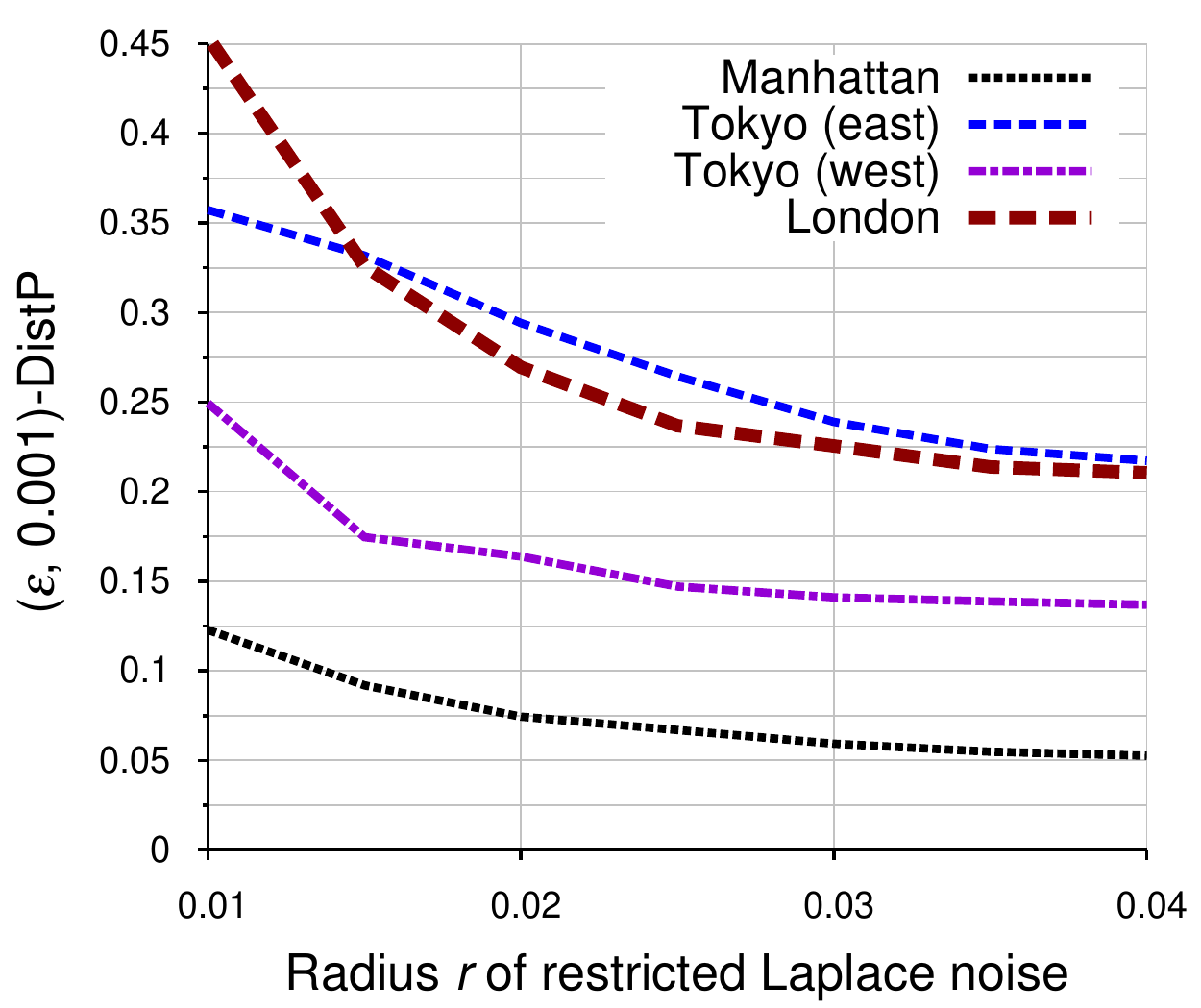}}}
\caption{A radius $r$ of $(100, r)$-\RL{} mechanism and $\varepsilon$-\DistP{} (with $10$ dummies).\label{fig:cities+MF:radius:DistP:esorics}}
\end{subfigure}\hspace{0.4ex}\hfill
\begin{subfigure}[t]{0.23\textwidth}
  \mbox{\raisebox{-10pt}{\includegraphics[height=26.0mm, width=1.00\textwidth]{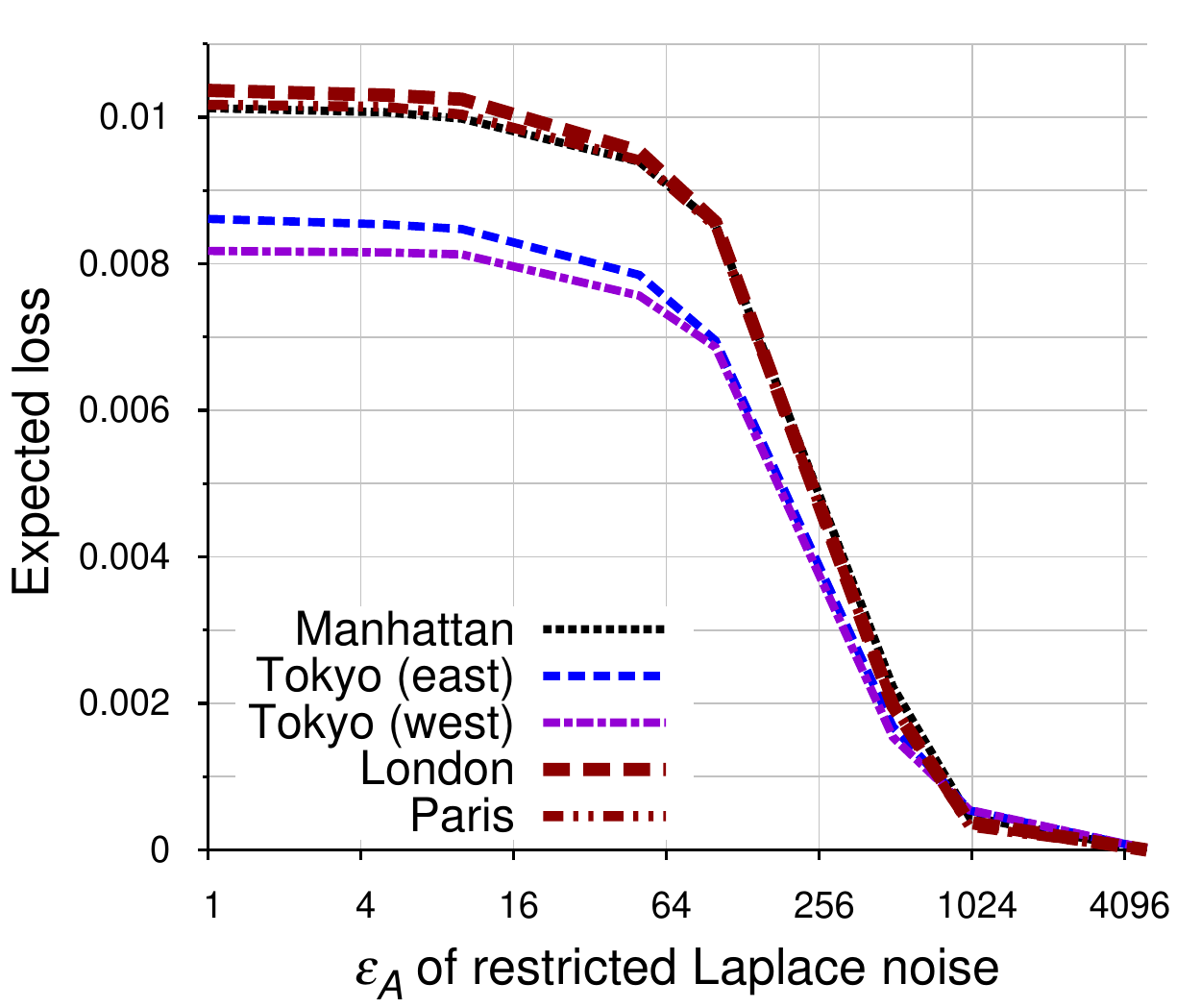}}}
\caption{$\varepsilon_\alg$ of $(\varepsilon_\alg, \allowbreak r)$-\RL{} mechanism and the expected loss (with $5$ dummies).
\label{fig:cities+M:tupling:loss:esorics}}
\end{subfigure}
\caption{Empirical \DistP{} and loss for \male{}/\female{} in different cities.
\label{fig:cities+MF:tupling:privacy:esorics}}
\end{minipage}
\end{tabular}
\end{figure}

}
\arxiv{
\newpage
\section{Experimental Results}
\label{sec:appendix:experimental}

\subsection{Experimental Comparison among Attributes in Manhattan}
\label{sec:experiment-attributes:Manhattan}

In this section we present the experimental results on more attributes of users in the location data in Manhattan.
We investigate the following four attributes:
\begin{itemize}
\item 
\social{}/\lesssocial{} (Fig.~\ref{fig:MH+SI:tupling:privacy}) represent whether a user's social status (defined in~\cite{Cheng:11:ICWSM} as the number of followers divided by the number of followings) is greater than $5$ or not. 
In~\cite{Yang:19:TKDE}, 
the social status is regarded as a private attribute that can be leaked by the user's behaviour and should be protected.
We set $5$ to be a threshold, because mobility patterns of people are substantially different when their social status is greater than $5$~\cite{Cheng:11:ICWSM}.
\item 
\workplace{}/\nonworkplace{} (Fig.~\ref{fig:MH+WP:tupling:privacy}) represent whether a user is at office or not.
This attribute can be sensitive when it implies users are unemployed.
\item 
\home{}/\outside{} (Fig.~\ref{fig:MH+HO:tupling:privacy}) represent whether a user is at home or not.
This attribute should be protected, for instance, from 
robbers \cite{please_rob_me}.
\item 
\north{}/\south{} (Fig.~\ref{fig:NS:tupling:privacy}) represent whether a user's home is located in the northern or southern Manhattan.
This attribute needs to be protected, for instance, from stalkers.
However, the residential area is highly correlated with visited places, hence can be inferred by the current location relatively easily.
\end{itemize}
Note that these figures also show the results of KL-\DistP{}, which is not defined in this paper but is introduced in~\cite{Kawamoto:19:Allerton}.

Compared to the attribute of \male{}/\female{} (Fig.~\ref{fig:tupling:privacy}), the attributes \home{}/\outside{} (Fig.~\ref{fig:MH+HO:tupling:privacy}) and \north{}/\south{} (Fig.~\ref{fig:NS:tupling:privacy}) require more noise for distribution obfuscation.
This is because the distances between the two distributions of users of these attributes are larger than that of $\lambda_{\male}$ and $\lambda_{\female}$.
The histograms for \male{}/\female{} and for \north{}/\south{} are shown in 
Fig.~\ref{fig:plot-MF} and Fig.~\ref{fig:plot-NS} respectively.

We also show the comparison of different obfuscation mechanisms for various attributes in 
Fig.~\ref{fig:compare-mechanisms-MF+SI+WP+HO}.

\subsection{Experimental Comparison among Time Periods in Manhattan}
\label{sec:experiment-attributes:hours}

In this section we present the experimental results on different periods of time: 00h-05h, 06h-11h, 12h-17h, and 18h-23h in Manhattan in Figs.~\ref{fig:times+MH+MF:tupling:privacy} to~\ref{fig:times+MH+HO:tupling:privacy}.
It can be seen that $\varepsilon$ depends on a period of time.
For example, hiding the attribute \male{}/\female{} and \social{}/\lesssocial{} requires more noise in 06h-11h.
This might be related to the fact that more social people tend to attend social events in the morning.
On the other hand, the attribute  \home{}/\outside{} requires similar amount of noise in any period of time.
This implies that residential areas are distant from other areas in each city, and this fact does not change over time.

\subsection{Experimental Comparison among Various Cities}
\label{sec:experiment-attributes:cities}

In this section we compare the experimental results on five cities: Manhattan, eastern Tokyo, western Tokyo, London, and Paris.
In Table~\ref{tab:example-parameters} we show examples of parameters that achieve the same levels of \DistP{} in different cities.
More detailed comparison among those cities is shown 
in Fig.~\ref{fig:cities+MF:tupling:privacy} (\male{}/\female{}),
Fig.~\ref{fig:cities+SI:tupling:privacy} (\social{}/\lesssocial{}),
Fig.~\ref{fig:cities+WP:tupling:privacy} (\workplace{}/\nonworkplace{}), and
Fig.~\ref{fig:cities+HO:tupling:privacy} (\home{}/\allowbreak\outside{}).

When we use the same parameters in the tupling mechanism, the levels $\varepsilon$ of \DistP{} differ among those cities.
In western Tokyo, for instance, the attributes \social{}/\lesssocial{} and \workplace{}/\nonworkplace{} are more difficult to hide.
This implies that areas for social events and workplace might be more separated from the other areas in western Tokyo.
Since each attribute in each city may require different levels of noise for distribution obfuscation, the reference parameters (e.g., those shown in Table~\ref{tab:example-parameters}) would be useful to select appropriate parameters for the tupling mechanism to protect the privacy of attributes in different cities.

\subsection{Theoretical/Empirical Values of $\varepsilon$-\DistP{}}
\label{sub:theory-empirical}

In Table~\ref{table:bounds:theory-experiments}, we show the theoretical values of $\varepsilon$ calculated by Theorem~\ref{thm:TuplingDP} for $\delta = 0.001, \allowbreak 0.01, \allowbreak 0.1$.
Compared to the experimental results, those theoretical values can only give loose upper bounds on $\varepsilon$.
This is because the concentration inequality used to derive Theorem~\ref{thm:TuplingDP} give loose bounds.

\begin{figure}[t]
\begin{tabular}{cc}
\begin{minipage}{1.0\hsize}
\centering
\begin{subfigure}[t]{0.48\textwidth}\centering
  \mbox{\raisebox{-5pt}{\includegraphics[ width=0.5\textwidth]{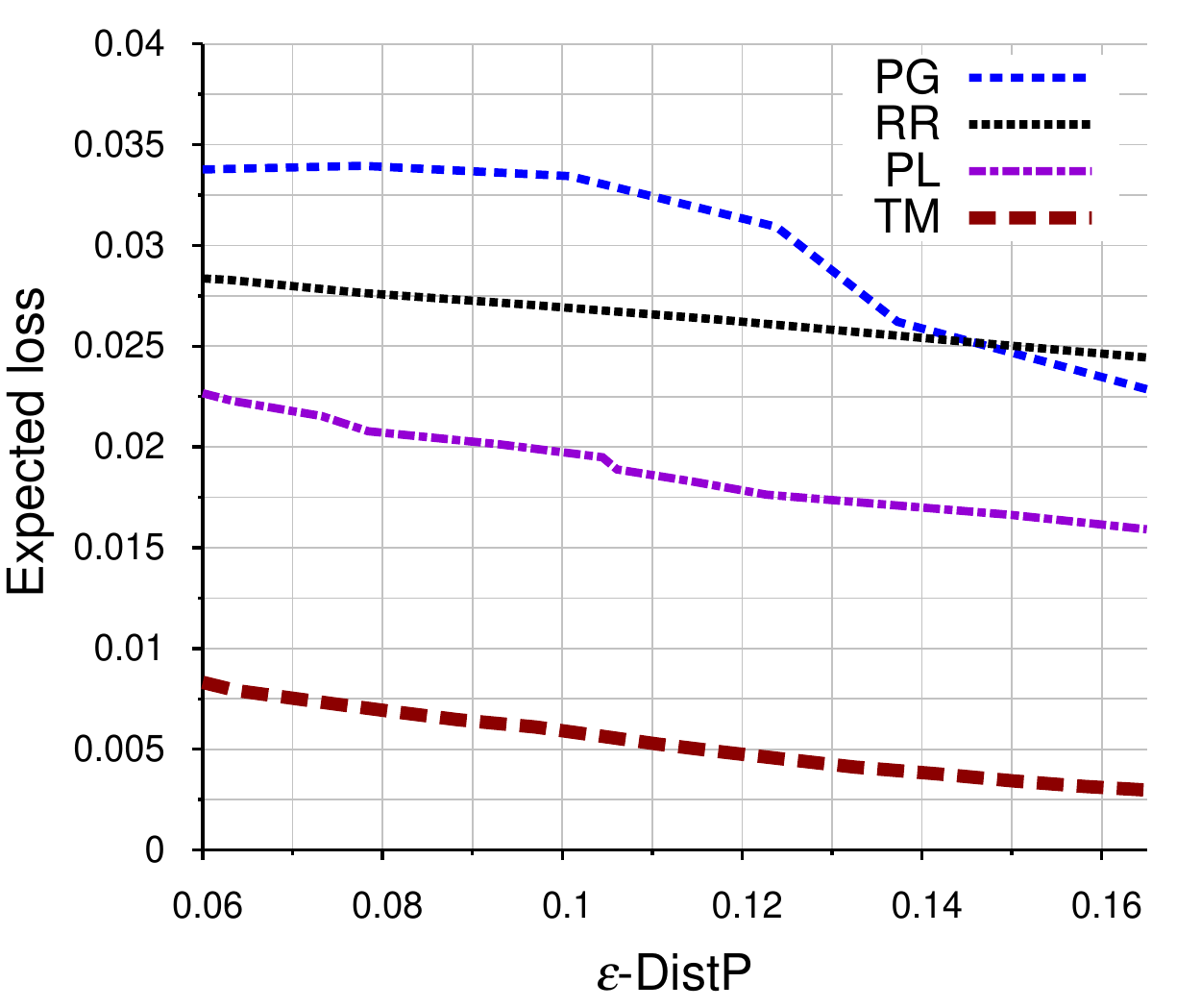}}}
\caption{Relationship between $(\varepsilon, \allowbreak 0.001)$-\DistP{} and expected loss for \male{}/\female{}.\label{fig:compare:delta0.001:MF}}
\end{subfigure}\hspace{0.3ex}\hfill
\begin{subfigure}[t]{0.48\textwidth}\centering
  \mbox{\raisebox{-5pt}{\includegraphics[ width=0.5\textwidth]{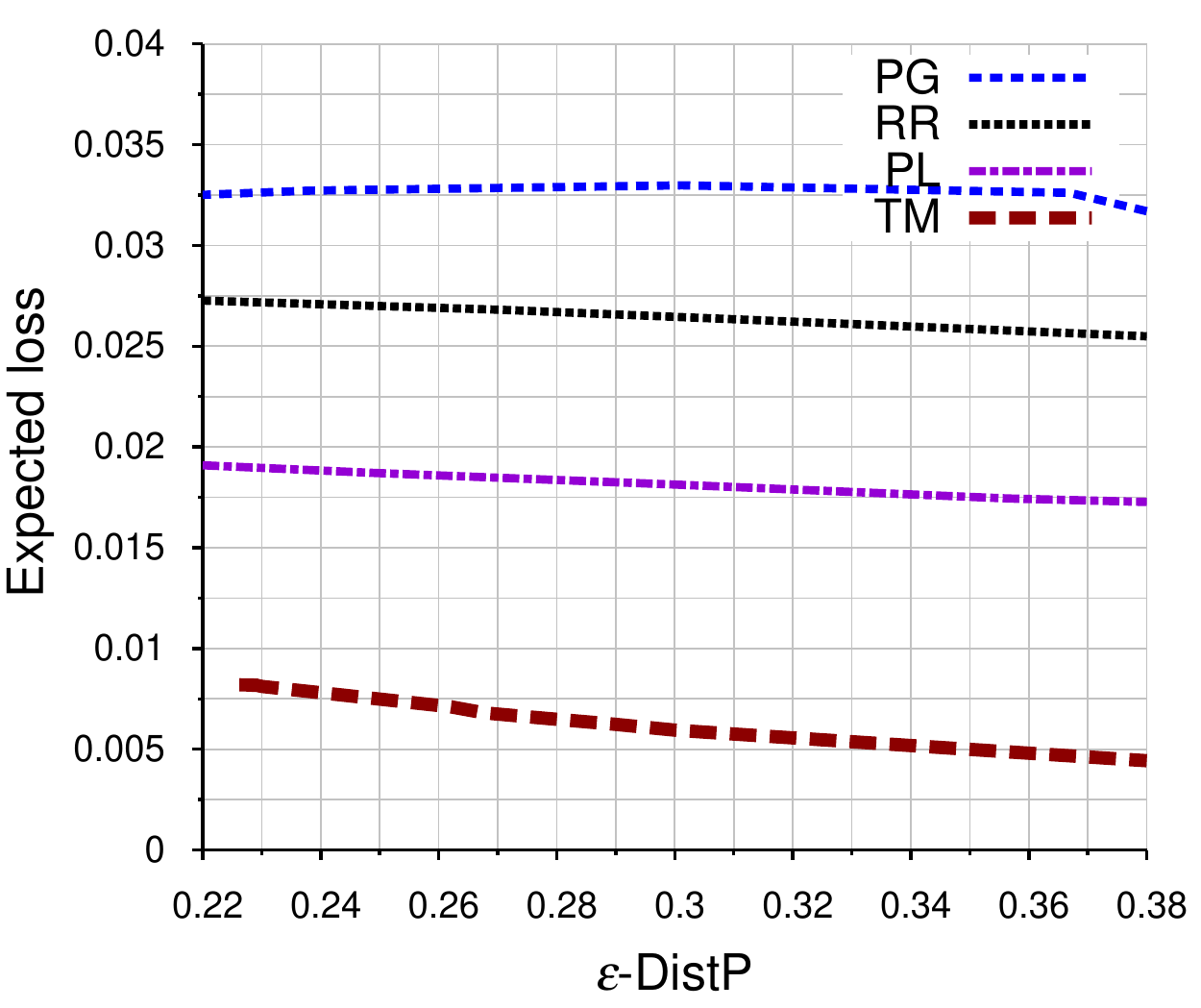}}}
\caption{Relationship between $(\varepsilon, \allowbreak 0.001)$-\DistP{} and expected loss for \social{}/\lesssocial{}.\label{fig:compare:delta0.001:SI}}
\end{subfigure}\hspace{0.3ex}\hfill
\vspace{3mm}
\\
\begin{subfigure}[t]{0.48\textwidth}\centering
  \mbox{\raisebox{-10pt}{\includegraphics[ width=0.5\textwidth]{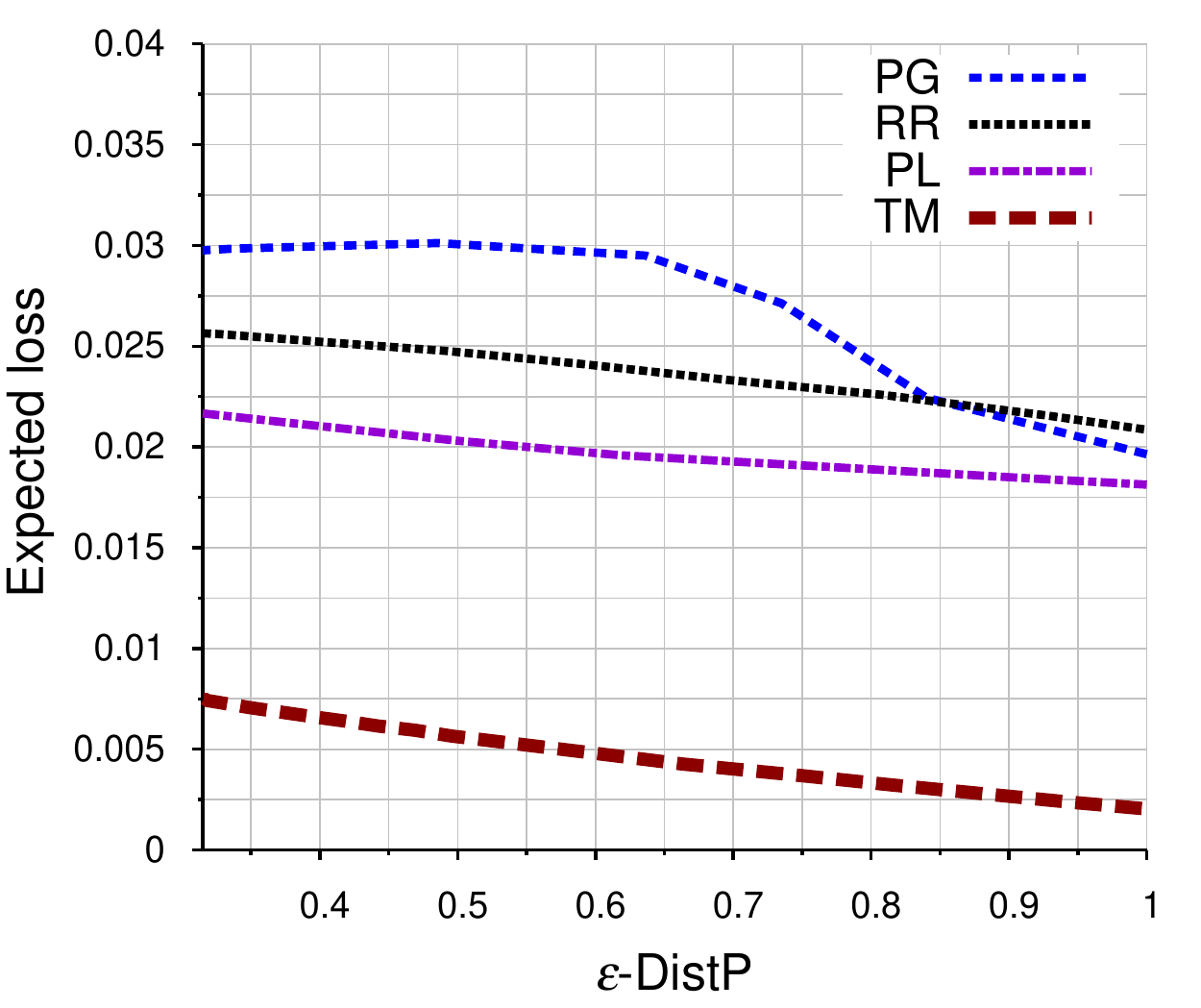}}}
\caption{Relationship between $(\varepsilon, \allowbreak 0.001)$-\DistP{} and loss for \workplace{}/\nonworkplace{}.\label{fig:compare:delta0.001:WP}}
\end{subfigure}\hspace{0.5ex}\hfill
\begin{subfigure}[t]{0.48\textwidth}\centering
  \mbox{\raisebox{-10pt}{\includegraphics[ width=0.5\textwidth]{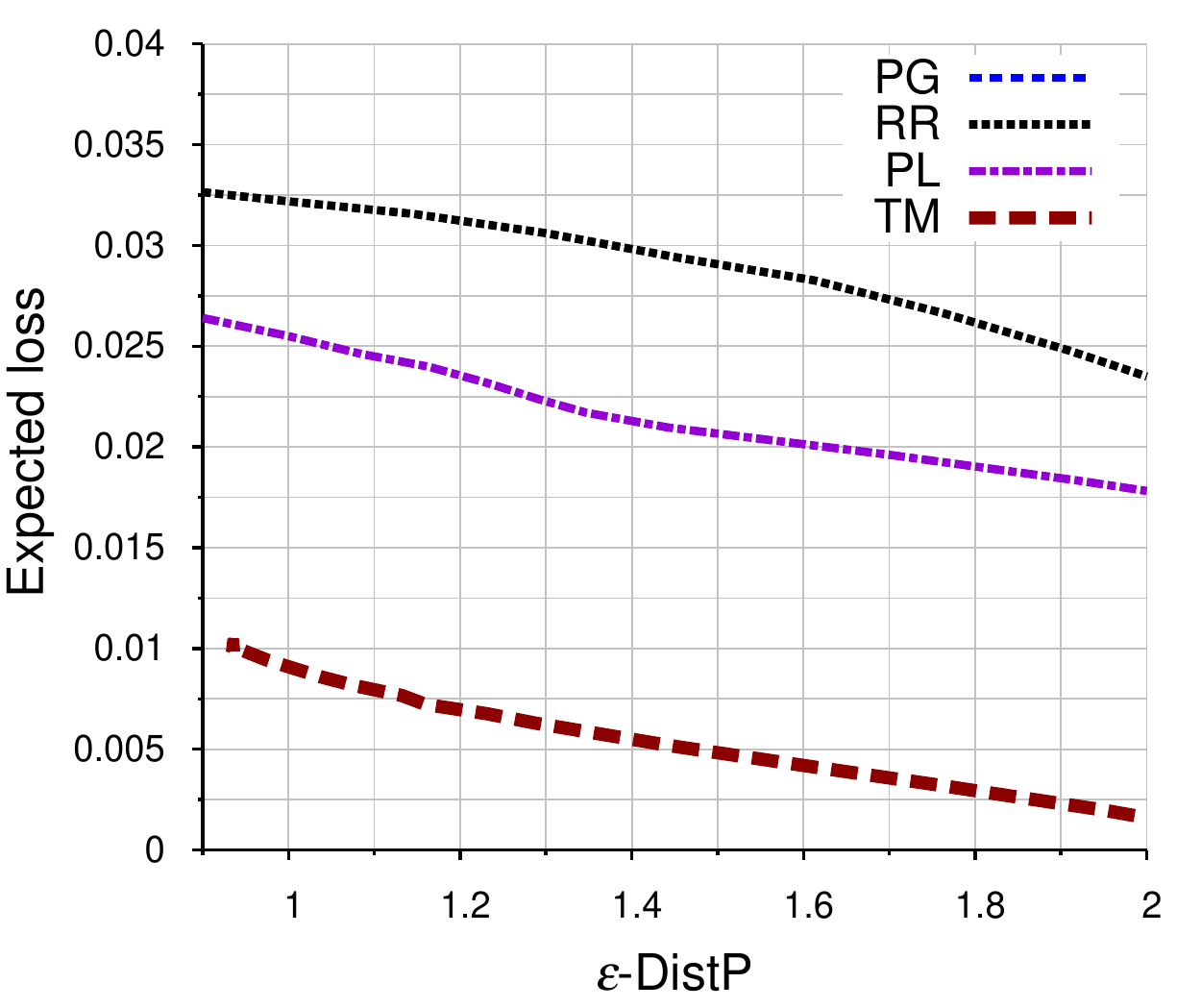}}}
\caption{Relationship between $(\varepsilon, \allowbreak 0.001)$-\DistP{} and expected loss for \home{}/\outside{}.\label{fig:compare:delta0.001:HO}}
\end{subfigure}\hspace{0.5ex}\hfill
\caption{Comparison of the randomized response (RR), the planar Laplace mechanism (PL), the planar Gaussian mechanism (PG), and the tupling mechanism (TM) $\TPM{}$ with $k = 10$ dummies and a radius $r = 0.020$. The experiments are performed for the location data in Manhattan.
\label{fig:compare-mechanisms-MF+SI+WP+HO}}
\vspace{3mm}
\end{minipage}
\end{tabular}
\end{figure}

\begin{figure}[H]
\begin{tabular}{c}
\begin{minipage}{1.0\hsize}
\centering
\makeatletter
\def\@captype{table}
\caption{Theoretical/empirical $\varepsilon$-\DistP{} of $\TPM{}$ ($k = 10$, $\varepsilon_{\!\alg} = 10$, $r = 0.020$). \label{table:bounds:theory-experiments}}
\vspace{0ex}
\renewcommand{\arraystretch}{1}
\makeatother\begin{tabular}{lcccc}
\hline
& $\delta = 0.001$ & $\delta = 0.01$ & $\delta = 0.1$ 
\\ \hline
Theoretical bounds &
2.170 & 1.625 & 1.140 
\\ \hline
Empirical values &
0.04450 & 0.03534 & 0.02295 
\\ \hline
\end{tabular}
\vspace{10mm}
\end{minipage}
\\
\begin{minipage}{1.0\hsize}
\centering
\makeatletter
\def\@captype{table}
\caption{The number $k$ of dummies required for achieving \DistP{} in different cities (MH = Manhattan, TKE = Tokyo (east), TKW = Tokyo (west), LD = London, PR = Paris) when $\varepsilon_{\alg} = 100$ and $r = 0.020$. 
Note that the data of Paris for \male{}/\female{} are excluded because of the insufficient sample size.
\label{tab:example-parameters}}
\makeatother\begin{tabular}{l c c c c c c c}
\cmidrule[0.1em]{2-6}
& {\bf MH} & {\bf TKE} & {\bf TKW} & {\bf LD} & {\bf PR} \\[0.1ex]
\midrule
$(0.25, 0.001)$-\DistP{} for {\bf male\,/\,female} & 2 &  $>$20 & 5 & 10 & ---  \\[0.1ex]
$(0.50, 0.001)$-\DistP{} for {\bf social\,/\,less social}\! & 2  &  3 & $>$20 & 2 & 3 \\[0.1ex]
$(1.00, 0.001)$-\DistP{} for {\bf work\,/\,non-work}\! & 2  &  2  & $>$20 & 1 & 2  \\[0.1ex]
$(1.50, 0.001)$-\DistP{} for {\bf home\,/\,outside} & 3  &  5 & $>$20 & $>$20 & 4  \\[0.1ex]
\bottomrule
\end{tabular}
\vspace{10mm}
\end{minipage}
\\
\begin{tabular}{cc}
\begin{minipage}{0.49\hsize}
\centering
\begin{subfigure}[t]{0.48\textwidth}
  \mbox{\raisebox{-5pt}{\includegraphics[width=1.05\textwidth]{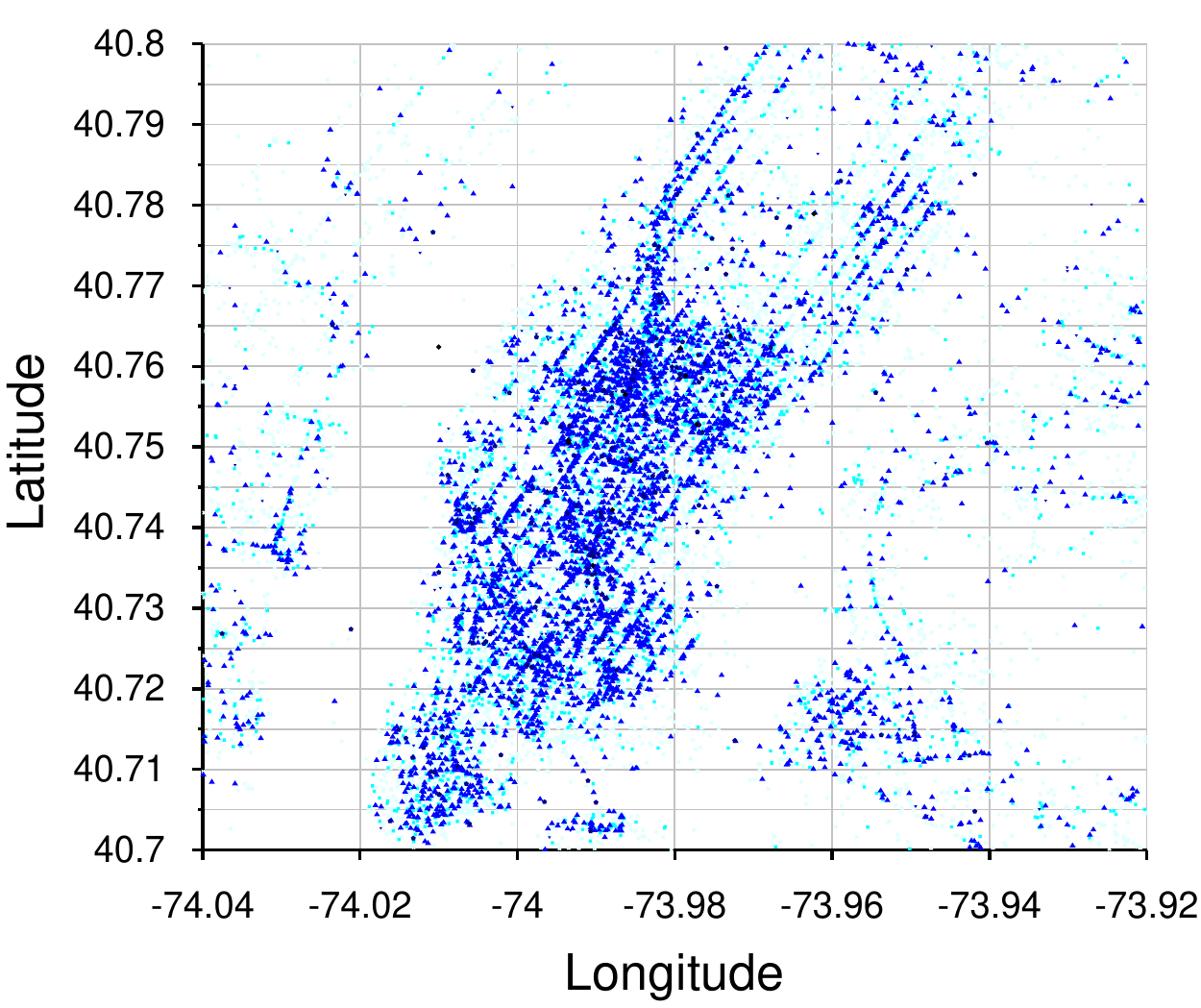}}}
\caption{The histogram of locations of the male users in Manhattan.
\label{fig:plot-M}}
\end{subfigure}\hspace{0.3ex}\hfill
\begin{subfigure}[t]{0.48\textwidth}
  \mbox{\raisebox{-5pt}{\includegraphics[width=1.05\textwidth]{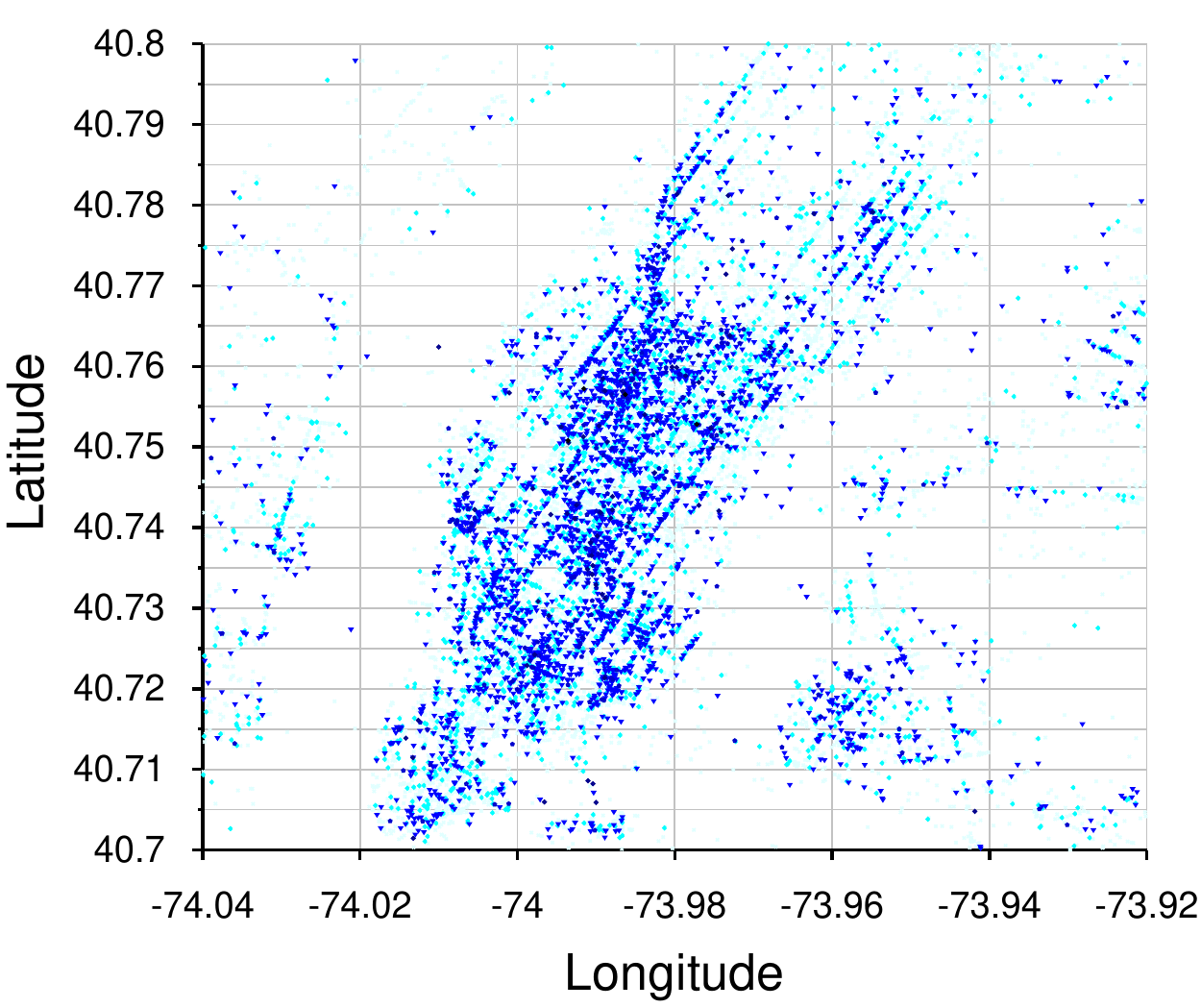}}}
\caption{The histogram of locations of the female users in Manhattan.
\label{fig:plot-F}}
\end{subfigure}\hspace{0.6ex}\hfill
\caption{Histograms of user locations used in experiments. Darker colors represent larger population.
\label{fig:plot-MF}}
\vspace{14mm}
\end{minipage}
~
\begin{minipage}{0.49\hsize}
\centering
\begin{subfigure}[t]{0.48\textwidth}
  \mbox{\raisebox{-5pt}{\includegraphics[width=1.05\textwidth]{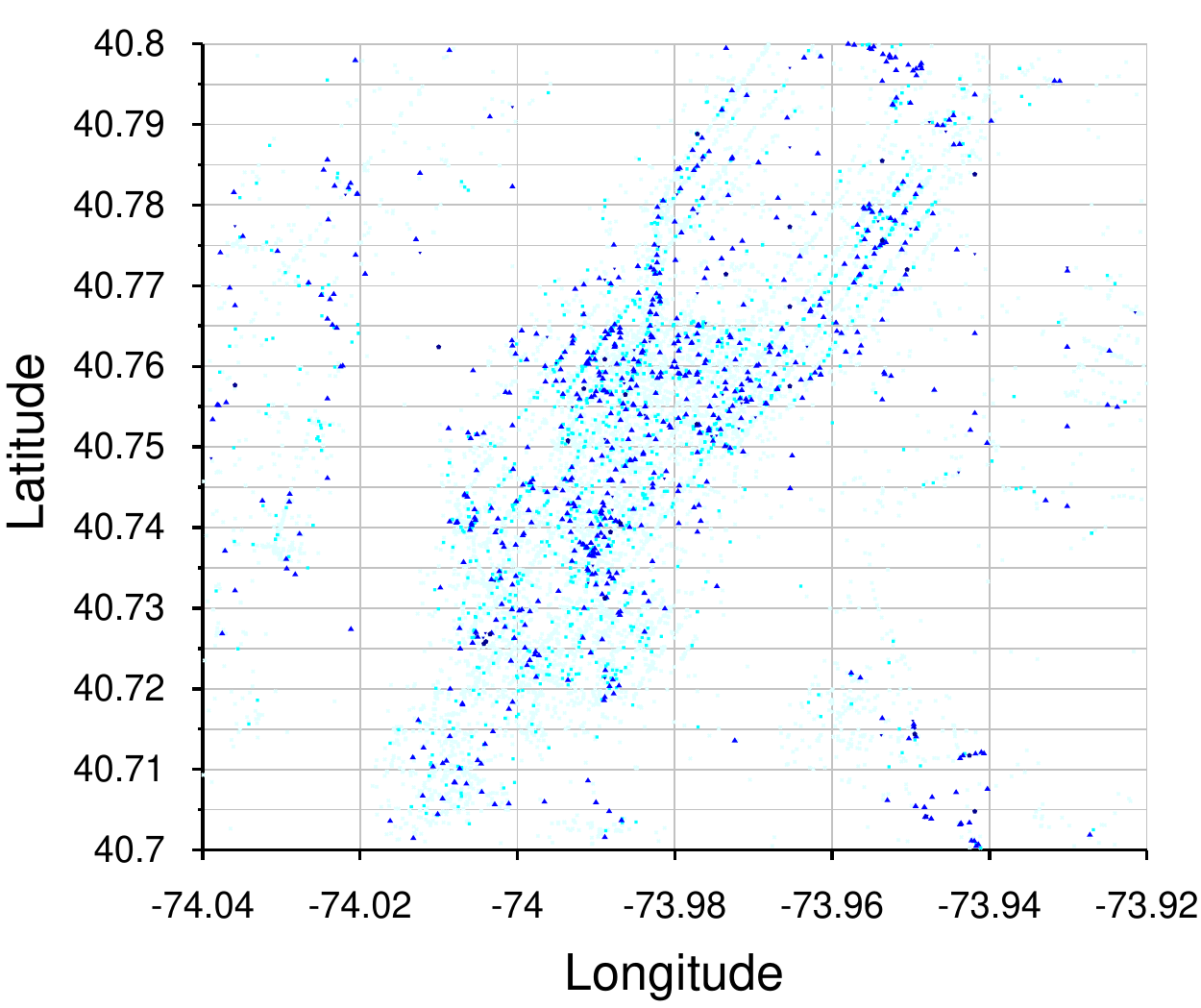}}}
\caption{The histogram of locations of the users living in northern Manhattan.
\label{fig:plot-N}}
\end{subfigure}\hspace{0.3ex}\hfill
\begin{subfigure}[t]{0.48\textwidth}
  \mbox{\raisebox{-5pt}{\includegraphics[width=1.05\textwidth]{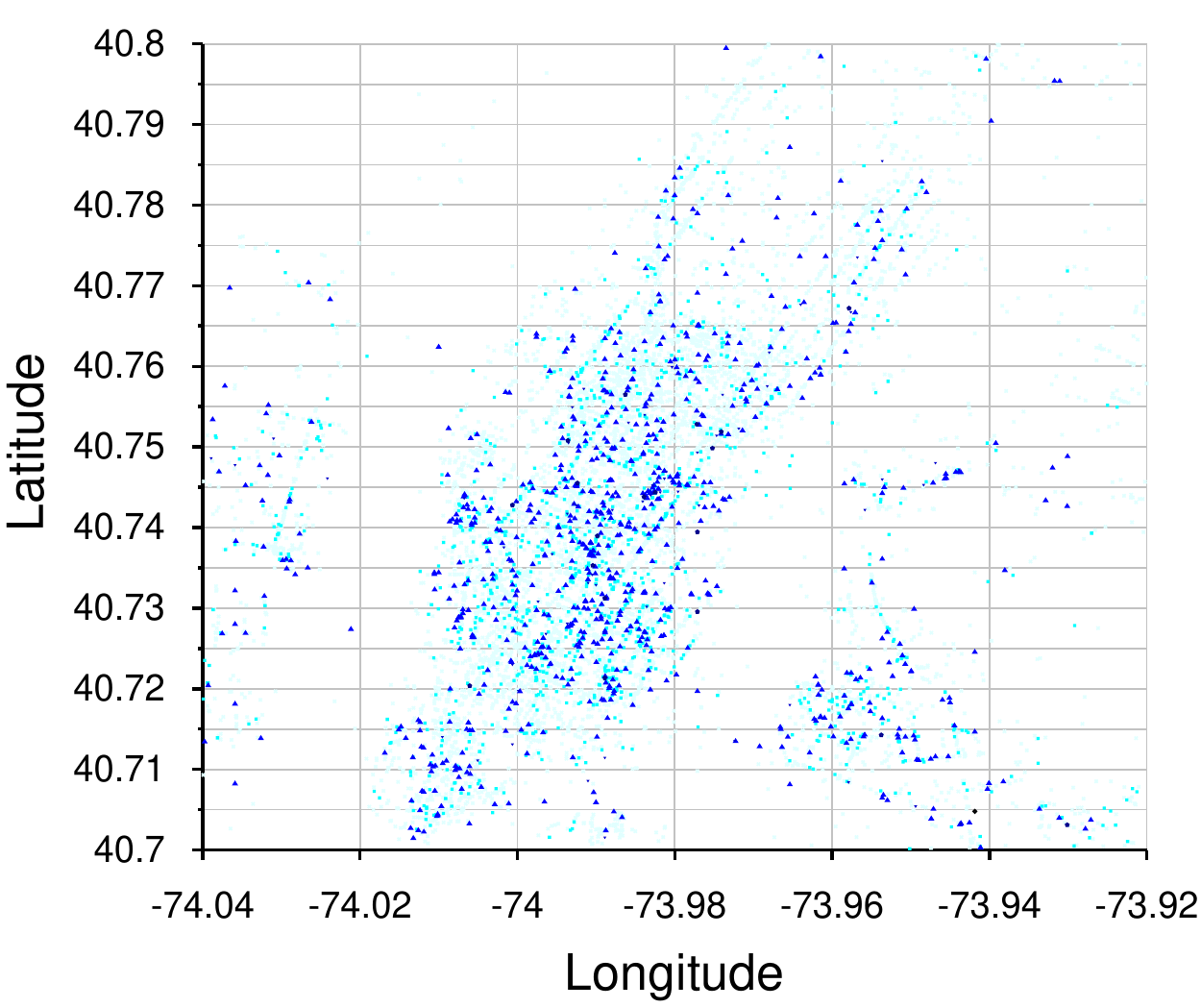}}}
\caption{The histogram of locations of~the users living in southern Manhattan.
\label{fig:plot-S}}
\end{subfigure}\hspace{0.3ex}\hfill
\caption{Histograms of user locations used in experiments. Darker colors represent larger population.\label{fig:plot-NS}}
\vspace{14mm}
\end{minipage}
\end{tabular}
\end{tabular}
\end{figure}

\begin{figure}[H]
\begin{tabular}{c}
\begin{minipage}{1.0\hsize}
\centering
\begin{subfigure}[t]{0.24\textwidth}
  \mbox{\raisebox{-11pt}{\includegraphics[height=24mm, width=1.00\textwidth]{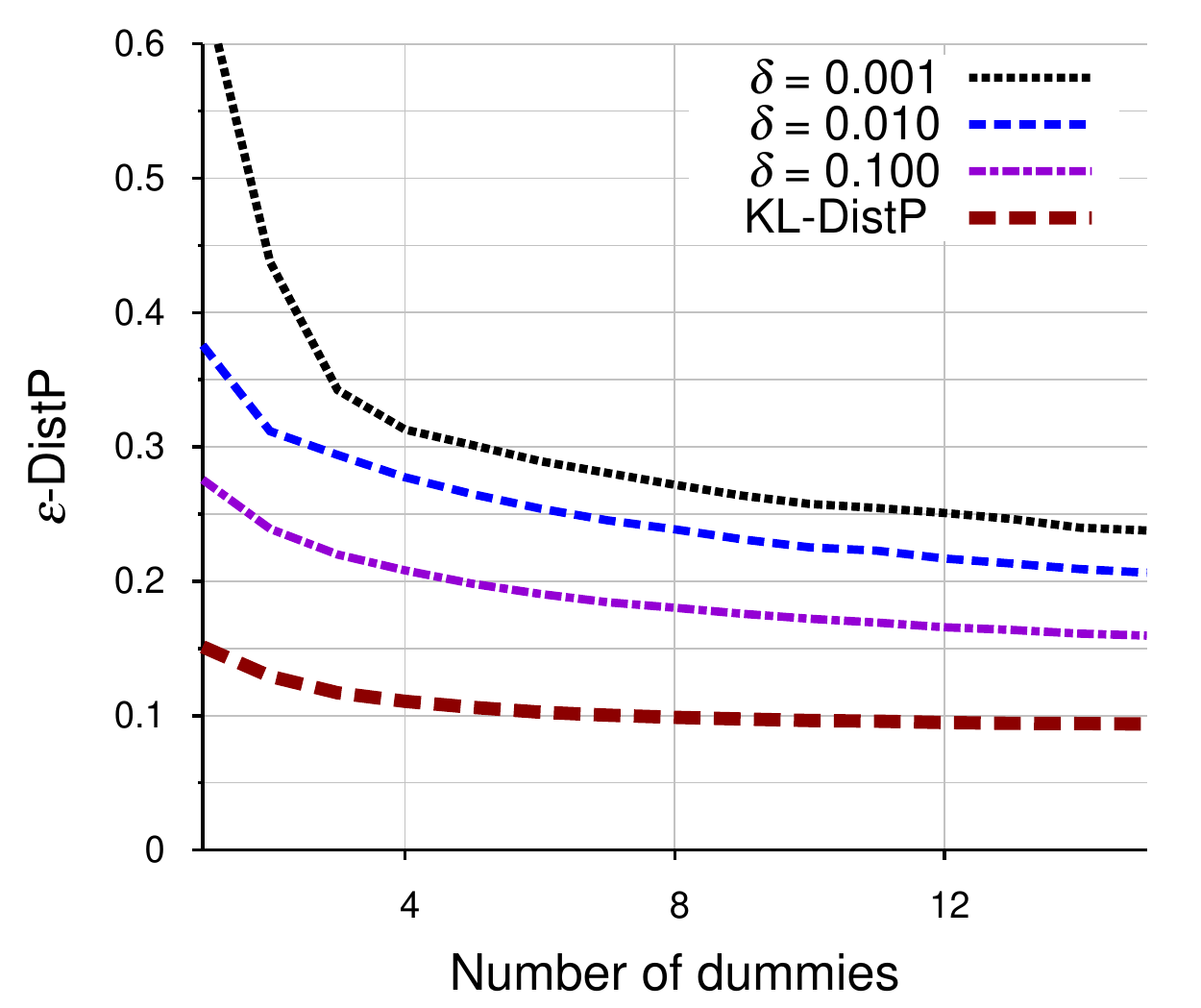}}}
\caption{Relationship between $\varepsilon$-\DistP{} and \#dummies (when using $(100, 0.020)$-\RL{} mechanism).\label{fig:MH+SI:dummies:DistP}}
\end{subfigure}\hspace{0.4ex}\hfill
\begin{subfigure}[t]{0.24\textwidth}
  \mbox{\raisebox{-11pt}{\includegraphics[height=24mm, width=1.00\textwidth]{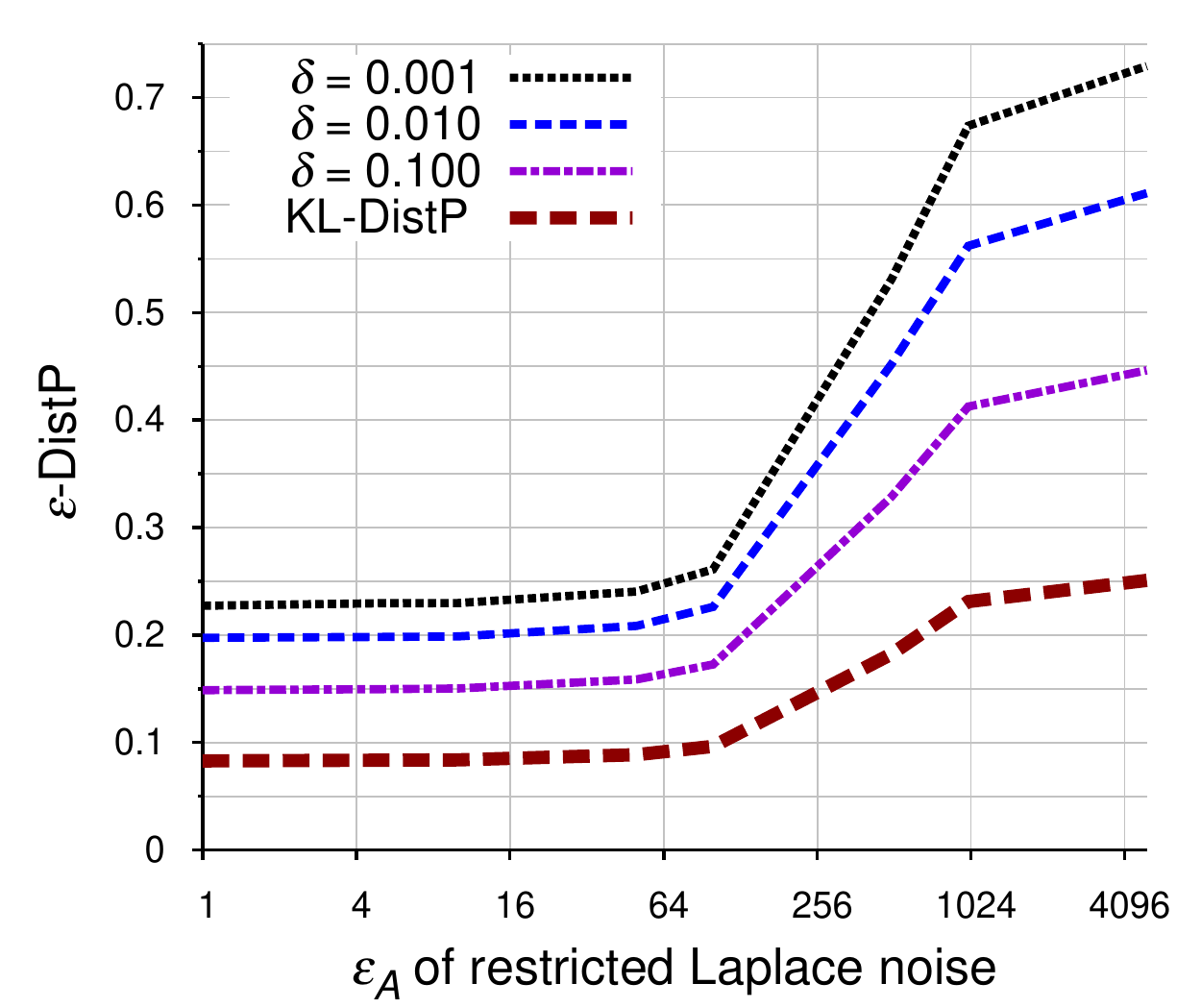}}}
\caption{Relationship between $\varepsilon$-\DistP{} and $\varepsilon_\alg$ of $(\varepsilon_\alg, 0.020)$-\RL{} mechanism (with $10$ dummies).\label{fig:MH+SI:DP-noises:DistP}}
\end{subfigure}\hspace{0.4ex}\hfill
\begin{subfigure}[t]{0.24\textwidth}
  \mbox{\raisebox{-11pt}{\includegraphics[height=24mm, width=1.00\textwidth]{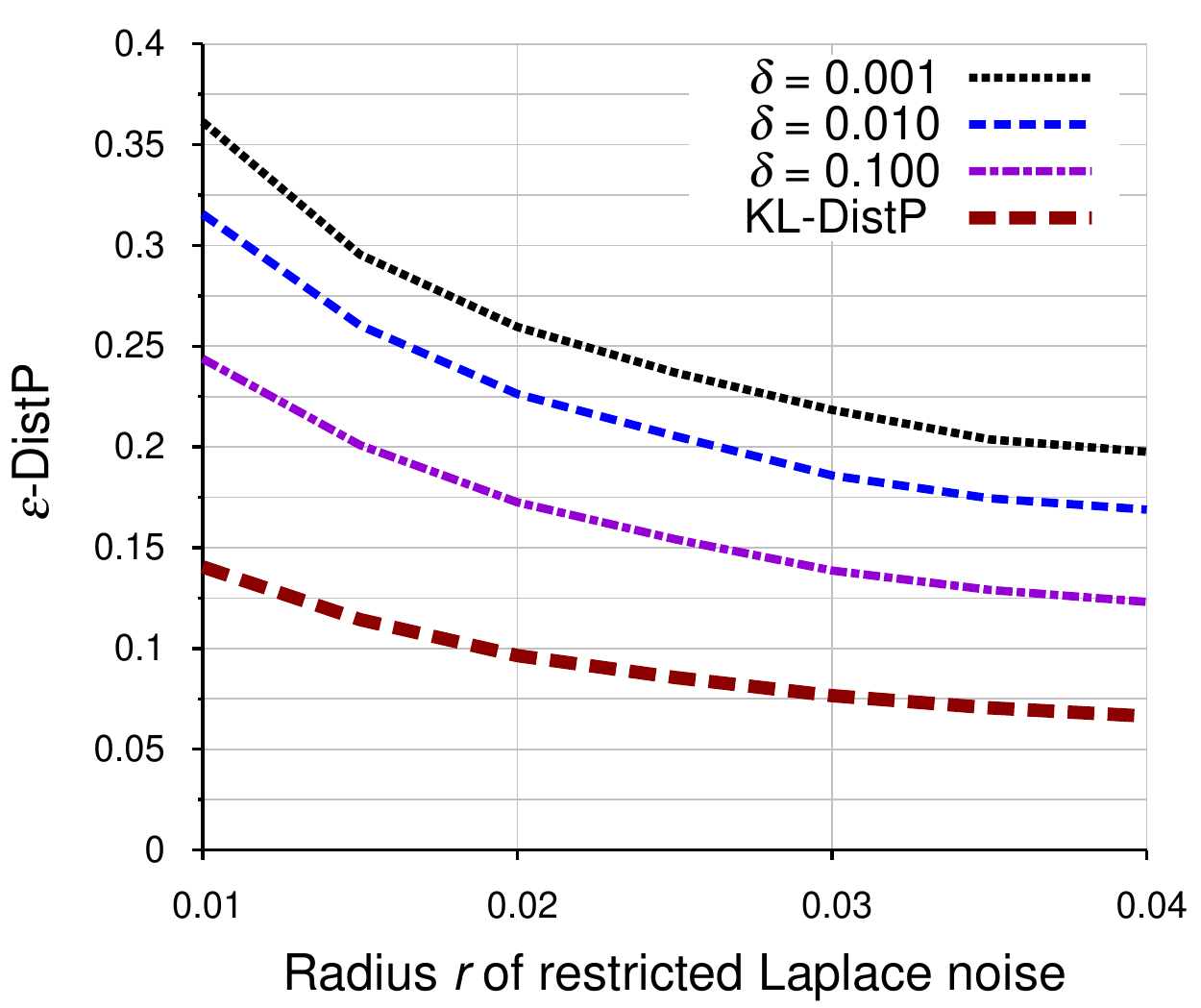}}}
\caption{Relationship between $\varepsilon$-\DistP{} and a radius $r$ of $(100, r)$-\RL{} mechanism (with $10$ dummies).\label{fig:MH+SI:radius:DistP}}
\end{subfigure}\hspace{0.4ex}\hfill
\begin{subfigure}[t]{0.24\textwidth}
  \mbox{\raisebox{-11pt}{\includegraphics[height=24mm, width=1.00\textwidth]{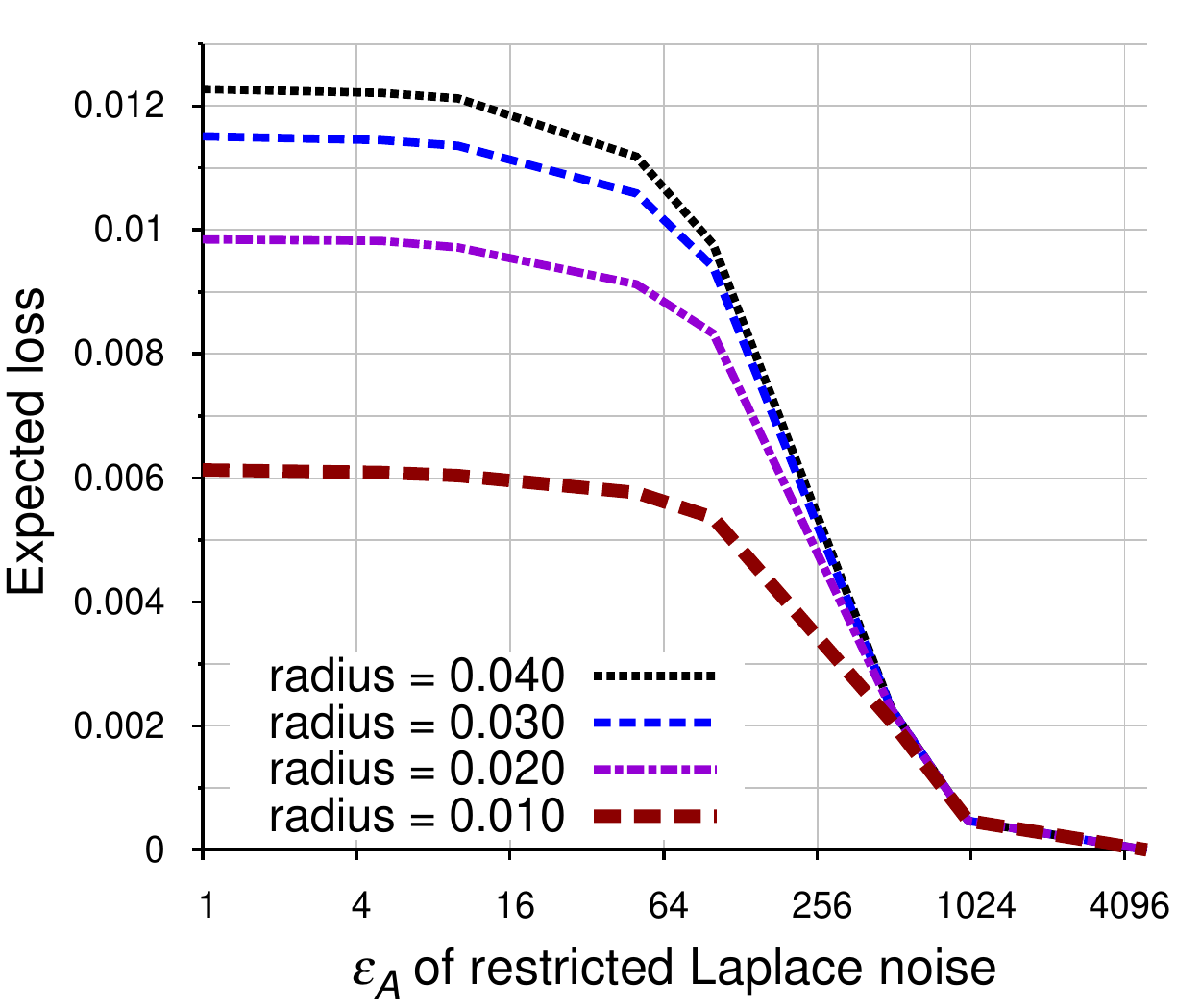}}}
\caption{Relationship between the expected loss and~$\varepsilon_\alg$~of $(\varepsilon_\alg, \allowbreak r)$-\RL{} mechanism (with $5$ dummies).
\label{fig:MH+S:tupling:loss}}
\end{subfigure}
\caption{Empirical \DistP{} and loss for attribute \social{}/\lesssocial{} in Manhattan.
\label{fig:MH+SI:tupling:privacy}}
\vspace{5mm}
\end{minipage}
\\
\begin{minipage}{1.0\hsize}
\centering
\begin{subfigure}[t]{0.24\textwidth}
  \mbox{\raisebox{-11pt}{\includegraphics[height=24mm, width=1.00\textwidth]{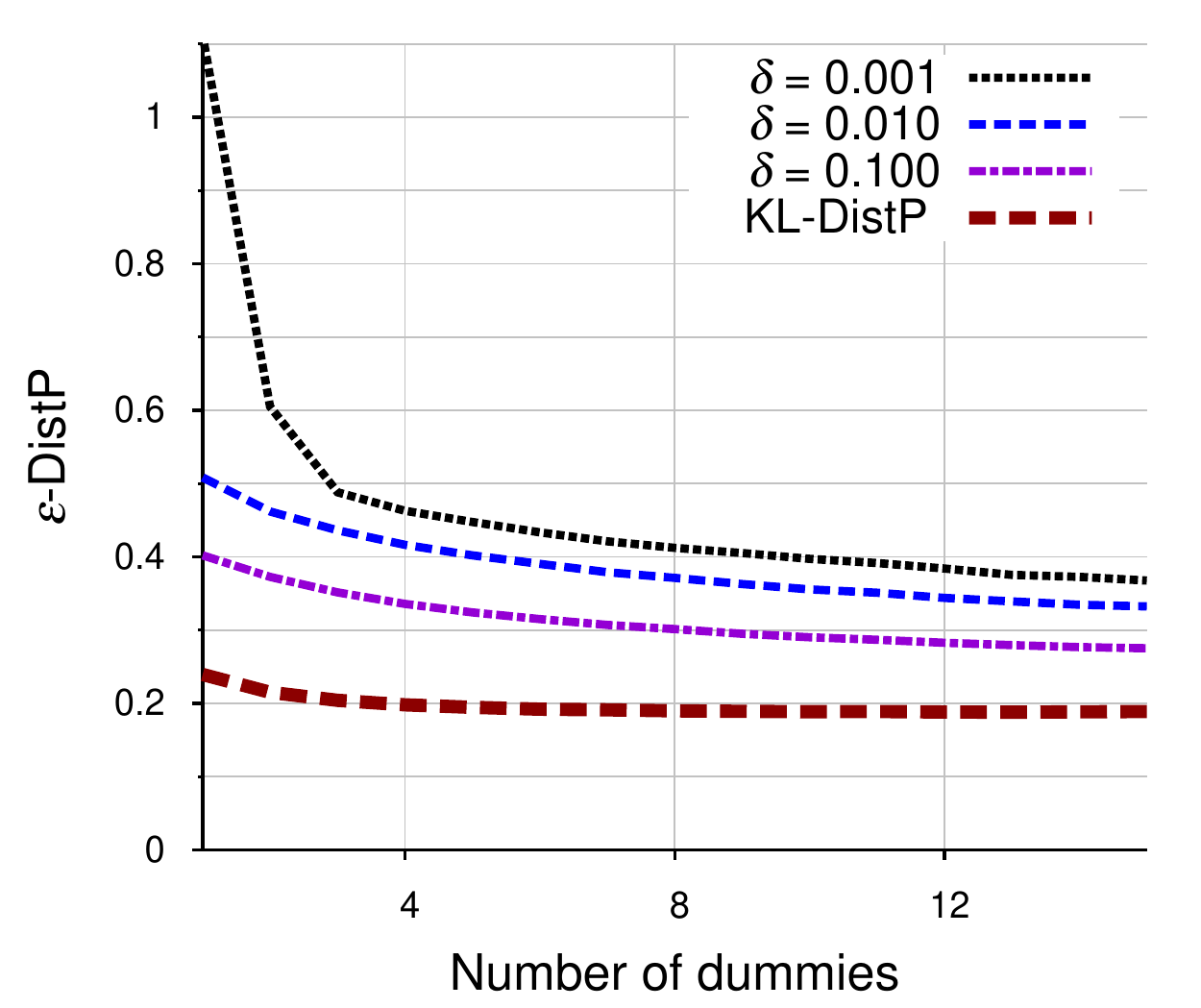}}}
\caption{Relationship between $\varepsilon$-\DistP{} and \#dummies (when using $(100, 0.020)$-\RL{} mechanism).\label{fig:MH+WP:dummies:DistP}}
\end{subfigure}\hspace{0.4ex}\hfill
\begin{subfigure}[t]{0.24\textwidth}
  \mbox{\raisebox{-11pt}{\includegraphics[height=24mm, width=1.00\textwidth]{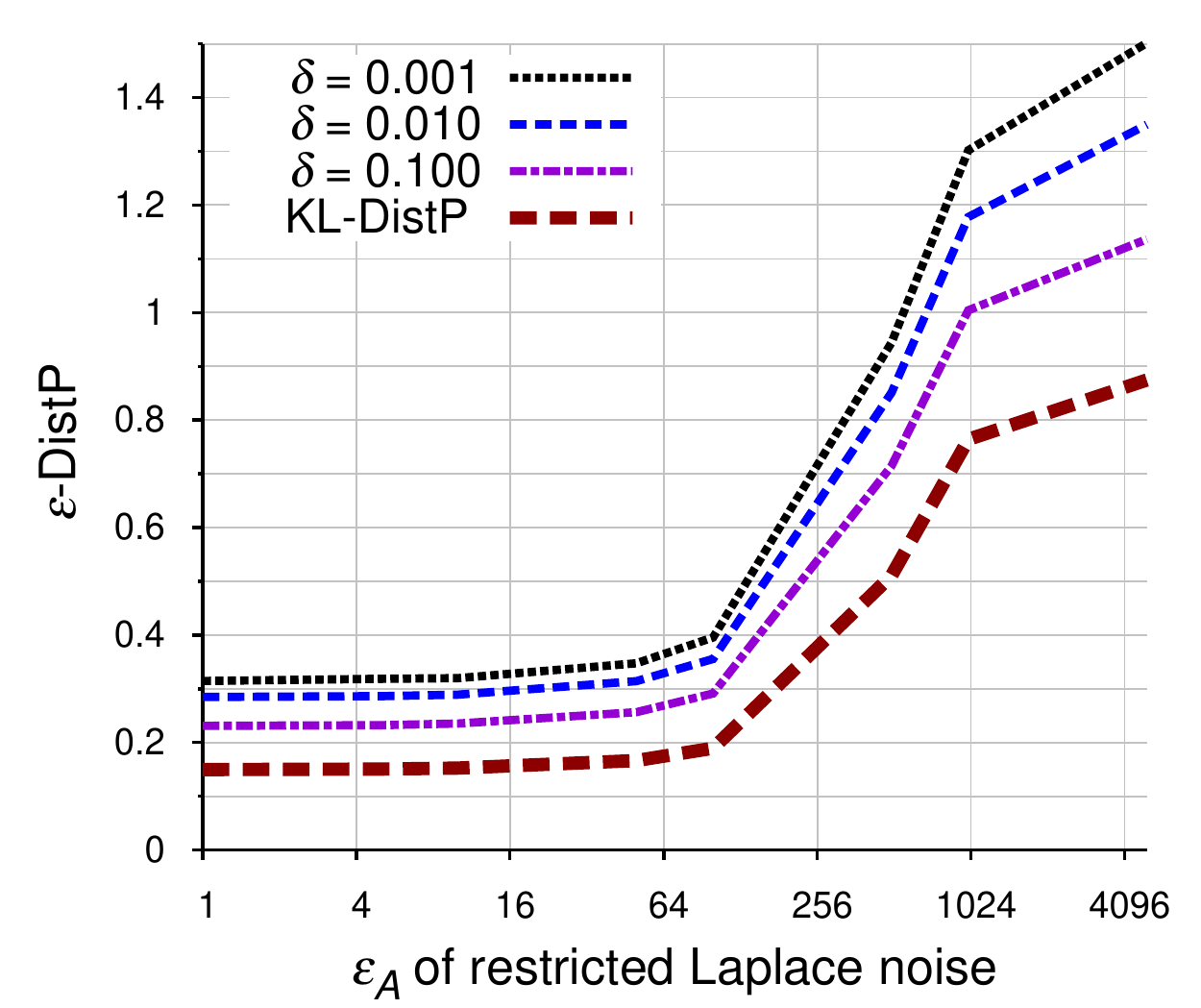}}}
\caption{Relationship between $\varepsilon$-\DistP{} and $\varepsilon_\alg$ of $(\varepsilon_\alg, 0.020)$-\RL{} mechanism (with $10$ dummies).\label{fig:MH+WP:DP-noises:DistP}}
\end{subfigure}\hspace{0.4ex}\hfill
\begin{subfigure}[t]{0.24\textwidth}
  \mbox{\raisebox{-11pt}{\includegraphics[height=24mm, width=1.00\textwidth]{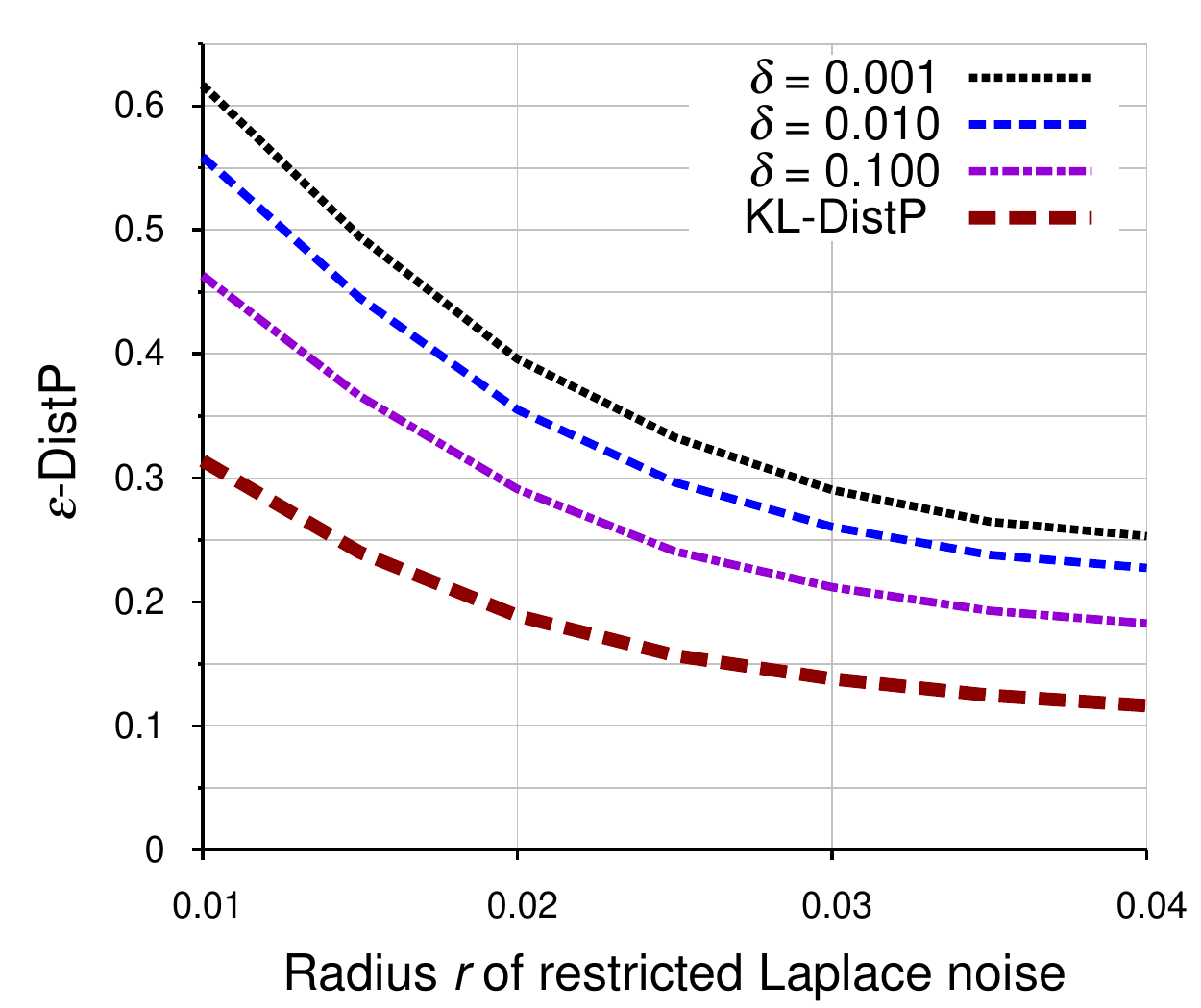}}}
\caption{Relationship between $\varepsilon$-\DistP{} and a radius $r$ of $(100, r)$-\RL{} mechanism (with $10$ dummies).\label{fig:MH+WP:radius:DistP}}
\end{subfigure}\hspace{0.4ex}\hfill
\begin{subfigure}[t]{0.24\textwidth}
  \mbox{\raisebox{-11pt}{\includegraphics[height=24mm, width=1.00\textwidth]{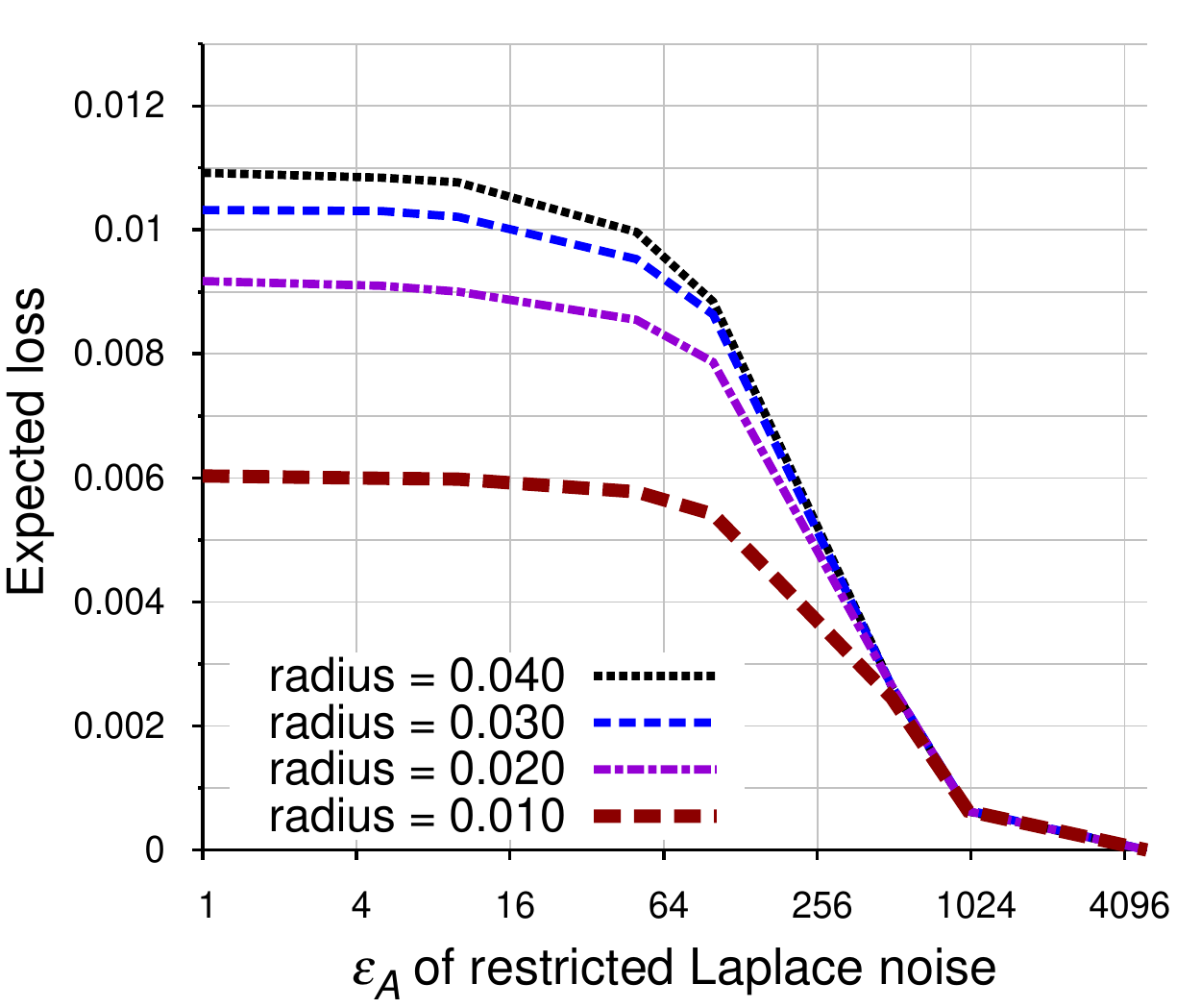}}}
\caption{Relationship between the expected loss and $\varepsilon_\alg$ of $(\varepsilon_\alg, \allowbreak r)$-\RL{} mechanism (with $5$ dummies).
\label{fig:MH+W:tupling:loss}}
\end{subfigure}
\caption{Empirical \DistP{} and loss for \workplace{}/\nonworkplace{} in Manhattan.
\label{fig:MH+WP:tupling:privacy}}
\vspace{5mm}
\end{minipage}
\\
\begin{minipage}{1.0\hsize}
\centering
\begin{subfigure}[t]{0.24\textwidth}
  \mbox{\raisebox{-11pt}{\includegraphics[height=24mm, width=1.00\textwidth]{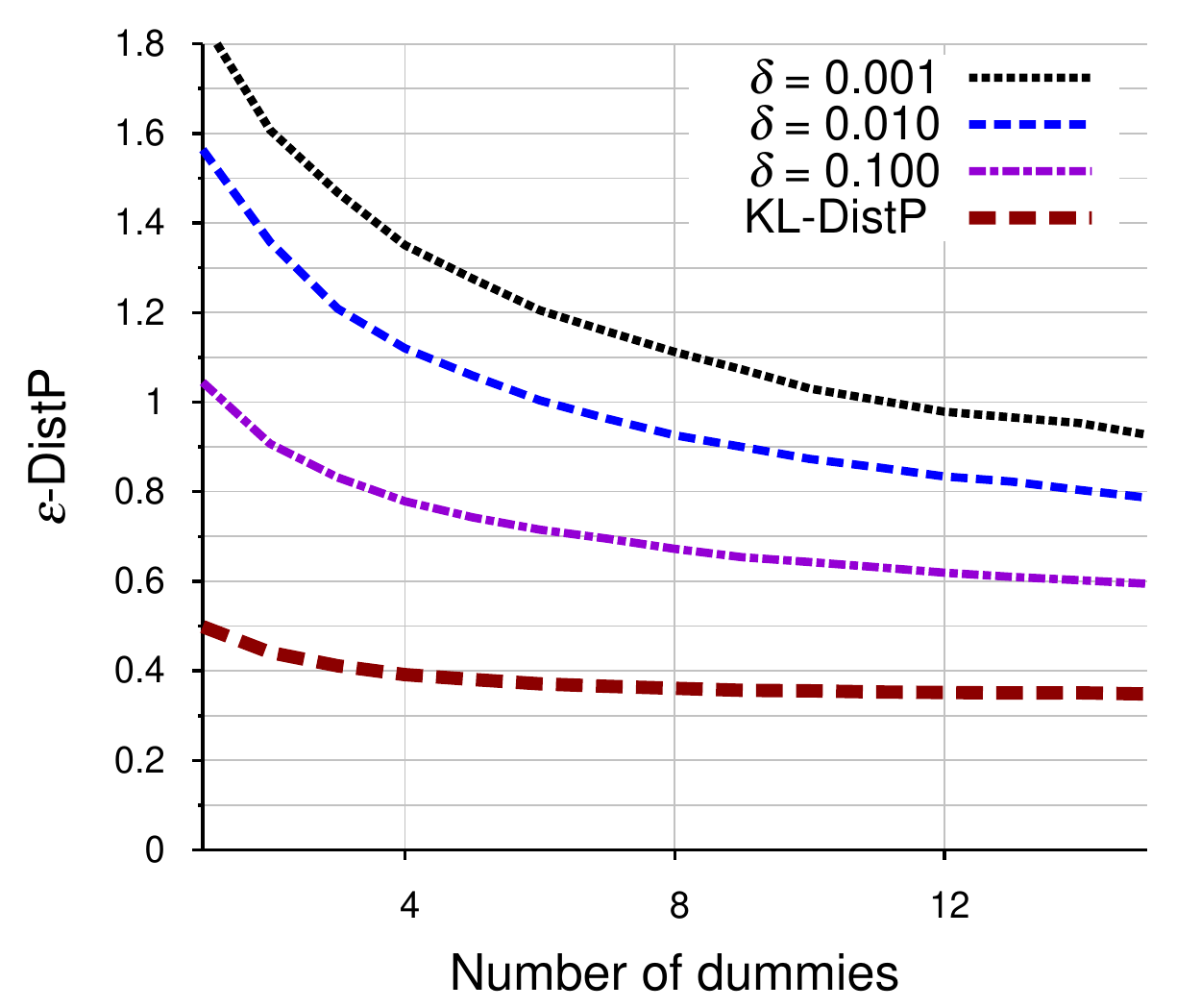}}}
\caption{Relationship between $\varepsilon$-\DistP{} and \#dummies (when using $(100, 0.020)$-\RL{} mechanism).\label{fig:MH+HO:dummies:DistP}}
\end{subfigure}\hspace{0.4ex}\hfill
\begin{subfigure}[t]{0.24\textwidth}
  \mbox{\raisebox{-11pt}{\includegraphics[height=24mm, width=1.00\textwidth]{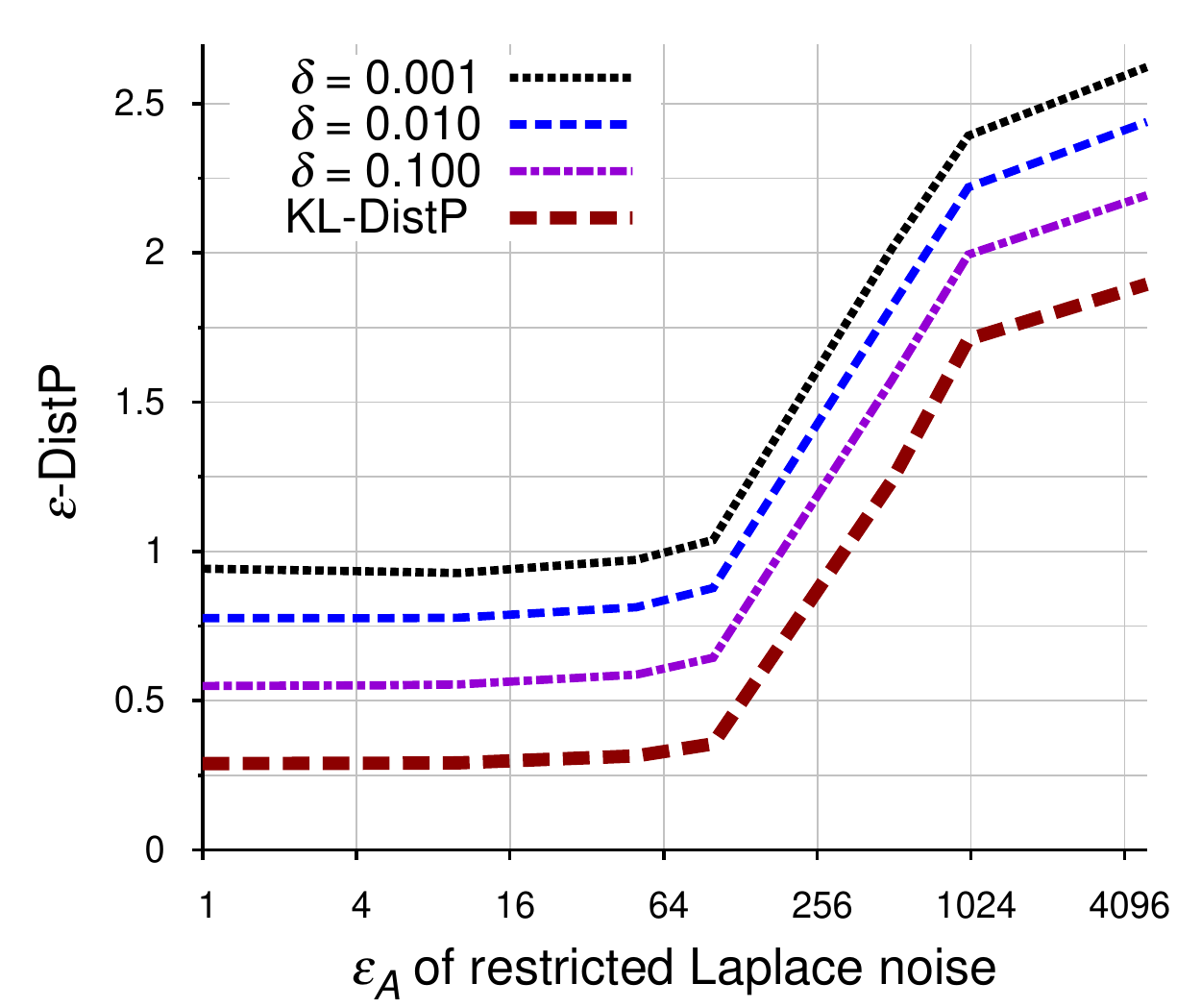}}}
\caption{Relationship between $\varepsilon$-\DistP{} and $\varepsilon_\alg$ of $(\varepsilon_\alg, 0.020)$-\RL{} mechanism (with $10$ dummies).\label{fig:MH+HO:DP-noises:DistP}}
\end{subfigure}\hspace{0.4ex}\hfill
\begin{subfigure}[t]{0.24\textwidth}
  \mbox{\raisebox{-11pt}{\includegraphics[height=24mm, width=1.00\textwidth]{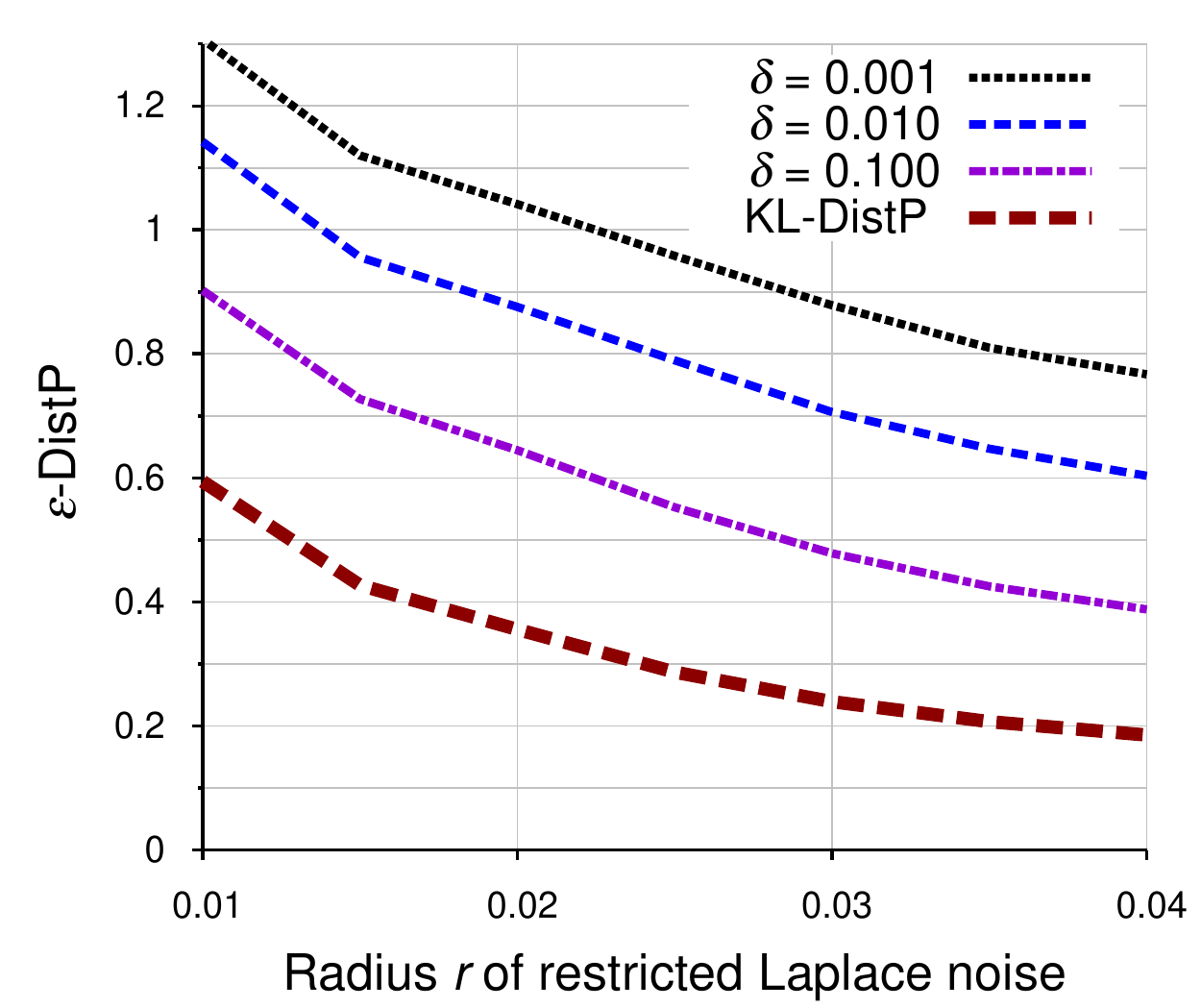}}}
\caption{Relationship between $\varepsilon$-\DistP{} and a radius $r$ of $(100, r)$-\RL{} mechanism (with $10$ dummies).\label{fig:MH+HO:radius:DistP}}
\end{subfigure}\hspace{0.4ex}\hfill
\begin{subfigure}[t]{0.24\textwidth}
  \mbox{\raisebox{-11pt}{\includegraphics[height=24mm, width=1.00\textwidth]{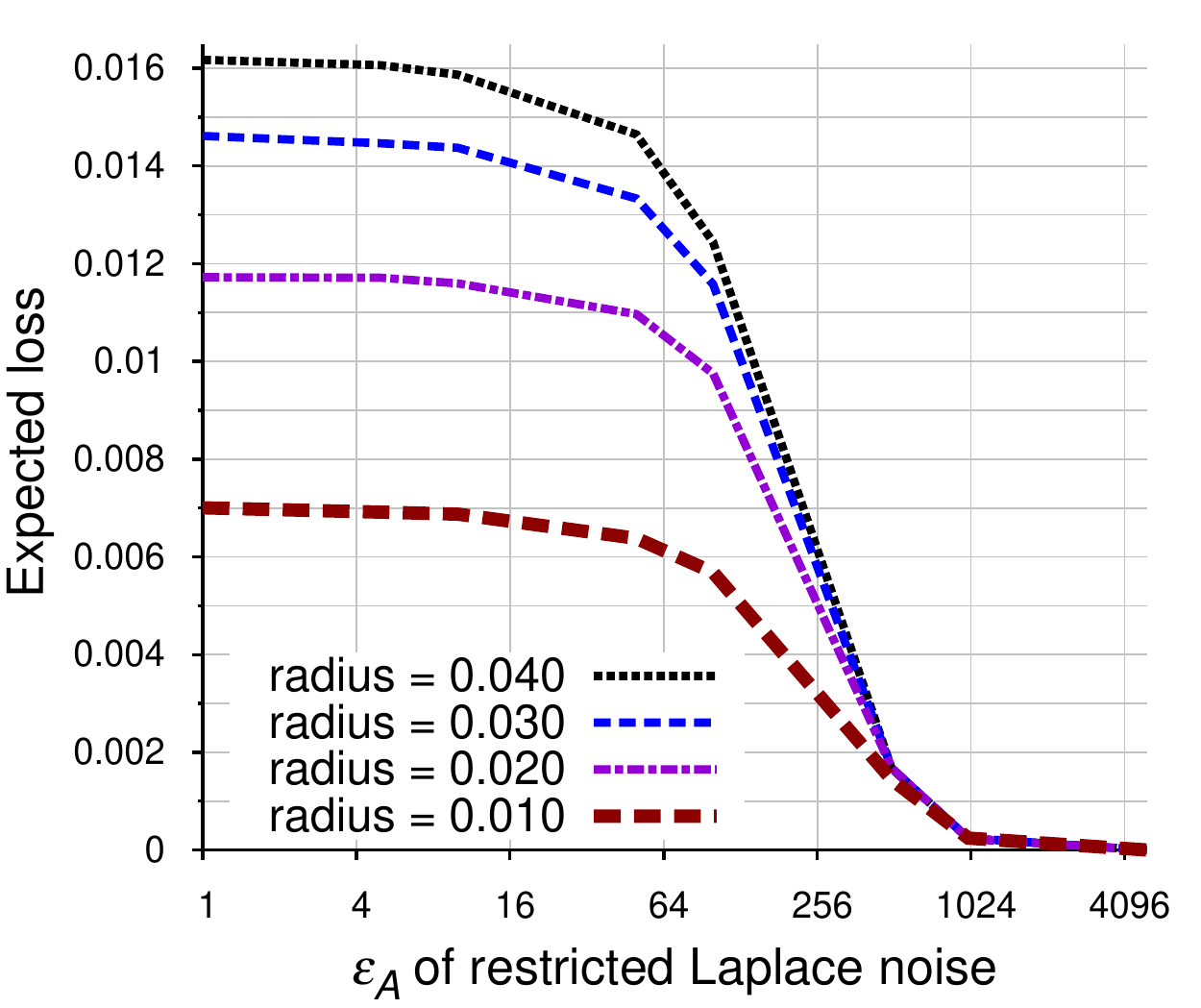}}}
\caption{Relationship between the expected loss and $\varepsilon_\alg$ of $(\varepsilon_\alg, \allowbreak r)$-\RL{} mechanism (with $5$ dummies).
\label{fig:MH+H:tupling:loss}}
\end{subfigure}
\caption{Empirical \DistP{} and loss for the attribute \home{}/\outside{} in Manhattan.
\label{fig:MH+HO:tupling:privacy}}
\vspace{5mm}
\end{minipage}
\\
\begin{minipage}{1.0\hsize}
\centering
\begin{subfigure}[t]{0.24\textwidth}
  \mbox{\raisebox{-11pt}{\includegraphics[height=24mm, width=1.00\textwidth]{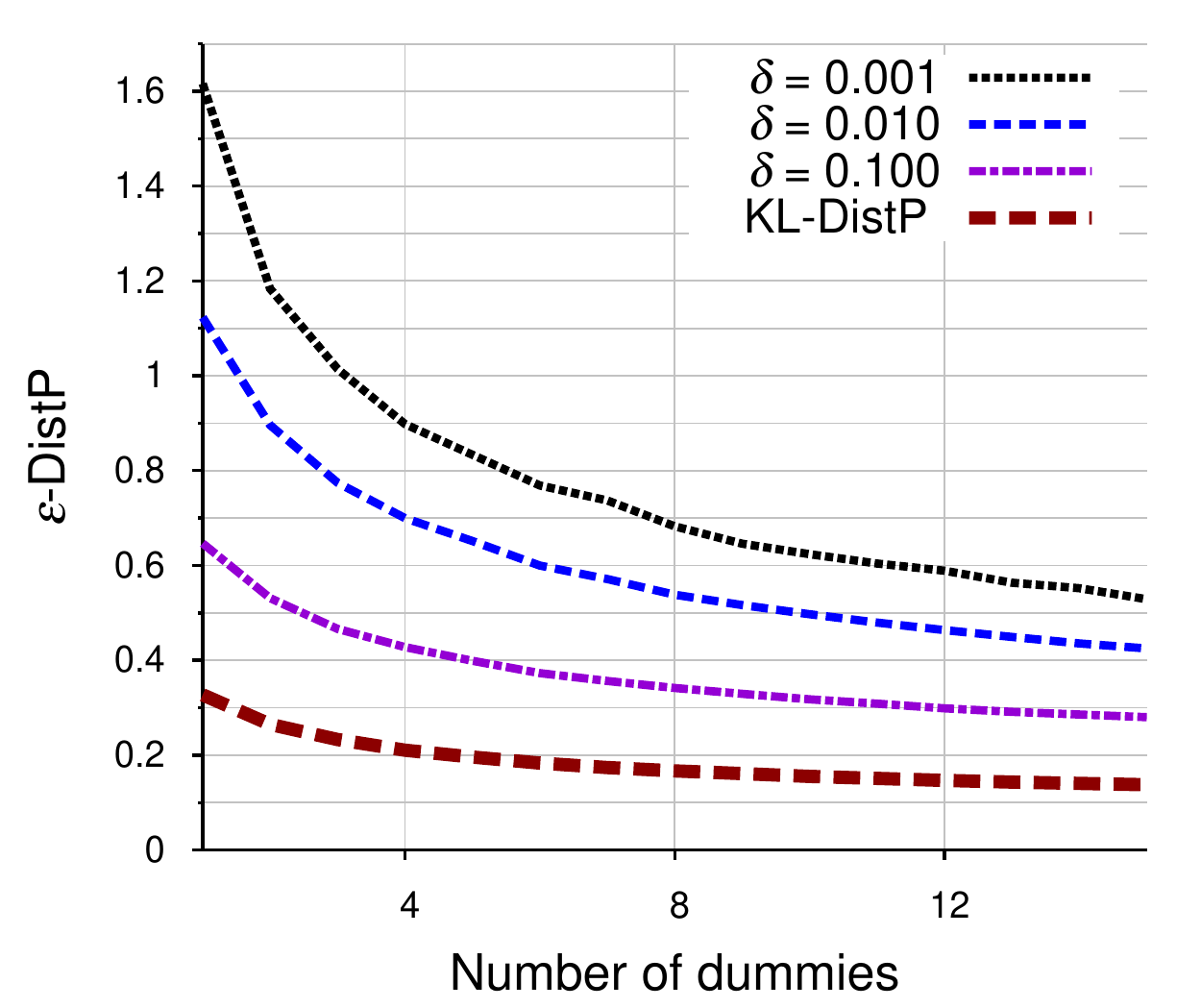}}}
\caption{Relationship between $\varepsilon$-\DistP{} and \#dummies (when using $(100, 0.020)$-\RL{} mechanism).\label{fig:NS:dummies:DistP}}
\end{subfigure}\hspace{0.4ex}\hfill
\begin{subfigure}[t]{0.24\textwidth}
  \mbox{\raisebox{-11pt}{\includegraphics[height=24mm, width=1.00\textwidth]{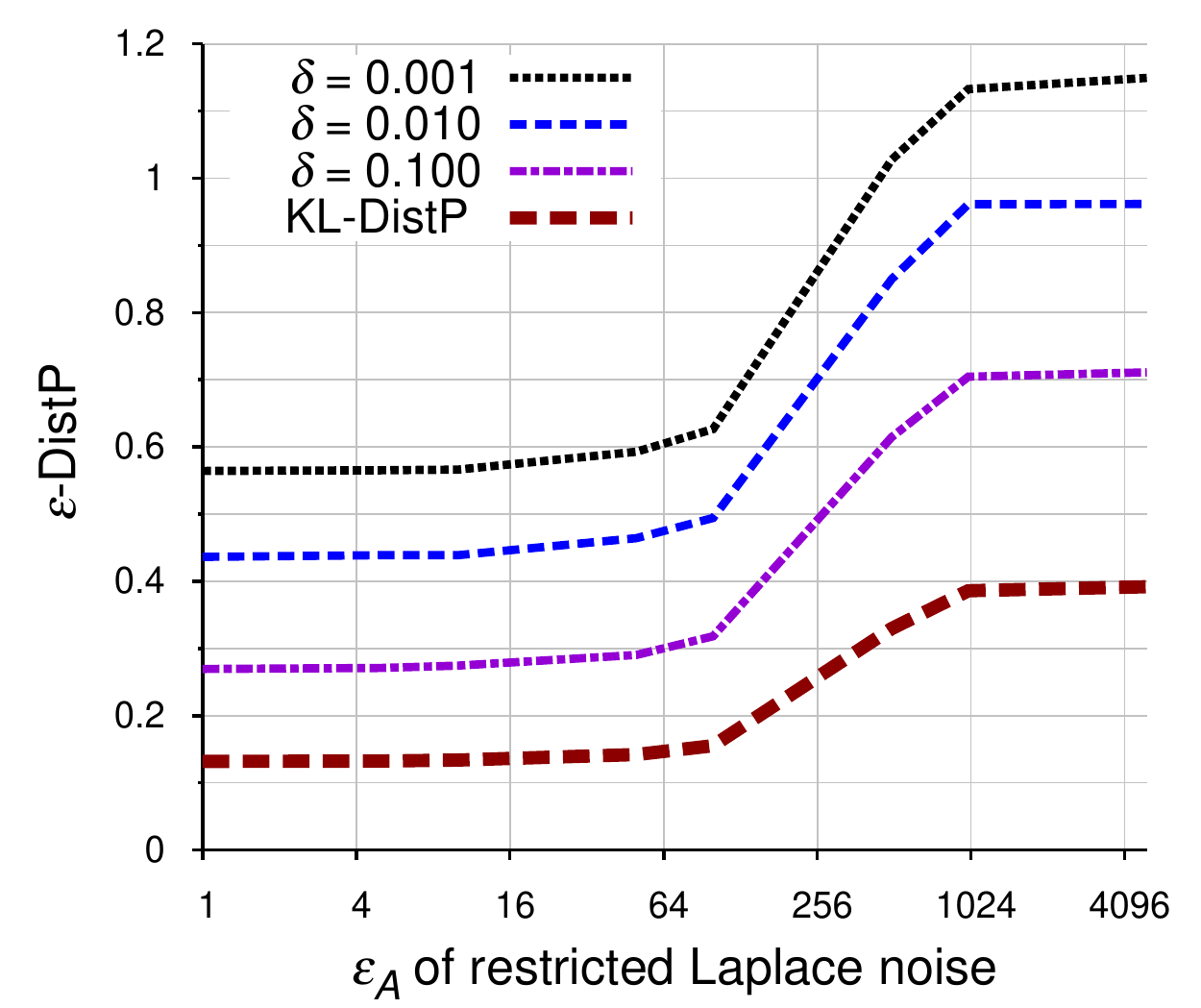}}}
\caption{Relationship between $\varepsilon$-\DistP{} and $\varepsilon_\alg$ of $(\varepsilon_\alg, 0.020)$-\RL{} mechanism (with $10$ dummies).\label{fig:NS:DP-noises:DistP}}
\end{subfigure}\hspace{0.4ex}\hfill
\begin{subfigure}[t]{0.24\textwidth}
  \mbox{\raisebox{-11pt}{\includegraphics[height=24mm, width=1.00\textwidth]{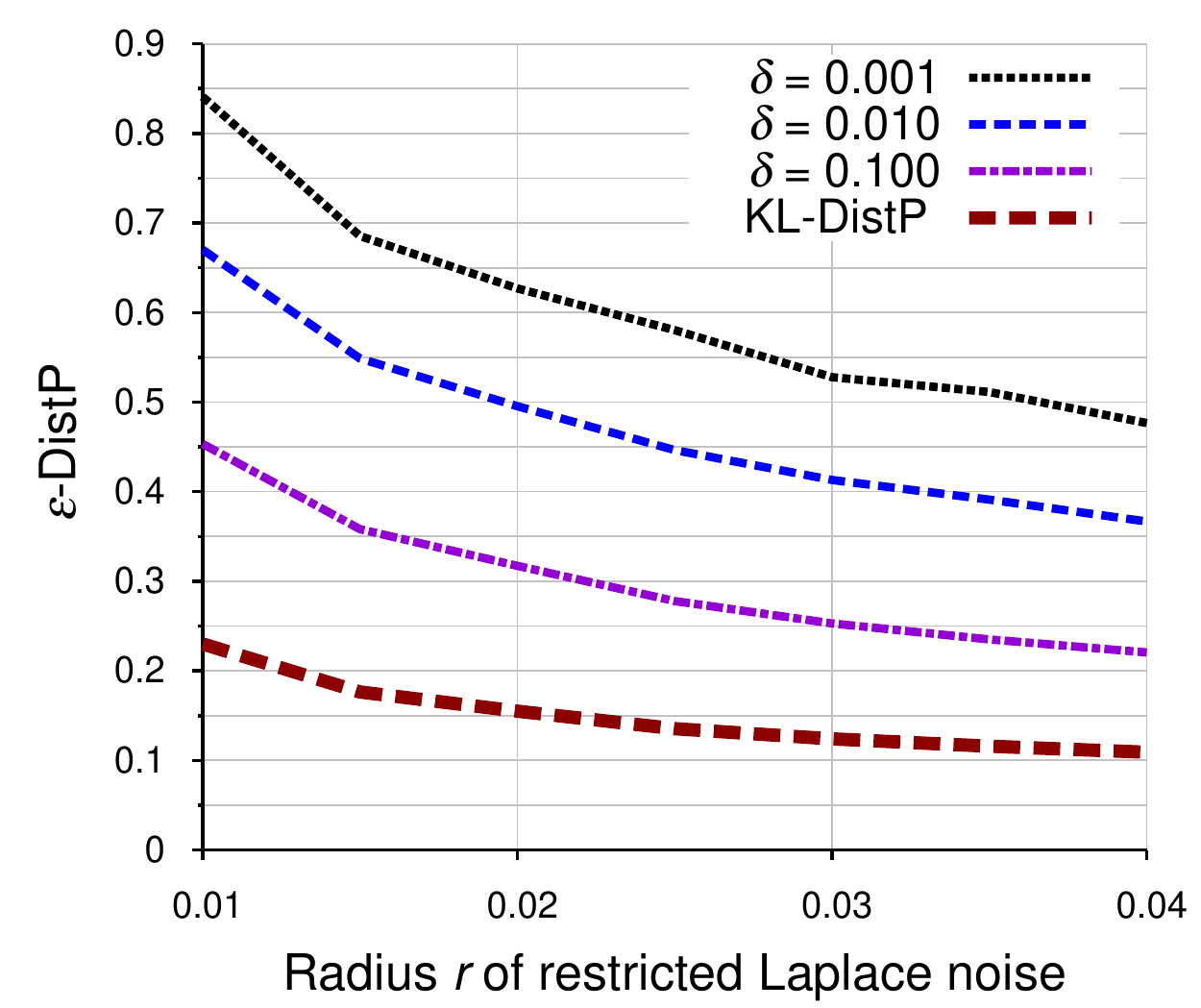}}}
\caption{Relationship between $\varepsilon$-\DistP{} and a radius $r$ of $(100, r)$-\RL{} mechanism (with $10$ dummies).\label{fig:NS:radius:DistP}}
\end{subfigure}\hspace{0.4ex}\hfill
\begin{subfigure}[t]{0.24\textwidth}
  \mbox{\raisebox{-11pt}{\includegraphics[height=24mm, width=1.00\textwidth]{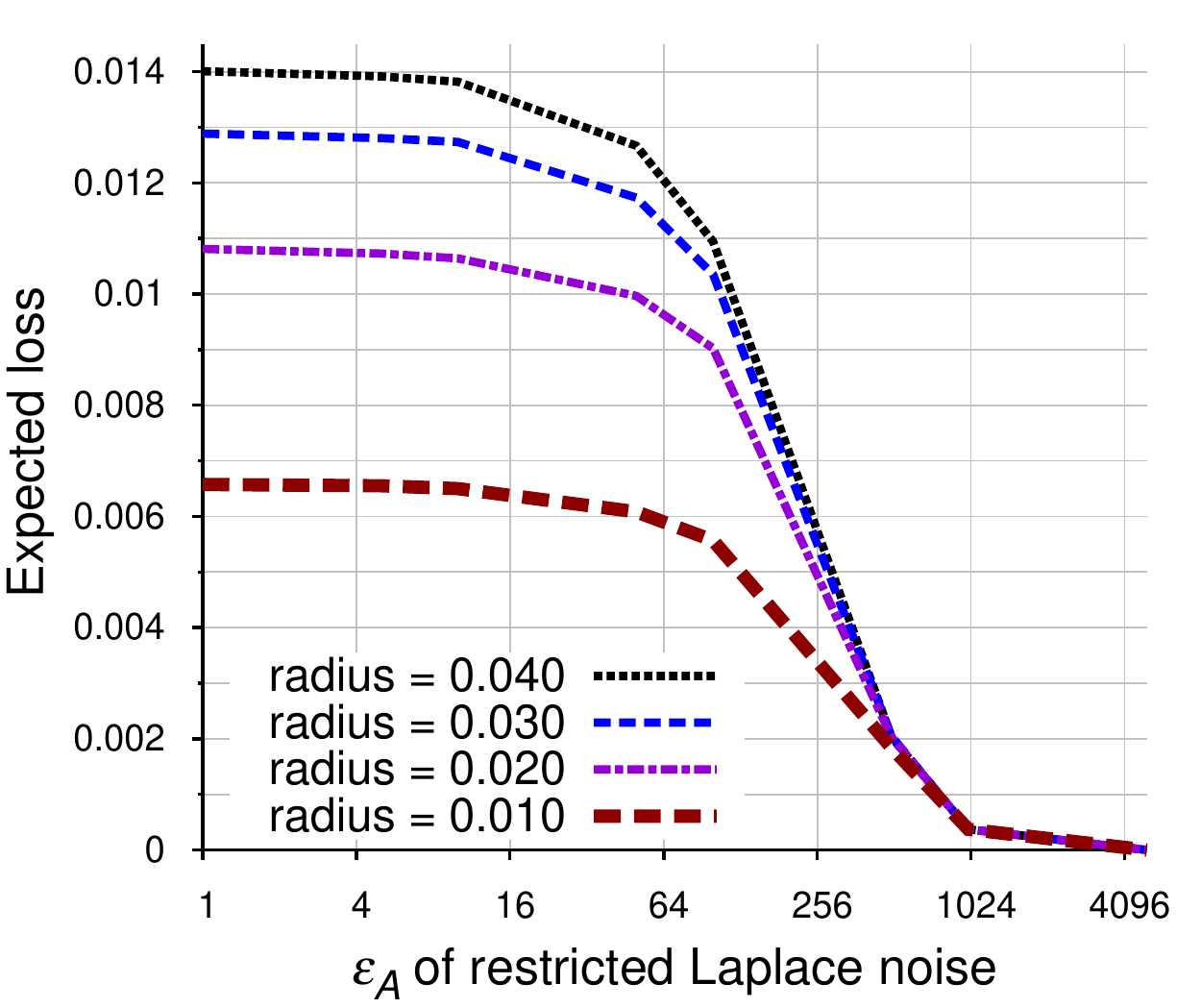}}}
\caption{Relationship between the expected loss and $\varepsilon_\alg$ of $(\varepsilon_\alg, \allowbreak r)$-\RL{} mechanism (with $5$ dummies).
\label{fig:NS:tupling:loss}}
\end{subfigure}
\caption{Empirical \DistP{} and loss for the attribute \north{}/\south{} in Manhattan.
\label{fig:NS:tupling:privacy}}
\end{minipage}
\end{tabular}
\end{figure}

\begin{figure}[H]
\begin{tabular}{c}
\begin{minipage}{1.0\hsize}
\centering
\begin{subfigure}[t]{0.24\textwidth}
  \mbox{\raisebox{-12pt}{\includegraphics[height=26mm, width=1.00\textwidth]{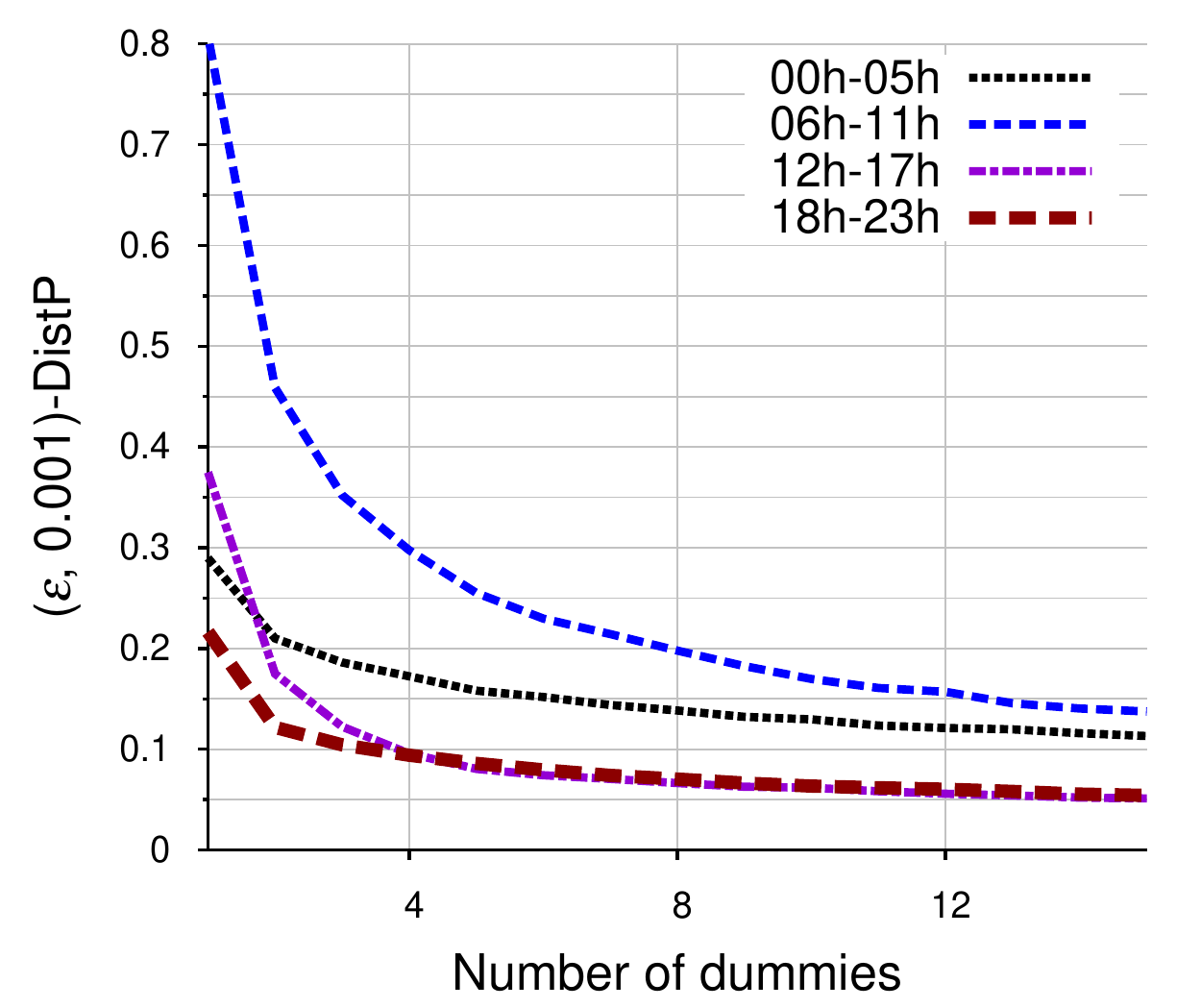}}}
\caption{Relationship between $\varepsilon$-\DistP{} and \#dummies (when using $(100, 0.020)$-\RL{} mechanism).
\label{fig:times+MH+MF:dummies:DistP}}
\end{subfigure}\hspace{0.4ex}\hfill
\begin{subfigure}[t]{0.24\textwidth}
  \mbox{\raisebox{-12pt}{\includegraphics[height=26mm, width=1.00\textwidth]{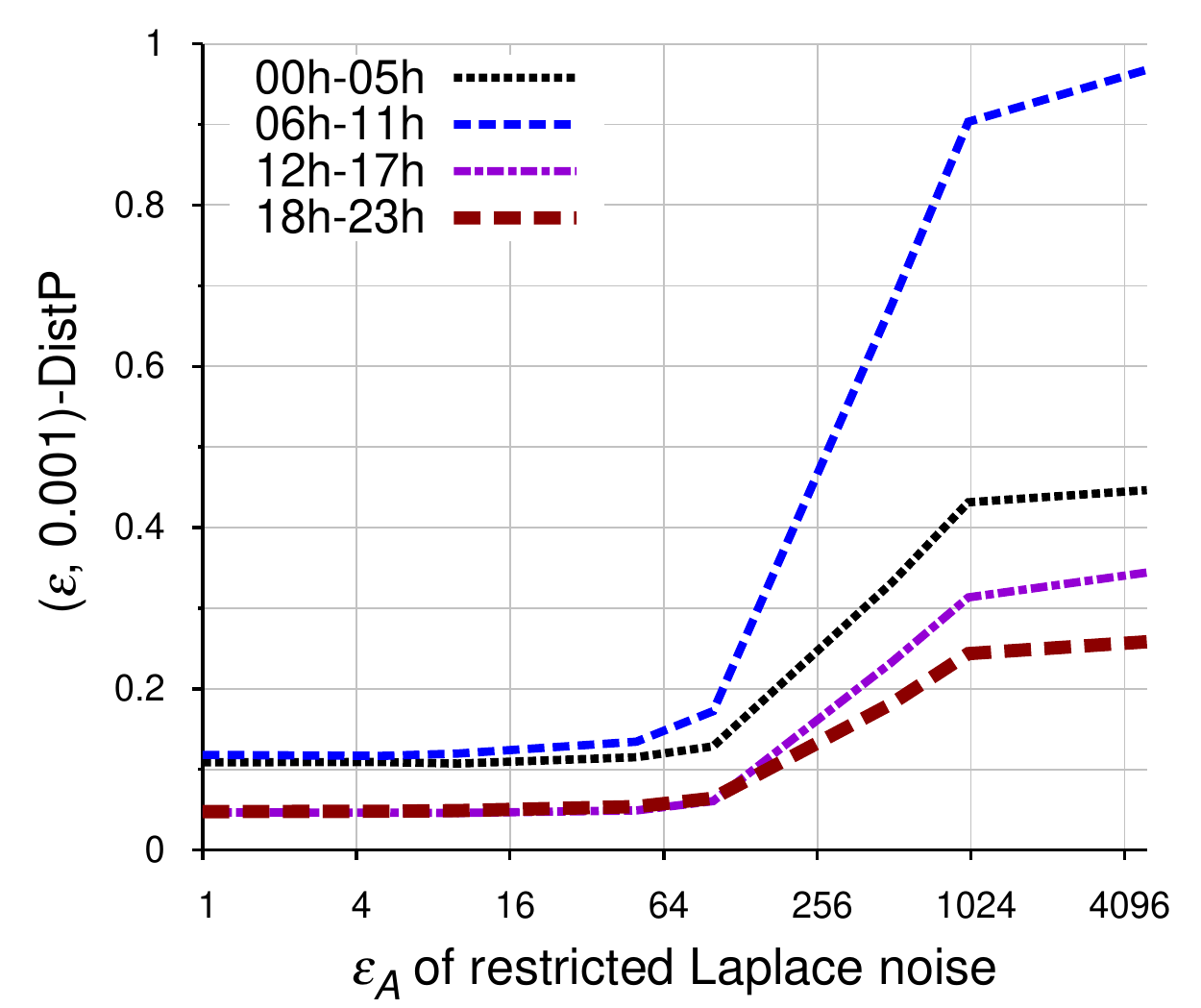}}}
\caption{Relationship between $\varepsilon$-\DistP{} and $\varepsilon_\alg$ of $(\varepsilon_\alg, 0.020)$-\RL{} mechanism (with $10$ dummies).\label{fig:times+MH+MF:DP-noises:DistP}}
\end{subfigure}\hspace{0.4ex}\hfill
\begin{subfigure}[t]{0.24\textwidth}
  \mbox{\raisebox{-12pt}{\includegraphics[height=26mm, width=1.00\textwidth]{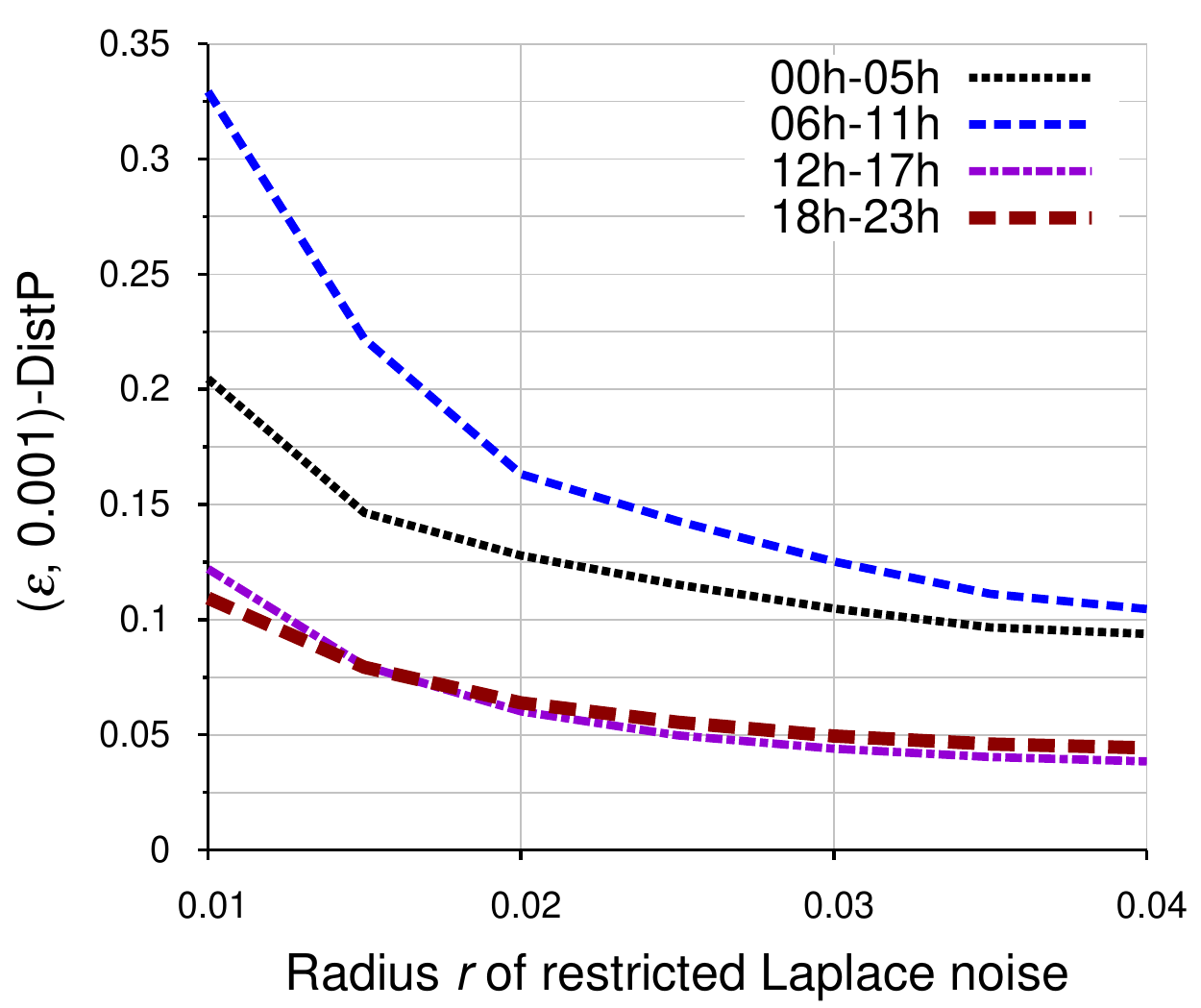}}}
\caption{Relationship between $\varepsilon$-\DistP{} and a radius $r$ of $(100, r)$-\RL{} mechanism (with $10$ dummies).
\label{fig:times+MH+MF:radius:DistP}}
\end{subfigure}\hspace{0.4ex}\hfill
\begin{subfigure}[t]{0.24\textwidth}
  \mbox{\raisebox{-12pt}{\includegraphics[height=26mm, width=1.00\textwidth]{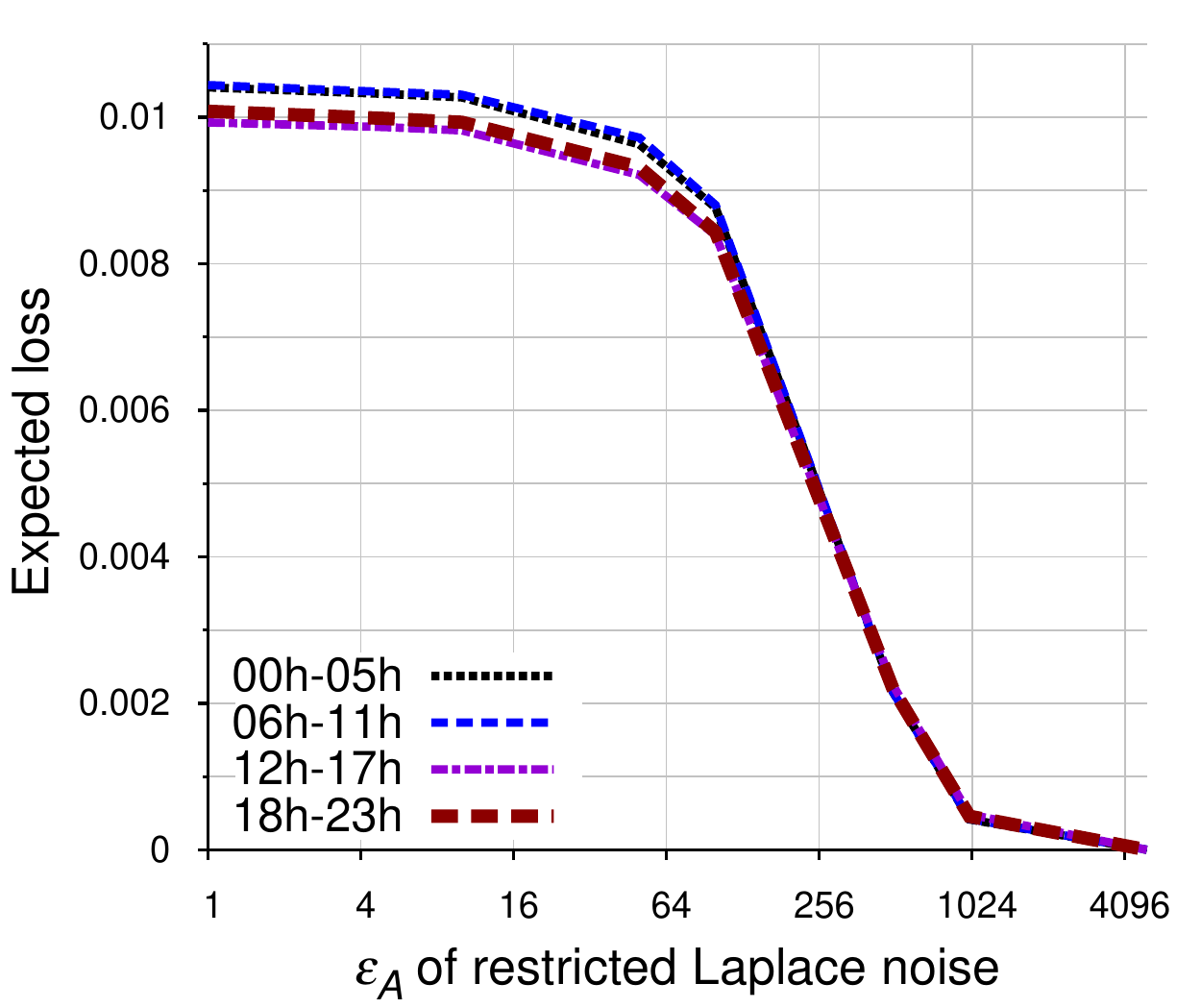}}}
\caption{Relationship between the expected loss and $\varepsilon_\alg$ of $(\varepsilon_\alg, \allowbreak r)$-\RL{} mechanism (with $5$ dummies).
\label{fig:times+MH+M:tupling:loss}}
\end{subfigure}
\vspace{-4mm}
\caption{Empirical \DistP{} and loss for \male{}/\female{} in different hours.
\label{fig:times+MH+MF:tupling:privacy}}
\vspace{1mm}
\end{minipage}
\\
\begin{minipage}{1.0\hsize}
\centering
\begin{subfigure}[t]{0.24\textwidth}
  \mbox{\raisebox{-12pt}{\includegraphics[height=26mm, width=1.00\textwidth]{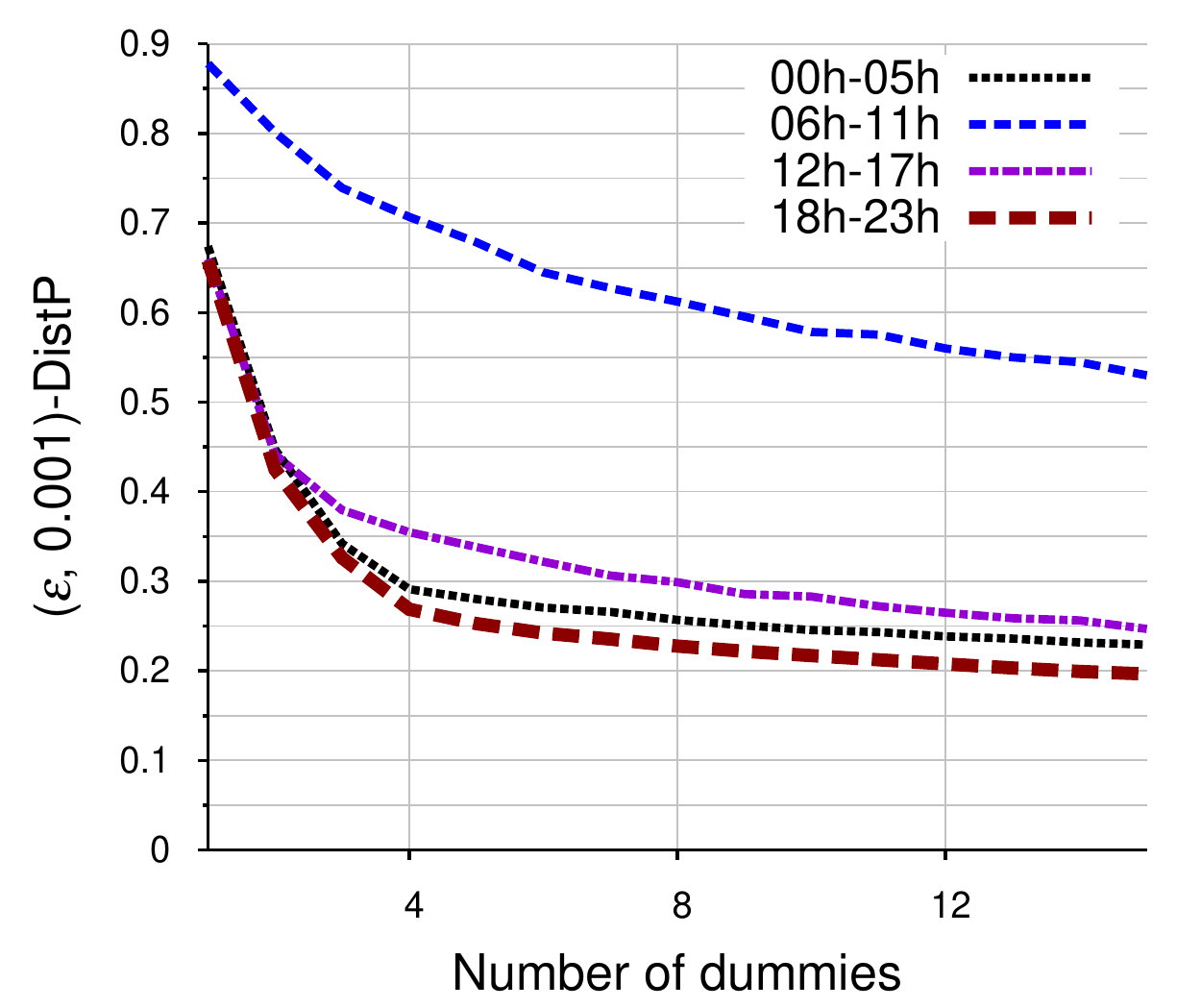}}}
\caption{Relationship between $\varepsilon$-\DistP{} and \#dummies (when using $(100, 0.020)$-\RL{} mechanism).\label{fig:times+MH+SI:dummies:DistP}}
\end{subfigure}\hspace{0.4ex}\hfill
\begin{subfigure}[t]{0.24\textwidth}
  \mbox{\raisebox{-12pt}{\includegraphics[height=26mm, width=1.00\textwidth]{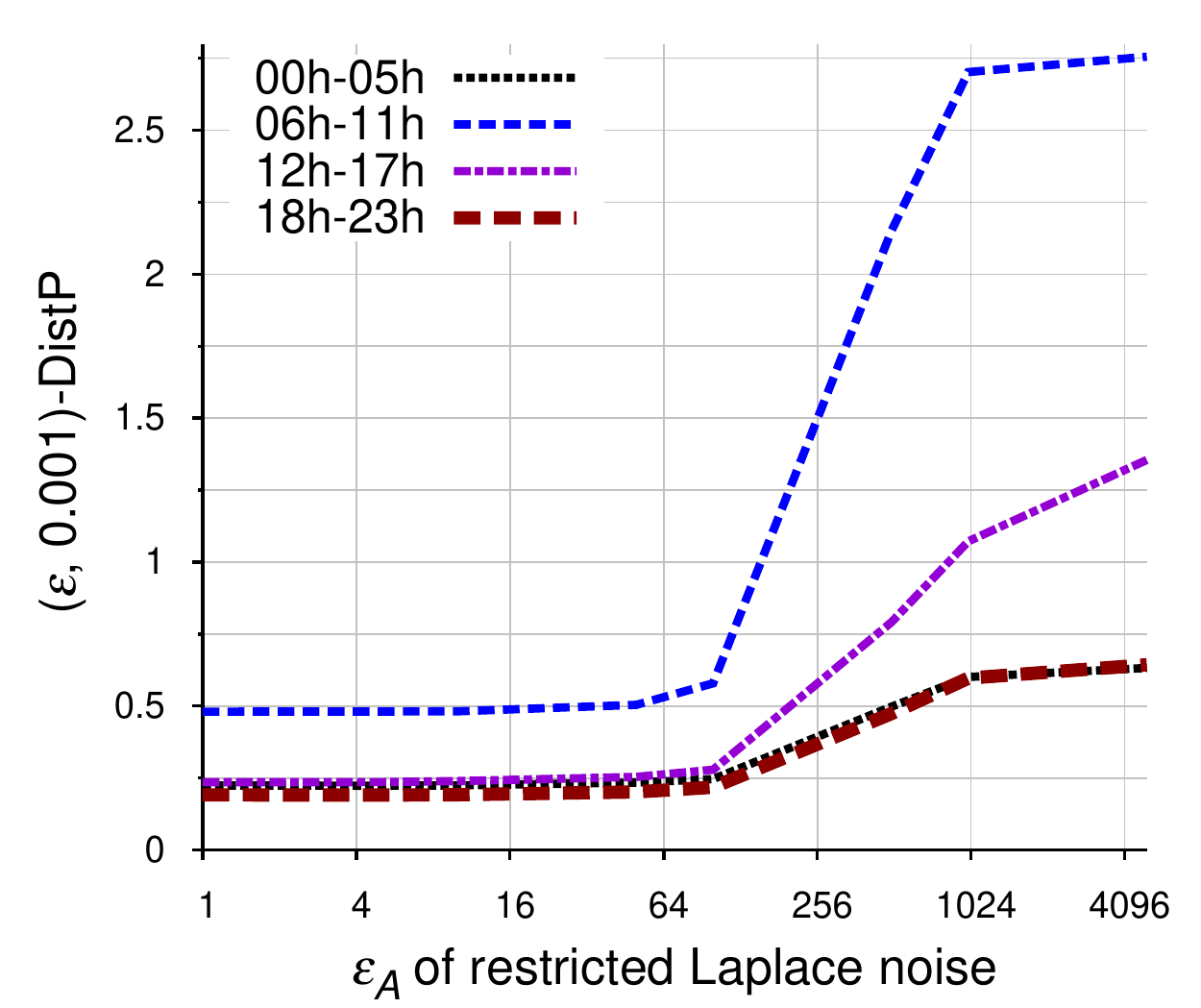}}}
\caption{Relationship between $\varepsilon$-\DistP{} and $\varepsilon_\alg$ of $(\varepsilon_\alg, 0.020)$-\RL{} mechanism (with $10$ dummies).\label{fig:times+MH+SI:DP-noises:DistP}}
\end{subfigure}\hspace{0.4ex}\hfill
\begin{subfigure}[t]{0.24\textwidth}
  \mbox{\raisebox{-12pt}{\includegraphics[height=26mm, width=1.00\textwidth]{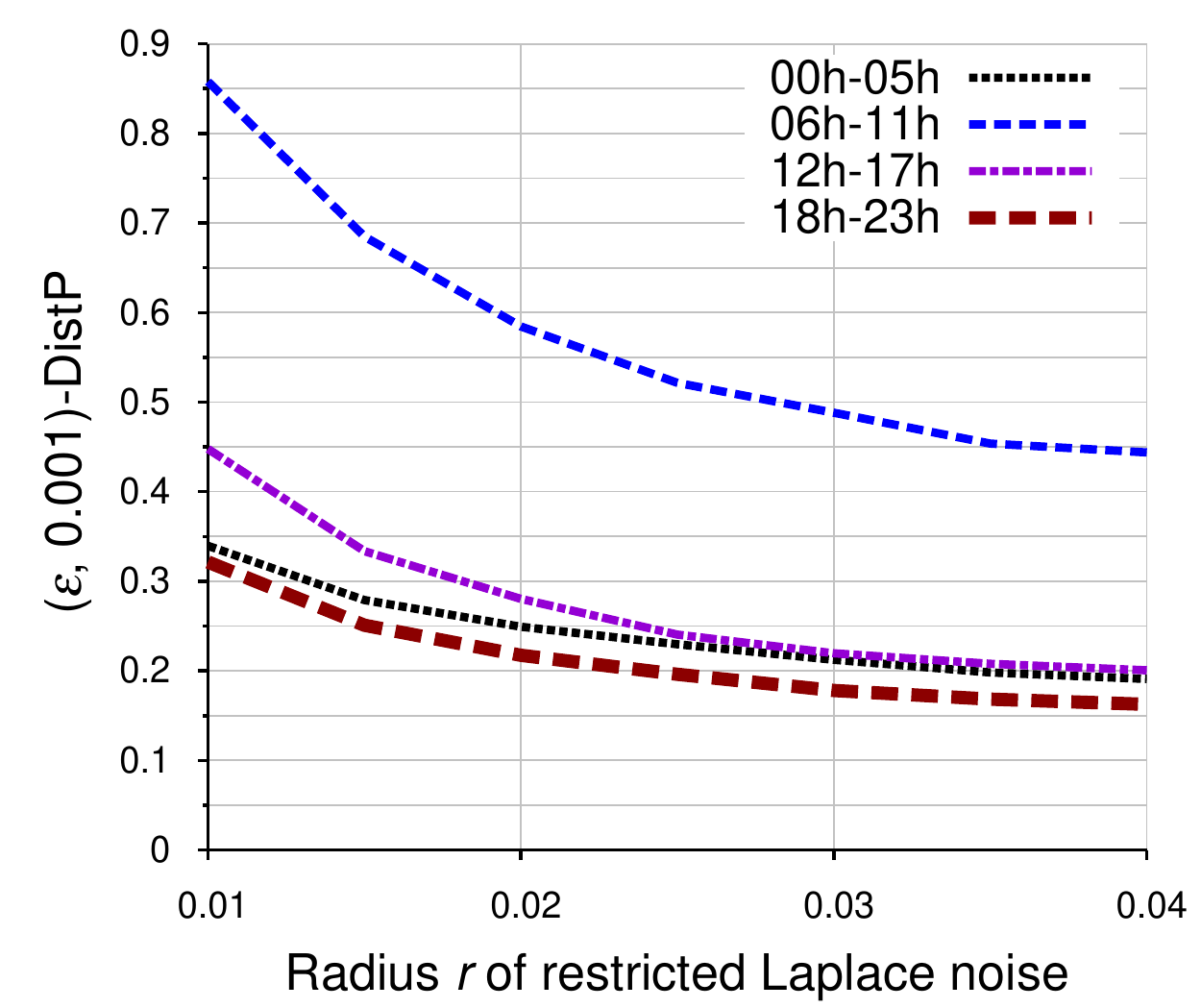}}}
\caption{Relationship between $\varepsilon$-\DistP{} and a radius $r$ of $(100, r)$-\RL{} mechanism (with $10$ dummies).\label{fig:times+MH+SI:radius:DistP}}
\end{subfigure}\hspace{0.4ex}\hfill
\begin{subfigure}[t]{0.24\textwidth}
  \mbox{\raisebox{-12pt}{\includegraphics[height=26mm, width=1.00\textwidth]{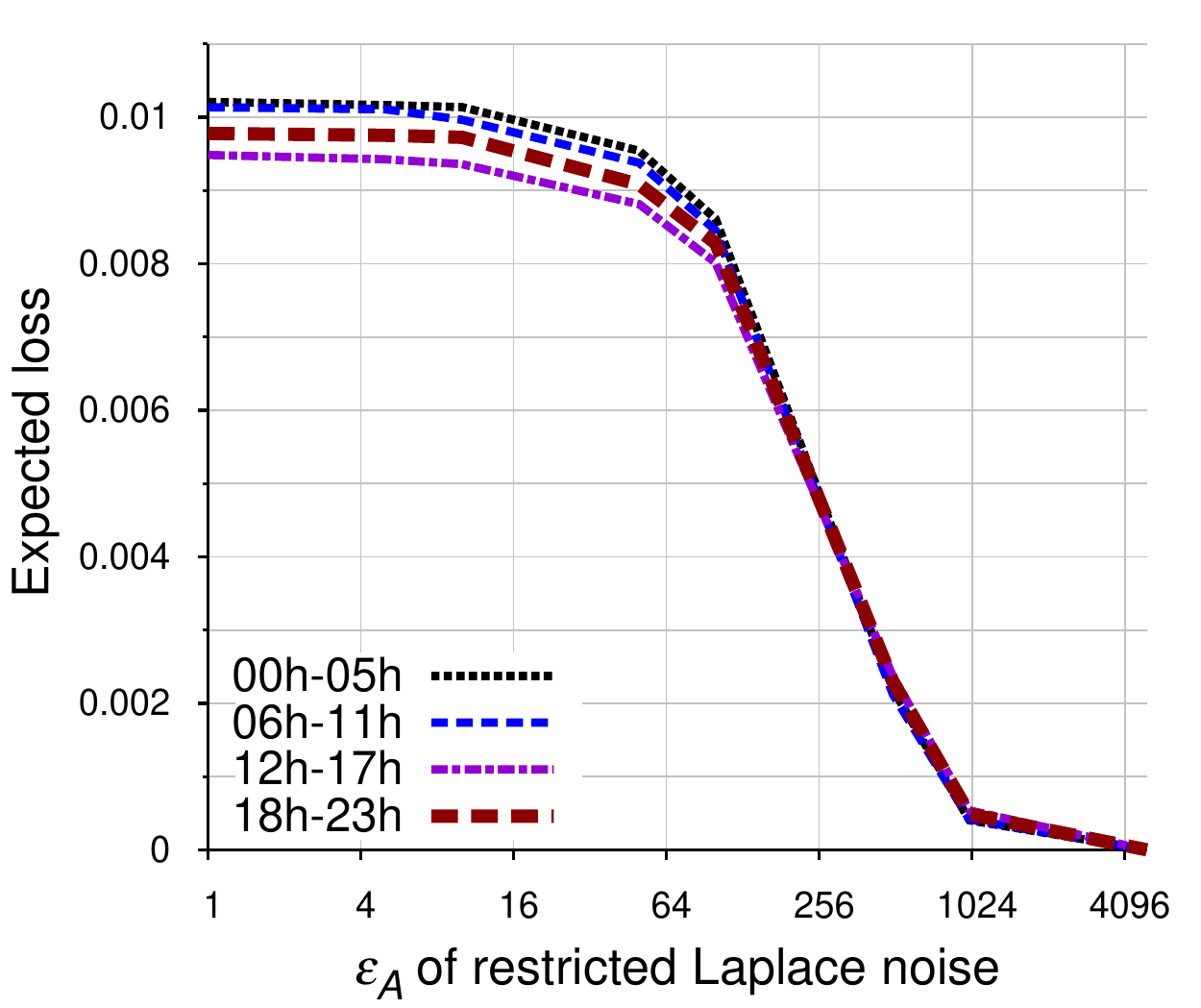}}}
\caption{Relationship between the expected loss and $\varepsilon_\alg$ of $(\varepsilon_\alg, \allowbreak r)$-\RL{} mechanism (with $5$ dummies).
\label{fig:times+MH+S:tupling:loss}}
\end{subfigure}
\caption{Empirical \DistP{} and loss for \social{}/\lesssocial{} in different hours.
\label{fig:times+MH+SI:tupling:privacy}}
\vspace{1mm}
\end{minipage}
\\
\begin{minipage}{1.0\hsize}
\centering
\begin{subfigure}[t]{0.24\textwidth}
  \mbox{\raisebox{-12pt}{\includegraphics[height=26mm, width=1.00\textwidth]{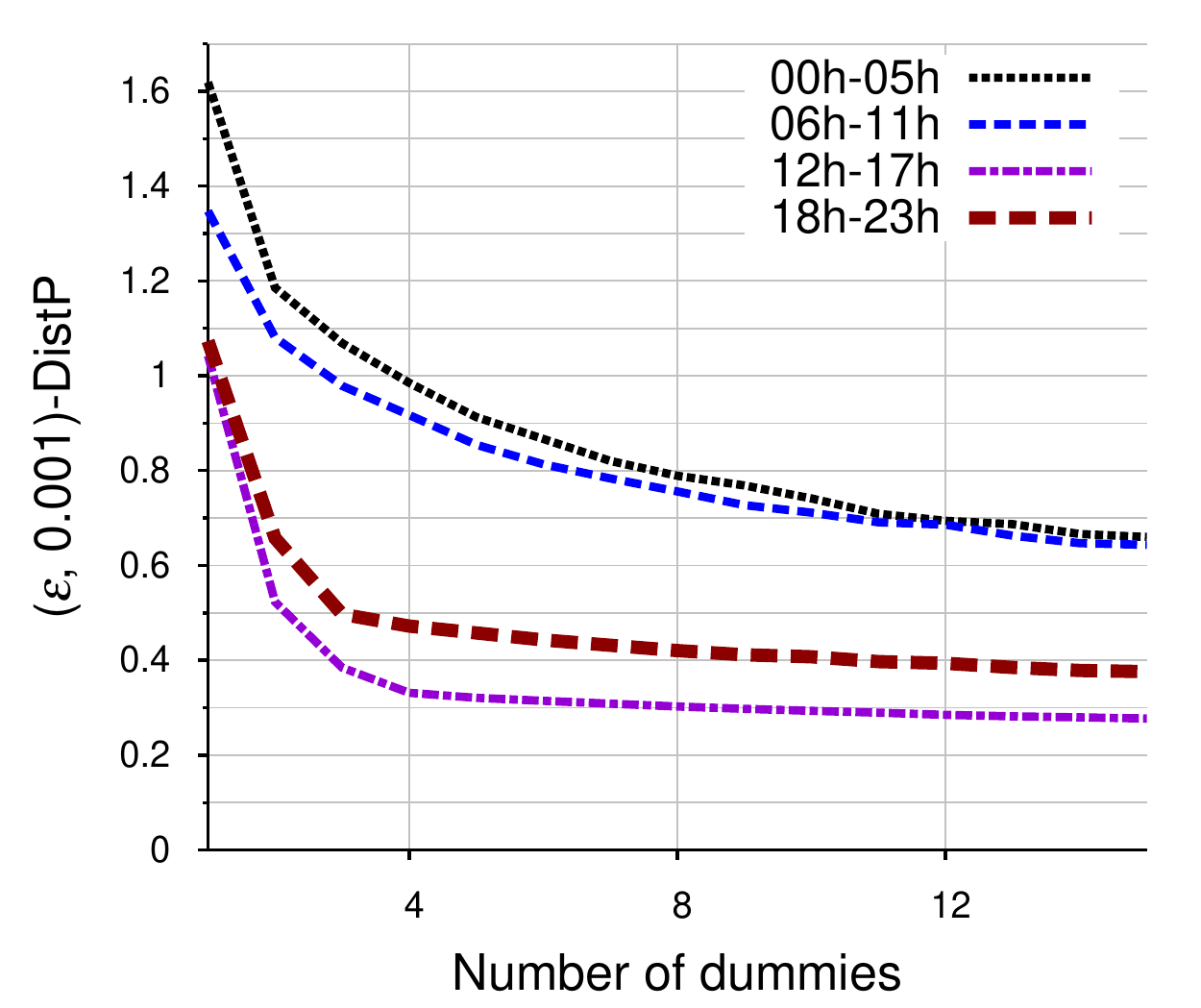}}}
\caption{Relationship between $\varepsilon$-\DistP{} and \#dummies (when using $(100, 0.020)$-\RL{} mechanism).\label{fig:times+MH+WP:dummies:DistP}}
\end{subfigure}\hspace{0.4ex}\hfill
\begin{subfigure}[t]{0.24\textwidth}
  \mbox{\raisebox{-12pt}{\includegraphics[height=26mm, width=1.00\textwidth]{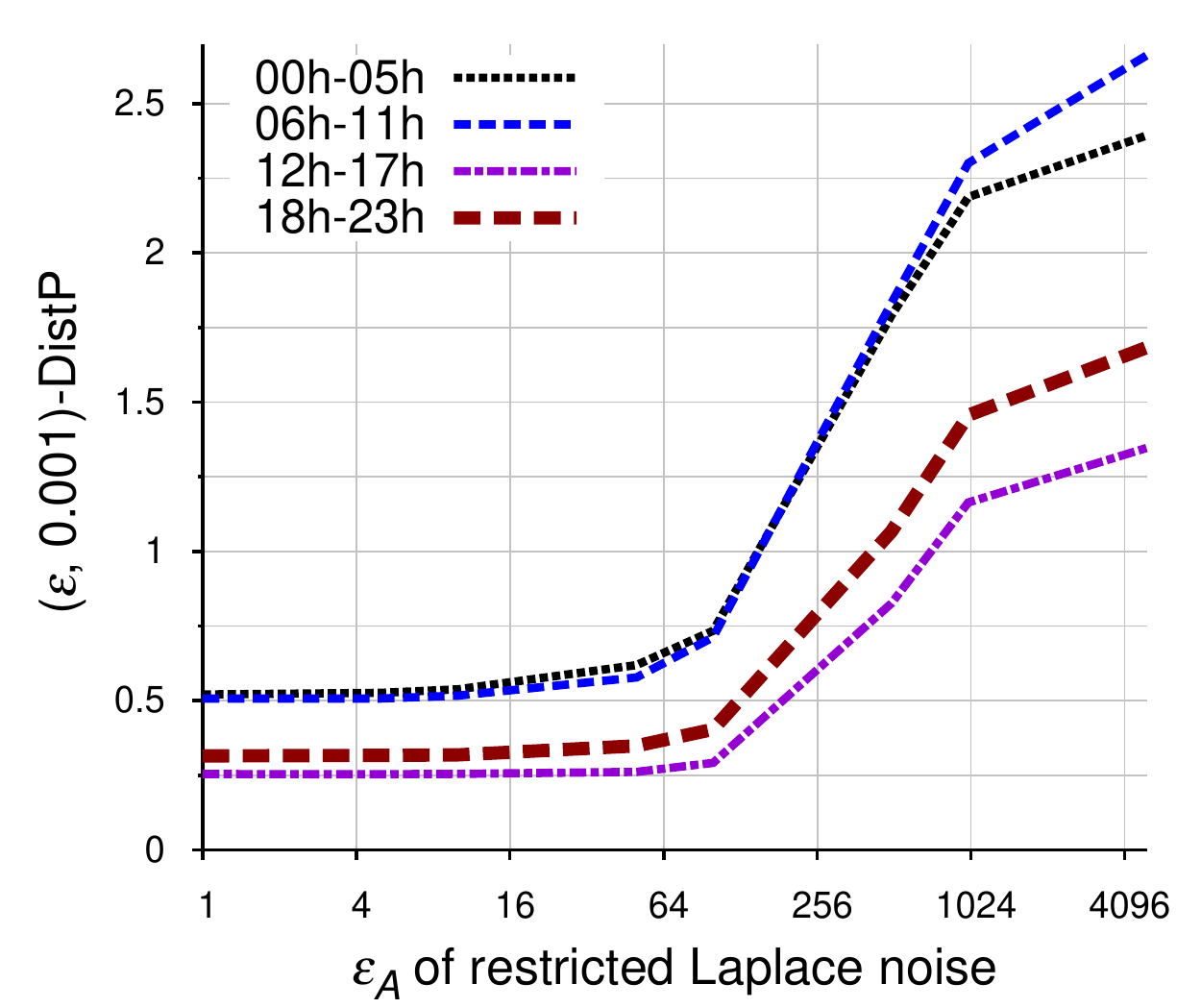}}}
\caption{Relationship between $\varepsilon$-\DistP{} and $\varepsilon_\alg$ of $(\varepsilon_\alg, 0.020)$-\RL{} mechanism (with $10$ dummies).\label{fig:times+MH+WP:DP-noises:DistP}}
\end{subfigure}\hspace{0.4ex}\hfill
\begin{subfigure}[t]{0.24\textwidth}
  \mbox{\raisebox{-12pt}{\includegraphics[height=26mm, width=1.00\textwidth]{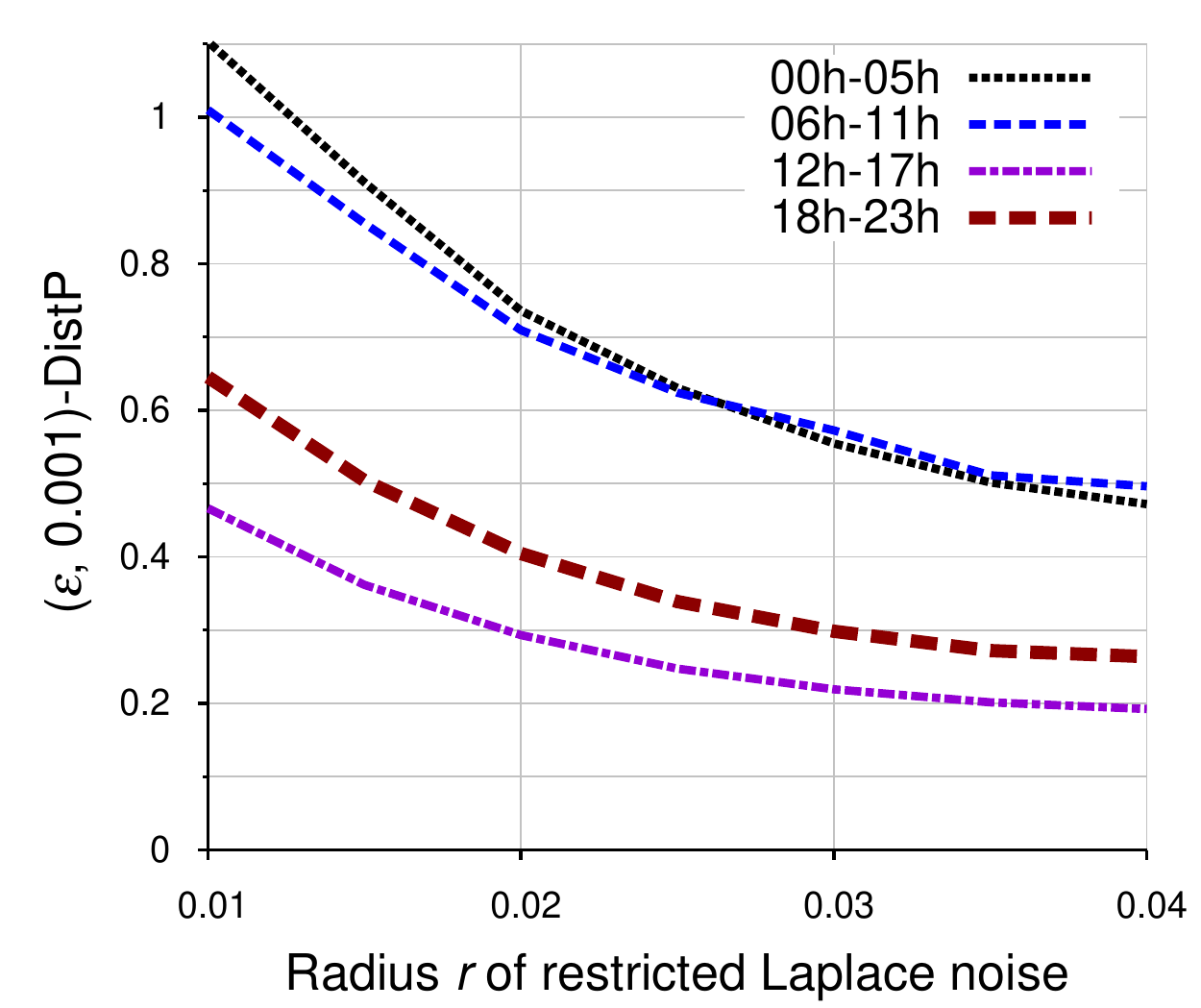}}}
\caption{Relationship between $\varepsilon$-\DistP{} and a radius $r$ of $(100, r)$-\RL{} mechanism (with $10$ dummies).\label{fig:times+MH+WP:radius:DistP}}
\end{subfigure}\hspace{0.4ex}\hfill
\begin{subfigure}[t]{0.24\textwidth}
  \mbox{\raisebox{-12pt}{\includegraphics[height=26mm, width=1.00\textwidth]{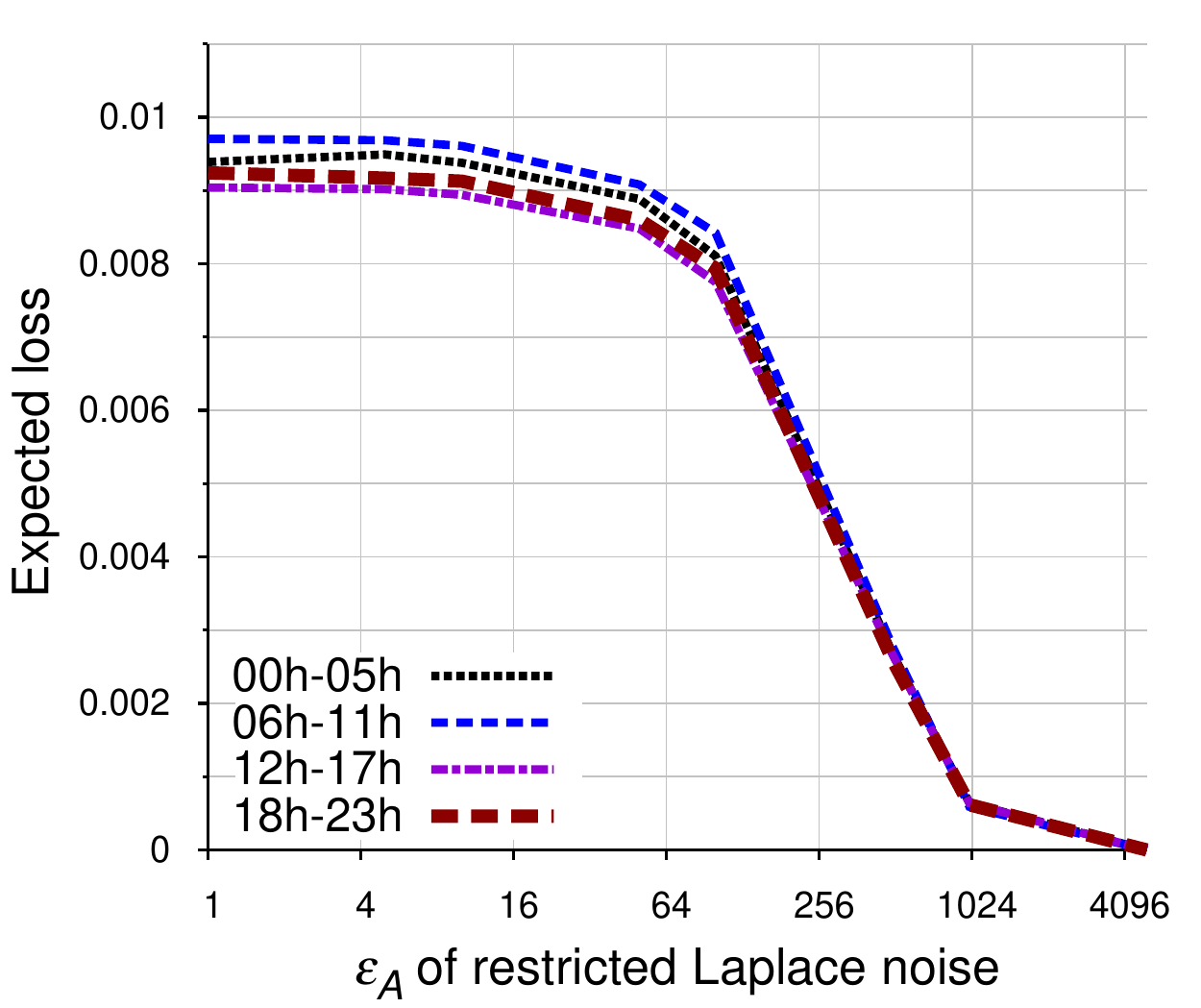}}}
\caption{Relationship between the expected loss and $\varepsilon_\alg$ of $(\varepsilon_\alg, \allowbreak r)$-\RL{} mechanism (with $5$ dummies).
\label{fig:times+MH+W:tupling:loss}}
\end{subfigure}
\vspace{-3mm}
\caption{Empirical \DistP{} and loss for \workplace{}/\nonworkplace{} in different hours.
\label{fig:times+MH+WP:tupling:privacy}}
\vspace{1mm}
\end{minipage}
\\
\begin{minipage}{1.0\hsize}
\centering
\begin{subfigure}[t]{0.24\textwidth}
  \mbox{\raisebox{-12pt}{\includegraphics[height=26mm, width=1.00\textwidth]{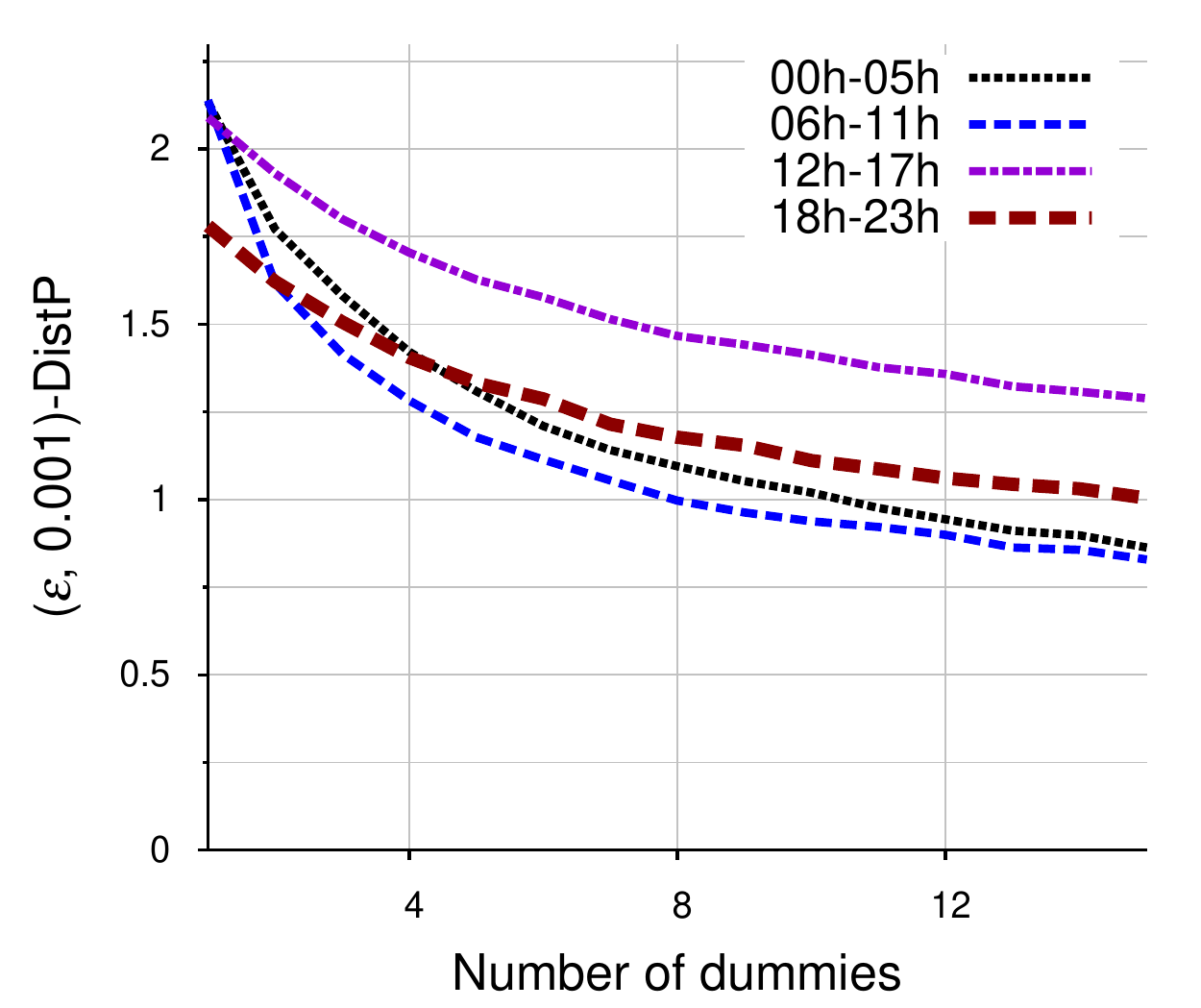}}}
\caption{Relationship between $\varepsilon$-\DistP{} and \#dummies (when using $(100, 0.020)$-\RL{} mechanism).\label{fig:times+MH+HO:dummies:DistP}}
\end{subfigure}\hspace{0.4ex}\hfill
\begin{subfigure}[t]{0.24\textwidth}
  \mbox{\raisebox{-12pt}{\includegraphics[height=26mm, width=1.00\textwidth]{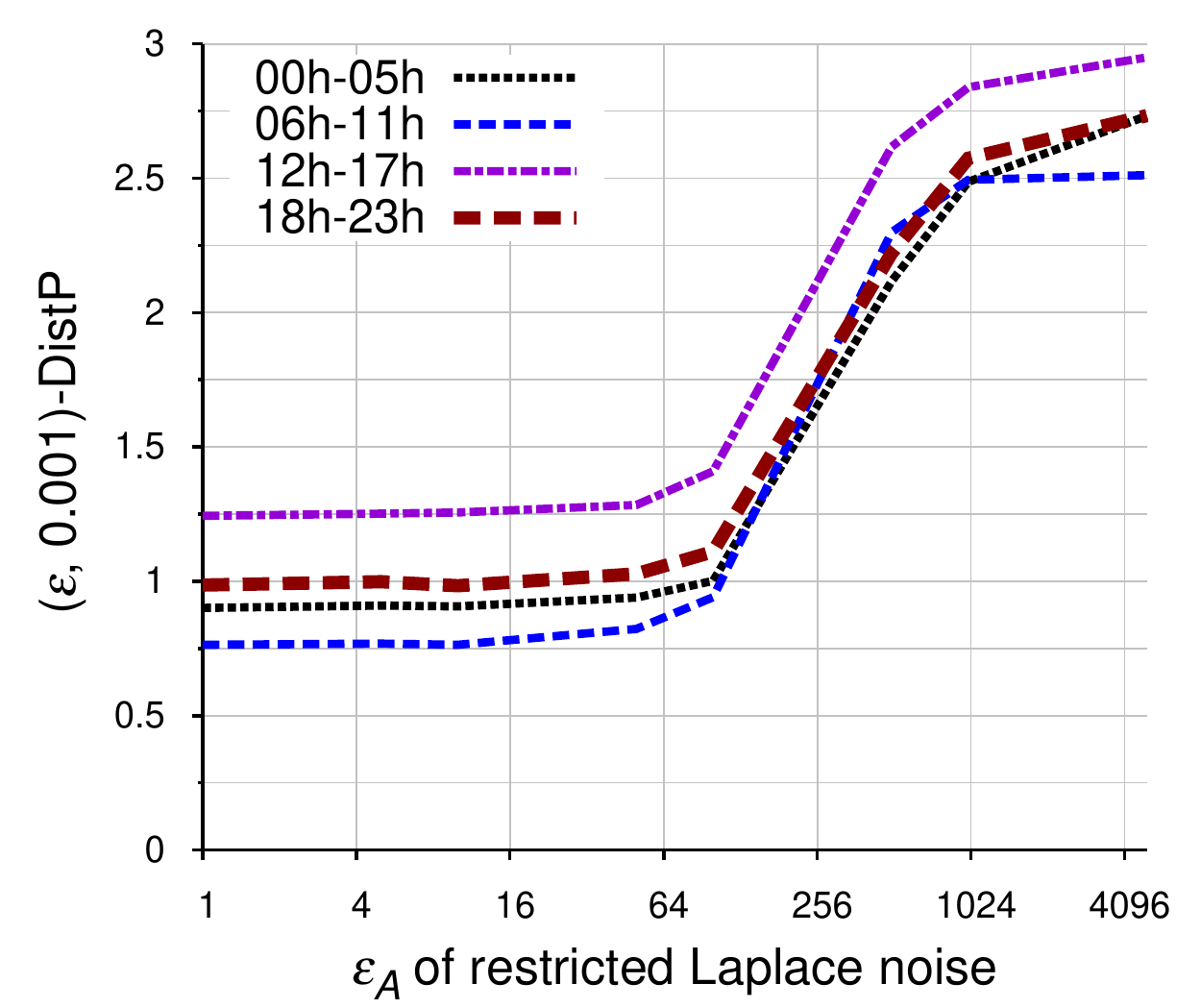}}}
\caption{Relationship between $\varepsilon$-\DistP{} and $\varepsilon_\alg$ of $(\varepsilon_\alg, 0.020)$-\RL{} mechanism (with $10$ dummies).\label{fig:times+MH+HO:DP-noises:DistP}}
\end{subfigure}\hspace{0.4ex}\hfill
\begin{subfigure}[t]{0.24\textwidth}
  \mbox{\raisebox{-12pt}{\includegraphics[height=26mm, width=1.00\textwidth]{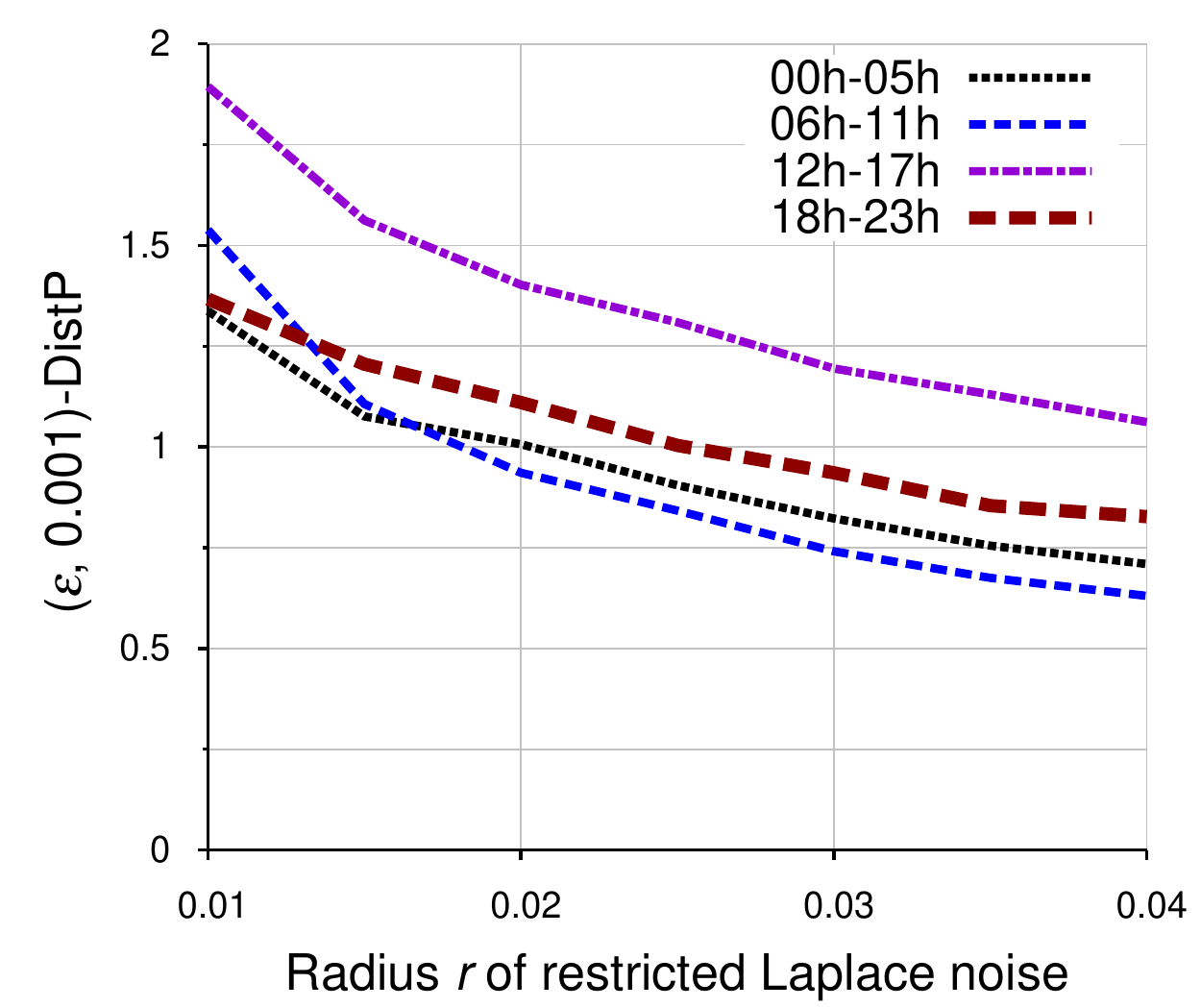}}}
\caption{Relationship between $\varepsilon$-\DistP{} and a radius $r$ of $(100, r)$-\RL{} mechanism (with $10$ dummies).\label{fig:times+MH+HO:radius:DistP}}
\end{subfigure}\hspace{0.4ex}\hfill
\begin{subfigure}[t]{0.24\textwidth}
  \mbox{\raisebox{-12pt}{\includegraphics[height=26mm, width=1.00\textwidth]{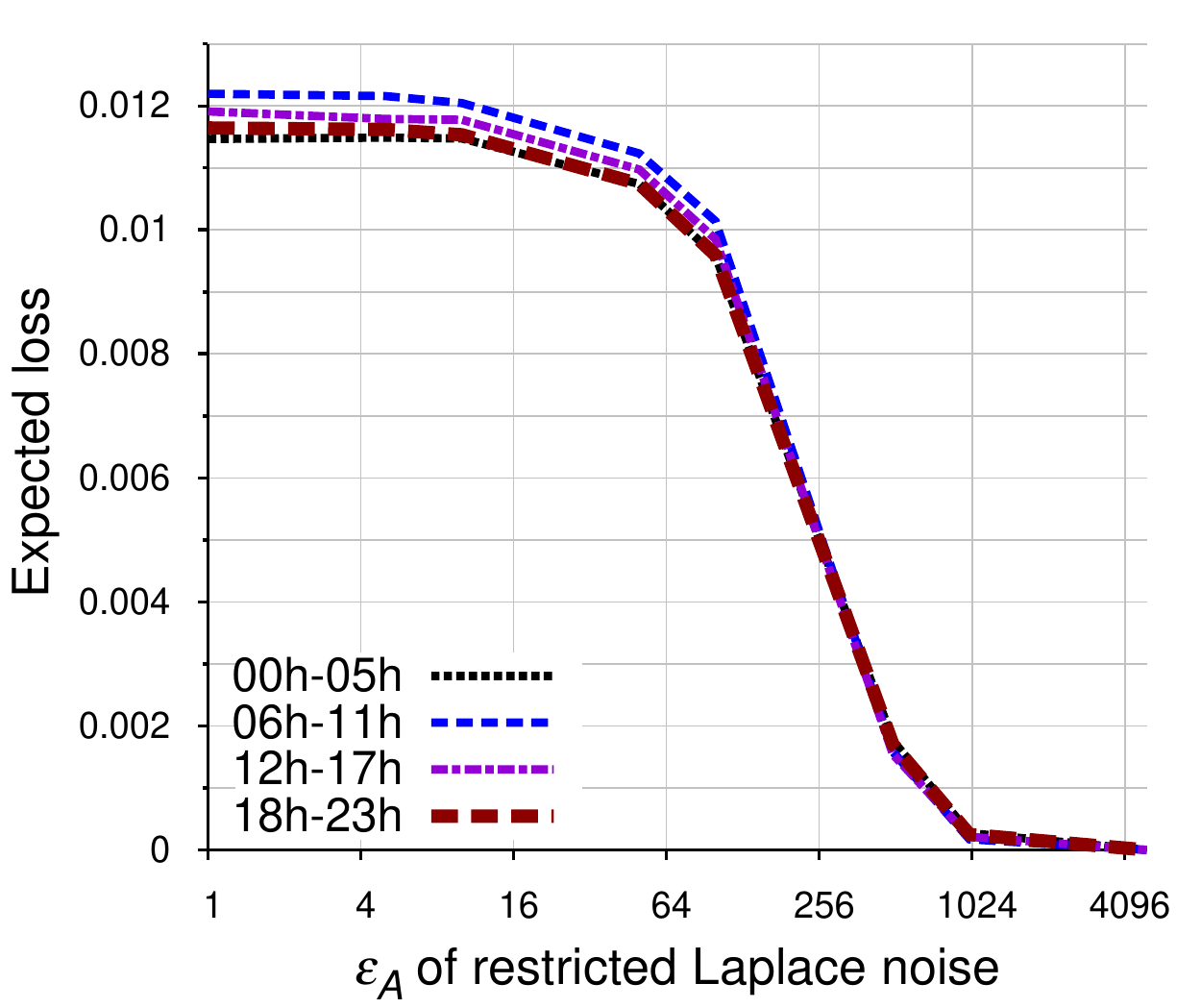}}}
\caption{Relationship between the expected loss and $\varepsilon_\alg$ of $(\varepsilon_\alg, \allowbreak r)$-\RL{} mechanism (with $5$ dummies).
\label{fig:times+MH+H:tupling:loss}}
\end{subfigure}
\vspace{-3mm}
\caption{Empirical \DistP{} and loss for \home{}/\outside{} in different hours.
\label{fig:times+MH+HO:tupling:privacy}}
\vspace{1mm}
\end{minipage}
\end{tabular}
\end{figure}

\begin{figure}[H]
\begin{tabular}{c}
\begin{minipage}{1.0\hsize}
\centering
\begin{subfigure}[t]{0.24\textwidth}
  \mbox{\raisebox{-10pt}{\includegraphics[height=26.0mm, width=1.00\textwidth]{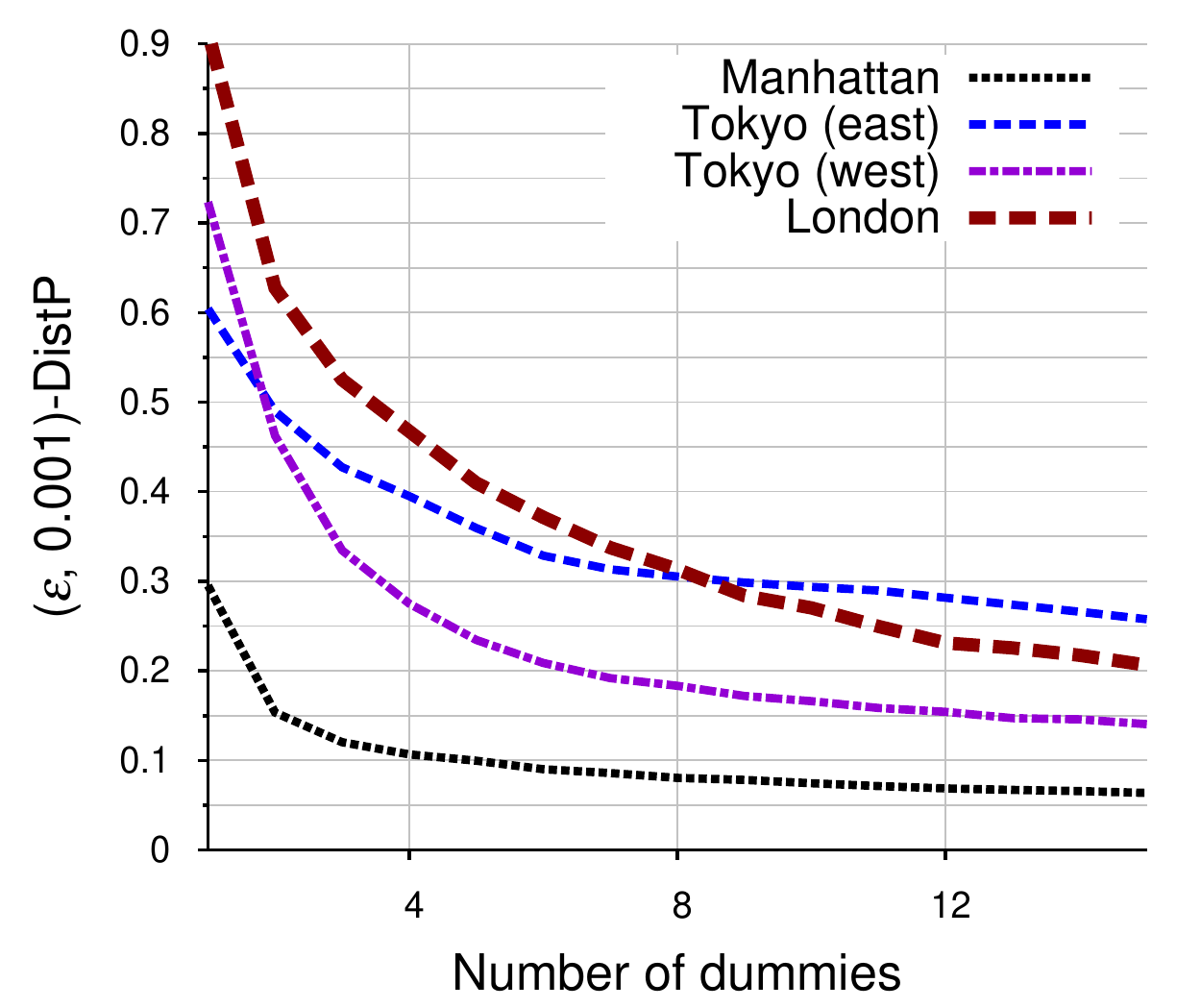}}}
\caption{Relationship between $\varepsilon$-\DistP{} and \#dummies (when using $(100, 0.020)$-\RL{} mechanism).\label{fig:cities+MF:dummies:DistP}}
\end{subfigure}\hspace{0.4ex}\hfill
\begin{subfigure}[t]{0.24\textwidth}
  \mbox{\raisebox{-10pt}{\includegraphics[height=26.0mm, width=1.00\textwidth]{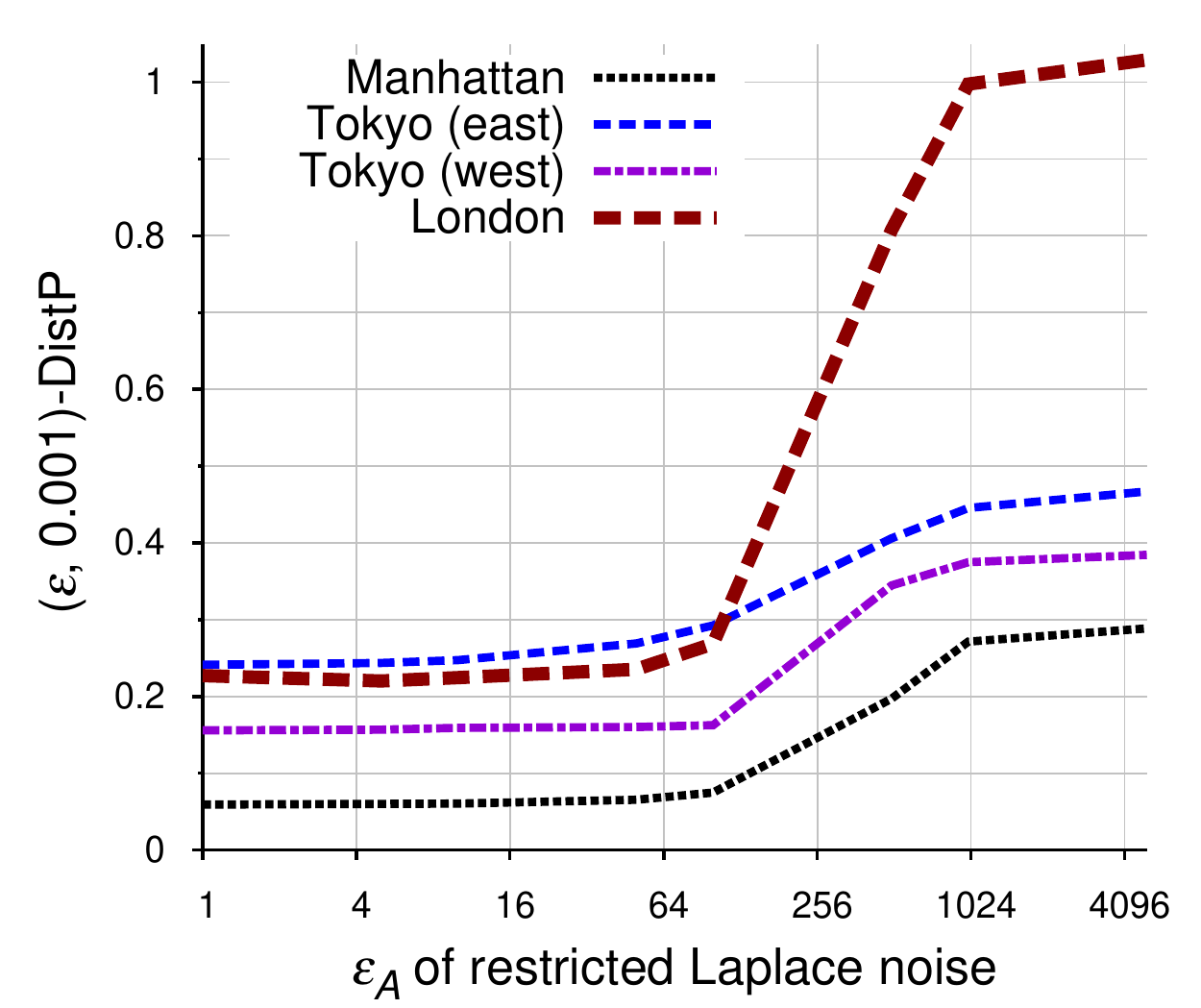}}}
\caption{Relationship between $\varepsilon$-\DistP{} and $\varepsilon_\alg$ of $(\varepsilon_\alg, 0.020)$-\RL{} mechanism (with $10$ dummies).\label{fig:cities+MF:DP-noises:DistP}}
\end{subfigure}\hspace{0.4ex}\hfill
\begin{subfigure}[t]{0.24\textwidth}
  \mbox{\raisebox{-10pt}{\includegraphics[height=26.0mm, width=1.00\textwidth]{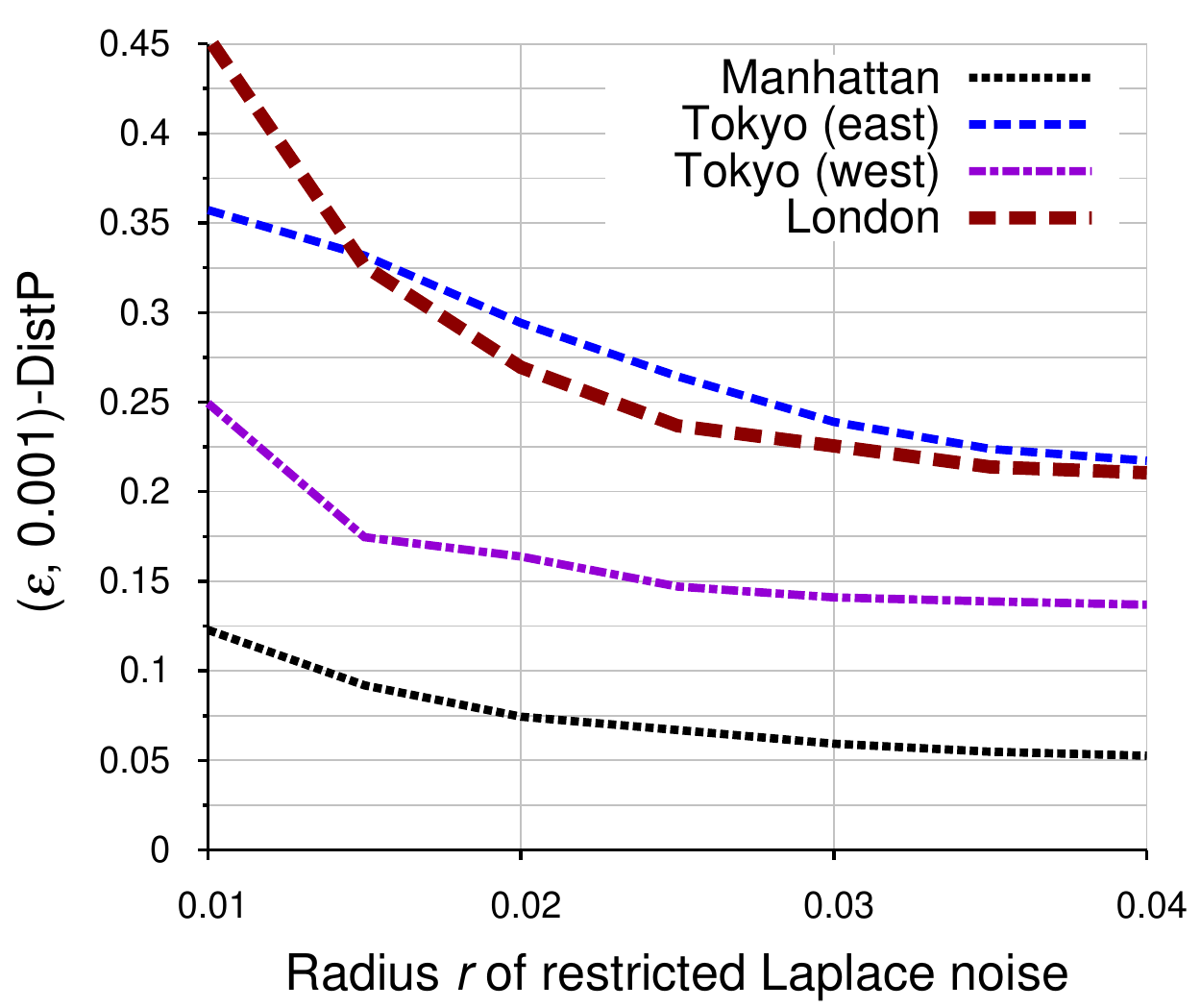}}}
\caption{Relationship between $\varepsilon$-\DistP{} and a radius $r$ of $(100, r)$-\RL{} mechanism (with $10$ dummies).\label{fig:cities+MF:radius:DistP}}
\end{subfigure}\hspace{0.4ex}\hfill
\begin{subfigure}[t]{0.24\textwidth}
  \mbox{\raisebox{-10pt}{\includegraphics[height=26.0mm, width=1.00\textwidth]{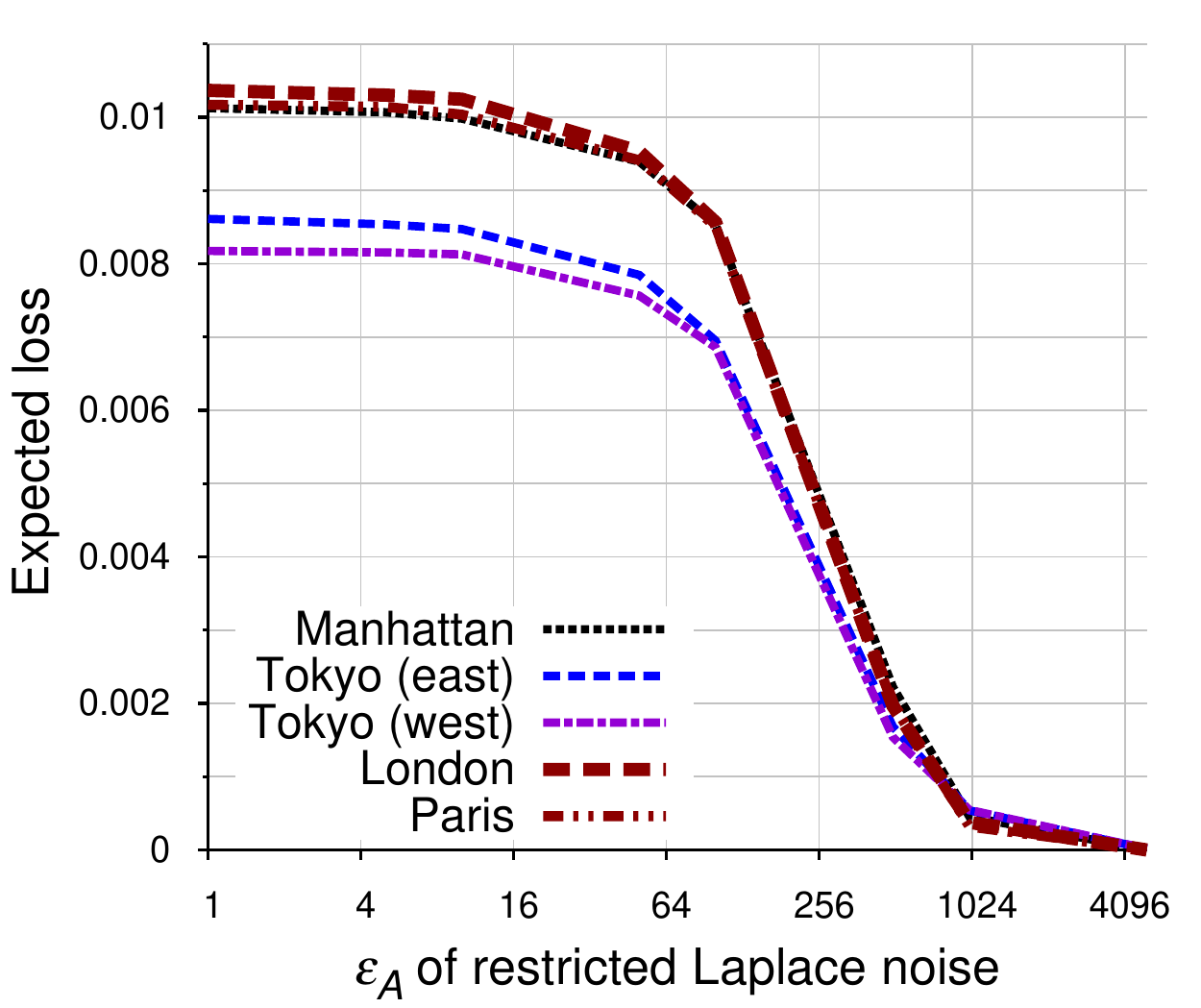}}}
\caption{Relationship between the expected loss and $\varepsilon_\alg$ of $(\varepsilon_\alg, \allowbreak r)$-\RL{} mechanism (with $5$ dummies).
\label{fig:cities+M:tupling:loss}}
\end{subfigure}
\vspace{-3mm}
\caption{Empirical \DistP{} and loss for \male{}/\female{} in different cities.
\label{fig:cities+MF:tupling:privacy}}
\vspace{1mm}
\end{minipage}
\\
\begin{minipage}{1.0\hsize}
\centering
\begin{subfigure}[t]{0.24\textwidth}
  \mbox{\raisebox{-10pt}{\includegraphics[height=26.0mm, width=1.00\textwidth]{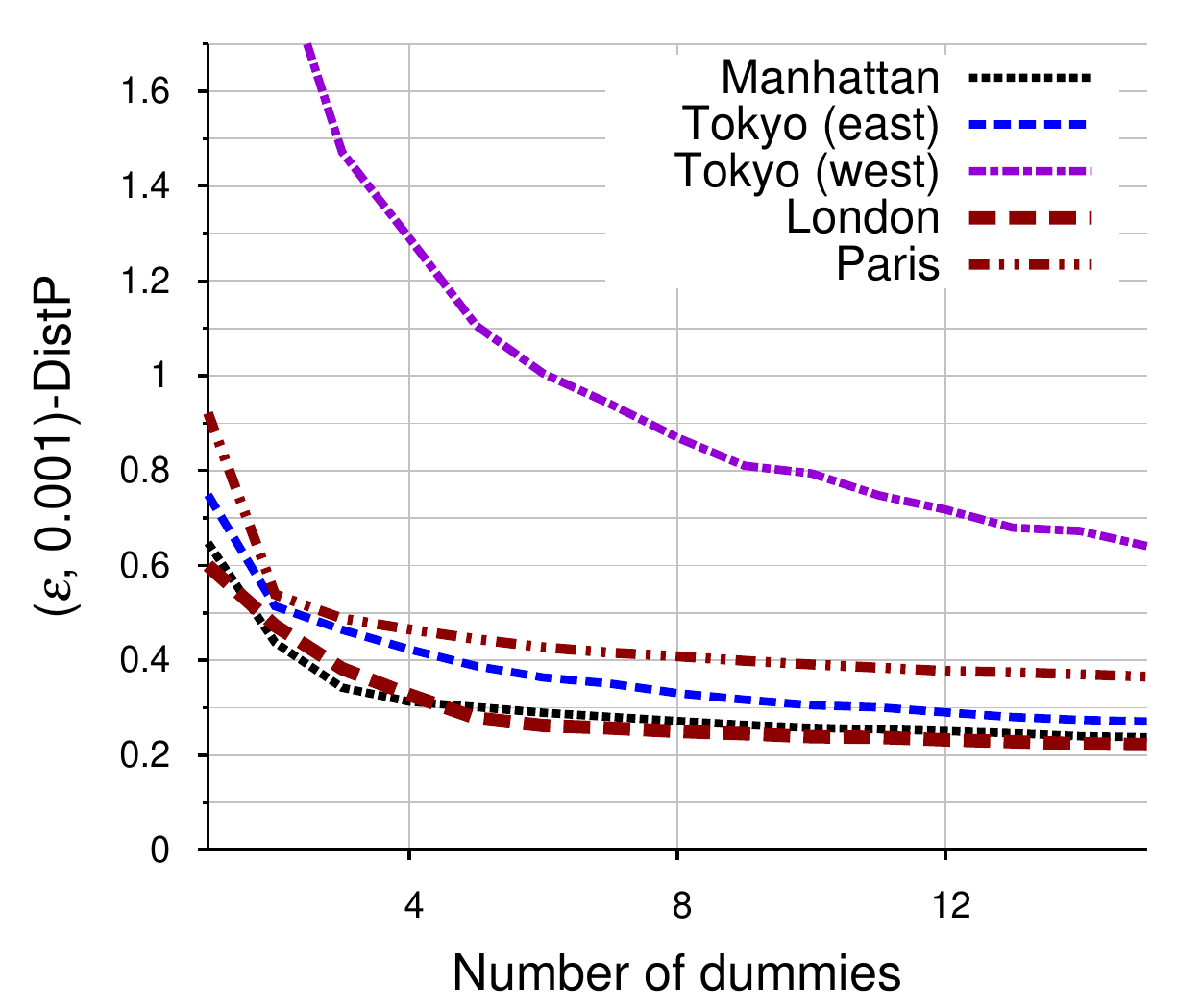}}}
\caption{Relationship between $\varepsilon$-\DistP{} and \#dummies (when using $(100, 0.020)$-\RL{} mechanism).\label{fig:cities+SI:dummies:DistP}}
\end{subfigure}\hspace{0.4ex}\hfill
\begin{subfigure}[t]{0.24\textwidth}
  \mbox{\raisebox{-10pt}{\includegraphics[height=26.0mm, width=1.00\textwidth]{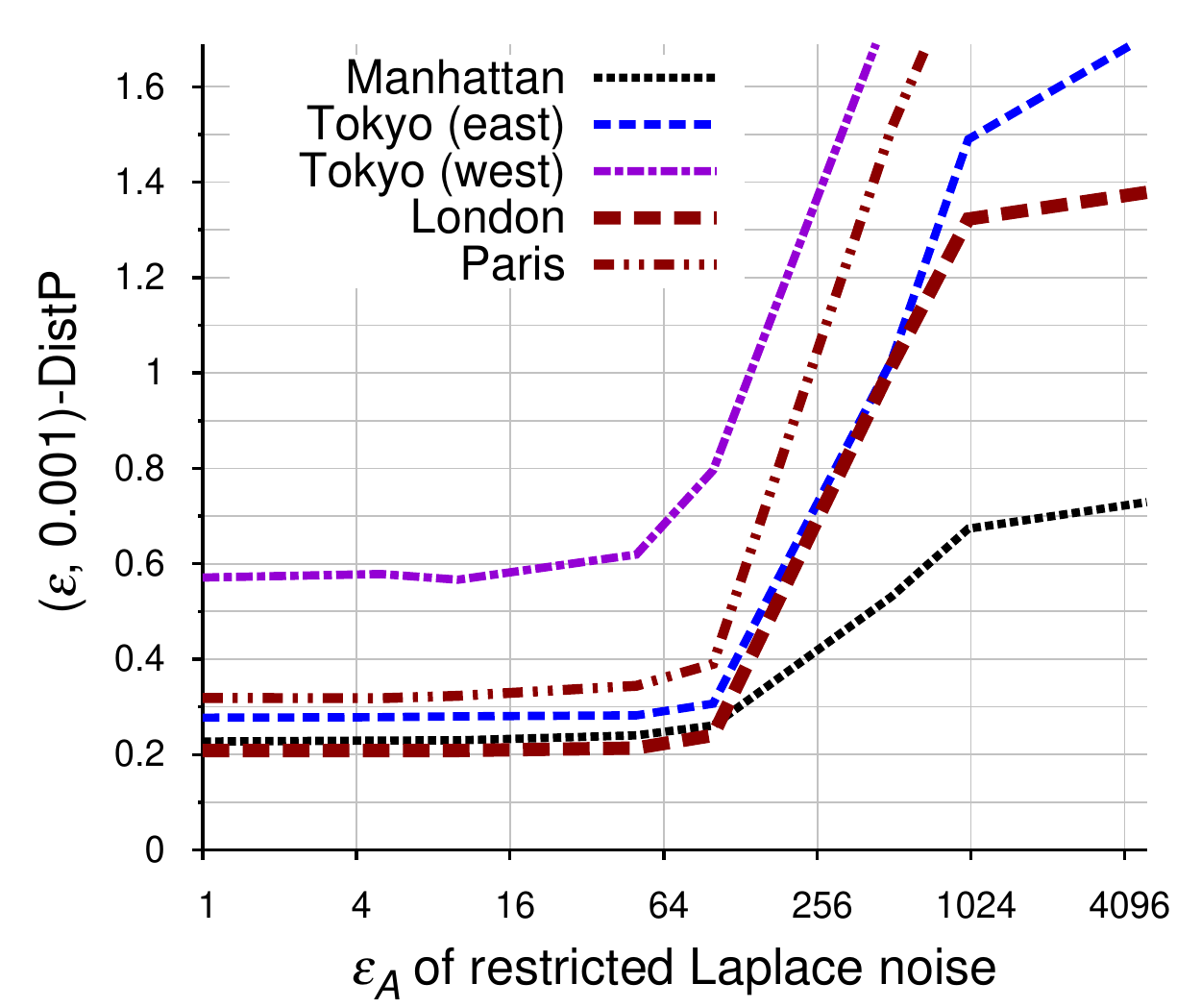}}}
\caption{Relationship between $\varepsilon$-\DistP{} and $\varepsilon_\alg$ of $(\varepsilon_\alg, 0.020)$-\RL{} mechanism (with $10$ dummies).\label{fig:cities+SI:DP-noises:DistP}}
\end{subfigure}\hspace{0.4ex}\hfill
\begin{subfigure}[t]{0.24\textwidth}
  \mbox{\raisebox{-10pt}{\includegraphics[height=26.0mm, width=1.00\textwidth]{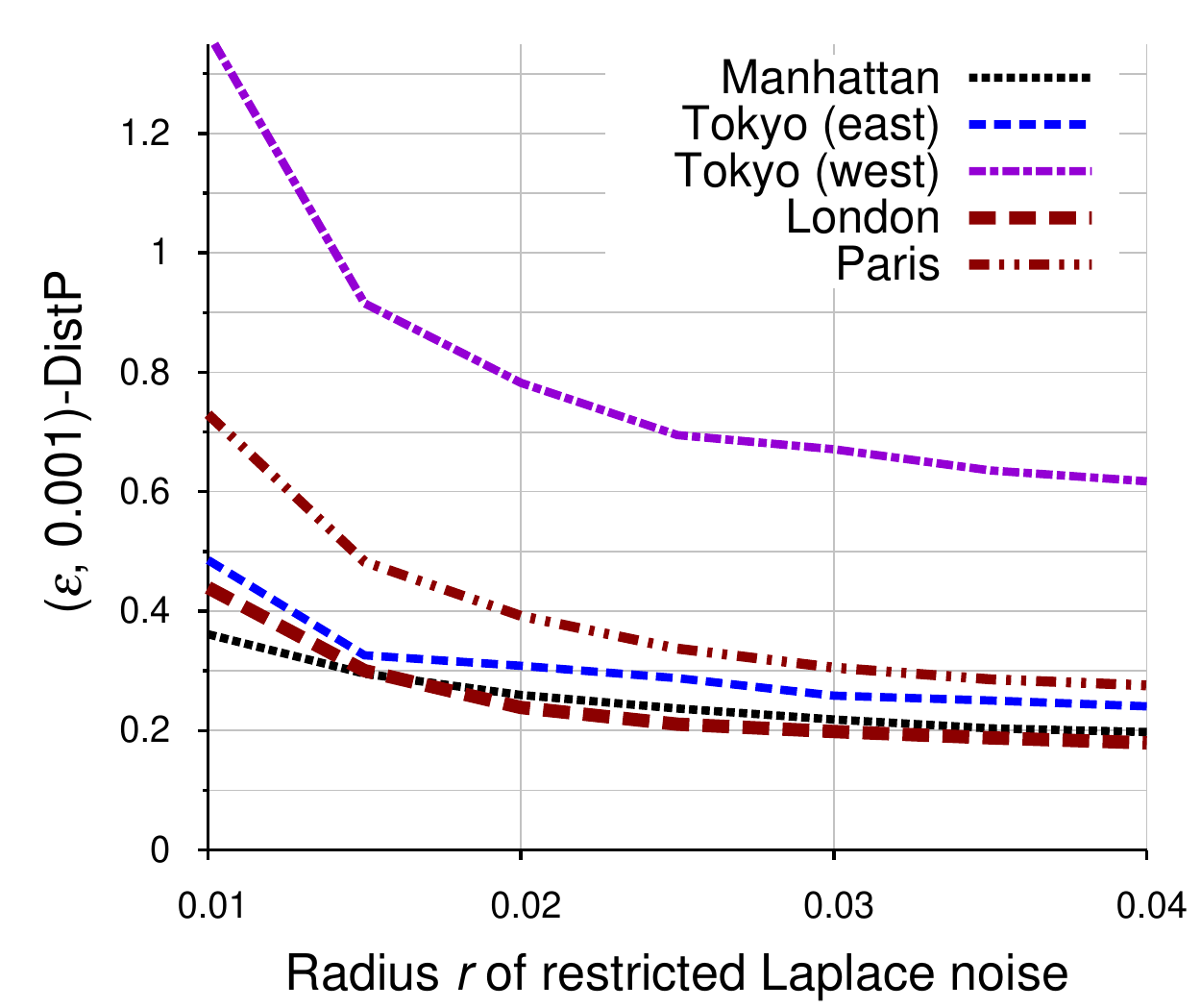}}}
\caption{Relationship between $\varepsilon$-\DistP{} and a radius $r$ of $(100, r)$-\RL{} mechanism (with $10$ dummies).\label{fig:cities+SI:radius:DistP}}
\end{subfigure}\hspace{0.4ex}\hfill
\begin{subfigure}[t]{0.24\textwidth}
  \mbox{\raisebox{-10pt}{\includegraphics[height=26.0mm, width=1.00\textwidth]{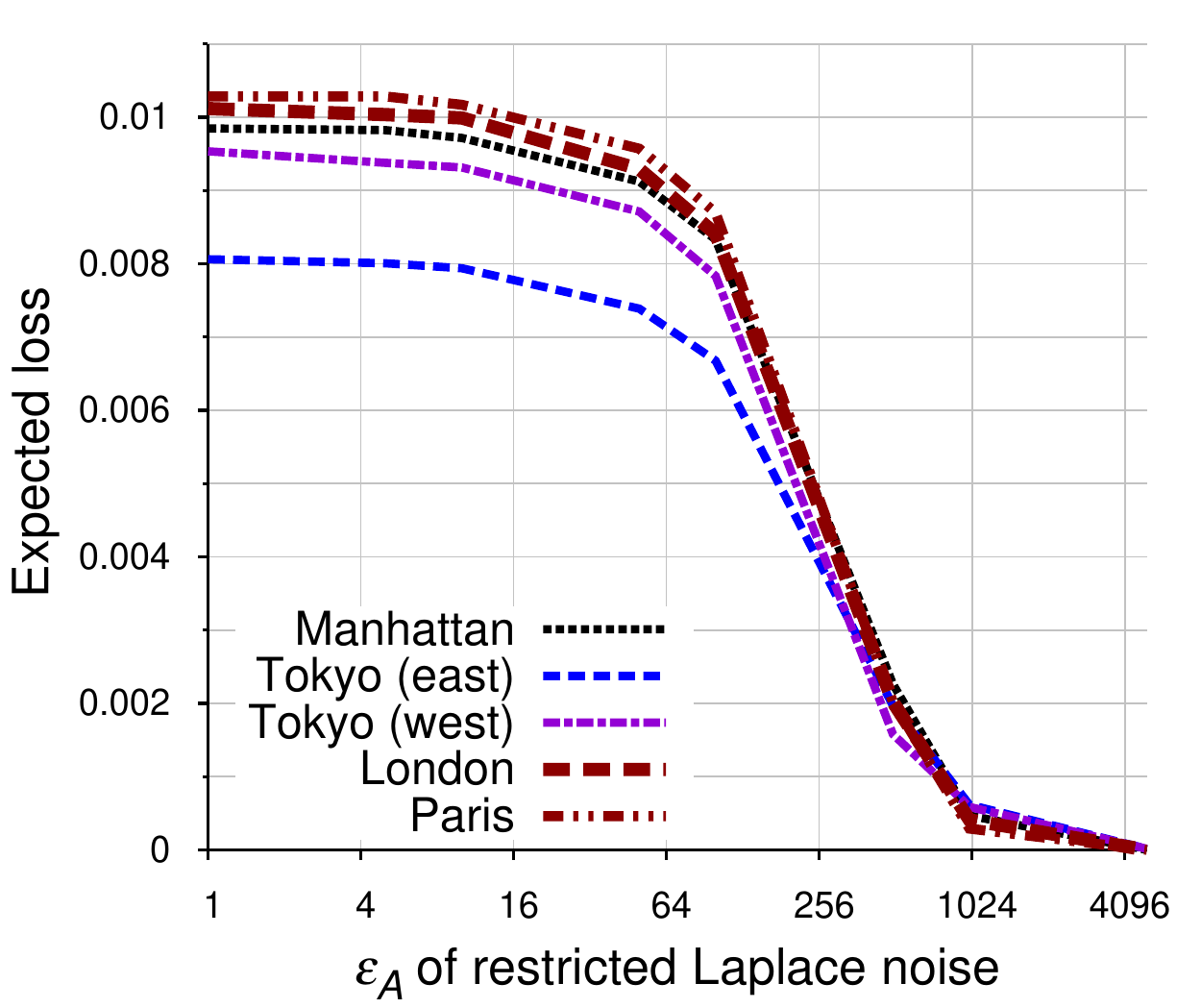}}}
\caption{Relationship between the expected loss and $\varepsilon_\alg$ of $(\varepsilon_\alg, \allowbreak r)$-\RL{} mechanism (with $5$ dummies).
\label{fig:cities+S:tupling:loss}}
\end{subfigure}
\vspace{-3mm}
\caption{Empirical \DistP{} and loss for \social{}/\lesssocial{} in different cities.
\label{fig:cities+SI:tupling:privacy}}
\vspace{1mm}
\end{minipage}
\\
\begin{minipage}{1.0\hsize}
\centering
\begin{subfigure}[t]{0.24\textwidth}
  \mbox{\raisebox{-10pt}{\includegraphics[height=26.0mm, width=1.00\textwidth]{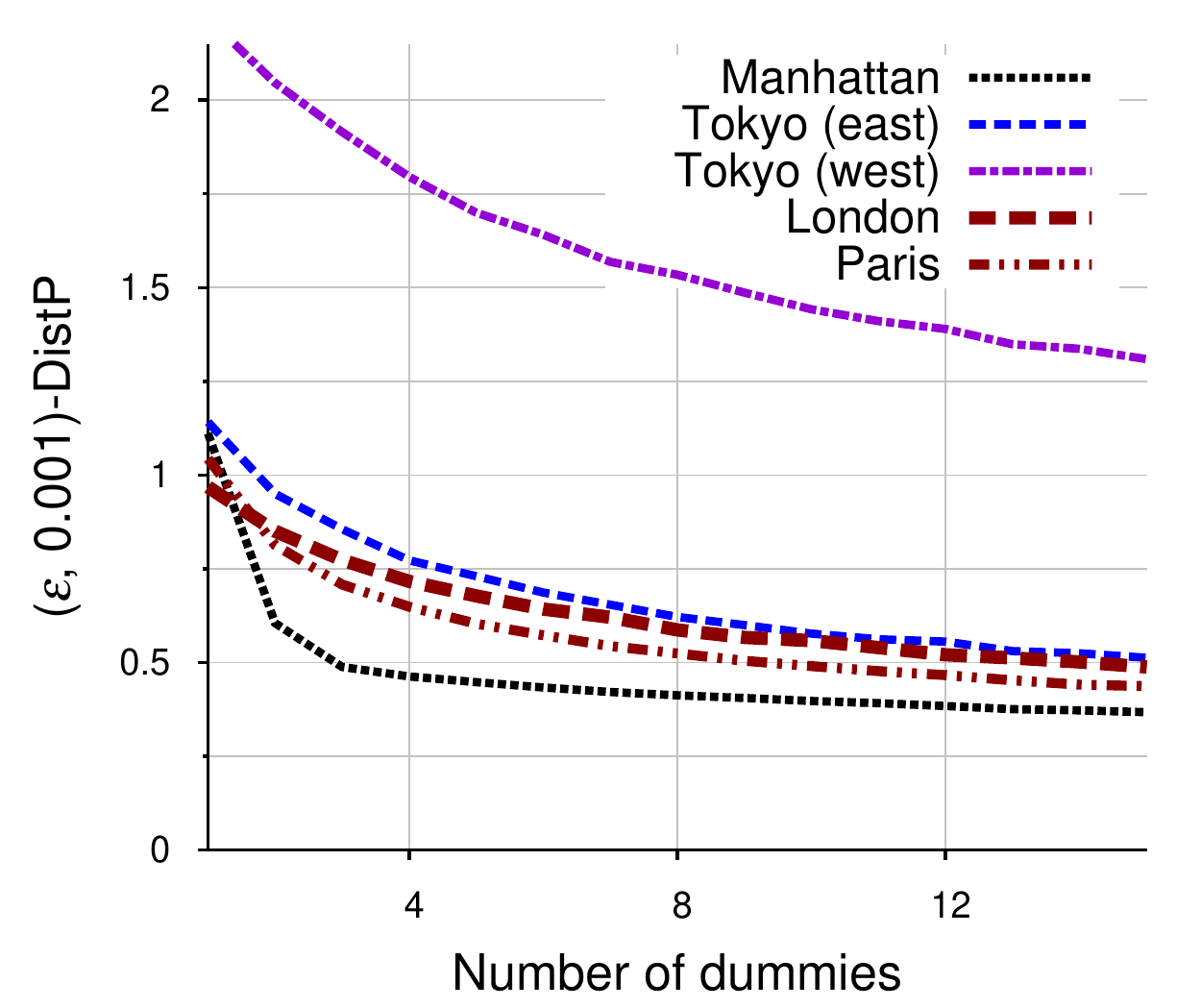}}}
\caption{Relationship between $\varepsilon$-\DistP{} and \#dummies (when using $(100, 0.020)$-\RL{} mechanism).\label{fig:cities+WP:dummies:DistP}}
\end{subfigure}\hspace{0.4ex}\hfill
\begin{subfigure}[t]{0.24\textwidth}
  \mbox{\raisebox{-10pt}{\includegraphics[height=26.0mm, width=1.00\textwidth]{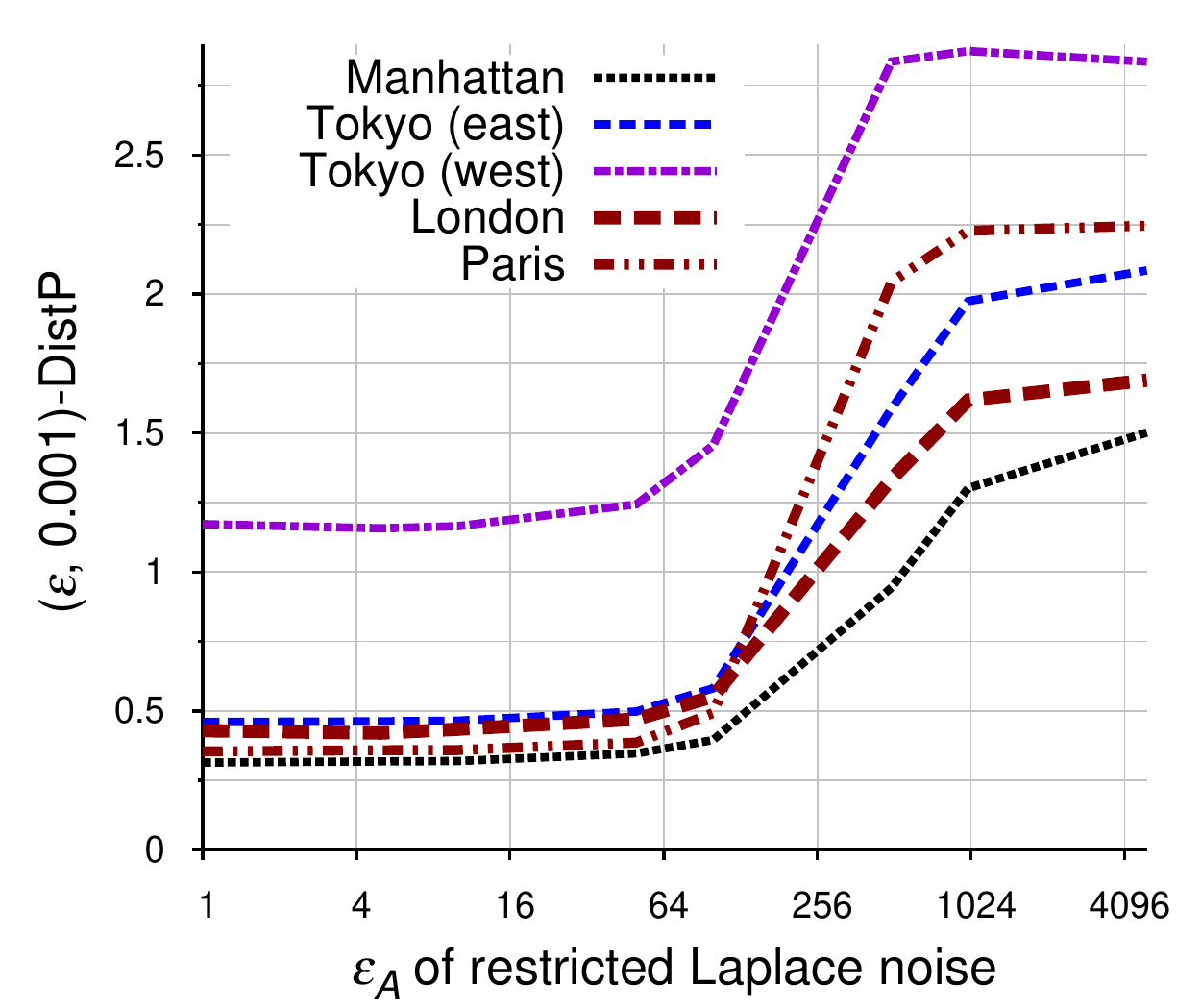}}}
\caption{Relationship between $\varepsilon$-\DistP{} and $\varepsilon_\alg$ of $(\varepsilon_\alg, 0.020)$-\RL{} mechanism (with $10$ dummies).\label{fig:cities+WP:DP-noises:DistP}}
\end{subfigure}\hspace{0.4ex}\hfill
\begin{subfigure}[t]{0.24\textwidth}
  \mbox{\raisebox{-10pt}{\includegraphics[height=26.0mm, width=1.00\textwidth]{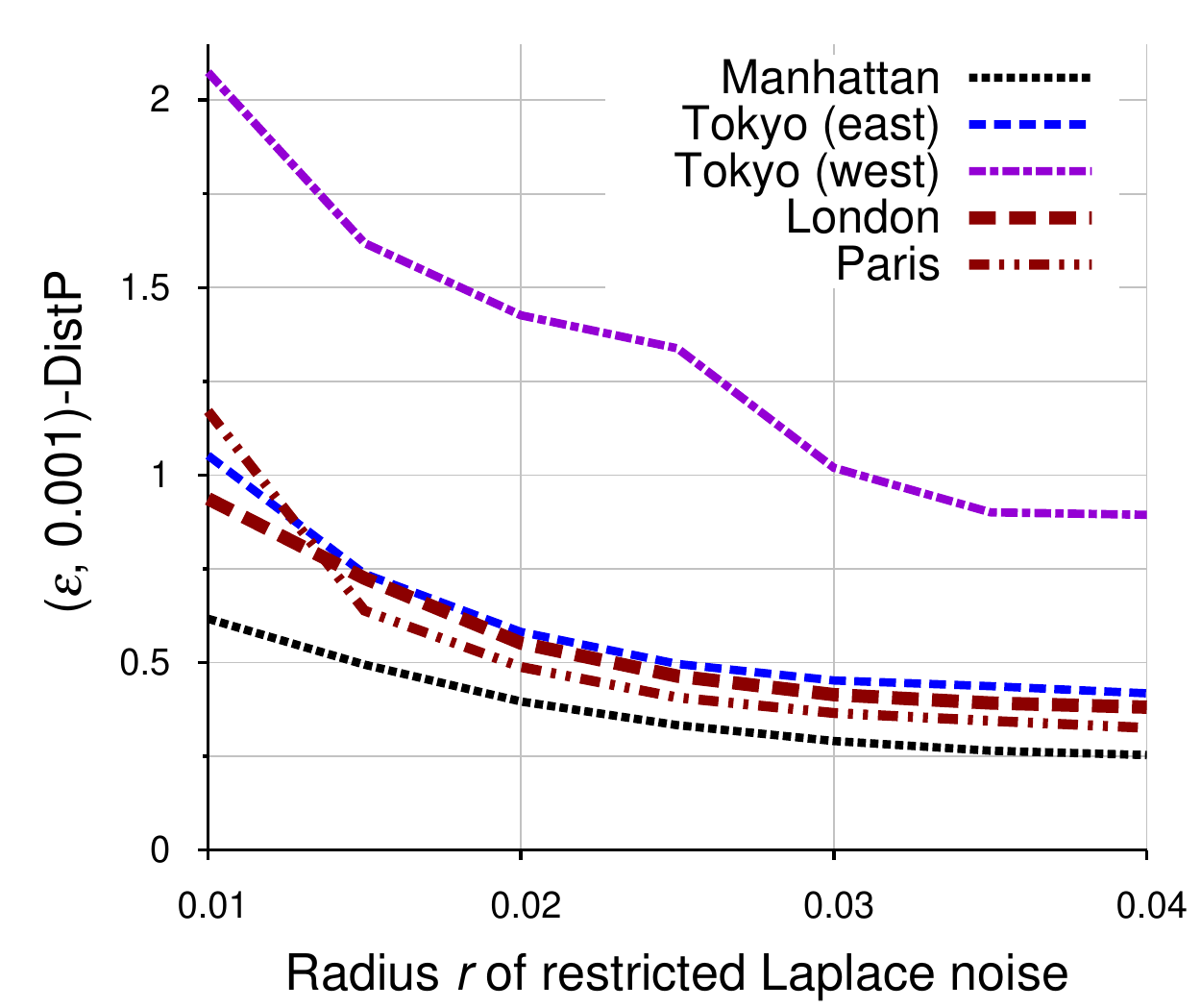}}}
\caption{Relationship between $\varepsilon$-\DistP{} and a radius $r$ of $(100, r)$-\RL{} mechanism (with $10$ dummies).\label{fig:cities+WP:radius:DistP}}
\end{subfigure}\hspace{0.4ex}\hfill
\begin{subfigure}[t]{0.24\textwidth}
  \mbox{\raisebox{-10pt}{\includegraphics[height=26.0mm, width=1.00\textwidth]{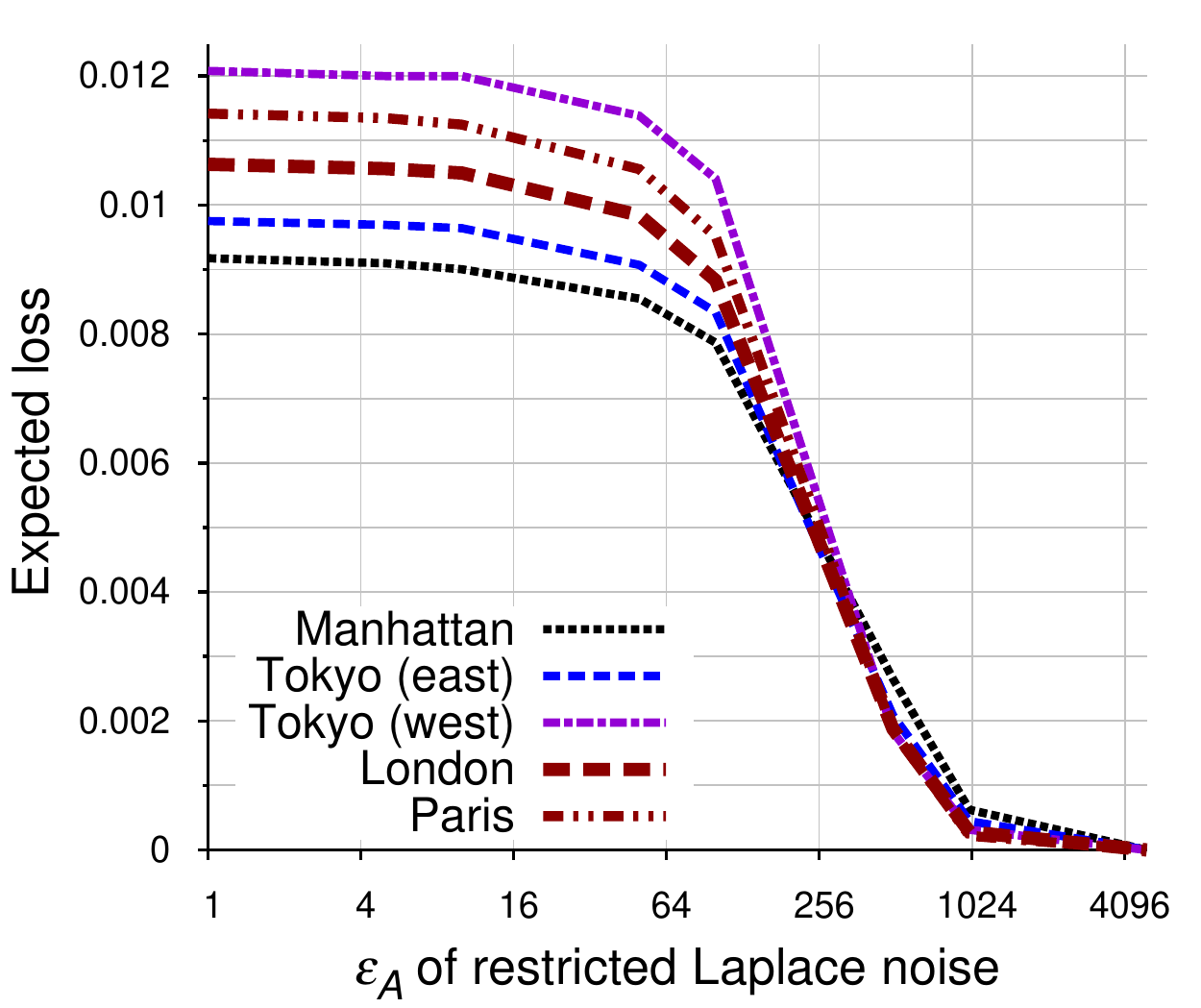}}}
\caption{Relationship between the expected loss and $\varepsilon_\alg$ of $(\varepsilon_\alg, \allowbreak r)$-\RL{} mechanism (with $5$ dummies).
\label{fig:cities+W:tupling:loss}}
\end{subfigure}
\vspace{-3mm}
\caption{Empirical \DistP{} and loss for \workplace{}/\nonworkplace{} in different cities.
\label{fig:cities+WP:tupling:privacy}}
\vspace{1mm}
\end{minipage}
\\
\begin{minipage}{1.0\hsize}
\centering
\begin{subfigure}[t]{0.24\textwidth}
  \mbox{\raisebox{-10pt}{\includegraphics[height=26.0mm, width=1.00\textwidth]{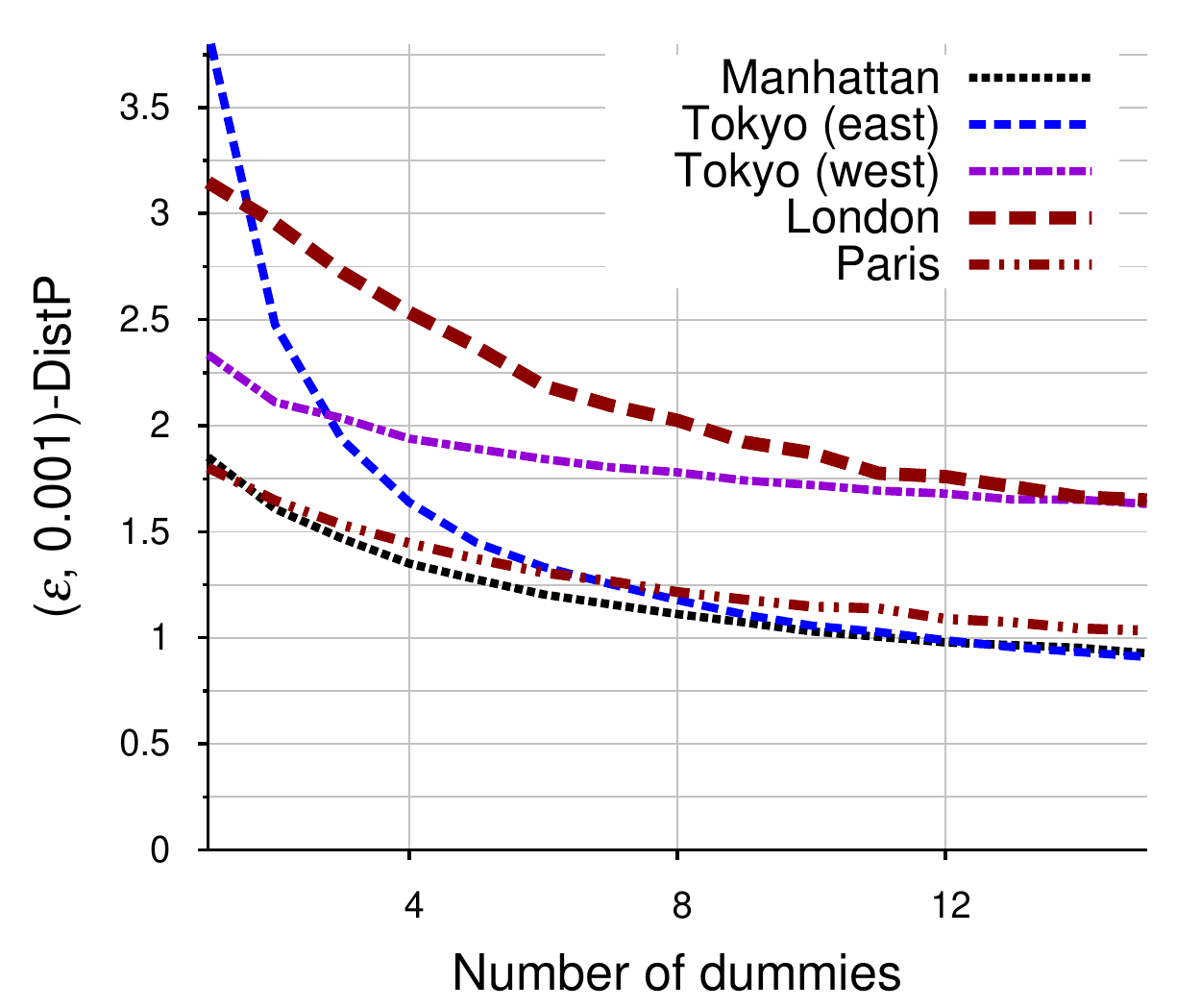}}}
\caption{Relationship between $\varepsilon$-\DistP{} and \#dummies (when using $(100, 0.020)$-\RL{} mechanism).\label{fig:cities+HO:dummies:DistP}}
\end{subfigure}\hspace{0.4ex}\hfill
\begin{subfigure}[t]{0.24\textwidth}
  \mbox{\raisebox{-10pt}{\includegraphics[height=26.0mm, width=1.00\textwidth]{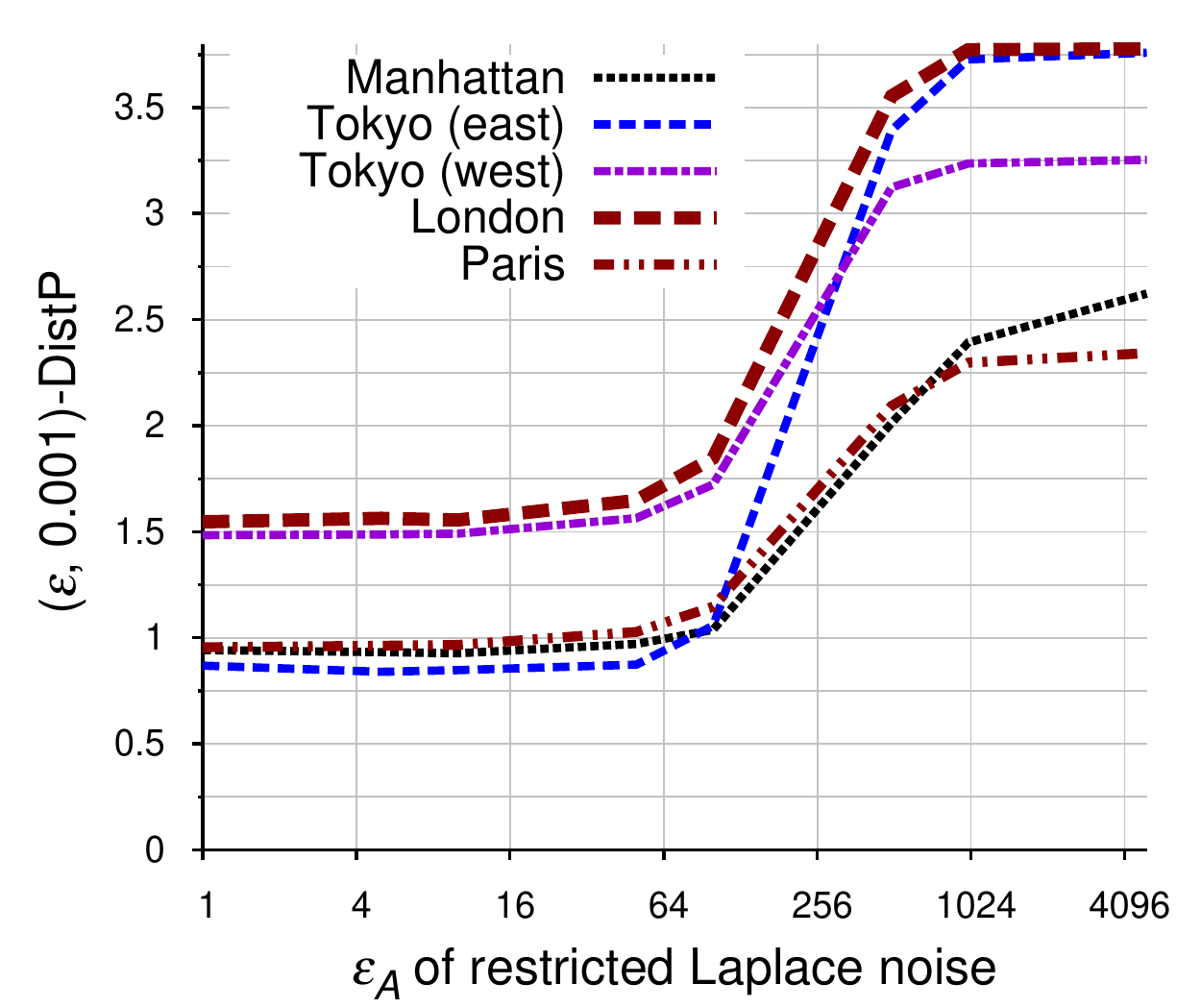}}}
\caption{Relationship between $\varepsilon$-\DistP{} and $\varepsilon_\alg$ of $(\varepsilon_\alg, 0.020)$-\RL{} mechanism (with $10$ dummies).\label{fig:cities+HO:DP-noises:DistP}}
\end{subfigure}\hspace{0.4ex}\hfill
\begin{subfigure}[t]{0.24\textwidth}
  \mbox{\raisebox{-10pt}{\includegraphics[height=26.0mm, width=1.00\textwidth]{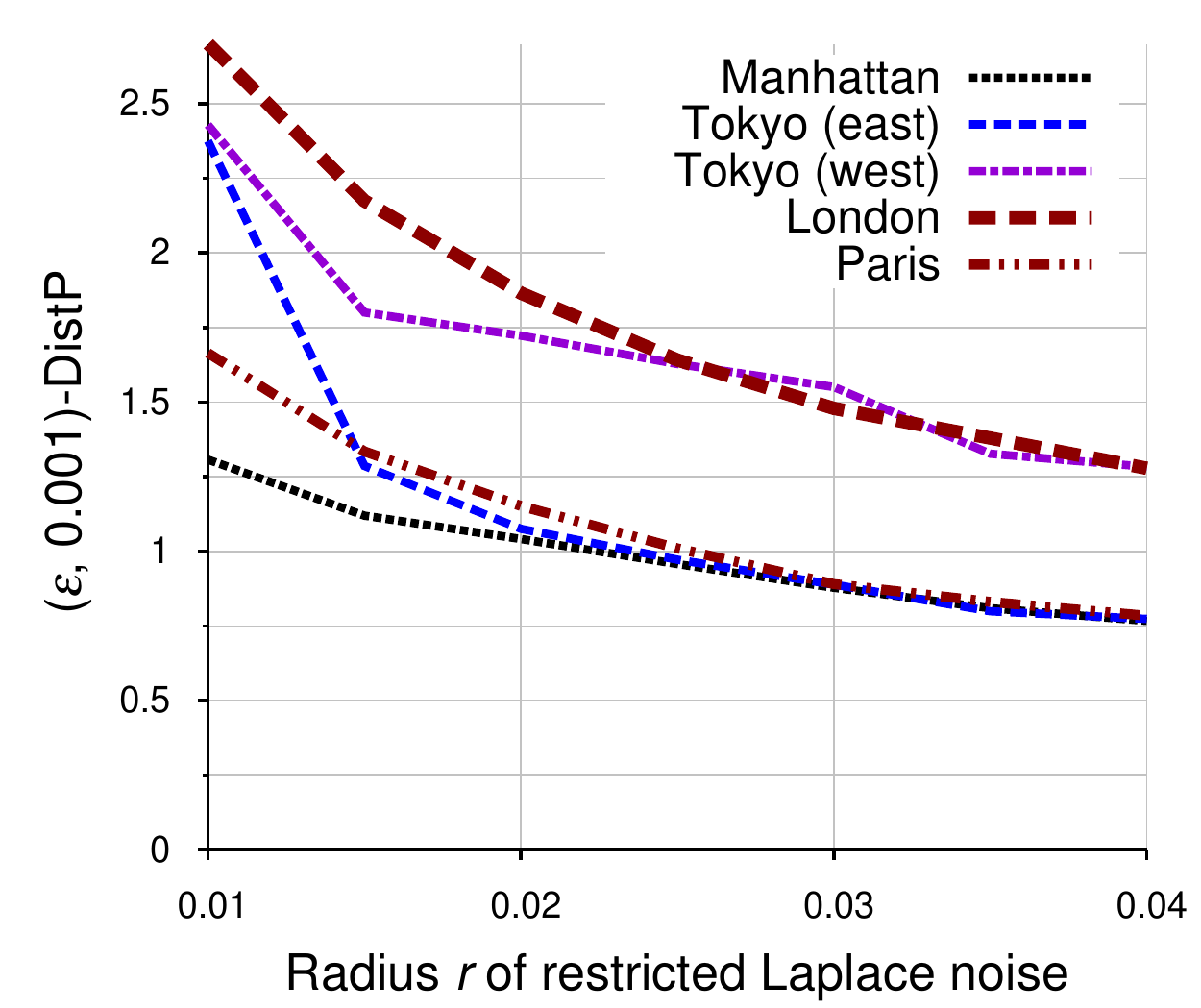}}}
\caption{Relationship between $\varepsilon$-\DistP{} and a radius $r$ of $(100, r)$-\RL{} mechanism (with $10$ dummies).\label{fig:cities+HO:radius:DistP}}
\end{subfigure}\hspace{0.4ex}\hfill
\begin{subfigure}[t]{0.24\textwidth}
  \mbox{\raisebox{-10pt}{\includegraphics[height=26.0mm, width=1.00\textwidth]{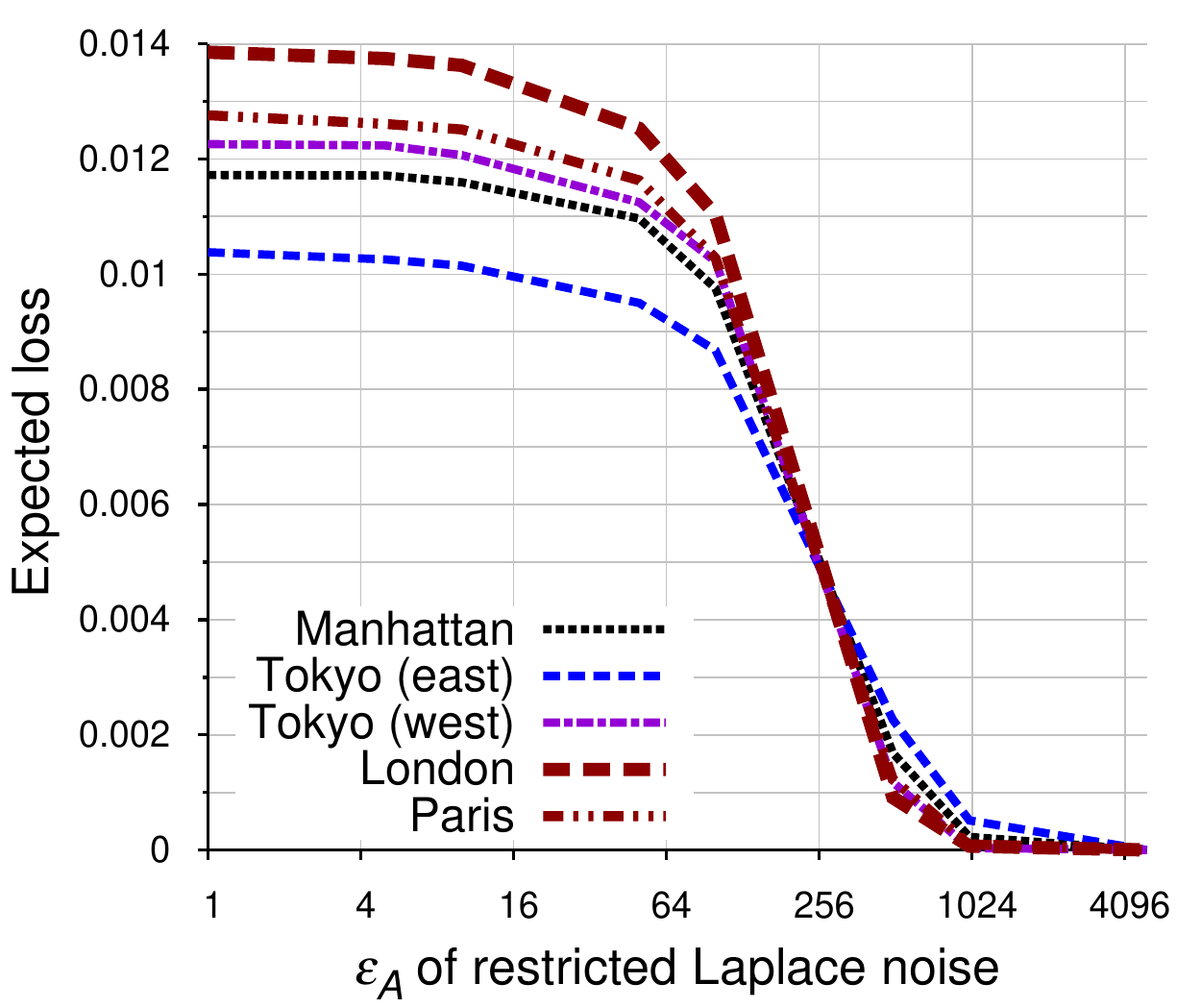}}}
\caption{Relationship between the expected loss and $\varepsilon_\alg$ of $(\varepsilon_\alg, \allowbreak r)$-\RL{} mechanism (with $5$ dummies).
\label{fig:cities+H:tupling:loss}}
\end{subfigure}
\vspace{-3mm}
\caption{Empirical \DistP{} and loss for \home{}/\outside{} in different cities.
\label{fig:cities+HO:tupling:privacy}}
\end{minipage}
\end{tabular}
\end{figure}

\newpage
\section{Proofs of Technical Results}
\label{sec:proofs}

\allowdisplaybreaks[1]

\subsection{Distribution Obfuscation by Point Obfuscation}
\label{sub:DistP-by-PointP}

In this section we show the results on how point obfuscation mechanisms provide distribution privacy.

\DPimpliesMaxDistP*

\begin{proof}
Assume $\alg$ provides $(\varepsilon, \delta)$-\DP{} w.r.t. $\varPhi$.
Let $\delta' = \delta\cdot|\varPhi|$, and
$(\lambda_0, \lambda_1)\in\lift{\varPhi}$.
By Definition~\ref{def:lifting-relations}, there is a coupling $\gamma\in\cp{\lambda_0}{\lambda_1}$ satisfying $\supp(\gamma)\subseteq\varPhi$.
By Definition~\ref{def:coupling}, we have:
\begin{align}
\text{for each $x_0\in\calx$, }~~  \lambda_0[x_0] &= {\textstyle\sum_{x_1\in\calx}}\, \gamma[x_0,x_1] \label{eq:coupling-project1} \\
\text{for each $x_1\in\calx$, }~~  \lambda_1[x_1] &= {\textstyle\sum_{x_0\in\calx}}\, \gamma[x_0,x_1] \label{eq:coupling-project2}
{.}
\end{align}

Let $R\subseteq\caly$ such that $\lift{\alg}(\lambda_0)[R] > \delta'$.
We first show that $\lift{\alg}(\lambda_1)[R] > 0$ as follows.
To derive a contradiction, we assume $\lift{\alg}(\lambda_1)[R] = 0$.
Then for any $x_1\in\supp(\lambda_1)$, $\alg(x_1)[R] = 0$.
Let $S_{x_1} = \{ x_0 \in\calx \mid (x_0, x_1)\in\varPhi \}$.
For any $x_0 \in S_{x_1}$, we have:
\begin{align*}
\alg(x_0)[R] 
&\le
e^{\varepsilon} \alg(x_1)[R] + \delta
&\! \text{(by the $(\varepsilon, \delta)$-\DP{} of $\!\alg$ w.r.t. $\!\varPhi$)}
\\ &\le \delta
&\! \text{(by $\alg(x_1)[R] = 0$)}
\end{align*}
Thus we obtain:
\begin{align*}
\lift{\alg}(\lambda_0)[R] 
&= 
\hspace{-1ex}\sum_{x_0\in\supp(\lambda_0)}\hspace{-2ex} \lambda_0[x_0] \alg(x_0)[R]
\\ &\le
|\supp(\lambda_0)| \cdot\delta
\\ &\le
|\varPhi| \cdot\delta
\hspace{5ex}\text{(by $|\supp(\lambda_0)| \le |\supp(\gamma)| \le |\varPhi|$)}
\\ &=
\delta'
\end{align*}
This contradicts the definition of $R$.
Hence $\lift{\alg}(\lambda_1)[R] > 0$.

Then we calculate the ratio of the probability that $\lift{\alg}$ outputs an element of $R$ given input $\lambda_0$ to that given input~$\lambda_1$:
\begin{align*}
&
\frac%
{\small \mbox{$\lift{\alg}(\lambda_0)[R] - \delta'$}}
{\small \mbox{$\lift{\alg}(\lambda_1)[R]$}}
\\ =&
\frac%
{ \bigl( \sum_{x_0\in\calx}
  \lambda_0[x_0] \cdot \alg(x_0)[R] \bigr) - \delta'~~
}
{ \sum_{x_1\in\calx}
  \lambda_1[x_1] \cdot \alg(x_1)[R]~~
}
\\ =&
\frac%
{ \sum_{(x_0, x_1) \in \supp(\gamma)}
  \hspace{-0ex}
  \bigl( \gamma[x_0, x_1] \alg(x_0)[R]
   - {\scriptstyle\frac{\delta'}{|\supp(\gamma)|}} \bigr)~
}
{ \sum_{(x_0, x_1) \in \supp(\gamma)}
  \gamma[x_0, x_1] \alg(x_1)[R]~~
}
~
\text{(by \eqref{eq:coupling-project1}, \eqref{eq:coupling-project2})}
\\[0.8ex] \le &
\max_{(x_0, x_1) \in \supp(\gamma)}\hspace{-1ex}
\frac%
{ \displaystyle \gamma[x_0, x_1] \cdot \alg(x_0)[R]
   - {\scriptstyle\frac{\delta'}{|\supp(\gamma)|}}~
}
{ \displaystyle \gamma[x_0, x_1] \cdot \alg(x_1)[R]~~
}
\\[0.8ex] \le &
\max_{(x_0, x_1) \in \supp(\gamma)}\hspace{-1ex}
\frac%
{ \displaystyle \alg(x_0)[R]
   - \delta~~
}
{ \displaystyle \alg(x_1)[R]~~
}
\hspace{6ex}
\text{(by $\scriptstyle-\frac{\delta\cdot|\varPhi|}{\gamma[x_0, x_1]\cdot|\supp(\gamma)|}\le-\delta$)}
\\[0.8ex] \le &~
e^{\varepsilon}
{.}
\hspace{5ex}
\text{(by $(x_0,x_1)\in\supp(\gamma) \subseteq \varPhi$ and $(\varepsilon, \delta)$-\DP{} w.r.t. $\varPhi$)}
\end{align*}
Therefore $\alg$ provides $(\varepsilon,\, \delta\cdot|\varPhi|)$-\DistP{} w.r.t. $\lift{\varPhi}$.
\qed
\end{proof}

\maxDistPFromDP*

\begin{proof}
Assume that $\alg$ provides $(\varepsilon, \utmetric, \delta)$-\XDP{} w.r.t. $\varPhi$.
Let $(\lambda_0, \lambda_1)\in\liftinf{\varPhi}$.
By definition,
there exists a coupling $\gamma\in\cp{\lambda_0}{\lambda_1}$ that satisfies $\supp(\gamma)\subseteq\varPhi$ and $\gamma\in\GammaInf(\lambda_0, \lambda_1)$.
Then $\gamma$ minimizes the sensitivity $\sens_{\supp(\gamma)}$; i.e.,  
$\displaystyle\gamma\in\!\hspace{-1.3ex}\argmin_{\gamma'\in\cp{\lambda_0}{\lambda_1}}\hspace{-1ex}\!\sens_{\supp(\gamma')}$.

Let $\delta' = \delta\cdot|\varPhi|$.
Let $R\subseteq\caly$ such that $\lift{\alg}(\lambda_0)[R] > \delta'$.
Analogously to the proof for Theorem~\ref{thm:DPimpliesMaxDistP}, 
we obtain $\lift{\alg}(\lambda_1)[R] > 0$.
Then it follows from $\supp(\gamma) \subseteq \varPhi$ and $(\varepsilon, \utmetric, \delta)$-\XDP{} w.r.t. $\varPhi$ that:
\begin{align*}
\frac%
{\small \mbox{$\lift{\alg}(\lambda_0)[R] - \delta'$}}
{\small \mbox{$\lift{\alg}(\lambda_1)[R]$}}
\le &
\max_{(x_0, x_1) \in \supp(\gamma)}\hspace{-1ex}
\frac%
{ \displaystyle \alg(x_0)[R]
   - \delta~~
}
{ \displaystyle \alg(x_1)[R]~~
}
\\[0.8ex] \le &
\max_{\substack{~\\[0.5ex](x_0, x_1) \in \supp(\gamma)}}
e^{\varepsilon \utmetric(x_0, x_1)}
\\ = 
&
e^{\varepsilon \sensinf(\lambda_0, \lambda_1)}
{.}
\hspace{3ex}
\text{(by $\gamma\in\GammaInf(\lambda_0, \lambda_1)$)}
\end{align*}
Therefore $\alg$ provides $(\varepsilon, \sensinf, \delta')$-\XDistP{} w.r.t. $\liftinf{\varPhi}$.
\qed
\end{proof}

\subsection{Distribution Privacy with Attacker's Close Beliefs}
\label{sub:proofs:DistP:beliefs}

Next we show the propositions on distribution privacy with an attacker's beliefs.

\beliefsXDistP*

\begin{proof}
We obtain the proposition by:
\begin{align*}
\lift{\alg}(\tilde{\lambda_0})[R] &\leq
e^{\varepsilon c} \cdot \lift{\alg}(\lambda_0)[R]
& \text{(by $\utmetric(\lambda_0, \tilde{\lambda_0}) \le c$)}
\\ &\leq
e^{\varepsilon \left( \utmetric(\lambda_0, \lambda_1) + c \right)}
\cdot \lift{\alg}(\lambda_1)[R]
\\ &\leq
e^{\varepsilon \left( \utmetric(\lambda_0, \lambda_1) + 2c \right)}
\cdot\lift{\alg}(\tilde{\lambda_1})[R]
& \text{~(by $\utmetric(\lambda_1, \tilde{\lambda_1}) \le c$)}
{.}
\end{align*}
\hfill\qed
\end{proof}

Similarly, we say that the attacker has \emph{close beliefs} w.r.t an adjacency relation $\varPsi$ if for each $(\lambda_0, \lambda_1)\in\varPsi$, we have $(\tilde{\lambda_0}, \lambda_0) \in \varPsi$ and $(\lambda_1, \tilde{\lambda_1}) \in \varPsi$.
Then we obtain:

\begin{restatable}[\DistP{} with close beliefs]{prop}{beliefsDistP}
\label{prop:beliefsDistP}
Let $\alg: \calx \rightarrow \Dists\caly$ provide $(\varepsilon,\delta)$-\DistP{} w.r.t. an adjacency relation $\varPsi \subseteq \Dists\calx\times\Dists\calx$.
Let $\delta' = (1+e^{\varepsilon}+e^{2\varepsilon})\delta$.
If an attacker has close beliefs w.r.t. $\varPsi$, 
then we obtain $(3\varepsilon, \delta')$-\DistP{} against the attacker, i.e., 
for all $(\lambda_0, \lambda_1)\in\varPsi$ and all $R \subseteq\caly$,\, we have:
\begin{align*}
\lift{\alg}(\tilde{\lambda_0})[R] \leq
e^{3\varepsilon}\cdot\lift{\alg}(\tilde{\lambda_1})[R] + \delta'
{.}
\end{align*}
\end{restatable}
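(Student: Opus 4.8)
The plan is to prove the bound by composing the $(\varepsilon,\delta)$-\DistP{} inequality three times along the chain $\tilde{\lambda_0} \to \lambda_0 \to \lambda_1 \to \tilde{\lambda_1}$, exploiting exactly the three membership facts provided by the close-belief hypothesis. Recall that $\alg$ provides $(\varepsilon,\delta)$-\DistP{} w.r.t. $\varPsi$, so for every pair in $\varPsi$ and every $R\subseteq\caly$ we may replace the left distribution by $e^{\varepsilon}$ times the right distribution plus an additive $\delta$. The close-belief assumption w.r.t. $\varPsi$ gives us precisely $(\tilde{\lambda_0}, \lambda_0)\in\varPsi$, $(\lambda_0,\lambda_1)\in\varPsi$, and $(\lambda_1,\tilde{\lambda_1})\in\varPsi$, which are the three links needed to connect the attacker's beliefs $\tilde{\lambda_0}, \tilde{\lambda_1}$ to the true distributions $\lambda_0,\lambda_1$ and to each other.

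First I would apply the definition to $(\tilde{\lambda_0}, \lambda_0)$ to pass from $\lift{\alg}(\tilde{\lambda_0})[R]$ to $e^{\varepsilon}\lift{\alg}(\lambda_0)[R] + \delta$. Next I would apply it to $(\lambda_0,\lambda_1)$ inside the leading term, and finally to $(\lambda_1,\tilde{\lambda_1})$, so that the leading coefficient multiplies up to $e^{3\varepsilon}$ and the target distribution becomes $\lift{\alg}(\tilde{\lambda_1})[R]$. Concretely the chain reads
\begin{align*}
\lift{\alg}(\tilde{\lambda_0})[R]
&\le e^{\varepsilon}\lift{\alg}(\lambda_0)[R] + \delta
\le e^{2\varepsilon}\lift{\alg}(\lambda_1)[R] + e^{\varepsilon}\delta + \delta \\
&\le e^{3\varepsilon}\lift{\alg}(\tilde{\lambda_1})[R] + e^{2\varepsilon}\delta + e^{\varepsilon}\delta + \delta
= e^{3\varepsilon}\lift{\alg}(\tilde{\lambda_1})[R] + (1 + e^{\varepsilon} + e^{2\varepsilon})\delta,
\end{align*}
which is exactly the claimed inequality with $\delta' = (1 + e^{\varepsilon} + e^{2\varepsilon})\delta$.

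There is no genuine obstacle here; the argument is a routine telescoping application of the \DistP{} definition. The only point requiring care is the bookkeeping of the additive error terms: the $\delta$ introduced at the first step is subsequently multiplied by the two remaining factors of $e^{\varepsilon}$, the $\delta$ from the second step by one factor, and the $\delta$ from the third step by none, yielding the geometric sum $(1 + e^{\varepsilon} + e^{2\varepsilon})\delta$ rather than a naive $3\delta$. I would also note that the ordering of the pairs in the close-belief definition (namely $(\tilde{\lambda_0},\lambda_0)$ and $(\lambda_1,\tilde{\lambda_1})$, with the tilde on the correct side of each) is what makes the three links compose in the single direction required, so the chain can be carried out without appealing to any symmetry of $\varPsi$.
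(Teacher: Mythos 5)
Your proof is correct and takes essentially the same route as the paper's: a three-step telescoping of the $(\varepsilon,\delta)$-\DistP{} inequality along the links $(\tilde{\lambda_0},\lambda_0)$, $(\lambda_0,\lambda_1)$, $(\lambda_1,\tilde{\lambda_1})\in\varPsi$ supplied by the close-belief hypothesis, arriving at $\delta'=(1+e^{\varepsilon}+e^{2\varepsilon})\delta$ exactly as in the paper. One cosmetic remark: your prose attribution of the error terms is reversed --- in your displayed chain the first step's $\delta$ enters with coefficient $1$ and the third step's with $e^{2\varepsilon}$, not the other way around --- but since the total is the same symmetric sum, the displayed inequalities and the conclusion are unaffected.
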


\begin{proof}
We obtain the proposition by:
\begin{align*}
\lift{\alg}(\tilde{\lambda_0})[R] &\leq
e^{\varepsilon} \cdot \lift{\alg}(\lambda_0)[R] + \delta
& \text{(by $(\tilde{\lambda_0}, \lambda_0)\in\varPsi$)}
\\ &\leq
e^{\varepsilon} 
(e^{\varepsilon} \cdot \lift{\alg}(\lambda_1)[R] + \delta) + \delta
& \text{(by $(\lambda_0, \lambda_1)\in\varPsi$)}
\\ &\leq
e^{\varepsilon} 
(e^{2\varepsilon} \cdot \lift{\alg}(\tilde{\lambda_1})[R] + (1+e^\varepsilon)\delta) + \delta
& \text{(by $(\lambda_1, \tilde{\lambda_1})\in\varPsi$)}
\\ &\leq
e^{3\varepsilon} \cdot \lift{\alg}(\tilde{\lambda_1})[R] + 
(1+e^{\varepsilon}+e^{2\varepsilon})\delta
{.}
&\hfill\qed
\end{align*}
\end{proof}

\subsection{Point Obfuscation by Distribution Obfuscation}
\label{sub:proofs:DistP-implies-DP}

Next we show that \DP{} is an instance of \DistP{} if an adjacency relation includes pairs of point distributions.
\begin{definition}[Point distribution]\label{def:point-distribution}\rm
For each $x\in\calx$, the \emph{point distribution} $\eta_x$ of $x$ is the distribution over $\calx$ such that:
$\eta_x[x'] = 1$ if $x' = x$, and $\eta_x[x'] = 0$ otherwise.
\end{definition}

\begin{lemma}\label{lem:lifting-includes-point-dist}
Let $\varPhi\subseteq\calx\times\calx$.
For any $(x_0, x_1)\in\varPhi$, we have $(\eta_{x_0}, \eta_{x_1})\in\liftinf{\varPhi}$.
\end{lemma}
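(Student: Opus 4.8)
The plan is to exhibit explicitly the unique coupling of the two point distributions and verify it satisfies both conditions in the definition of $\liftinf{\varPhi}$. First I would observe that there is exactly one coupling $\gamma\in\cp{\eta_{x_0}}{\eta_{x_1}}$: since the first marginal $\eta_{x_0}$ places all its mass on $x_0$, the marginal constraint in Definition~\ref{def:coupling} forces every row $x\neq x_0$ of $\gamma$ to be zero; and since the second marginal $\eta_{x_1}$ places all its mass on $x_1$, every column $x'\neq x_1$ must be zero. Hence the only admissible coupling is the point distribution $\gamma$ on $\calx\times\calx$ with $\gamma[x_0,x_1]=1$ and $\gamma[\cdot]=0$ elsewhere.

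Next I would check the two required properties. Its support is $\supp(\gamma)=\{(x_0,x_1)\}$, which is contained in $\varPhi$ by the hypothesis $(x_0,x_1)\in\varPhi$, so the support condition $\supp(\gamma)\subseteq\varPhi$ holds. For the Wasserstein condition, note that because $\gamma$ is the \emph{only} coupling of $\eta_{x_0}$ and $\eta_{x_1}$, the minimum defining $\Winfu(\eta_{x_0},\eta_{x_1})$ in Definition~\ref{def:p-Wasserstein-metric} is necessarily attained by $\gamma$; concretely $\Winfu(\eta_{x_0},\eta_{x_1})=\max_{(a,b)\in\supp(\gamma)}\utmetric(a,b)=\utmetric(x_0,x_1)$, so $\gamma\in\GammaInf(\eta_{x_0},\eta_{x_1})$. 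Having produced a coupling with $\supp(\gamma)\subseteq\varPhi$ and $\gamma\in\GammaInf(\eta_{x_0},\eta_{x_1})$, the definition of $\liftinf{\varPhi}$ yields $(\eta_{x_0},\eta_{x_1})\in\liftinf{\varPhi}$.

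There is essentially no hard step here: the entire content is the uniqueness of the coupling, which makes both the support containment and the optimality in $\GammaInf$ immediate. The only point requiring a line of care is the uniqueness argument itself, i.e., reading off from the two marginal constraints that no mass can sit anywhere except at $(x_0,x_1)$; once that is established, the lemma follows directly.
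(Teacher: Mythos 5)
Your proof is correct and follows essentially the same route as the paper: exhibit the unique coupling $\gamma$ with $\gamma[x_0,x_1]=1$, note $\supp(\gamma)=\{(x_0,x_1)\}\subseteq\varPhi$, and conclude $\gamma\in\GammaInf(\eta_{x_0},\eta_{x_1})$ from uniqueness. You merely spell out the uniqueness argument via the marginal constraints, which the paper asserts without detail.
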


\begin{proof}
Let $(x_0, x_1)\in\varPhi$.
We define $\gamma\in\Dists(\calx\times\calx)$ by:
$\gamma[x, x'] = 1$ if $x = x_0$ and $x' = x_1$, and
$\gamma[x, x'] = 0$ otherwise.
Then $\gamma$ is the only coupling between $\eta_{x_0}$ and $\eta_{x_1}$, hence $\gamma\in\GammaInf(\eta_{x_0}, \eta_{x_1})$.
Also $\supp(\gamma) = \{(x_0, x_1)\} \subseteq \varPhi$.
Therefore 
by definition,
we obtain $(\eta_{x_0}, \eta_{x_1})\in\liftinf{\varPhi}$.
\qed
\end{proof}

\begin{restatable}[\DistP{} $\Rightarrow$ \DP{} and \XDistP{} $\Rightarrow$ \XDP{}]{thm}{DPfromDistP}
\label{thm:DPfromDistP}
Let $\varepsilon\in\realsnng$,\, $\varPhi\subseteq\calx\times\calx$, and $\alg:\calx\rightarrow\Dists\caly$ be a randomized algorithm.
\begin{enumerate}
\item
If $\alg$ provides $(\varepsilon, \delta)$-\DistP{} w.r.t. $\lift{\varPhi}$, 
it provides $(\varepsilon, \delta)$-\DP{} w.r.t. $\varPhi$.
\item
If $\alg$ provides $(\varepsilon, \Winfu, \delta)$-\XDistP{} w.r.t. $\liftinf{\varPhi}$, it provides $(\varepsilon, \utmetric, \delta)$-\XDP{} w.r.t.~$\varPhi$.
\end{enumerate}
\end{restatable}

\begin{proof}
We prove the first claim as follows.
Assume that $\alg$ provides $(\varepsilon, \delta)$-\DistP{} w.r.t. $\lift{\varPhi}$.
Let $(x_0, x_1)\in\varPhi$, and
$\eta_{x_0}$ and $\eta_{x_1}$ be the point distributions, 
defined in Definition~\ref{def:point-distribution}.
By Lemma~\ref{lem:lifting-includes-point-dist} and $\liftinf{\varPhi}\subseteq\lift{\varPhi}$, we have $(\eta_{x_0}, \eta_{x_1})\in\lift{\varPhi}$.
It follows from $(\varepsilon, \delta)$-\DistP{} that for any $R\subseteq\caly$, we obtain:
\begin{align*}
\alg(x_0)[R] =
\lift{\alg}(\eta_{x_0})[R] \leq
e^{\varepsilon}\cdot\lift{\alg}(\eta_{x_1})[R] + \delta =
e^{\varepsilon}\cdot\alg(x_1)[R] + \delta
.
\end{align*}
Hence $\alg$ provides $(\varepsilon, \delta)$-\DP{} w.r.t.~$\varPhi$.

Next we show the second claim.
Assume that $\alg$ provides $(\varepsilon, \Winfu, \delta)$-\XDistP{} w.r.t. $\liftinf{\varPhi}$.
Let $(x_0, x_1)\in\varPhi$, and
$\eta_{x_0}$ and $\eta_{x_1}$ be the point distributions.
By Lemma~\ref{lem:lifting-includes-point-dist}, we have $(\eta_{x_0}, \eta_{x_1})\in\liftinf{\varPhi}$.
Then for any $R\subseteq\caly$, we obtain:
\begin{align*}
\alg(x_0)[R] &= \lift{\alg}(\eta_{x_0})[R] 
\\ &\leq
e^{\varepsilon \Winfu(\eta_{x_0}, \eta_{x_1})}\cdot\lift{\alg}(\eta_{x_1})[R] + \delta 
~~~\text{(by \XDistP{} of $\alg$)}
\\ &=
e^{\varepsilon \utmetric(x_0, x_1)}\cdot\alg(x_1)[R] + \delta
\end{align*}
where the last equality follows from the definition of $\Winfu$.
Hence $\alg$ provides $(\varepsilon, \utmetric, \delta)$-\XDP{} w.r.t.~$\varPhi$.
\qed
\end{proof}

\begin{figure}[t]\label{fig:compositions}
\centering
\begin{subfigure}[t]{0.45\textwidth}
\centering
\begin{picture}(105, 75)
 \put( 57, 69){$\alg_1 \Seq \alg_0$}
 \thicklines
 \put( 53, 40){\framebox(40,20){$\alg_0$}}
 \put( 53,  0){\framebox(40,20){$\alg_1$}}

 \put(   0,  30){\vector(  1,  0){36}}
 \put(  36,  10){\line(  0,  1){40}}
 \put(  36,  50){\vector(  1,  0){16}}
 \put(  36,  10){\vector(  1,  0){16}}
 \put(  73,  38){\vector(  0, -1){16}}
 \put(  95,  10){\vector(  1,  0){28}}
 \thicklines
 \put(   1,  36){$x$ {\footnotesize ($\sim\lambda$)}}
 \put(  78,  29){$y_0$}
 \put( 105,  16){$y_1$}
 \put(  47, -3){\dashbox{1.0}(51,67){}}
\end{picture}
\caption{Composition $\Seq$ with shared input.\label{fig:sequential-shared}}
\end{subfigure}\hspace{0.1ex}\hfill
\begin{subfigure}[t]{0.51\textwidth}
\centering
\begin{picture}(105, 75)
 \put( 51, 69){$\alg_1 \liftSeq \alg_0$}
 \thicklines
 \put( 48, 40){\framebox(40,20){$\alg_0$}}
 \put( 48,  0){\framebox(40,20){$\alg_1$}}

 \put(  5,  50){\vector(  1,  0){41}}
 \put(  5,  10){\vector(  1,  0){41}}
 \put( 68,  38){\vector(  0, -1){16}}
 \put( 90,  10){\vector(  1,  0){28}}
 \thicklines
 \put(  4,  56){$x_0$ {\footnotesize ($\sim\lambda$)}}
 \put(  4,  16){$x_1$ {\footnotesize ($\sim\lambda$)}}
 \put(  73,  29){$y_0$}
 \put( 100,  16){$y_1$}
 \put( 42, -3){\dashbox{1.0}(51,67){}}
\end{picture}
\caption{Composition $\liftSeq$ with independent inputs.\label{fig:sequential-independent}}
\end{subfigure}\hspace{0.4ex}\hfill
\caption{Two kinds of sequential compositions $\Seq$ and $\liftSeq$.\label{fig:sequentials}}
\end{figure}

\subsection{Sequential Compositions $\Seq$ and $\liftSeq$}
\label{subsec:compositionality}

In this section we show two kinds of compositionality.

\subsubsection{Sequential Composition $\Seq$ with Shared Input}
\label{sub:compositionality:Seq}
We first present the definition of the sequential composition with shared input (Fig.~\ref{fig:sequential-shared}).

\begin{definition}[Sequential composition $\Seq$]\label{def:seq}\rm
Given two randomized algorithms $\alg_0:\calx\rightarrow\Dists\caly_0$ and $\alg_1:\caly_0\times\calx\rightarrow\Dists\caly_1$, we define the \emph{sequential composition} of $\alg_0$ and $\alg_1$ as the randomized algorithm $\alg_1 \Seq \alg_0: \calx\rightarrow\Dists\caly_1$ such that: for any $x\in\calx$,\,
$(\alg_1 \Seq \alg_0)(x) = \alg_1(\alg_0(x), x))$.
\end{definition}

Then we show that it is harder to obfuscate distributions when an identical input is applied to the mechanism multiple times.

\begin{restatable}[Sequential composition $\Seq$ of $(\varepsilon, \delta)$-\DistP{}]{prop}{CompositionInf}\label{prop:Composition:Inf}
Let 
$\varPhi\subseteq\calx\times\calx$.
If $\alg_0:\calx\rightarrow\Dists\caly_0$ provides $(\varepsilon_0, \delta_0)$-\DistP{} w.r.t. $\lift{\varPhi}$ 
and for each $y_0\in\caly_0$, $\alg_1(y_0): \calx\rightarrow\Dists\caly_1$ provides $(\varepsilon_1, \delta_1)$-\DistP{} w.r.t. $\lift{\varPhi}$ then the sequential composition $\alg_1 \Seq \alg_0$ provides $(\varepsilon_0+\varepsilon_1, (\delta_0+\delta_1)\cdot|\varPhi|)$-\DistP{} w.r.t. $\lift{\varPhi}$.
\end{restatable}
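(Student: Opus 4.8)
The plan is to avoid reasoning about distributions directly and instead reduce the claim to the ordinary (point) sequential composition of differential privacy, exploiting the two conversion theorems already established between \DP{} and \DistP{}. The decisive observation is that the factor $|\varPhi|$ appearing in the conclusion is exactly the penalty incurred when one converts a \DP{} guarantee back into a \DistP{} guarantee via Theorem~\ref{thm:DPimpliesMaxDistP}; this strongly suggests that the argument should pass through the point-privacy level, where adjacency is governed by $\varPhi$ itself rather than by its lifting $\lift{\varPhi}$.

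Concretely, I would proceed in three steps. First, I would strip the hypotheses down to point privacy: since $\alg_0$ is $(\varepsilon_0,\delta_0)$-\DistP{} w.r.t. $\lift{\varPhi}$, Theorem~\ref{thm:DPfromDistP}(1) yields that $\alg_0$ is $(\varepsilon_0,\delta_0)$-\DP{} w.r.t. $\varPhi$, and likewise each $\alg_1(y_0)$ is $(\varepsilon_1,\delta_1)$-\DP{} w.r.t. $\varPhi$. Second, I would invoke the adaptive sequential composition theorem for differential privacy, applied to the transcript mechanism $C(x) = \bigl(\alg_0(x),\, \alg_1(\alg_0(x),x)\bigr)$: treating the intermediate output $y_0$ as the adaptively chosen parameter and $x$ as the single protected input shared by both stages, $C$ is $(\varepsilon_0+\varepsilon_1,\, \delta_0+\delta_1)$-\DP{} w.r.t. $\varPhi$. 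Since $\alg_1 \Seq \alg_0$ is the post-processing of $C$ that projects onto its second coordinate, it too is $(\varepsilon_0+\varepsilon_1,\, \delta_0+\delta_1)$-\DP{} w.r.t. $\varPhi$. Third, I would lift back to the distribution level: applying Theorem~\ref{thm:DPimpliesMaxDistP} to the composite mechanism gives $(\varepsilon_0+\varepsilon_1,\, (\delta_0+\delta_1)\cdot|\varPhi|)$-\DistP{} w.r.t. $\lift{\varPhi}$, which is precisely the claim.

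The main obstacle is the second step, the adaptive \DP{} composition in the \emph{shared-input} setting. A direct distribution-level proof is awkward: conditioning the composite output on the intermediate value $y_0$ replaces the two input distributions $\lambda_0,\lambda_1$ by their $y_0$-posteriors, and these reweighted distributions need not remain related by $\lift{\varPhi}$, so the \DistP{} hypothesis on $\alg_1(y_0)$ cannot be applied to them. Reducing to point privacy removes this difficulty entirely, because there adjacency is a fixed pair $(x,x')\in\varPhi$ on which $\alg_0$ and every $\alg_1(y_0)$ are simultaneously \DP{}. For the $\delta=0$ case the composition bound is immediate from multiplying the two pointwise ratios $e^{\varepsilon_0}$ and $e^{\varepsilon_1}$ under the sum over $y_0$; for $\delta>0$ I would either cite the standard adaptive composition theorem or reproduce its short proof, taking the usual care to absorb the terms $\delta_0,\delta_1$ additively rather than multiplicatively. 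A minor point worth recording is that the guarantee is first established for the full transcript $C$ and only then transferred to $\alg_1 \Seq \alg_0$ by post-processing, which is harmless since post-processing can only improve privacy.
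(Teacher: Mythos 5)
Your proposal is correct and follows essentially the same route as the paper's own proof: convert both hypotheses to $(\varepsilon_b,\delta_b)$-\DP{} w.r.t.\ $\varPhi$ via Theorem~\ref{thm:DPfromDistP}, apply the standard (adaptive) sequential composition theorem for \DP{} to get $(\varepsilon_0+\varepsilon_1,\delta_0+\delta_1)$-\DP{} of $\alg_1 \Seq \alg_0$, and lift back via Theorem~\ref{thm:DPimpliesMaxDistP} to obtain $(\varepsilon_0+\varepsilon_1,(\delta_0+\delta_1)\cdot|\varPhi|)$-\DistP{} w.r.t.\ $\lift{\varPhi}$. The only difference is that you spell out the adaptive-composition step (transcript mechanism plus post-processing) and the reason a direct distribution-level argument fails, both of which the paper leaves implicit by simply citing the \DP{} composition theorem.
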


\begin{proof}
By Theorem~\ref{thm:DPfromDistP}, $\alg_0$ provides $(\varepsilon_0, \delta_0)$-\DP{} w.r.t. $\varPhi$, 
and for each $y_0\in\caly_0$, $\alg_1(y_0)$ provides $(\varepsilon_1, \delta_1)$-\DP{} w.r.t. $\varPhi$.
By the sequential composition theorem for \DP{} mechanisms, $\alg_1 \Seq \alg_0$ provides $(\varepsilon_0+\varepsilon_1, \delta_0+\delta_1)$-\DP{} w.r.t. $\varPhi$.
By Theorem~\ref{thm:DPimpliesMaxDistP}, $\alg_1 \Seq \alg_0$ provides $(\varepsilon_0+\varepsilon_1, (\delta_0+\delta_1)\cdot|\varPhi|)$-\DistP{} w.r.t. $\lift{\varPhi}$.
\end{proof}

The compositionality for \XDistP{} can be shown analogously to Proposition~\ref{prop:Composition:Inf}.

\begin{restatable}[Sequential composition $\Seq$ of $(\varepsilon, \delta)$-\XDistP{}]{prop}{CompositionInfXDistP}\label{prop:Composition:Inf:XDistP}
Let 
$\varPhi\subseteq\calx\times\calx$.
If $\alg_0:\calx\rightarrow\Dists\caly_0$ provides $(\varepsilon_0,  \sensinf, \delta_0)$-\XDistP{} w.r.t. $\liftinf{\varPhi}$ 
and for each $y_0\in\caly_0$, $\alg_1(y_0): \calx\rightarrow\Dists\caly_1$ provides $(\varepsilon_1, \sensinf, \delta_1)$-\XDistP{} w.r.t. $\liftinf{\varPhi}$ then the sequential composition $\alg_1 \Seq \alg_0$ provides $(\varepsilon_0+\varepsilon_1, \sensinf, \allowbreak (\delta_0+\delta_1)\cdot|\varPhi|)$-\XDistP{} w.r.t. $\liftinf{\varPhi}$.
\end{restatable}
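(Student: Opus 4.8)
The plan is to mirror the three-step reduction used in the proof of Proposition~\ref{prop:Composition:Inf}, replacing each appeal to a \DistP{}/\DP{} result with its \XDistP{}/\XDP{} counterpart. The strategy is to push the two \XDistP{} hypotheses down to the level of point privacy (\XDP{}), compose there, and then lift the composed mechanism back up to \XDistP{}. Recall that $\sensinf$ and $\Winfu$ denote the same metric $\mathit{W}_{\infty,\utmetric}$, so the conversion theorems apply verbatim.

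First I would invoke the second claim of Theorem~\ref{thm:DPfromDistP}: since $\alg_0$ provides $(\varepsilon_0, \sensinf, \delta_0)$-\XDistP{} w.r.t. $\liftinf{\varPhi}$, it provides $(\varepsilon_0, \utmetric, \delta_0)$-\XDP{} w.r.t. $\varPhi$, and likewise each $\alg_1(y_0)$ provides $(\varepsilon_1, \utmetric, \delta_1)$-\XDP{} w.r.t. $\varPhi$. This is the crucial move, because it converts the coupling-based metric $\sensinf$ on $\Dists\calx$ into the underlying metric $\utmetric$ on $\calx$, on which an ordinary sequential composition argument applies. Next I would apply the sequential composition theorem for \XDP{} mechanisms, which holds because for adjacent $x, x'$ the two multiplicative factors $e^{\varepsilon_0 \utmetric(x,x')}$ and $e^{\varepsilon_1 \utmetric(x,x')}$ multiply to $e^{(\varepsilon_0+\varepsilon_1)\utmetric(x,x')}$ while the additive slacks $\delta_0, \delta_1$ add; the adaptive dependence of $\alg_1$ on the first output $y_0$ is handled exactly as in the \DP{} case. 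This yields that $\alg_1 \Seq \alg_0$ provides $(\varepsilon_0+\varepsilon_1, \utmetric, \delta_0+\delta_1)$-\XDP{} w.r.t. $\varPhi$. Finally, Theorem~\ref{thm:max-DistPFromDP} lifts this back to $(\varepsilon_0+\varepsilon_1, \sensinf, (\delta_0+\delta_1)\cdot|\varPhi|)$-\XDistP{} w.r.t. $\liftinf{\varPhi}$, which is the claimed conclusion; note that the factor $|\varPhi|$ on the slack enters precisely at this lifting step, just as it does in Proposition~\ref{prop:Composition:Inf}.

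The main obstacle is the middle step: establishing (or citing) the sequential composition theorem for \XDP{} in the adaptive form where $\alg_1$ receives both the intermediate output $y_0$ and the original input $x$. Unlike the \DistP{} proof, which could cite a standard \DP{} composition theorem, here I must check that the metric $\utmetric$ is preserved uniformly across the composition and that the parameterization of $\alg_1$ by $y_0$ does not interfere with the $\utmetric$-scaling of the privacy budget. Once this \XDP{} composition is in hand, the two conversions via Theorem~\ref{thm:DPfromDistP} and Theorem~\ref{thm:max-DistPFromDP} are immediate, and no new estimate on couplings or Wasserstein distances is required.
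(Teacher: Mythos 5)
Your proposal is correct and follows the paper's own proof essentially verbatim: descend to $(\varepsilon_b, \utmetric, \delta_b)$-\XDP{} via the second claim of Theorem~\ref{thm:DPfromDistP}, compose with the sequential composition theorem for \XDP{} mechanisms, and lift back via Theorem~\ref{thm:max-DistPFromDP}, with the factor $|\varPhi|$ entering exactly at that last step. The adaptive-composition point you flag as the main obstacle is treated no more carefully in the paper, which simply cites the \XDP{} composition theorem, so there is no substantive difference between the two arguments.
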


\begin{proof}
By Theorem~\ref{thm:DPfromDistP}, $\alg_0$ provides $(\varepsilon_0, \utmetric, \delta_0)$-\XDP{} w.r.t. $\varPhi$, 
and for each $y_0\in\caly_0$, $\alg_1(y_0)$ provides $(\varepsilon_1, \utmetric, \delta_1)$-\XDP{} w.r.t. $\varPhi$.
By the sequential composition theorem for \XDP{} mechanisms, $\alg_1 \Seq \alg_0$ provides $(\varepsilon_0+\varepsilon_1, \utmetric, \delta_0+\delta_1)$-\XDP{} w.r.t. $\varPhi$.
By Theorem~\ref{thm:max-DistPFromDP}, $\alg_1 \Seq \alg_0$ provides $(\varepsilon_0+\varepsilon_1, \sensinf, (\delta_0+\delta_1)\cdot|\varPhi|)$-\XDistP{} w.r.t. $\liftinf{\varPhi}$.
\end{proof}

\subsubsection{Sequential Composition $\liftSeq$ with Independent Sampling}
\label{subsec:compositionality:lift}

We first note that the lifting of the sequential composition $\lift{(\alg_1 \Seq \alg_0)}$ does not coincide with the sequential composition of the liftings $\lift{\alg_1} \Seq \lift{\alg_1}$.
\arxiv{Then}
\conference{Then}
the latter is a randomized algorithm $\lift{\alg_1} \Seq \lift{\alg_0}: \Dists\calx\rightarrow\Dists\caly_1$ such that for any $\lambda\in\Dists\calx$,\,
$(\lift{\alg_1} \Seq \lift{\alg_0})(\lambda) = \lift{\alg_1}(\lift{\alg_0}(\lambda), \lambda)$.
Then each of $\alg_0$ and $\alg_1$ receives an input independently sampled from $\lambda$.
This is different from $\alg_1 \Seq \alg_0$, in which $\alg_0$ and $\alg_1$ share an identical input drawn from $\lambda$, as shown in Figure~\ref{fig:sequential-shared}.

To see this difference in detail, we deal with another definition of sequential composition.
\begin{definition}[Sequential composition $\liftSeq$]\label{def:seq:lift}\rm
Given two randomized algorithms $\alg_0:\calx\rightarrow\Dists\caly_0$ and $\alg_1:\caly_0\times\calx\rightarrow\Dists\caly_1$, we define the \emph{sequential composition} of $\alg_0$ and $\alg_1$ as the randomized algorithm $\alg_1 \liftSeq \alg_0: \calx\times\calx\rightarrow\Dists\caly_1$ such that: for any $x_0, x_1\in\calx$,\,
$(\alg_1 \liftSeq \alg_0)(x_0, x_1) = \alg_1(\alg_0(x_0), x_1))$.
\end{definition}

Then the lifting of the sequential composition $\lift{(\alg_1 \liftSeq \alg_0)}$ coincides with the sequential composition of the liftings $\lift{\alg_1\!} \liftSeq \lift{\alg_1\!}$ in the sense that
for any $\lambda_0, \lambda_1 \in\Dists\calx$,\,
\begin{align}\label{eq:relation:product}
\lift{(\alg_1 \liftSeq \alg_0)}(\lambda_0 {\times} \lambda_1) 
&= 
\lift{\alg_1\!}(\lift{\alg_0\!}(\lambda_0), \lambda_1) 
=
(\lift{\alg_1\!} \liftSeq \lift{\alg_1\!})(\lambda_0, \lambda_1),
\end{align}
where $\lambda_0\times\lambda_1$ is the probability distribution over $\calx\times\calx$ such that for all $x_0, x_1\in\calx$,\, $(\lambda_0\times\lambda_1)[x_0, x_1] = \lambda_0[x_0] \lambda_1[x_1]$.

To show the compositionality for distribution privacy, we introduce an operator $\diamond$ between binary relations $\varPsi_0$ and $\varPsi_1$ by:
\[
\varPsi_0 \diamond \varPsi_1 =
\{ (\lambda_0\times\lambda_1,\, \lambda'_0\times\lambda'_1) \mid
   (\lambda_0,\lambda'_0)\in\varPsi_0, (\lambda_1,\lambda'_1)\in\varPsi_1 \}.
\]

\begin{figure}[t]
\centering
\begin{picture}(169, 68)
 \put( 55, 68){$\lift{\alg_1} \Seq \lift{\alg_0}$}
 \thicklines
 \put( 58, 40){\framebox(40,20){$\lift{\alg_0}$}}
 \put( 58,  0){\framebox(40,20){$\lift{\alg_1}$}}

 \put(   8,  50){\vector(  1,  0){48}}
 \put(   8,  10){\vector(  1,  0){48}}
 \put(  78,  38){\vector(  0, -1){16}}
 \put( 100,  10){\vector(  1,  0){30}}
 \thicklines
 \put(  18,  53){$\lambda$}
 \put(  18,  15){$\lambda$}
 \put(  83,  29){$\mu_0$}
 \put( 110,  16){$\mu_1$}
 \put(52, -3){\dashbox{1.0}(51,67){}}
\end{picture}
\caption{Sequential composition $\lift{\alg_1} \Seq \lift{\alg_0}$ involves two independent samplings from~$\lambda$.\label{fig:sequential-lift}}
\end{figure}

\begin{restatable}[Sequential composition $\liftSeq$ of $(\varepsilon, \delta)$-\DistP{}]{prop}{CompositionInfLift}\label{prop:Composition:InfLift}
Let 
$\varPsi\subseteq\Dists\calx\times\Dists\calx$.
If $\alg_0:\calx\rightarrow\Dists\caly_0$ provides $(\varepsilon_0, \delta_0)$-\DistP{} w.r.t. $\varPsi$ 
and for each $y_0\in\caly_0$, $\alg_1(y_0): \calx\rightarrow\Dists\caly_1$ provides $(\varepsilon_1, \delta_1)$-\DistP{} w.r.t. $\varPsi$ then the sequential composition $\alg_1 \liftSeq \alg_0$ provides $(\varepsilon_0+\varepsilon_1, \delta_0+\delta_1)$-\DistP{} w.r.t. $\varPsi \diamond \varPsi$.
\end{restatable}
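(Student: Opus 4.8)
The plan is to reduce the claim to the standard (adaptive) sequential composition theorem for approximate differential privacy, exactly as Proposition~\ref{prop:Composition:Inf} was reduced to the composition theorem for \DP{}, but carried out at the level of distributions rather than points. The starting point is the identity \eqref{eq:relation:product}, which lets me expand, for any product inputs and any $R\subseteq\caly_1$,
\begin{align*}
\lift{(\alg_1 \liftSeq \alg_0)}(\lambda_0\times\lambda_1)[R]
= \sum_{y_0\in\caly_0} \lift{\alg_0}(\lambda_0)[y_0]\,\lift{\alg_1(y_0)}(\lambda_1)[R],
\end{align*}
so that $\lift{(\alg_1 \liftSeq \alg_0)}$ restricted to product distributions is literally the adaptive sequential composition of $\lift{\alg_0}$ (fed $\lambda_0$) with the family $\lift{\alg_1(y_0)}$ (fed $\lambda_1$), followed by marginalizing out the intermediate output $y_0$.

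To fit the standard theorem, I would treat the whole pair $(\lambda_0,\lambda_1)$ as a single ``database'' and define two mechanisms on $\Dists\calx\times\Dists\calx$: the mechanism $\bar B_0(\lambda_0,\lambda_1)=\lift{\alg_0}(\lambda_0)$, which ignores its second argument, and $\bar B_1(y_0,(\lambda_0,\lambda_1))=\lift{\alg_1(y_0)}(\lambda_1)$, which ignores its first. Taking the adjacency to be the pairs $\bigl((\lambda_0,\lambda_1),(\lambda'_0,\lambda'_1)\bigr)$ with $(\lambda_0,\lambda'_0)\in\varPsi$ and $(\lambda_1,\lambda'_1)\in\varPsi$ (these are exactly the pairs generating $\varPsi\diamond\varPsi$), the \DistP{} hypothesis on $\alg_0$ shows $\bar B_0$ is $(\varepsilon_0,\delta_0)$-\DP{} for this adjacency, and the \DistP{} hypothesis on each $\alg_1(y_0)$ shows each $\bar B_1(y_0,\cdot)$ is $(\varepsilon_1,\delta_1)$-\DP{}. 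Both inequalities are immediate from Definition~\ref{def:max-DistP}, since $\bar B_0$ depends only on the $\varPsi$-perturbed first coordinate and $\bar B_1$ only on the $\varPsi$-perturbed second.

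Now the composed output on $(\lambda_0,\lambda_1)$ --- sample $y_0\sim\bar B_0(\lambda_0,\lambda_1)$, then $z\sim\bar B_1(y_0,(\lambda_0,\lambda_1))$ --- has marginal on $z$ equal to $\lift{(\alg_1 \liftSeq \alg_0)}(\lambda_0\times\lambda_1)$. Applying the adaptive sequential composition theorem for $(\varepsilon,\delta)$-\DP{} gives $(\varepsilon_0+\varepsilon_1,\delta_0+\delta_1)$-\DP{} for the joint output $(y_0,z)$ with respect to this adjacency, and post-processing (marginalizing out $y_0$) preserves it; this is exactly $(\varepsilon_0+\varepsilon_1,\delta_0+\delta_1)$-\DistP{} w.r.t.\ $\varPsi\diamond\varPsi$.

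I expect the main obstacle to be obtaining the additive budget $\delta_0+\delta_1$ rather than the $\delta_0+e^{\varepsilon_0}\delta_1$ that a naive two-step argument would produce: changing $\lambda_0\to\lambda'_0$ and then $\lambda_1\to\lambda'_1$ in sequence and chaining the two \DistP{} inequalities loses a factor $e^{\varepsilon}$ on one of the $\delta$ terms. The remedy --- and the one genuinely load-bearing move --- is to route everything through the standard composition theorem, whose proof already avoids this loss, rather than through hybrid inequalities; the auxiliary point to get right is that $\bar B_0$ and $\bar B_1$ must be presented as mechanisms on the common input $(\lambda_0,\lambda_1)$ sharing one adjacency relation, so that the single-database composition theorem applies verbatim.
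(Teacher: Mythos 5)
Your proof is correct and takes essentially the same route as the paper's: the paper likewise observes that, by Definition~\ref{def:max-DistP}, $\lift{\alg_0}$ and each $\lift{\alg_1(y_0)}$ provide $(\varepsilon_b,\delta_b)$-\DP{} w.r.t.\ $\varPsi$, then invokes the standard sequential composition theorem for \DP{} mechanisms together with Equation~\eqref{eq:relation:product} to get $(\varepsilon_0+\varepsilon_1,\delta_0+\delta_1)$-\DP{} of $\lift{(\alg_1 \liftSeq \alg_0)}$ w.r.t.\ $\varPsi \diamond \varPsi$. Your explicit packaging of the pair $(\lambda_0,\lambda_1)$ into a single database on which both lifted mechanisms act, and your remark that a naive hybrid chaining would degrade one $\delta$ term by a factor $e^{\varepsilon}$, simply spell out details that the paper's terse proof leaves implicit.
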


\begin{proof}
By the definition of \DistP{}, we have that $\lift{\alg_0}$ provides $(\varepsilon_0, \delta_0)$-\DP{} w.r.t. $\varPsi$ 
and for each $y_0\in\caly_0$, $\lift{\alg_1(y_0)}$ provides $(\varepsilon_1, \delta_1)$-\DP{} w.r.t. $\varPsi$.
By the sequential composition theorem for \DP{} mechanisms and Equation~\eqref{eq:relation:product},
$\lift{(\alg_1 \liftSeq \alg_0)}$ provides $(\varepsilon_0+\varepsilon_1, \delta_0+\delta_1)$-\DP{} w.r.t. $\varPsi \diamond \varPsi$.
Therefore $\alg_1 \liftSeq \alg_0$ provides $(\varepsilon_0+\varepsilon_1, \delta_0+\delta_1)$-\DistP{} w.r.t. $\varPsi \diamond \varPsi$.
\end{proof}

For brevity, we omit the case of \XDistP{}.

Finally, we remark that the comparison of the compositions with shared input and with independent input is also discussed from the viewpoint of quantitative information flow in~\cite{Kawamoto:17:LMCS}.

\subsection{Post-processing and Pre-processing}
\label{subsec:post-pre-processing}

Next we show that distribution privacy is immune to the post-processing.
For $\alg_0:\calx\rightarrow\Dists\caly$ and $\alg_1:\caly\rightarrow\Dists\calz$, we define $\alg_1 \circ \alg_0$ by: $(\alg_1 \circ \alg_0)(x) = \alg_1(\alg_0(x))$.

\begin{restatable}[Post-processing]{prop}{PostProcess}\label{prop:PostProcess}
Let 
$\varPsi\subseteq\Dists\calx\times\Dists\calx$, and $\sensfunc:\Dists\calx\times\Dists\calx\rightarrow\realsnng$ be a metric.
Let $\alg_0:\calx\rightarrow\Dists\caly$, and $\alg_1:\caly\rightarrow\Dists\calz$.
\begin{enumerate}
\item
If $\alg_0$ provides $(\varepsilon, \delta)$-\DistP{} w.r.t. $\varPsi$ then so does the composite function $\alg_1\circ\alg_0$.
\item
If $\alg_0$ provides $(\varepsilon, \sensfunc, \delta)$-\XDistP{} w.r.t. $\varPsi$ then so does the composite function $\alg_1\circ\alg_0$.
\end{enumerate}
\end{restatable}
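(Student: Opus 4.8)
The plan is to reduce both claims to the post-processing immunity of \DP{} (resp.\ \XDP{}), exploiting the fact that \DistP{} (resp.\ \XDistP{}) is by Definition~\ref{def:max-DistP} (resp.\ Definition~\ref{def:max-XDP-dist}) nothing but the \DP{} (resp.\ \XDP{}) of the lifted mechanism, whose input domain is $\Dists\calx$. The one identity I would establish first is that lifting commutes with post-composition: if $\alg_1$ is regarded as the randomized post-processing map $\caly\rightarrow\Dists\calz$, then $\lift{(\alg_1\circ\alg_0)} = \alg_1\circ\lift{\alg_0}$. This is a direct sum-swap. For any $\lambda\in\Dists\calx$ and $R\subseteq\calz$,
\begin{align*}
\lift{(\alg_1\circ\alg_0)}(\lambda)[R]
= \sum_{x\in\calx}\lambda[x]\sum_{y\in\caly}\alg_0(x)[y]\,\alg_1(y)[R]
= \sum_{y\in\caly}\Bigl(\sum_{x\in\calx}\lambda[x]\,\alg_0(x)[y]\Bigr)\alg_1(y)[R]
= \sum_{y\in\caly}\lift{\alg_0}(\lambda)[y]\,\alg_1(y)[R],
\end{align*}
where the last expression is exactly the probability obtained by post-processing the distribution $\lift{\alg_0}(\lambda)\in\Dists\caly$ through $\alg_1$.

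For the first claim, I would argue as follows. By Definition~\ref{def:max-DistP}, the hypothesis that $\alg_0$ provides $(\varepsilon,\delta)$-\DistP{} w.r.t.\ $\varPsi$ means precisely that the mechanism $\lift{\alg_0}:\Dists\calx\rightarrow\Dists\caly$ provides $(\varepsilon,\delta)$-\DP{} w.r.t.\ $\varPsi$, with $\Dists\calx$ as the input domain. Since $\alg_1:\caly\rightarrow\Dists\calz$ is a randomized function applied only to the output, the standard post-processing immunity of differential privacy yields that the post-processed mechanism $\alg_1\circ\lift{\alg_0}$ again provides $(\varepsilon,\delta)$-\DP{} w.r.t.\ $\varPsi$. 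By the commutation identity above, $\alg_1\circ\lift{\alg_0}=\lift{(\alg_1\circ\alg_0)}$, so $\lift{(\alg_1\circ\alg_0)}$ provides $(\varepsilon,\delta)$-\DP{} w.r.t.\ $\varPsi$; unwinding Definition~\ref{def:max-DistP} once more, this is exactly the statement that $\alg_1\circ\alg_0$ provides $(\varepsilon,\delta)$-\DistP{} w.r.t.\ $\varPsi$.

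The second claim follows by the identical argument with \XDP{} in place of \DP{}: the hypothesis gives $(\varepsilon,\sensfunc,\delta)$-\XDP{} of $\lift{\alg_0}$ w.r.t.\ $\varPsi$, post-processing immunity of \XDP{} preserves this under $\alg_1$, and the commutation identity transfers it back to $\lift{(\alg_1\circ\alg_0)}$, i.e.\ to $(\varepsilon,\sensfunc,\delta)$-\XDistP{} of $\alg_1\circ\alg_0$. I do not anticipate a genuine obstacle here; the only point requiring care is the bookkeeping of domains, namely that $\lift{\alg_0}$ must be viewed as a mechanism whose inputs range over $\Dists\calx$ and whose outputs range over $\caly$, so that $\alg_1$ is legitimately an output post-processing map, together with the elementary interchange of summation underlying the commutation identity.
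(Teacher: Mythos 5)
Your proposal is correct and in substance identical to the paper's proof: the paper's argument is exactly the standard convex-decomposition proof of post-processing immunity for $(\varepsilon,\delta)$-\DP{} (and $(\varepsilon,\sensfunc,\delta)$-\XDP{}) carried out inline on the lifted mechanisms, with your commutation identity $\lift{(\alg_1\circ\alg_0)}=\lift{\alg_1}\circ\lift{\alg_0}$ appearing implicitly in its first displayed computation $\lift{(\alg_1\circ\alg_0)}(\lambda_0)[R]=\sum_{i\in\cali}\mu[i]\cdot\lift{\alg_0}(\lambda_0)[U_i]$. The only difference is bookkeeping: you black-box the \DP{}/\XDP{} post-processing theorem (whose proof for \emph{randomized} post-processing with $\delta>0$ requires writing $\alg_1$ as a convex combination of deterministic maps --- the very step the paper spells out, and which cannot be replaced by a naive pointwise bound $\mu_0[y]\le e^{\varepsilon}\mu_1[y]+\delta$), and since that theorem is standard, this is legitimate.
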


\begin{proof}
Let $(\lambda_0, \lambda_1)\in\varPsi$.
Since every randomized algorithm is a convex combination of deterministic algorithms, there are a distribution $\mu$ over an index set $\cali$ and deterministic algorithms $\alg_{1,i}: U_i\rightarrow R$ such that $\alg_1 = \sum_{i\in\cali} \mu[i] \alg_{1,i}$.
Then we obtain:
\begin{align}
&
  \lift{(\alg_1\circ\alg_0)}(\lambda_0)[R]
\nonumber
\\ 
=&\,
  {\textstyle \sum_{x_0\in\calx}}\,\lambda_0[x_0]
  \cdot (\alg_1\circ\alg_0)(x_0)[R]
\nonumber
\\ =&\,
  {\textstyle \sum_{i\in\cali}}\, \mu[i]\,
  {\textstyle \sum_{x_0\in\calx}}\,\lambda_0[x_0]
  \cdot (\alg_{1,i}\circ\alg_0)(x_0)[R]
\nonumber
\\ =&\,
  {\textstyle \sum_{i\in\cali}}\, \mu[i]\,
  {\textstyle \sum_{x_0\in\calx}}\, \lambda_0[x_0]
  \cdot \alg_0(x_0)[U_i]
\nonumber
\\ =&\,
  {\textstyle \sum_{i\in\cali}}\, \mu[i]
  \cdot \lift{\alg_0}(\lambda_0)[U_i]
\label{eq:post-process-inf}
{.}
\end{align}

Then we show the first claim as follows.
Assume that $\alg_0$ provides $(\varepsilon, \delta)$-\DistP{} w.r.t. $\varPsi$.
\begin{align*}
&
  \lift{(\alg_1\circ\alg_0)}(\lambda_0)[R]
\\ =&\,
  {\textstyle \sum_{i\in\cali}}\, \mu[i]
  \cdot \lift{\alg_0}(\lambda_0)[U_i]
\hspace{20ex}
\text{(by \eqref{eq:post-process-inf})}
\\ \le&\,
  {\textstyle \sum_{i\in\cali}}\, \mu[i]
  \cdot \bigl( e^{\varepsilon}
  \cdot\lift{\alg_0}(\lambda_1)[U_i]
  + \delta \bigr)
\\[-1ex]&\hspace{24ex}
  \text{(by $(\varepsilon, \delta)$-\DistP{} of $\lift{\alg_0}$)}
\\ =&\,
  {\textstyle \sum_{i\in\cali}}\, \mu[i]
  \cdot \bigl( e^{\varepsilon}
  \cdot{\textstyle \sum_{x_1}}\, \lambda_1[x_1]\alg_0(x_1)[U_i]
  + \delta \bigr)
\\ =&\,
  {\textstyle \sum_{i\in\cali}}\, \mu[i]
  \cdot \bigl( e^{\varepsilon}
  \cdot{\textstyle \sum_{x_1}}\, \lambda_1[x_1](\alg_{1,i}\circ\alg_0)(x_1)[R]
  + \delta \bigr)
\\ =&\,
  e^{\varepsilon}\cdot
  \Bigl( {\textstyle \sum_{i\in\cali}}\, \mu[i]
  \cdot \lift{(\alg_{1,i}\circ\alg_0)}(\lambda_1)[R] \Bigr)
  + \delta
\\ =&\,
  e^{\varepsilon}\cdot
  \lift{(\alg_1\circ\alg_0)}(\lambda_1))[R]
  + \delta
{.}
\end{align*}
Therefore $\alg_1\circ\alg_0$ provides $(\varepsilon, \delta)$-\DistP{} w.r.t. $\varPsi$.

\arxiv{Analogously, we show the second claim.
Assume that $\alg_0$ provides $(\varepsilon, \sensfunc, \delta)$-\XDistP{} w.r.t. $\varPsi$.
\begin{align*}
&
  \lift{(\alg_1\circ\alg_0)}(\lambda_0)[R]
\\ =&\,
  \sum_{i\in\cali} \mu[i]
  \cdot \lift{\alg_0}(\lambda_0)[U_i]
\hspace{21ex}
\text{(by \eqref{eq:post-process-inf})}
\\ \le&\,
  \sum_{i\in\cali} \mu[i]
  \cdot \bigl( e^{\varepsilon \cdot \sensfunc(\lambda_0, \lambda_1)}
  \cdot\lift{\alg_0}(\lambda_1)[U_i]
  + \delta \bigr)
\\[-1ex]&\hspace{20ex}
  \text{(by $(\varepsilon, \sensfunc, \delta)$-\XDistP{} of $\lift{\alg_0}$)}
\\ =&\,
  e^{\varepsilon {\cdot} \sensfunc(\lambda_0, \lambda_1)} \cdot
  \Bigl( \sum_{i\in\cali} \mu[i]
  {\cdot} \lift{(\alg_{1,i}\circ\alg_0)}(\lambda_1)[R] \Bigr)
  + \delta
\\ =&\,
  e^{\varepsilon {\cdot} \sensfunc(\lambda_0, \lambda_1)} {\cdot}
  \lift{(\alg_1\circ\alg_0)}(\lambda_1))[R]
  + \delta
{.}
\end{align*}
Therefore $\alg_1\circ\alg_0$ provides $(\varepsilon, \sensfunc, \delta)$-\XDistP{} w.r.t. $\varPsi$.
}
\qed
\end{proof}

We then show a property on pre-processing.
\begin{restatable}[Pre-processing]{prop}{PreProcessMAX}\label{prop:PreProcessMAX}
Let $c\in\realsnng$, 
$\varPsi\subseteq\Dists\calx\times\Dists\calx$, 
and $\sensfunc:\Dists\calx\times\Dists\calx\rightarrow\realsnng$ be a metric.
\begin{enumerate}
\item
If $T:\Dists\calx\rightarrow\Dists\calx$ is a \emph{$(c,\varPsi)$-stable} transformation and 
$\alg:\calx\rightarrow\Dists\caly$ provides $(\varepsilon,\delta)$-\DistP{} w.r.t $\varPsi$, then $\alg\circ T$ provides $(c\,\varepsilon, \delta)$-\DistP{} w.r.t $\varPsi$.
\item
If $T:\Dists\calx\rightarrow\Dists\calx$ is a \emph{$(c,\sensfunc)$-stable} transformation and 
$\alg:\calx\rightarrow\Dists\caly$ provides $(\varepsilon,\sensfunc, \delta)$-\XDistP{}, then $\alg\circ T$ provides $(c\,\varepsilon, \sensfunc, \delta)$-\XDistP{}.
\end{enumerate}
\end{restatable}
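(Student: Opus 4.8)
The plan is to reduce both claims to facts about the lifting, since \DistP{} and \XDistP{} of a mechanism are by definition properties of its lifting. The key observation is that the distribution-level behaviour of the composite is $\lift{(\alg\circ T)} = \lift{\alg}\circ T$, because $T$ already acts on $\Dists\calx$ and $\lift{\alg}$ is exactly the distribution-level action of $\alg$. Thus it suffices to prove the stated inequalities for $\lift{\alg}\circ T$, regarded as a mechanism whose inputs live in $\Dists\calx$ with the relation $\varPsi$ (resp. the metric $\sensfunc$) playing the role of adjacency, which is precisely the reading of \DistP{}/\XDistP{} in Definition~\ref{def:max-DistP}.

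Part 2 (metric stability) I would dispatch first, as it is a one-step calculation. By $(c,\sensfunc)$-stability of $T$ we have $\sensfunc(T(\lambda_0),T(\lambda_1)) \le c\,\sensfunc(\lambda_0,\lambda_1)$ for all $\lambda_0,\lambda_1\in\Dists\calx$. Applying the $(\varepsilon,\sensfunc,\delta)$-\XDistP{} bound of $\alg$ to the pair $(T(\lambda_0),T(\lambda_1))$ and then monotonicity of the exponential gives, for every $R\subseteq\caly$,
\begin{align*}
\lift{\alg}(T(\lambda_0))[R]
&\le e^{\varepsilon\,\sensfunc(T(\lambda_0),T(\lambda_1))}\,\lift{\alg}(T(\lambda_1))[R] + \delta \\
&\le e^{c\varepsilon\,\sensfunc(\lambda_0,\lambda_1)}\,\lift{\alg}(T(\lambda_1))[R] + \delta,
\end{align*}
which is exactly $(c\varepsilon,\sensfunc,\delta)$-\XDistP{}.

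Part 1 (relational stability) I would handle by viewing $\lift{\alg}$ as an ordinary $(\varepsilon,\delta)$-\DP{} mechanism on the data space $\Dists\calx$ with adjacency $\varPsi$. The $(c,\varPsi)$-stability of $T$ says exactly that any $\varPsi$-adjacent pair $(\lambda_0,\lambda_1)$ is carried to endpoints $T(\lambda_0),T(\lambda_1)$ joined by a $\varPsi$-path of length at most $c$, say $T(\lambda_1)=\mu_0,\mu_1,\dots,\mu_c=T(\lambda_0)$ with each consecutive pair in $\varPsi$. I would then telescope the \DP{} inequality $\lift{\alg}(\mu_i)[R]\le e^{\varepsilon}\lift{\alg}(\mu_{i+1})[R]+\delta$ of $\lift{\alg}$ along this path, picking up one factor $e^{\varepsilon}$ per edge to reach the target factor $e^{c\varepsilon}$, and conclude that $\lift{\alg}\circ T$ is $(c\varepsilon,\delta')$-\DP{} w.r.t. $\varPsi$ for the accumulated additive error $\delta'$, hence that $\alg\circ T$ is $(c\varepsilon,\delta')$-\DistP{}. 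This mirrors the reduction-to-\DP{} strategy already used for the composition propositions.

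The main obstacle is the additive-error bookkeeping in Part 1. Telescoping the approximate inequality across $c$ edges produces $\delta' = (1+e^{\varepsilon}+\cdots+e^{(c-1)\varepsilon})\delta$, so the clean statement with the \emph{unchanged} $\delta$ holds exactly in the pure case $\delta=0$ (where the bound collapses to $(c\varepsilon,0)$) or when $c=1$; for $\delta>0$ one must either accept this geometric blow-up or restrict to $\delta=0$. I therefore expect the intended argument to be the pure-case telescoping combined with the lifting reduction, and the delicate point to state carefully is precisely how the $\delta$ term is accounted for along the length-$c$ path.
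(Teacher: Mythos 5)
Your proposal follows essentially the same route as the paper's proof: both reduce to the identity $\lift{(\alg\circ T)} = \lift{\alg}\circ T$, dispatch the metric case by a single application of the $(\varepsilon,\sensfunc,\delta)$-\XDistP{} bound at the pair $(T(\lambda),T(\lambda'))$ followed by $(c,\sensfunc)$-stability, and handle the relational case by accumulating one factor $e^{\varepsilon}$ per edge of the length-$\le c$ $\varPsi$-path provided by $(c,\varPsi)$-stability. Your caveat about the additive error is in fact sharper than the published argument: the paper's displayed chain silently omits the $+\delta$ term (so it really only establishes the $\delta=0$ case), whereas your telescoping correctly shows that for $\delta>0$ and $c>1$ the honest guarantee is $(c\varepsilon,\,(1+e^{\varepsilon}+\cdots+e^{(c-1)\varepsilon})\delta)$-\DistP{} rather than $(c\varepsilon,\delta)$-\DistP{}.
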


\begin{proof}
We show the first claim as follows.
Assume that $\alg$ provides $(\varepsilon,\delta)$-\DistP{} w.r.t. $\varPsi$.
Let $(\lambda,\lambda')\in\varPsi$,
and $R \subseteq \caly$.
Then we have:
\begin{align*}
  \lift{(\alg\circ T)}(\lambda)[R]
=&\,
  \lift{\alg}(\lift{T}(\lambda))[R]
\\ \le&\,
  e^{c \varepsilon} \lift{\alg}(\lift{T}(\lambda'))[R]
\\ =&\,
  e^{c \varepsilon} \lift{(\alg\circ T)}(\lambda')[R]
{.}
\end{align*}
Therefore $\alg\circ T$ provides $(c\,\varepsilon, \delta)$-\DistP{}.

\conference{The second claim can be shown analogously.}
\arxiv{
Next we show the second claim.
Assume that $\alg$ provides $(\varepsilon,\sensfunc,\delta)$-\XDistP{}.
Let $\lambda,\lambda'\in\Dists\calx$,
and $R \subseteq \caly$.
Then we obtain:
\begin{align*}
  \lift{(\alg\circ T)}(\lambda)[R]
=&\,
  \lift{\alg}(\lift{T}(\lambda))[R]
\\ \le&\,
  e^{\varepsilon \sensfunc(\lift{T}(\lambda),\lift{T}(\lambda'))}
  \lift{\alg}(\lift{T}(\lambda'))[R]
\\ \le&\,
  e^{c \varepsilon \sensfunc(\lambda,\lambda')} \lift{(\alg\circ T)}(\lambda')[R]
{.}
\end{align*}
Therefore $\alg\circ T$ provides $(c\,\varepsilon, \sensfunc, \delta)$-\XDistP{}.
}
\qed
\end{proof}

\subsection{Probabilistic Distribution Privacy (\pDistP{})}
\label{subsec:intro:pDistP}

We next introduce an approximate notion of distribution privacy analogously to the notion of probabilistic differential privacy (\pDP{})~\cite{Dwork:10:FOCS}.
Intuitively, a randomized algorithm provides \emph{$(\varepsilon,\delta)$-probabilistic distribution privacy} if it provides $\varepsilon$-distribution privacy with probability at least $(1-\delta)$.

\begin{definition}[Probabilistic distribution privacy]\label{def:prob-max-DistP}\rm
Let 
$\varPsi\subseteq\Dists\calx\times\Dists\calx$.
We say that a randomized algorithm $\alg:\calx\rightarrow\Dists\caly$ provides \emph{$(\varepsilon,\delta)$-probabilistic distribution privacy (\pDistP{}) w.r.t.}\,$\varPsi$ if the lifting $\lift{\alg}$ provides $(\varepsilon,\delta)$-probabilistic differential privacy w.r.t. $\varPsi$ ,
i.e., for all $(\lambda, \lambda')\in\varPsi$, there exists an $R'\subseteq\caly$ such that $\lift{\alg}(\lambda)[R'] \leq \delta$, $\lift{\alg}(\lambda')[R'] \leq \delta$, and that for all $R\subseteq\caly$,\, we have:
\begin{align*}
\lift{\alg}(\lambda)[R\setminus R'] & \leq
e^{\varepsilon} \cdot \lift{\alg}(\lambda')[R\setminus R']
\\
\lift{\alg}(\lambda')[R\setminus R'] & \leq
e^{\varepsilon} \cdot \lift{\alg}(\lambda)[R\setminus R']
{,}
\end{align*}
where the probability space is taken over the choices of randomness in $\alg$.
\end{definition}

\subsection{Relationships between \pDistP{} and \DistP{}}
\label{sub:proofs-PDistP-DistP}

In this section we show the relationships between \pDistP{} and \DistP{}.
By definition, $(\varepsilon,0)$-\DistP{} is equivalent to $(\varepsilon,0)$-\pDistP{}.
In general, however, $(\varepsilon,\delta)$-\DistP{} does not imply $(\varepsilon,\delta)$-\pDistP{}, while $(\varepsilon,\delta)$-\pDistP{} implies $(\varepsilon,\delta)$-\DistP{}.

\begin{restatable}[$(\varepsilon, 0)$-\pDistP{} $\Leftrightarrow$ $(\varepsilon, 0)$-\DistP{}]{prop}{PDistPequlasDistP}
\label{prop:PDistPequlasDistP}
A randomized algorithm $\alg$ provides $(\varepsilon, 0)$-\pDistP{} w.r.t. an adjacency relation $\varPsi$ iff it provides $(\varepsilon, 0)$-\DistP{} w.r.t.~$\varPsi$.
\end{restatable}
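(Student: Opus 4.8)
The plan is to prove the two implications separately, exploiting the fact that at $\delta = 0$ the exceptional set $R'$ in the definition of \pDistP{} is forced to carry probability mass exactly zero under both $\lift{\alg}(\lambda)$ and $\lift{\alg}(\lambda')$, so that it can be discarded without affecting any probabilities.

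For the direction $(\varepsilon,0)$-\pDistP{} $\Rightarrow$ $(\varepsilon,0)$-\DistP{}, I would fix an arbitrary pair $(\lambda,\lambda')\in\varPsi$ and take the set $R'\subseteq\caly$ guaranteed by \pDistP{}. Since $\lift{\alg}(\lambda)[R'] \le 0$ and $\lift{\alg}(\lambda')[R'] \le 0$ and probabilities are nonnegative, both must equal $0$. The key observation is then that for every $R\subseteq\caly$ the disjoint decomposition $R = (R\setminus R')\sqcup(R\cap R')$ gives $\lift{\alg}(\lambda)[R] = \lift{\alg}(\lambda)[R\setminus R']$, because $\lift{\alg}(\lambda)[R\cap R'] \le \lift{\alg}(\lambda)[R'] = 0$, and likewise for $\lambda'$. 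Substituting these identities into the \pDistP{} inequality $\lift{\alg}(\lambda)[R\setminus R'] \le e^{\varepsilon}\,\lift{\alg}(\lambda')[R\setminus R']$ yields $\lift{\alg}(\lambda)[R] \le e^{\varepsilon}\,\lift{\alg}(\lambda')[R]$ for all $R$, which is exactly $(\varepsilon,0)$-\DistP{} for the pair $(\lambda,\lambda')$. Note that only the first of the two \pDistP{} inequalities is needed here, so this direction requires no symmetry assumption.

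For the converse $(\varepsilon,0)$-\DistP{} $\Rightarrow$ $(\varepsilon,0)$-\pDistP{}, I would simply exhibit the witness set $R' = \emptyset$. Then $\lift{\alg}(\lambda)[R'] = \lift{\alg}(\lambda')[R'] = 0$ trivially, and $R\setminus R' = R$ for every $R$. The first required \pDistP{} inequality then reduces to $\lift{\alg}(\lambda)[R] \le e^{\varepsilon}\,\lift{\alg}(\lambda')[R]$, which is precisely the $(\varepsilon,0)$-\DistP{} condition applied to $(\lambda,\lambda')$, while the second reduces to the same condition applied to the reversed pair $(\lambda',\lambda)$.

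The main obstacle is reconciling the apparent asymmetry between the two notions: the definition of \pDistP{} imposes the bound in \emph{both} directions, whereas $(\varepsilon,0)$-\DistP{}, being the \DP{} of the lifting $\lift{\alg}$ w.r.t.\ $\varPsi$, is stated as a single directed inequality per adjacent pair. I would resolve this by using that the adjacency relation $\varPsi$ is symmetric, as is standard for differential-privacy adjacency relations, so that $(\lambda,\lambda')\in\varPsi$ also gives $(\lambda',\lambda)\in\varPsi$ and hence the reversed inequality from \DistP{}. Once symmetry is invoked, the remainder is the elementary bookkeeping above; crucially, no concentration inequality or measure-theoretic argument is needed, since taking $\delta = 0$ makes the exceptional region vanish entirely.
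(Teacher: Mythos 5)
Your proof is correct and is essentially the paper's proof, which reads in full ``Immediate from the definitions'': your argument is exactly the routine unpacking of that immediacy (the exceptional set $R'$ carries zero mass when $\delta=0$ in one direction, and $R'=\emptyset$ serves as witness in the other). Your explicit appeal to symmetry of $\varPsi$ to obtain the second \pDistP{} inequality is a hypothesis the paper leaves implicit but does genuinely need---and it holds in all of the paper's instantiations, which use relations of the form $\Lambda^2$ (e.g., $\LambdaBeta^2$ in Theorem~\ref{thm:TuplingDP}).
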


\begin{proof}
Immediate from the definitions.
\qed
\end{proof}

\begin{restatable}[$(\varepsilon, \delta)$-\pDistP{} $\Rightarrow$ $(\varepsilon, \delta)$-\DistP{}]{prop}{PDistPimpliesDistP}
\label{prop:PDistPimpliesDistP}
If a randomized algorithm $\alg$ provides $(\varepsilon, \delta)$-\pDistP{} w.r.t. an adjacency relation $\varPsi$, then it provides $(\varepsilon, \delta)$-\DistP{} w.r.t. $\varPsi$.
\end{restatable}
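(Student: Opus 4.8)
The plan is to unfold both definitions and use the exceptional set $R'$ guaranteed by \pDistP{} as precisely the source of the additive $\delta$-slack in the \DistP{} inequality. First I would fix an arbitrary pair $(\lambda, \lambda')\in\varPsi$ and invoke the $(\varepsilon,\delta)$-\pDistP{} assumption on the lifting $\lift{\alg}$: this yields a set $R'\subseteq\caly$ with $\lift{\alg}(\lambda)[R']\le\delta$ and $\lift{\alg}(\lambda')[R']\le\delta$, together with the multiplicative bound $\lift{\alg}(\lambda)[R\setminus R']\le e^{\varepsilon}\cdot\lift{\alg}(\lambda')[R\setminus R']$ holding for every $R\subseteq\caly$.

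Next, for an arbitrary $R\subseteq\caly$, I would decompose the output probability along the exceptional set. Since $R\cap R'$ and $R\setminus R'$ are disjoint and partition $R$, we have $\lift{\alg}(\lambda)[R]=\lift{\alg}(\lambda)[R\cap R']+\lift{\alg}(\lambda)[R\setminus R']$. The first summand is at most $\lift{\alg}(\lambda)[R']\le\delta$ by monotonicity of the measure (as $R\cap R'\subseteq R'$), and the second summand is at most $e^{\varepsilon}\cdot\lift{\alg}(\lambda')[R\setminus R']$ by the \pDistP{} bound. Using $R\setminus R'\subseteq R$ once more gives $\lift{\alg}(\lambda')[R\setminus R']\le\lift{\alg}(\lambda')[R]$, so combining the two estimates yields $\lift{\alg}(\lambda)[R]\le e^{\varepsilon}\cdot\lift{\alg}(\lambda')[R]+\delta$, which is exactly the $(\varepsilon,\delta)$-\DistP{} inequality for $(\lambda,\lambda')$. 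Since $(\lambda,\lambda')\in\varPsi$ and $R$ were arbitrary, this establishes $(\varepsilon,\delta)$-\DistP{} w.r.t. $\varPsi$.

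I expect essentially no real obstacle: this is the standard translation from probabilistic to approximate differential privacy, carried out at the level of the lifted mechanism $\lift{\alg}$ (so that it speaks about input \emph{distributions} rather than input points). The only point worth noting is orientation: the \DistP{} conclusion is one-sided, so only one of the two symmetric \pDistP{} inequalities is actually used, and the exceptional mass bound $\lift{\alg}(\lambda')[R']\le\delta$ is not needed in this direction. If a symmetric version were wanted, I would simply repeat the argument with the roles of $\lambda$ and $\lambda'$ exchanged, using the companion \pDistP{} bound.
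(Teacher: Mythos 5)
Your proposal is correct and follows essentially the same argument as the paper's proof: invoke the \pDistP{} definition to obtain the exceptional set $R'$, split $R$ along $R'$, bound the exceptional part by $\delta$ via $\lift{\alg}(\lambda)[R']\le\delta$, and bound the remainder by the multiplicative \pDistP{} inequality together with monotonicity $\lift{\alg}(\lambda')[R\setminus R']\le\lift{\alg}(\lambda')[R]$. If anything, your decomposition $\lift{\alg}(\lambda)[R]=\lift{\alg}(\lambda)[R\cap R']+\lift{\alg}(\lambda)[R\setminus R']$ is marginally more careful than the paper's, which writes the split as the equality $\lift{\alg}(\lambda)[R]=\lift{\alg}(\lambda)[R\setminus R']+\lift{\alg}(\lambda)[R']$ (harmless, since only the upper bound by $\delta$ is used), and your closing remarks on one-sidedness and the unused companion bound are accurate.
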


\begin{proof}
Let $\varPsi\subseteq\Dists\calx\times\Dists\calx$, $(\lambda, \lambda')\in\varPsi$, $R\subseteq\caly$,
and $\alg:\calx\rightarrow\Dists\caly$ be any randomized algorithm that provides $(\varepsilon, \delta)$-\pDistP{} w.r.t.~$\varPsi$.
Then there exists an $R'\subseteq\caly$ such that 
$\lift{\alg}(\lambda)[R'] \leq \delta$, and 
$\lift{\alg}(\lambda)[R\setminus R'] \leq
e^{\varepsilon}\cdot \lift{\alg}(\lambda')[R\setminus R']$.
Hence we obtain:
\begin{align*}
\lift{\alg}(\lambda)[R] 
&=
\lift{\alg}(\lambda)[R\setminus R'] + \lift{\alg}(\lambda)[R']
\\ &\leq
e^{\varepsilon}\cdot\lift{\alg}(\lambda')[R\setminus R'] + \delta
\\ &\leq
e^{\varepsilon}\cdot\lift{\alg}(\lambda')[R] + \delta
{.}
\end{align*}
Therefore $\alg$ provides $(\varepsilon, \delta)$-\DistP{} w.r.t.~$\varPsi$.
\qed
\end{proof}

\subsection{Properties of the Tupling Mechanism}
\label{sub:tupling:properties}

In this section we show properties of the tupling mechanism.

We first present a minor result with $\delta = 0$.

\begin{restatable}[$(\varepsilon_{\alg}, 0)$-\DistP{} of the tupling mechanism]{prop}{TuplingDistP}
\label{prop:TuplingDistP}
If $\alg$ provides $(\varepsilon_{\alg}, 0)$-\DP{} w.r.t. $\varPhi$, then  the $(k,\nu,\alg)$-tupling mechanism $\TPM$ provides $(\varepsilon_{\alg}, 0)$-\DistP{} w.r.t.~$\lift{\varPhi}$.
\end{restatable}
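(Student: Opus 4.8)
The plan is to observe that the tupling mechanism is nothing more than a \emph{post-processing} of the base perturbation mechanism $\alg$, and then to chain together two results already established in the paper: Theorem~\ref{thm:DPimpliesMaxDistP} (which lifts \DP{} to \DistP{}) and Proposition~\ref{prop:PostProcess} (immunity of \DistP{} to post-processing). Concretely, inspecting Algorithm~\ref{alg:tupling}, the only step that reads the input $x$ is the draw $s \randassign \alg(x)$; the generation of the $k$ dummies $r_1,\dots,r_k \randassign \nu$ and of the insertion position $i \randassign \{1,\dots,k+1\}$ uses only the fixed distribution $\nu$ and is entirely independent of $x$. Hence I would define a randomized map $T:\caly\rightarrow\Dists(\caly^{k+1})$ that, given a single value $s$, draws the dummies and the position and returns the tuple $(r_1,\dots,r_i,s,r_{i+1},\dots,r_k)$, and verify the factorization $\TPM = T \circ \alg$ in the sense of the composition operator $\circ$ of Proposition~\ref{prop:PostProcess}, i.e.\ $\TPM(x) = \lift{T}(\alg(x))$ for every $x\in\calx$.

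Given this factorization, the proof is short. First I would apply Theorem~\ref{thm:DPimpliesMaxDistP} to the hypothesis that $\alg$ provides $(\varepsilon_{\alg},0)$-\DP{} w.r.t.\ $\varPhi$: since here $\delta = 0$, the conclusion $(\varepsilon_{\alg},\,\delta\cdot|\varPhi|)$-\DistP{} collapses to $(\varepsilon_{\alg},0)$-\DistP{} w.r.t.\ $\lift{\varPhi}$, with no dependence on $|\varPhi|$. Then I would apply Proposition~\ref{prop:PostProcess}(1) with $\alg_0 = \alg$, $\alg_1 = T$, and $\varPsi = \lift{\varPhi}$, which yields that the composite $T\circ\alg = \TPM$ again provides $(\varepsilon_{\alg},0)$-\DistP{} w.r.t.\ $\lift{\varPhi}$, as required.

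The only genuinely load-bearing step is the factorization $\TPM = T\circ\alg$, so that is where I would spend the care. I expect this to be routine but not completely trivial: one must check that the output distribution $\TPM(x)[\bar y]$ equals $\lift{T}(\alg(x))[\bar y]$ for all tuples $\bar y$, which amounts to writing out $\lift{T}(\alg(x))[\bar y] = \sum_{s\in\caly}\alg(x)[s]\,T(s)[\bar y]$ and confirming it matches the distribution induced by Algorithm~\ref{alg:tupling} after marginalising over $s$. The decisive point is precisely that $T$ does not depend on $x$, which holds because $\nu$ and the distribution of $i$ are fixed; this is what makes the dummy-and-shuffle stage a legitimate post-processing rather than an input-dependent transformation. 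Note also that the argument uses $\delta = 0$ crucially to avoid the $|\varPhi|$ blow-up of Theorem~\ref{thm:DPimpliesMaxDistP}; the general $\delta>0$ case (Theorem~\ref{thm:TuplingDP}) instead requires the separate, more delicate concentration argument rather than this clean post-processing reduction.
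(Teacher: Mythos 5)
Your proof is correct, but it takes a genuinely different route from the paper's. The paper shares your first step exactly --- it applies Theorem~\ref{thm:DPimpliesMaxDistP} with $\delta = 0$ to conclude that $\alg$ provides $(\varepsilon_{\alg},0)$-\DistP{} w.r.t.\ $\lift{\varPhi}$ --- but it then proceeds by direct computation rather than by a post-processing reduction: it derives the explicit output formula $\TPM(\lambda_b)[\bar{y}] = \frac{1}{k+1}\sum_{i=1}^{k+1} \lift{\alg}(\lambda_b)[y_i]\prod_{j\neq i}\nu[y_j]$, forms the ratio $\TPM(\lambda_0)[R]/\TPM(\lambda_1)[R]$, cancels the dummy factors using the \emph{uniformity} of $\nu$, and bounds the resulting ratio of sums by $\max_{\bar{y}\in R}\max_i \lift{\alg}(\lambda_0)[y_i]/\lift{\alg}(\lambda_1)[y_i] \le e^{\varepsilon_{\alg}}$ via a mediant-type inequality. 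Your factorization $\TPM = T\circ\alg$, with $T$ independent of $x$, is precisely the marginalization underlying that formula, and invoking Proposition~\ref{prop:PostProcess}(1) with $\varPsi = \lift{\varPhi}$ legitimately replaces the entire calculation. Your route buys two things: it sidesteps the positivity bookkeeping (the paper must first argue $\TPM(\lambda_1)[R] > 0$ before dividing), and --- more substantively --- it never uses uniformity of $\nu$, so it establishes the proposition for an \emph{arbitrary} dummy distribution, which actually matches the proposition's statement (which imposes no condition on $\nu$) better than the paper's own proof, which appeals to uniformity twice. What the paper's computation buys instead is the explicit formula for $\TPM(\lambda_b)[\bar{y}]$, which is reused as the starting point of the Hoeffding-based concentration argument for Theorem~\ref{thm:TuplingDP}; and as you correctly observe, your reduction cannot reach that theorem, since there $\alg$ carries no \DP{} guarantee at all and the privacy must come from the dummies themselves rather than from anything a post-processing step could preserve.
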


\begin{proof}
Let $\varPhi\subseteq\calx\times\calx$.
Since $\alg$ provides $(\varepsilon_{\alg}, 0)$-\DP{} w.r.t. $\varPhi$, Theorem~\ref{thm:DPimpliesMaxDistP} implies that $\alg$ provides $(\varepsilon_{\alg}, 0)$-\DistP{} w.r.t.~$\lift{\varPhi}$.

Let $(\lambda_0, \lambda_1)\in\lift{\varPhi}$ and $R\subseteq \caly^{k+1}$ such that $\TPM(\lambda_0)[R] > 0$.
Since the dummies are uniformly distributed over $\caly$, we have $\TPM(\lambda_1)[R] > 0$.
Let $\bar{y} = (y_1, y_2, \ldots, y_{k+1}) \in R$ be an output of $\TPM$.
Since $i$ is uniformly drawn from $\{1, 2, \ldots, k+1\}$ in the mechanism $\TPM$, the output of $\alg$ appears as the $i$-th element $y_i$ of the tuple $\bar{y}$ with probability $\frac{1}{k+1}$.
For each $b = 0, 1$, when an input $x$ is drawn from $\lambda_b$, the probability that $\alg$ outputs $y_i$ is:
\begin{align*}
\lift{\alg}(\lambda_b)[y_i] =
{\textstyle \sum_{x\in\calx}}\, \lambda_b[x] \alg(x)[y_i]
{.}
\end{align*}
On the other hand, for each $j\neq i$, the probability that $y_j$ is drawn from the dummy distribution $\nu$ is given by $\nu[y_j]$.
Therefore, the probability that the mechanism $\TPM$ outputs the tuple $\bar{y}$ is:
\begin{align*}
\TPM(\lambda_b)[\bar{y}] =
{\textstyle 
\frac{1}{k+1}
\sum_{i=1}^{k+1}\, 
\lift{\alg}(\lambda_b)[y_i]\prod_{j\neq i} \nu[y_j]
}
{.}
\end{align*}
Hence we obtain:
\begin{align*}
&\,
\frac{ \TPM(\lambda_0)[R] }
       { \TPM(\lambda_1)[R] } 
\\ =&\,
\frac{ \sum_{\bar{y}\in R} \TPM(\lambda_0)[\bar{y}] }
       { \sum_{\bar{y}\in R} \TPM(\lambda_1)[\bar{y}] } 
\\ =&\,
\frac{ \sum_{\bar{y}\in R} \frac{1}{k+1} \sum_{i=1}^{k+1} \lift{\alg}(\lambda_0)[y_i]\prod_{j\neq i} \nu[y_j] }
       { \sum_{\bar{y}\in R} \frac{1}{k+1} \sum_{i=1}^{k+1} \lift{\alg}(\lambda_1)[y_i]\hspace{-0ex}\prod_{j\neq i} \nu[y_j] }
\\ =&\,
\frac{ \sum_{\bar{y}\in R} \sum_{i=1}^{k+1} \lift{\alg}(\lambda_0)[y_i] }
       { \sum_{\bar{y}\in R} \sum_{i=1}^{k+1} \lift{\alg}(\lambda_1)[y_i] }
&\hspace{-16ex}\text{(since $\nu$ is uniform)}
\\ \le&\,
\max_{\bar{y}\in R}\, \max_{i\in\{1, 2, \ldots , k+1\} } 
\frac{ \lift{\alg}(\lambda_0)[y_i] }{ \lift{\alg}(\lambda_1)[y_i] }
\\ \le&\,
e^{\varepsilon_{\alg}}
&\hspace{-18ex}\text{(by $(\varepsilon_{\alg}, 0)$-\DistP{} of $\alg$ w.r.t. $\lift{\varPhi}$)}
\end{align*}
Therefore $\TPM$ provides $(\varepsilon_{\alg}, 0)$-\DistP{} w.r.t. $\lift{\varPhi}$.
\end{proof}

Next we show that the tupling mechanism provides \pDistP{}
without any restriction on $\alg$, i.e., $\alg$ does not have to provide \DP{} in order for the tupling mechanism to provides \pDistP{}.

\begin{restatable}[$(\varepsilon_{\alpha}, \delta_{\alpha})$-\pDistP{} of the tupling mechanism]{prop}{TuplingPDistP}
\label{prop:TuplingPDistP}
Let 
$k\in\natspos$, $\nu$ be the uniform distribution over $\caly$, $\alg: \calx\rightarrow\Dists\caly$,
and $\beta, \eta\in[0, 1]$.
For an $\alpha\in\realspos$, let 
$\varepsilon_{\alpha} = \ln{\textstyle\frac{ k + (\alpha + \beta)\cdot|\caly| }{ k - \alpha\cdot|\caly| }}$
and $\delta_{\alpha} = 2\exp\bigl(-\frac{2\alpha^2}{k\beta^2}\bigr) + \eta$.
Then for any $0 < \alpha < \frac{k}{|\caly|}$, the tupling mechanism $\TPM$ provides $(\varepsilon_{\alpha}, \delta_{\alpha})$-\pDistP{}
w.r.t $\LambdaBeta^2$.
\end{restatable}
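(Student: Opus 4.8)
The plan is to reduce the claim to a concentration statement about the \emph{likelihood ratio} of the tupling mechanism and then control the bad region with Hoeffding's inequality. First I would reuse the computation from the proof of Proposition~\ref{prop:TuplingDistP}: since $\nu$ is uniform over $\caly$, the probability of an output tuple $\bar{y}=(y_1,\dots,y_{k+1})$ is $\lift{\TPM}(\lambda)[\bar{y}] = \frac{1}{(k+1)|\caly|^k}\sum_{i=1}^{k+1}\lift{\alg}(\lambda)[y_i]$, so the per-tuple likelihood ratio between $\lambda$ and $\lambda'$ collapses to $\frac{\sum_{i}\lift{\alg}(\lambda)[y_i]}{\sum_{i}\lift{\alg}(\lambda')[y_i]}$. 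Writing $A(\bar{y})=\sum_{i}\lift{\alg}(\lambda)[y_i]$ and $A'(\bar{y})=\sum_{i}\lift{\alg}(\lambda')[y_i]$, proving $(\varepsilon_\alpha,\delta_\alpha)$-\pDistP{} (Definition~\ref{def:prob-max-DistP}) amounts to exhibiting an exceptional set $R'$ of tuples, of probability at most $\delta_\alpha$ under both $\lift{\TPM}(\lambda)$ and $\lift{\TPM}(\lambda')$, outside of which both $A/A'$ and $A'/A$ are bounded by $e^{\varepsilon_\alpha}$.

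Next I would pin down the ``good range'' for $A$ and $A'$. In a tuple drawn by $\TPM$, the $k$ dummy coordinates are i.i.d.\ uniform, and $\mathbb{E}_{y\randassign\caly}[\lift{\alg}(\lambda)[y]]=\frac{1}{|\caly|}$, so the dummy part of $A$ has mean $\frac{k}{|\caly|}$; the single perturbed coordinate contributes an extra term that is at most $\beta$ on the event that its $\lift{\alg}(\lambda)$-value does not exceed $\beta$. This motivates taking $L=\frac{k}{|\caly|}-\alpha$ and $U=\frac{k}{|\caly|}+\alpha+\beta$ and letting $R'$ be the set of tuples for which $A$ or $A'$ falls outside $[L,U]$. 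The constraint $0<\alpha<\frac{k}{|\caly|}$ guarantees $L>0$, and on the complement of $R'$ we get
\[
\frac{A(\bar{y})}{A'(\bar{y})}\le\frac{U}{L}=\frac{k+(\alpha+\beta)|\caly|}{k-\alpha|\caly|}=e^{\varepsilon_\alpha},
\]
and symmetrically for $A'/A$, which is exactly the pointwise ratio bound underlying \pDistP{}.

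It then remains to show $\lift{\TPM}(\lambda)[R']\le\delta_\alpha$ (and the same under $\lambda'$). Because $\lambda,\lambda'\in\LambdaBeta$, a uniformly drawn coordinate has $\lift{\alg}(\cdot)$-value in $[0,\beta]$ except with probability at most $\eta$, so the $k$ dummy contributions are (up to this slack) bounded i.i.d.\ variables in $[0,\beta]$ with mean $\frac{1}{|\caly|}$; Hoeffding's inequality then bounds each one-sided deviation of the dummy sum from $\frac{k}{|\caly|}$ by $\exp(-\frac{2\alpha^2}{k\beta^2})$. Combining the upper and lower deviations accounts for the factor $2$, and the residual probability that the $\beta$-bound is violated contributes the additive $\eta$, giving $\delta_\alpha=2\exp(-\frac{2\alpha^2}{k\beta^2})+\eta$. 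I expect the main obstacle to be precisely this last accounting: the sum $A$ contains one coordinate that is \emph{not} uniform (the genuinely perturbed value $s\sim\lift{\alg}(\lambda)$) together with $k$ uniform ones, so applying a clean Hoeffding bound requires truncating the per-coordinate contribution at $\beta$ and then charging all truncation and tail failures---in both directions and for both $\lambda$ and $\lambda'$---to a single factor of $2$ and a single $\eta$. This is the step where the bound becomes loose, consistent with the paper's later remark that Theorem~\ref{thm:TuplingDP} yields only loose upper bounds on $\varepsilon$.
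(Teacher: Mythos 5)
Your proposal reproduces every structural step of the paper's own proof: the collapse of the per-tuple likelihood ratio to $\frac{\sum_i \lift{\alg}(\lambda)[y_i]}{\sum_i \lift{\alg}(\lambda')[y_i]}$ using uniformity of $\nu$, Hoeffding's inequality applied to the $k$ i.i.d.\ dummy coordinates with range $[0,\beta]$ and expected sum $\frac{k}{|\caly|}$, the window with endpoints $L = \frac{k}{|\caly|}-\alpha$ and $U = \frac{k}{|\caly|}+\alpha+\beta$ whose ratio is exactly $e^{\varepsilon_\alpha}$ (with $0<\alpha<\frac{k}{|\caly|}$ ensuring $L>0$), and an exceptional set $R'$ in the sense of Definition~\ref{def:prob-max-DistP}. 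The one place where you genuinely diverge from the paper is the construction of $R'$, and that is precisely where your accounting does not deliver the stated $\delta_\alpha$.

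Your symmetric $R'$ (tuples where $A$ or $A'$ leaves $[L,U]$) forces you to control four one-sided tail events --- $A<L$, $A>U$, $A'<L$, $A'>U$ --- and a union bound over these yields roughly $4\exp\bigl(-\frac{2\alpha^2}{k\beta^2}\bigr)+2\eta$, not $2\exp\bigl(-\frac{2\alpha^2}{k\beta^2}\bigr)+\eta$; your claim that one factor of $2$ and a single $\eta$ absorb ``both directions and both distributions'' is exactly the step that fails. Worse, the event $A'>U$ under $\TPM(\lambda)$ requires $\lift{\alg}(\lambda')[y_i]\le\beta$ where $y_i$ is the genuine coordinate drawn from $\lift{\alg}(\lambda)$, not from $\nu$, and membership of $\lambda'$ in $\LambdaBeta$ (a statement about draws $y\randassign\caly$) gives no handle on that event. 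The paper's proof avoids both problems with an asymmetric device you are missing: it takes $R'$ to be the ratio-violation set $\bigl\{\bar y : \TPM(\lambda_0)[\bar y]\ge e^{\varepsilon_\alpha}\TPM(\lambda_1)[\bar y]\bigr\}$ directly. Showing $\TPM(\lambda_0)[R']\le\delta_\alpha$ then uses only the upper tail of $\lambda_0$'s dummy sum, the lower tail of $\lambda_1$'s dummy sum (the two Hoeffding terms), and one truncation event for the genuine coordinate under $\lambda_0$ (the single $\eta$); crucially, since $\lift{\alg}(\lambda_1)[y_i]\ge 0$, no upper bound on $A'$ is ever needed. The bound under the second measure then comes for free, with no further concentration: on $R'$ every tuple satisfies $\TPM(\lambda_1)[\bar y]\le e^{-\varepsilon_\alpha}\TPM(\lambda_0)[\bar y]$, hence $\TPM(\lambda_1)[R']\le e^{-\varepsilon_\alpha}\TPM(\lambda_0)[R']\le\delta_\alpha$. (To be fair, the paper's $R'$ correspondingly certifies only the one-sided inequality outside $R'$, with the reverse direction left to symmetry; but with your window-based $R'$ as written, the constants provably do not close to the stated $\delta_\alpha$, so the asymmetric trick, or an explicit weakening of $\delta_\alpha$, is needed to complete the argument.)
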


\begin{proof}
Let $\lambda_0, \lambda_1\in \LambdaBeta$,
and $R\subseteq \caly^{k+1}$ such that $\TPM(\lambda_0)[R] > 0$.
Since the dummies are uniformly distributed over $\caly$, we have $\TPM(\lambda_1)[R] >~0$.
As in the proof for Proposition~\ref{prop:TuplingDistP}, for any $\bar{y} = (y_1, y_2, \allowbreak \ldots, y_{k+1}) \in R$:
\begin{align}
\frac{ \TPM(\lambda_0)[\bar{y}] }
       { \TPM(\lambda_1)[\bar{y}] } 
&=
\frac{ \sum_{i=1}^{k+1} \lift{\alg}(\lambda_0)[y_i] }
       { \sum_{i=1}^{k+1} \lift{\alg}(\lambda_1)[y_i] }
{.}
\label{eq:TPM-equality}
\end{align}
Recall that in the definition of the tupling mechanism, $y_i$ is the output of $\alg$ while for each $j\neq i$, $y_j$ is generated from the uniform distribution $\nu$ over $\caly$.
Hence the expected value of $\sum_{j\neq i} \lift{\alg}(\lambda_0)[y_j]$ is given by $\frac{k}{|\caly|}$.
Therefore it follows from the Hoeffding's inequality that for any $\alpha>0$, we have:
\begin{align*}
\Pr\bigl[\, {\textstyle\sum_{j\neq i}}\, \lift{\alg}(\lambda_0)[y_j] \ge {\textstyle\frac{k}{|\caly|}} + \alpha \,\bigr]
&\le
\exp\bigl({\textstyle -\frac{2\alpha^2}{k\beta^2} }\bigr)
\\[-0.0ex]
\Pr\bigl[\, {\textstyle\sum_{j\neq i}}\, \lift{\alg}(\lambda_1)[y_j] \le {\textstyle\frac{k}{|\caly|}} - \alpha \,\bigr]
&\le
\exp\bigl({\textstyle -\frac{2\alpha^2}{k\beta^2} }\bigr)
\end{align*}
where each $y_j$ is independently drawn from $\nu$, hence each of $\lift{\alg}(\lambda_0)[y_j]$ and $\lift{\alg}(\lambda_1)[y_j]$ is independent.
By $\delta_{\alpha} = 2\exp\bigl(-\frac{2\alpha^2}{k\beta^2}\bigr)$,
we obtain:
\begin{align}
\!\Pr\Bigl[ \sum_{j\neq i}\!\lift{\alg}(\lambda_0)[y_j]
  {\ge} {\textstyle\frac{k}{|\caly|}} + \alpha
  \text{ \footnotesize and} 
  \sum_{j\neq i}\!\lift{\alg}(\lambda_1)[y_j]
  {\le} {\textstyle\frac{k}{|\caly|}} - \alpha
  \Bigr]
{\le} \delta_{\alpha}
\label{eq:two-prob-less-than-delta}
{.}
\end{align}

When $\bar{y}$ is drawn from $\TPM(\lambda_0)$, let $y_i$ be the output of $\lift{\alg}(\lambda_0)$.
Then we obtain:
\begin{align}
&\phantom{=}~
\Pr\biggl[\,
\frac{ \TPM(\lambda_0)[\bar{y}] }
       { \TPM(\lambda_1)[\bar{y}] } 
\ge e^\varepsilon
\,\biggr]
\nonumber \\ &=
\Pr\biggl[\,
\frac{ \sum_{j=1}^{k+1} \lift{\alg}(\lambda_0)[y_j] }
     { \sum_{j=1}^{k+1} \lift{\alg}(\lambda_1)[y_j] }
\ge e^\varepsilon
\,\biggr]
&\hspace{-12ex}
\text{(by \eqref{eq:TPM-equality})}
\nonumber \\ &=
\Pr\biggl[\,
\frac{ \sum_{j=1}^{k+1} \lift{\alg}(\lambda_0)[y_j] }
     { \sum_{j=1}^{k+1} \lift{\alg}(\lambda_1)[y_j] }
\ge
\frac{ {\textstyle\frac{k}{|\caly|}} + \alpha + \beta }
     { {\textstyle\frac{k}{|\caly|}} - \alpha }
\,\biggr]
&\hspace{-9ex}
\text{(by def. of $\varepsilon$)}
\nonumber \\ &\le
\Pr\biggl[\,
\frac{ \sum_{j=1}^{k+1} \lift{\alg}(\lambda_0)[y_j] }
     { \sum_{j=1}^{k+1} \lift{\alg}(\lambda_1)[y_j] }
\ge
\frac{ {\textstyle\frac{k}{|\caly|}} + \alpha + \lift{\alg}(\lambda_0)[y_i] }
     { {\textstyle\frac{k}{|\caly|}} - \alpha + \lift{\alg}(\lambda_1)[y_i]}
\,\biggr]
+ \eta
\hspace{-4ex}~
\nonumber \\ &&\hspace{-50ex}
\bigl(
\text{by
$\Pr\bigl[ \lift{\alg}(\lambda_0)[y_i] \le \beta \bigr]
\ge 1 - \eta$ 
~and~$\lift{\alg}(\lambda_1)[y_i] \ge 0$}
\bigr)
\nonumber \\ &=
\Pr\biggl[\,
\frac{ \sum_{j\neq i} \lift{\alg}(\lambda_0)[y_j] }
     { \sum_{j\neq i} \lift{\alg}(\lambda_1)[y_j] }
\ge
\frac{ {\textstyle\frac{k}{|\caly|}} + \alpha }
     { {\textstyle\frac{k}{|\caly|}} - \alpha }
\,\biggr]
+ \eta
\nonumber \\ &\le \delta_{\alpha}
{.}
&\hspace{-25ex}
\text{(by \eqref{eq:two-prob-less-than-delta})}
\nonumber
\end{align}
Then there is an $R'\subseteq\caly^{k+1}$ such that 
$\TPM(\lambda_0)[R'] \le \delta_{\alpha}$, and that 
for any $\bar{y}\in R$,\, $\bar{y} \in R'$ iff
$\frac{ \TPM(\lambda_0)[\bar{y}] }
      { \TPM(\lambda_1)[\bar{y}] } 
\ge e^\varepsilon$.
Then: 
\begin{align*}
\TPM(\lambda_1)[R']
&= 
\sum_{\bar{y} \in R'} \TPM(\lambda_1)[\bar{y}]
\\ &\le 
\sum_{\bar{y} \in R'} e^{-\varepsilon}\cdot \TPM(\lambda_0)[\bar{y}] 
\\ &=
e^{-\varepsilon}\cdot \TPM(\lambda_0)[R'] 
\\ &\le
\delta_{\alpha}
{.}
\end{align*}
Therefore $\TPM$ provides $(\varepsilon_{\alpha}, \delta_{\alpha})$-\pDistP{}
w.r.t. $\LambdaBeta^2$.

\end{proof}

Then, we obtain the \DistP{} of the tupling mechanism from Propositions~\ref{prop:TuplingPDistP} and 
\ref{prop:PDistPimpliesDistP} as follows.
\TuplingDP*

\begin{proof}
By Proposition~\ref{prop:TuplingPDistP}, $\TPM$ provides $(\varepsilon_{\alpha}, \delta_{\alpha})$-\pDistP{} w.r.t. 
$\LambdaBeta^2$.
Hence by Proposition~\ref{prop:PDistPimpliesDistP}, $\TPM$ provides $(\varepsilon_{\alpha}, \delta_{\alpha})$-\DistP{} w.r.t. 
$\LambdaBeta^2$.
\end{proof}

Finally, we note that if $\alg$ spreads the input distribution (e.g., $\alg$ provides $\varepsilon$-\DP{} for a smaller $\varepsilon$), then $\beta$ becomes smaller (by Proposition~\ref{prop:dist-prob-bounds}), hence smaller values of $\varepsilon_{\alpha}$ and $\delta_{\alpha}$ as shown in experimental results in Section~\ref{sec:location:dist-obf}.

\begin{restatable}{prop}{distProbBounds}
\label{prop:dist-prob-bounds}
Let $\lambda\in\Dists\calx$, $\alg: \calx\rightarrow\Dists\caly$, and $y\in\caly$.
Then we have $\min_{x\in\calx} \lambda[x] \le \lift{\alg}(\lambda)[y] \le \max_{x\in\calx} \lambda[x]$.
\end{restatable}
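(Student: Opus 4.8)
The plan is to argue directly from the definition of the lifting $\lift{\alg}$ recalled in Section~\ref{subsec:notations-pd}, specialised to the singleton $R = \{y\}$: it reads $\lift{\alg}(\lambda)[y] = \sum_{x\in\calx}\lambda[x]\,\alg(x)[y]$. This displays $\lift{\alg}(\lambda)[y]$ as a nonnegative weighted sum of the transition probabilities $\alg(x)[y]$ with the input masses $\lambda[x]$ playing the role of coefficients, so the whole proof reduces to a sandwiching estimate on those coefficients followed by a bookkeeping step on the remaining factors.

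First I would replace each coefficient $\lambda[x]$ by its extreme values. Since $\min_{x'\in\calx}\lambda[x'] \le \lambda[x] \le \max_{x'\in\calx}\lambda[x']$ holds for every $x\in\calx$ and every factor $\alg(x)[y]$ is nonnegative, monotonicity of the (finite) sum yields
\[
\Bigl(\min_{x'\in\calx}\lambda[x']\Bigr)\!\sum_{x\in\calx}\alg(x)[y]
\;\le\;
\lift{\alg}(\lambda)[y]
\;\le\;
\Bigl(\max_{x'\in\calx}\lambda[x']\Bigr)\!\sum_{x\in\calx}\alg(x)[y]
{.}
\]
The statement then follows once the column sum $\sum_{x\in\calx}\alg(x)[y]$ is pinned down: if it equals $1$, both inequalities collapse exactly to the claimed bounds $\min_{x}\lambda[x] \le \lift{\alg}(\lambda)[y] \le \max_{x}\lambda[x]$. (For the upper bound alone, which is the one invoked immediately after the definition of $\LambdaBeta$ in Section~\ref{sub:tupling-privacy} to show $\max_x\lambda[x]\le\beta$ forces $\lambda\in\Lambda_{\beta,0,\alg}$, it suffices to have $\sum_{x}\alg(x)[y]\le 1$.)

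The decisive — and genuinely delicate — step is therefore evaluating this column sum, and I expect it to be the main obstacle. The generic hypotheses ($\lambda\in\Dists\calx$ and $\alg\colon\calx\rightarrow\Dists\caly$) only guarantee that the \emph{row} sums are normalised, i.e.\ $\sum_{y\in\caly}\alg(x)[y]=1$ for each $x$; they say nothing about the column sums, and indeed a constant kernel sending every $x$ to the same $y$ makes $\sum_x\alg(x)[y]$ as large as $|\calx|$. Hence the bound is not a purely formal consequence of the definitions: it requires $\alg$ to be (doubly) stochastic, which is where the structural assumption on the mechanism has to be brought in. The cleanest route is to invoke symmetry of the obfuscation kernel on a homogeneous domain with $\calx=\caly$ (as for the translation-invariant Laplace/\RL{}-type kernels underlying the tupling construction), where $\alg(x)[y]=\alg(y)[x]$ gives $\sum_{x}\alg(x)[y]=\sum_{x}\alg(y)[x]=1$ directly; I would make this hypothesis explicit and then read off both inequalities from the displayed sandwich.
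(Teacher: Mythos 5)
Your sandwich is literally the paper's proof: the published argument for Proposition~\ref{prop:dist-prob-bounds} bounds each coefficient $\lambda[x']$ by its extremes and then writes $\bigl(\max_{x\in\calx}\lambda[x]\bigr)\cdot\sum_{x'\in\calx}\alg(x')[y]=\max_{x\in\calx}\lambda[x]$ (and dually for the minimum), i.e., it silently sets the column sum $\sum_{x'\in\calx}\alg(x')[y]$ to $1$. So up to the step you single out as delicate, the two arguments coincide --- and your scrutiny of that step is correct, not pedantry. The hypothesis $\alg:\calx\rightarrow\Dists\caly$ normalizes only the rows, $\sum_{y\in\caly}\alg(x)[y]=1$, and puts no constraint on columns. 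Your constant-kernel example refutes the unqualified claim: if $\alg(x)$ is the point distribution on a fixed $y_0$ for every $x$ and $\lambda$ is uniform on two points, then $\lift{\alg}(\lambda)[y_0]=1>\tfrac{1}{2}=\max_{x\in\calx}\lambda[x]$. Hence the proposition as stated is false for general $\alg$, and the paper's one-line proof has a genuine gap exactly where you located it.

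Your repair --- making double stochasticity explicit, e.g., via a symmetric kernel $\alg(x)[y]=\alg(y)[x]$ on $\calx=\caly$ --- is the right minimal hypothesis, and your remark that the one-sided condition $\sum_{x\in\calx}\alg(x)[y]\le 1$ already yields the upper bound is apt, since that is the only direction the paper actually uses (to conclude in Section~\ref{sub:tupling-privacy} that $\max_{x}\lambda[x]\le\beta$ places $\lambda$ in $\Lambda_{\beta,0,\alg}$). One caveat on your closing suggestion: the $(\varepsilon_{\!\alg},r)$-\RL{} mechanism of Definition~\ref{def:restricted-Laplace} has an $x$-dependent normalizing constant and a bounded domain, so it is not exactly symmetric or doubly stochastic; a column at a location $y$ reachable from points with small normalizers (near the boundary) can sum to more than $1$. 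So the honest fix is what you propose in spirit: restate the proposition with the column-sum hypothesis as an explicit assumption, and verify it (or the weaker $\sum_{x}\alg(x)[y]\le 1$) for the concrete mechanism before invoking the bound.
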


\begin{proof}
By 
$\lift{\alg}(\lambda)[y] = \sum_{x'\in\calx} \lambda[x'] \alg(x')[y]$,
we obtain:
\begin{align*}
\lift{\alg}(\lambda)[y]
&\le
\bigl( \max_{x\in\calx} \lambda[x] \bigr) \cdot {\textstyle\sum_{x'\in\calx}}\,\alg(x')[y]
=
\max_{x\in\calx} \lambda[x]
\\[-0.5ex]
\lift{\alg}(\lambda)[y]
&\ge
\bigl( \min_{x\in\calx} \lambda[x] \bigr) \cdot {\textstyle\sum_{x'\in\calx}}\,\alg(x')[y]
=
\min_{x\in\calx} \lambda[x]
{.}
\vspace{-5ex}
\end{align*}
\end{proof}
\vspace{-3ex}

}

\end{document}